\numberwithin{equation}{section}
\def\namedlabel#1#2{\begingroup
    #2%
    \def\@currentlabel{#2}
    \phantomsection\label{#1}\endgroup
}
\newcommand{\specificthanks}[1]{\@fnsymbol{#1}}
\newcommand{\lrarrow}{\mathrel{\mathpalette\lrarrow@\relax}}
\newcommand{\lrarrow@}[2]{%
  \vcenter{\hbox{\ooalign{%
    $\m@th#1\mkern6mu\rightarrow$\cr
    \noalign{\vskip1pt}
    $\m@th#1\leftarrow\mkern6mu$\cr
  }}}%
}
\newcommand*\justify{%
  \fontdimen2\font=0.4em
  \fontdimen3\font=0.2em
  \fontdimen4\font=0.1em
  \fontdimen7\font=0.1em
  \hyphenchar\font=`\-
}
\renewcommand{\texttt}[1]{%
  \begingroup
  \ttfamily
  \begingroup\lccode`~=`/\lowercase{\endgroup\def~}{/\discretionary{}{}{}}%
  \begingroup\lccode`~=`[\lowercase{\endgroup\def~}{[\discretionary{}{}{}}%
  \begingroup\lccode`~=`.\lowercase{\endgroup\def~}{.\discretionary{}{}{}}%
  \catcode`/=\active\catcode`[=\active\catcode`.=\active
  \justify\scantokens{#1\noexpand}%
  \endgroup
}
\newcommand{\oper}{\mathcal{E}}
\newcommand{\pulloper}{\bar{\mathcal{E}}}
\newcommand{\TT}{\mathcal{T}}
\renewcommand{\P}{\mathbb{P}}
\newcommand{\E}{\mathbb{E}}
\newcommand{\R}{\mathbb{R}}
\newcommand{\1}{\mathbbm{1}}
\newcommand{\logit}{\text{logit}}
\newcommand{\indep}{\perp\!\!\!\!\perp}
\newcommand{\cL}{\mathcal{L}}
\newcommand{\cG}{\mathcal{G}}
\newcommand{\fL}{\mathfrak{L}}
\newcommand{\MICe}{\text{MIC$_e$}}
\newcommand{\MICstar}{\text{MIC$_*$}}
\newcommand{\intP}{\tilde{P}}
\newcommand{\vhat}{\hat{v}}
\newcommand{\cp}{{\cal P}}
\newcommand{\RR}{\mathbb{R}}
\newcommand{\PP}{\mathbb{P}}
\newcommand{\eps}{\varepsilon}
\newcommand{\Chi}{\mathcal{X}}
\newcommand{\tP}{\tilde{P}}
\newcommand{\cP}{\mathcal{P}}
\newcommand{\ME}{ \text{\tiny \it ME}}
\newcommand{\CE}{ \text{\tiny \it CE}}
\newcommand{\vpdp}{v^{\ME}}
\newcommand{\empvme}{\hat{v}^{\ME}}
\newcommand{\vme}{\vpdp}
\newcommand{\vce}{v^{\CE}}
\newcommand{\vpdpP}{v^{\ME,\mathcal{P}}}
\newcommand{\vceP}{v^{\CE,\mathcal{P}}}
\def\clap#1{\hbox to 0pt{\hss#1\hss}}
\newtheorem*{proposition*}{Proposition}
\newtheorem{proposition}{Proposition}[section]
\newtheorem*{theorem*}{Theorem}
\newtheorem{theorem}{Theorem}[section]
\newtheorem{corollary}{Corollary}[section]
\newtheorem*{corollary*}{Corollary}
\newtheorem*{definition*}{Definition}
\newtheorem{definition}{Definition}[section]
\newtheorem{lemma}{Lemma}[section]
\newtheorem{example}{Example}[section]
\newtheorem*{lemma*}{Lemma}
\newtheorem{remark}{Remark}[section]
\newtheorem*{remark*}{Remark}
\newtheorem*{mtheorem*}{Main Theorem}
\title{Stability theory of game-theoretic group feature explanations for machine learning models}
\author{Alexey Miroshnikov \thanks{Emerging Capabilities Research Group, Discover Financial Services Inc., Riverwoods, IL} \textsuperscript{,$\!\!\!$}
\thanks{first author, alexeymiroshnikov@discover.com, ORCID:0000-0003-2669-6336} \and Konstandinos Kotsiopoulos \textsuperscript{\specificthanks{1},}\thanks{ kostaskotsiopoulos@discover.com, ORCID:0000-0003-2651-0087} \and
Khashayar Filom\textsuperscript{\specificthanks{1},}\thanks{khashayarfilom@discover.com, ORCID:0000-0002-6881-4460}
\and Arjun Ravi Kannan\textsuperscript{\specificthanks{1},}\thanks{ arjunravikannan@discover.com, ORCID:0000-0003-4498-1800}  }
\date{}
\begin{document}
\maketitle
\vspace{-0.4 in}

\begin{abstract}
In this article, we study feature attributions of Machine Learning (ML) models originating from linear game values and coalitional values defined as operators on appropriate functional spaces. The main focus is on random games based on the conditional and marginal expectations. The first part of our work formulates a stability theory for these explanation operators by establishing certain bounds for both marginal and conditional explanations. The differences between the two games are then elucidated, such as showing that the marginal explanations can become discontinuous on some naturally-designed domains, while the conditional explanations remain stable. In the second part of our work, group explanation methodologies are devised based on game values with coalition structure, where the features are grouped based on dependencies. We show analytically that grouping features this way has a stabilizing effect on the marginal operator on both group and individual levels, and allows for the unification of marginal and conditional explanations. Our results are verified in a number of numerical experiments where an information-theoretic measure of dependence is used for grouping.
\end{abstract}

\vspace{5pt}

{ \footnotesize {\bf keywords}: ML interpretability, explanation operator, game value, Radon-Nikodym derivative, mutual information.}

\vspace{5pt}

{ \footnotesize {\bf AMS subject classification:} 91A06, 91A12, 91A80, 46N30, 46N99, 68T01.}

\section{Introduction}

The use of Machine Learning (ML) models has become widespread due to their dominance over traditional statistical techniques. In particular, contemporary ML models have a complex structure which allows for a higher predictive power and the capability of processing a larger number of attributes. Having a complex model structure, however, comes at the expense of increased difficulty of interpretability\footnote{We use the words interpretability (interpretation) and explainability (explanation) interchangeably. However, the methods discussed in this paper primarily deal with post-hoc explanations derived from model's results; for details on interpretable models vs post-hoc explanations see \cite{Hall-Gill}.}. This, in turn, may raise concerns of model trustworthiness and create other issues if not appropriately managed.

Explaining the outputs of complex ML models (such as ensemble trees or neural nets) has applications in several fields. Predictive models, and strategies that rely on such models, are sometimes subject to laws and regulations, such as the Equal Credit Opportunity Act\nocite{ECOA}. The latter requires financial institutions to notify consumers who have been declined or negatively impacted by a credit decision of the main factors that contributed to that decision. Another application is in medicine, where ML models are used to predict the likelihood of a certain disease or a medical condition, or the result of a medical treatment \cite{Ji2021,Teneggi2023}. Model interpretations (or explanations) can then be used to make judgments regarding the most contributing factor affecting the likelihood of the disease or the choice of the most optimal treatment; for instance, see \cite{Elshawi et al}.

The objective of a model explainer is to quantify the contribution of each predictor to the value of a predictive model $f$ trained on the data $(X,Y)$, where $X \in \R^n$ are predictors and $Y$ is a response variable. Many post-hoc explanations (in the ambient settings) are based on the pair $(X,f)$. However, there are numerous methods that rely on the structure of the model, its implementation, and even the sequence of algorithmic steps that led to the construction of such a model. 

There is a comprehensive body of research that discusses approaches for construction of post-hoc explainers, as well as self-explainable models. Some of the notable works on this topic are \cite{Friedman} on Partial Dependence Plots (PDP), \cite{Ribeiro et al} on Local Interpretable Model-agnostic Explanations (LIME), \cite{LundbergLee} on Shapley additive explanations (SHAP) based on Shapley value \cite{Shapley}, \cite{Hu et al} on locally interpretable models based on data partitioning, \cite{Vaughan et al} on explainable neural networks, \cite{Alvarez-Melis, Elton} on self-explainable models, among others.

Many promising interpretability techniques utilize ideas from cooperative game theory for constructing explainers using game values with appropriately designed games adopted to a machine learning setting \cite{Strumbelj2014,LundbergTreeSHAP,Wang-Lundberg,ChenLShapley2019,Miroshnikov2022,Teneggi2023,Saabasdoc,Cohen2023}. In this setting, given a model $f$, the features $X=(X_1,\dots,X_n)$ are viewed as $n$ players playing a random cooperative game $v(S;X,f)$, a set function on the subsets of indices $S\subseteq \{1,2,\dots,n\}$. Here the randomness comes from the features. While the literature considers games that are observations of random games, in our paper we will view them as random variables which enables us to perform rigorous analysis.

Two of the most notable games based on the pair $(X,f)$ are the conditional and marginal games\footnote{ In the literature, the conditional and marginal games are typically defined as functions of an observation $x$ instead of $X$, which makes the corresponding deterministic games.}
\begin{equation*}
    \vce(S):=\E[f(X)|X_S], \quad \vpdp(S):=\E[f(x_S,X_{-S})]\big|_{x_s=X_S}, \quad S\subseteq N:=\{1,2,\dots,n\}
\end{equation*}
where $X_S:=(X_{i_1},\dots,X_{i_k})$, $S=\{i_1,\dots,i_k\}$, and $-S:=N \setminus S$.
These are motivated by the corresponding deterministic games introduced in \cite{LundbergLee} and discussed in \cite{Janzing2020, Sundararajan2019}. 
For other examples of appropriate games used in ML setting see the works of \cite{LundbergTreeSHAP, Wang-Lundberg, ChenLShapley2019, Miroshnikov2022, Teneggi2023}.

A game value $(N,v) \mapsto h[N,v]\in\RR^n$ is a quantification of feature contributions to the model's output when $v \in \{\vce,\vpdp\}$. Intuitive explanations have been proposed in \cite{Sundararajan2019,Janzing2020, Chen-Lundberg} on how to interpret the game values based on each game.  Roughly speaking, conditional game values explain predictions $f(X)$ viewed as a random variable, while marginal game values explain the transformations occurring in the model $f(x)$, sometimes called mechanistic explanations \citep{Elton}. The work of \cite{Chen-Lundberg} intuitively describes conditional explanations, also known as observational, as consistent with the data (``true-to-the-data'') and marginal explanations, also known as interventional, as consistent with the model (``true-to-the-model'').  Some of the articles that describe implementation of Shapley values or their approximates for the above games are \cite{LundbergLee,LundbergIntTreeShap,aas2021,Filom2023,Kotsiopoulos2023}.

In this article, we study model explanations based on linear game values defined as linear operators
\begin{equation*}
\bar{\oper}^{\CE}[f; h,X] : f \mapsto h[N,v^{\CE}(\cdot;X,f)]  \quad \text{and} \quad \bar{\oper}^{\ME}[f; h, X] : f \mapsto h[N,v^{\ME}(\cdot;X,f)]
\end{equation*}
on appropriate functional spaces. We investigate the continuity of these operators which illuminates the differences between the two games. The heuristic concepts of ``true-to-the-model'' and ``true-to-the-data'' as discussed in the work of \citep{Chen-Lundberg} inspired us to introduce the rigorous notion of consistency in explanations with respect to a probability measure (in the space of features) using continuity arguments.

We show that conditional explanations are continuous in the space of models $f \in L^2(P_X)$, where $P_X$ is the pushforward measure. As a consequence, any two models, with similar inputs and predictions, will have similar explanations. For this reason, we define explanations to be $P_X$-consistent (or conditionally-consistent) if they are continuous in $L^2(P_X)$; see Section 2. Similarly, we show that marginal explanations are continuous in a different space $L^2(\tilde{P}_X)$, where $\tilde{P}_X := \frac{1}{2^n}\sum_{S\subseteq N}P_{X_S}\otimes P_{X_{-S}}$. For this reason, these explanations are considered $\tilde{P}_X$-consistent (or marginally-consistent) and (by design) encode the input-output relationship in the model. We show that in special cases, where the marginal explanations are also continuous in $L^2(P_X)$, the bound on these explanations may grow indefinitely as the strength of the dependencies increases, which serves as a precursor of instability.

Given the above formalism, and  partially motivated by the discussions in \cite{Janzing2020, Sundararajan2019, Kumar2020,MartingroupShapley}, we next state the issues associated with marginal and conditional explanations, which we attempt to resolve in our work. 

\begin{itemize}
\item [$(i)$] It is well-known from the Rashomon effect \cite{Breiman2001} that under predictor dependencies distinct models that approximate the same data well can have different representations \cite{Fisher2019}. Consequently, the marginal explanations for models with similar predictions may vary significantly, while conditional ones will be similar. Theoretically, it means that the marginal explanations may not be continuous in the space $L^2(P_X)$, while conditional ones are; see \S \ref{sec::obser_inter_expl}. This property may have an adverse impact in practical applications where models are periodically retrained or when different models are trained on the same data. Moreover, it also has an adverse effect for assessing global feature importance during the modelling process \cite{Fisher2019}.

\item [$(ii)$] In light of the curse of dimensionality, computing conditional game values is typically infeasible when the predictor dimension is large, which is the case in many applications; see \cite{Hastie et al}. There are several methods that attempt to approximate conditional games; see \cite{aas2021,olsen}. Others replace the game with one that attempts to mimic the conditioning such as in the case of the path-dependent TreeSHAP algorithm \cite{LundbergTreeSHAP,LundbergIntTreeShap}, or assume predictor independence such as KernelSHAP \cite{LundbergLee}, which effectively results in estimating the marginal explanations. The aforementioned methods have limited success and no theoretical guarantees on estimation accuracy. 

\item [$(iii)$] Given a model with highly dependent features, additive explainers spread any meaningful contributions (of latent variables) across dependent components, which can lead to rendering their individual explanations extremely minuscule; see \citep{Kumar2020,aas,MartingroupShapley}, and \S \ref{sec::examples}. As a consequence, this affects the ranking of individual features based on their explanations. Some informative dependent features may be ranked lower in the list, while less informative, and often independent, features are ranked higher.

\end{itemize}

In this article, we address issues $(i)$-$(iii)$ by studying the continuity of suitably defined feature explanation operators. We design group explainers, which utilize predictor groups to output contribution values of both predictor groups as well as single predictors within a group. To address the aforementioned issues, we group predictors by dependencies using an information-theoretic approach of \cite{Reshef16} and then investigate the analytical properties of corresponding group explainers in the context of operator continuity. We show that explainers based on quotient game values  or game values with coalitional structure, such as the Owen value \cite{Owen} or Two-step Shapley value \cite{Kamijo2009}, allow for the unification of the marginal and conditional approaches, which as a consequence provides a remedy (or mitigates) the instability of marginal explanations in $L^2(P_X)$. We also show that predictor grouping alleviates the issue of contribution splitting. To our knowledge a rigorous treatment of explanations in a functional analytic setting has never been done before. We believe our work can provide the proper language for understanding when to employ the aforementioned games.

Designing explainers based on predictor groups has been discussed before in \cite{aas}, where the groups are formed based on linear dependencies. The authors of \cite{aas} observe that forming groups by dependencies alleviates the inconsistencies between the marginal and conditional approaches. The work of \cite{MartingroupShapley} focuses on quotient game explainers and provides a practical perspective on their implementation. The groups there are formed by feature knowledge rather than dependencies, and the conditional game is approximated by the method outlined in \cite{aas2021}. Those works mainly focus on the Shapley value and consider deterministic empirical games and investigate practical aspects of grouping. Motivated by the aforementioned articles, our work focuses on rigorous analysis of group explainers which we believe complements the works of \cite{aas,MartingroupShapley}. Below is a brief summary of technical results presented in our paper that address issues $(i)$-$(iii)$.

\vspace{5pt}

\noindent{\bf Summary of key technical results}

\begin{itemize}

\item We set up game-theoretic explainers based on a game value $h[\cdot,\cdot]$ in the form \eqref{lingameform} as operators. We show that the conditional operator $f\mapsto\bar{\oper}^{\CE}[f;X,h]$ associated with a game value $h$ and predictors $X$ is continuous in $L^2(P_X)$, while the marginal operator $\bar{\oper}^{\ME}$ is continuous in $L^2(\tilde{P}_X)$; see Theorems \ref{prop::condoperator} and \ref{prop::margoperator}. We show that the marginal operator can become ill-posed or unbounded in $L^2(P_X)$; see Theorem \ref{thm::margoperatorunbound}. We define $P_X$-consistent and $\tilde{P}_X$-consistent explanations as those continuous in $L^2(P_X)$ and $L^2(\intP_X)$, respectively. We also establish procedures for proper extensions of game values to non-cooperative games such as marginal and conditional; see Lemmas \ref{lmm:extlingame}. We will present various conditions for well-posedness and $P_X$-consistency of the marginal explanations by discussing the absolute continuity of $\intP_X$ with respect to $P_X$ and the corresponding Radon-Nikodym derivative (if it exists) which encodes the strength of dependencies in features; see Theorems \ref{prop::margoperatorwellpos}, \ref{thm::margoperatorunbound}, and Proposition \ref{prop::cond_marg_value_bound}.

\item Given a partition $\cP=\{S_1,S_2,\dots,S_m\}$ of predictor indices and a  linear game value $h$, we consider each union $X_{S_j}$ as a player and assign its contribution to be the quotient game value $h_j[M,v^{\mathcal{P}}]$, where $v^{\mathcal{P}}(A):=v\big( \cup_{j \in A} S_j\big)$ is the quotient game on $M:=\{1,2,\dots,m\}$, with $v \in \{\vpdp,\vce\}$.
 We show that if the unions $X_{S_1},X_{S_2},...,X_{S_m}$ are independent, the quotient game values for marginal and conditional games coincide, which implies the continuity of marginal quotient explainers in $L^2(P_X)$, making them both $\intP_X$-consistent and $P_X$-consistent; see Lemma \ref{lmm::boundquot} and Proposition \ref{prop::quotgameexpl},  which also consider the case where the union independence is dropped. Moreover, the complexity of the quotient game explainer is $O(2^m)$ which can be significantly lower than $O(2^n)$, the complexity of the single-feature explainer. We also study trivial group explainers obtained via summation of explanations over the elements of $\mathcal{P}$ in the context of additive models; see Proposition \ref{prop::additivesumshap}.

\item We design single feature explainers that consider predictor dependencies by utilizing coalitional game values, maps in the form $(N,v,\mathcal{P}) \mapsto g[N,v,\mathcal{P}]$ where the coalitional structure is encoded in the partition $\mathcal{P}=\{S_1,\dots,S_m\}$ of predictors. We introduce a novel two-step representation formula for coalitional game values consisting of two game values (applied to games played across and within groups, respectively) and a family of intermediate games. Many known game values such as the Owen value, the Banzhaf-Owen value, the symmetrical Banzhaf value, and the two-step Shapley value admit such a representation. This representation allows for the construction of a large collection of coalitional game values with desirable properties, such as efficiency and quotient game property \eqref{quotientgame}; see Lemmas \ref{lmm::efficg2step} and \ref{lmm::qp2step}$(i)$-$(iii)$. We show that such explanations are stable in a finer space than $L^2 (\intP_X)$; see Proposition \ref{prop::extra1} and Corollary \ref{corr:extra3}. Furthermore, under union independence, \eqref{quotientgame} allows us to unify conditional and marginal approaches for trivial group explainers associated with the partition $\mathcal{P}$; see Proposition \ref{prop::unifcoalexpl}. Finally, we generalize game values with a two-step formulation to recursive game values under a generic partition tree; see \S\ref{sec::parttree1}.

\item To form groups of predictors based on dependencies, which effectively reduces the number of explainable components, we propose a variable hierarchical clustering technique that employs a state-of-the-art measure of dependence called the maximal information coefficient, a regularized version of mutual information introduced in \cite{Reshef16}. This method allows for a practical construction of a partition $\cP=\{S_1,S_2,\dots,S_m\}$ of predictor indices $N=\{1,2,\dots,n\}$ based on dependencies present in the joint distribution of $X$. We utilize the clustering and provide numerical examples that illustrate the stabilization effect of grouping.

\end{itemize}

\noindent {\bf Structure of the paper.}
In \S\ref{sec::MLI}, we introduce the requisite concepts such as the conditional and marginal games, the notion of game value, and in particular, the Shapley value. We also provide a review of the relevant literature. The conditional and marginal game operators are set up in \S\ref{sec::obser_inter_expl}. Theorems \ref{prop::condoperator}-\ref{thm::margoperatorunbound} address the stability of conditional and marginal explanations and highlight their differences; the relevant proofs can be found in \ref{app::gameops} of the appendix. \S\ref{sec::Group_expl} introduces and investigates various types of group explainers in great detail. The section starts with an extension of game values to non-cooperative games in \S\ref{sec::prelimgames}, and proceeds to trivial group explainers (\S\ref{sec::TGExplainers}), quotient game explainers (\S\ref{sec::qgameexpl}), and group explainers based on a coalitional structure. Results on their stability and computational complexity are presented. Section \S \ref{sec::Grouping} provides an outline of variable hierarchical clustering via the maximal information coefficient and its application to a synthetic dataset with dependencies. Next, in \S\ref{sec::examples}, we provide examples that illustrate the theoretical aspects outlined in  \S\ref{sec::Group_expl} on both synthetic and real-world data. A conclusion is outlined in \S\ref{sec::conclusion}. The paper finishes with an appendix containing all technical proofs, and the generalization of the two-step representation to recursive explanations under a generic partition tree.

\section{Preliminaries}\label{sec::MLI}

\subsection{Notation and hypotheses}\label{sec::notation}
Throughout this article, we consider the joint distribution $(X,Y)$, where $X=(X_1,X_2,\dots,X_n) \in \R^n$ are the predictors, and $Y$ is a response variable with values in $\RR$ (not necessarily a continuous random variable). Let the trained model, which estimates the true regressor ${\E}[Y|X=x]$, be denoted by $f(x)$. We assume that all random variables are defined on the common probability space $(\Omega,\mathcal{F},\PP)$, where $\Omega$ is a sample space, $\mathcal{F}$ a $\sigma$-algebra of sets, and $\PP$ a probability measure. We let $P_X$ be a pushforward measure of $X$ on $\RR^n$ and its support be denoted by $\Chi:={\rm supp}(P_X)$. Similarly, we denote $\Chi_i:={\rm supp} (P_{X_i})$,  $i \in \{1,2,\dots,n\}$.

Let $S \subseteq N$. Let $X_S$ denote the set of features $X_i$ with $i \in S$ and let $\Chi_S$ denote its support, where we ignore the predictors' ordering to improve readability. We say that the predictors $X_S=\{X_i\}_{i \in S}$ are independent if $P_{X_S}=\prod_{i \in S} \otimes P_{X_i}$. 
Let $\mathcal{P}=\{S_1,S_2,\dots,S_m\}$ be a partition of predictor indices $\{1,2,\dots,n\}$. We say that the group predictors $X_{S_1}, X_{S_2}, \dots, X_{S_m}$ are independent if $P_X=\prod_{j=1}^m \otimes P_{X_{S_j}}$.

Given $\epsilon>0$, the $(L^p,\epsilon)$-Rashomon set of models about $f_*$  is defined to be the ball of radius $\epsilon$ around a given model $f_*$  in the space $L^p (P_X)$, that is, $\{f \in L^p(P_X): \E[|f_*(X)-f(X)|^p] \leq \epsilon^p \}$. This is a modified version of the definition in \cite{Fisher2019} which also incorporates the distance from the response variable $Y$ to $f_*(X)$. Finally, the collection of Borel functions on $\RR^n$ is denoted by  $\mathcal{C}_{\mathcal{B}(\RR^n)}$.

Let $X=(X_1,\dots,X_n)$ and $Z=(Z_1,\dots,Z_m)$ be random vectors. Let $D(\cdot,\cdot)$ be a metric on the space of Borel probability measures $\mathscr{P}_k(R^{m+n})$ with $k$-th finite moment, for some $k \geq 0$. We say that $X$ and $Z$ are $(D,\epsilon)$-weakly independent if $D(P_{(X,Z)},P_X \otimes P_Z) \leq \epsilon$.

\subsection{Explainability and game theory}\label{subsec::preliminaries}

The objective of a (local) model explainer $E(x; f, X)=(E_1,\dots,E_n)$ is to quantify the contribution of each predictor $X_i$, $i\in\{1,\dots,n\}$, to the value of a predictive model $f \in \mathcal{C}_{\mathcal{B}(\RR^n)}$ at a data instance $x \sim P_X$.

Many promising interpretability techniques utilize ideas from cooperative game theory for constructing explainers. A cooperative game with $n$ players is a set function $v$ that acts on a set of size $n$, say $N=\{1,2,\dots,n\}$, and satisfies $v(\varnothing)=0$. A game value is a map $v \mapsto h[N,v]\in \R^n$ that determines the worth of each player. See \S\ref{sec::prelimgames} for more details. 

In the ML setting,  the features $X=(X_1,X_2,\dots,X_n)$  are viewed as $n$ players in an appropriately designed game $S \mapsto v(S;x,X,f)$ associated with the observation $x\sim P_X$, random features $X$, and model $f$. The game value $h[{\color{blue}N,}v]$ then assigns the contributions of each respective feature to the total payoff $v(N;x,X,f)$ of the game at the data instance $x$.

Two of the most notable games in the ML literature are given by
\begin{equation}\label{margcondgamedet}
    \vce_*(S; x,X,f)=\E[f(X)|X_S=x_S], \quad \vme_*(S;x,X,f)=\E[f(x_S,X_{-S})]
\end{equation}
with
\begin{equation*}
\vce_*(\varnothing; x, X, f)=\vpdp_*(\varnothing; x, X, f)=\E[f(X)]
\end{equation*}
introduced in \cite{Strumbelj2014,LundbergLee} in the context of the Shapley value \citep{Shapley}

\begin{equation}\label{shapform}
\varphi_i[N,v] = \sum_{S \subseteq N \backslash\{i\}} \frac{s!(n-s-1)!}{n!} [ v(S \cup \{i\}) - v( S ) ], \quad  s=|S|, \,  n=|N|.
\end{equation}

The value $\varphi$ satisfies the axioms of symmetry, linearity and the aforementioned efficiency property  (see \hyperref[axiom:SP]{(SP)}, \hyperref[axiom:LP]{(LP)} and \hyperref[axiom:EP]{(EP)} in Appendix \ref{app::gameaxioms}). The efficiency property, most appealing to the ML community, allows for a disaggregation of the payoff $v(N)$ into $n$ parts that represent a contribution to the game by each player: $\sum_{i=1}^n \varphi_{i}[N,v] = v(N).$

The games defined in \eqref{shapgame} are not cooperative since they do not satisfy the condition  $v(\varnothing)=0$. 
In such a case, the efficiency property reads as $\sum_{i=1}^n \varphi_{i}[N,v] = v(N)-v(\varnothing)$. See \S\ref{sec::prelimgames} for a careful treatment of game values for non-cooperative games.

In this paper, to study game-theoretical explainers in their entirety, we consider random conditional and marginal games given by
\begin{equation}\label{shapgame}
    \vce(S; X,f)=\E[f(X)|X_S], \quad \vme(S;X,f)=\E[f(x_S,X_{-S})]\big|_{x_s=X_S}
\end{equation}
which are well-defined for $f \in \mathcal{C}_{\mathcal{B}(\RR^n)}$ with $\E[|f(X)|] < \infty$ and related to the deterministic ones in \eqref{margcondgamedet} via $\vce=\vce_*|_{x=X}$ and $\vme=\vme_*|_{x=X}$. For these games, the corresponding Shapley values $\varphi[N,\vce]$ and $\varphi[N,\vme]$, respectively, are random vectors in $\RR^n$.

\begin{remark} \rm
The deterministic and random Shapley explainers are trivially related as follows:
\[
\varphi[N,\vce]=\E[\varphi[N,\vce]|X]=\varphi[N,\vce_*](x)\big|_{x=X}  \quad 
\text{and} \quad
\varphi[N,\vme]=\E[\varphi[N,\vme]|X]=\varphi[N,\vme_*](x)\big|_{x=X}.
\]
where we used the fact that $\varphi[N,\vce],\varphi[N,\vme]$ are measurable with respect to $\sigma(X)$.
\end{remark}

This motivates the following definition of a generic random explainer.
\begin{definition}\label{randexplainer}
Let $X=(X_1,\dots,X_n)$ be predictors. Suppose $E(\cdot;\cdot,X)$ is a model explainer defined for every $f \in \mathcal{C}_{\mathcal{B}(\RR^n)}$ and $x \in \Chi$. Suppose the map $x \mapsto E(x; f,X) \in \RR^n$ is  Borel. The random model explainer induced by $E$ is defined by $\pulloper[f;E,X]:=E(X;f,X)$, $f \in \mathcal{C}_{\mathcal{B}(\RR^n)}$.
\end{definition}

Notice that the map $x \mapsto E(x; f,X)$ in the definition above takes values in $\Bbb{R}^n$ but $\pulloper[f;E,X]$ is a random vector of dimension $n$.

\begin{definition}[\bf consistency]\label{def::consistency}
Let $X=(X_1,\dots,X_n), E(\cdot;\cdot,X)$ and $\pulloper[\cdot;E,X]$ be as in Definition \ref{randexplainer}. Suppose that $\pulloper[f;E,X] \in L^2(\Omega,\mathcal{F},\PP)^n$ for every $f \in L^2(\RR^n,\mu)$ where $\mu$ is a Borel probability measure on $\R^n$. We say that  $E$ (and $\pulloper$) is $\mu$-consistent if $f \mapsto \pulloper[f;E,X]$ is locally Lipshitz continuous, that is, for every $f_*\in L^2(\mu)$ there exists a constant $c=c_{f_*} \geq 0$ such that
\[
\| \pulloper[f]-\pulloper[f_*]\|_{L^2(\P)} \leq c_{f_*} \|f-f_*\|_{L^2(\mu)}, \quad \forall f \in L^2(\mu).
\]
\end{definition}

The consistency condition guarantees that models that are similar in $L^2(\mu)$, in the sense they are close in $L^2(\mu)$, have similar explanations (up to a scaling constant determined by the bound). For instance, suppose $\mu=P_X$  and $c_{f_*}=1$. Then, if $\|f_*-f\|_{L^2(P_X)} \leq \epsilon$, that is, the predictions of $f$ and $f_*$ are close to one another within $\epsilon$, then their explanations are also close to each other within $\epsilon$. We further note that if $\pulloper$ is linear, then $\mu$-consistency is equivalent to the global Lipshitz continuity with $c(f_*)=\|\pulloper\|$ for each $f_*\in L^2(\mu)$.

\begin{remark} \rm
In principle, one can replace the $L^2$ spaces in Definition \ref{def::consistency} with the spaces $L^p(\Omega,\mathcal{F},\PP)$ and $L^p(\RR^n,\mu)$, respectively. 
\end{remark}

In what follows, when the context is clear, we suppress the explicit dependence of $v(S; X,f)$, where $v \in \{\vce,\vme\}$, on $X$ and $f$. Furthermore, we will refer to values $\varphi_i[N,\vpdp]$ and $\varphi_i[N,\vce]$ as marginal and conditional Shapley values.

\subsection{Relevant works on individual feature attributions}

When it comes to explanations, there are global methods such as PDP (Partial Dependence Plots) \cite{Friedman} or BETA (Black Box Explanations through Transparent Approximations) \cite{Lakkaraju2017} which describe the overall effect of features, as well as local methods such as the rule-based method Anchors \cite{Ribeiro2018anchors}, or LIME (Local Interpretable Model-agnostic Explanation) \cite{Ribeiro et al} and SHAP (SHapley Additive exPlanations) \cite{LundbergLee} which provide individualized feature attributions to explain a single prediction based on the game-theoretic work of Shapley \cite{Shapley}. 

Game-theoretic explainability methods, such as in \cite{LundbergLee}, often utilize the marginal and conditional games in \eqref{shapgame} in their deterministic rather than random form; the former can be obtained by conditioning the formulas in \eqref{shapgame} on an individual sample, i.e. $X_S=x_S$. The games are often referred to as  interventional and observational respectively, which are terms borrowed from the causality literature community. Strictly speaking, however, the interventional game is based on the direct acyclic graph associated with the feature distribution and properly defined using the $do$ operator \cite{Pearl2000}. Only under certain conditions do the marginal and interventional games coincide \cite{Zhao2019}. For this reason, we refer to the games in their probabilistic context.

There has been a collection of noted articles devoted to the difference between marginal and conditional Shapley values, a topic which is at the heart of our paper. Articles \cite{Janzing2020} and \cite{Sundararajan2019} argued that the marginal Shapley value is appropriate as an explanation of the model (in other words, explaining the input-output process) as it satisfies the so called null-player property, meaning all predictors that are not explicitly used by the model are attributed zero value. Article \cite{Chen-Lundberg} replied to the criticism with the statement that the conditional Shapley value is true-to-the-data and the marginal one is true-to-the-model and the two games have different objectives. Their introduced definitions, while being somewhat intuitive, lack a more rigorous definition.

Computing a game value of the empirical marginal game is computationally intensive for large number of predictors and nearly infeasible for the conditional one. In practice, the marginal game is approximated by the empirical marginal one defined in \eqref{empmarggame} via averaging across a background dataset (which ideally should be the dataset that includes both training and test samples). Interventional TreeSHAP algorithm \cite{LundbergIntTreeShap} is an optimized  algorithm for tree-based models which takes a background dataset as an input. It estimates the marginal Shapley explanations, where the accuracy depends on the size of the dataset; see \cite[Lemma 2.3]{Kotsiopoulos2023}. 

Paper \cite{LundbergLee} introduced the KernelSHAP model-agnostic algorithm which assumes independence of predictors and attempts to approximate the marginal Shapley value by employing variational principles. Paper \cite{LundbergTreeSHAP} introduces the path-dependent TreeSHAP algorithm that replaces the conditional game with one that mimics the conditioning by utilizing the model's tree structure, which produces an ad-hoc approximation of the conditional game value. There are also model-specific methods for estimating Shapley values for neural networks such as the DeepSHAP algorithm \cite{chen2021}.

For a proper estimation of conditional Shapley values see the notable work of \cite{aas2021}, where the approximation is done via non-parametric vine copulas. Furthermore, in \cite{olsen} the authors employ variational autoencoders with arbitrary conditioning for models with dependent features. 
To our knowledge, a rigorous error analysis has not been carried out for these methods, making the error bounds unknown.

The work of \cite{Kotsiopoulos2023} follows the ideas of \cite{Strumbelj2014} and develops a collection of sampling methods for group and coalitional game values for the marginal game. Moreover, it addresses the statistical analysis of these estimations, providing rigorous error bounds for each method.

The work of \cite{Filom2023} designs an algorithm that computes marginal explanations for the CatBoost ML algorithm. The explanations avoid the use of the background dataset and directly utilize the model's internal structure to compute marginal game values for a large class of linear game values and coalitional game values. This method is computationally fast and statistically very precise. More importantly, unlike path dependent TreeSHAP, it is implementation invariant \cite{Sundararajan2017, Filom2023}.

Grouping predictors to construct explainers has been discussed before in \cite{aas,MartingroupShapley}. In \cite{aas}, explanations of unions of predictors are constructed using the KernelSHAP \citep{LundbergLee} method; there, the unions are formed by dependencies (using correlation-based methods) and the Shapley value contributions are obtained via summation of single feature explanations across the groups. It has been observed by the authors of \cite{aas} that forming groups by dependencies alleviates the inconsistencies between the marginal explanations and the data.

In \cite{MartingroupShapley} the authors focus on constructing conditional game explainers using grouping and provide a practical perspective. The groups there are treated as players, which leads to the computation of the Shapley value for the quotient conditional game, and the groups are formed by feature knowledge rather than dependencies; the conditional game here is approximated by the method outlined in \cite{aas2021}. Our work is partially motivated by the studies in \cite{aas} and confirms some of their findings. Furthermore, our work is much more general and is applicable to game values other than Shapley.

\section{Conditional and marginal game operators}
\label{sec::obser_inter_expl}

In our work, the game $\vce$ is referred to as conditional and $\vpdp$ as  marginal; see \eqref{shapgame} for definitions. 
If predictors $X$ are independent, the two games coincide. In the presence of dependencies, however, the games are very different. The conditional game explores the data by taking into account dependencies, while the marginal game explores the model $f$ in the space of its inputs, ignoring the dependencies. Strictly speaking, the conditional game is determined by the probability measure $P_X$, while the marginal game is determined by the product probability measures $P_{X_{S}} \otimes P_{X_{-S}}$, $S \subseteq N$.

The explanations based on these two games  have been addressed in the works  \cite{Sundararajan2019,Janzing2020,Chen-Lundberg,Miroshnikov2022}. These works illustrate that, for certain types of models, the conditional Shapley explanations are consistent with observations  while the marginal ones are consistent with the model.

Building upon the aforementioned works, we offer our viewpoint by introducing operators based on the two games whose outputs are explanations viewed as random variables. This construction allows us to better understand the relationships between explanations, the data, and the model; and will be used later to discuss some stability questions motivating group explainers.

An appealing property of the marginal and conditional games is that of linearity with respect to models. Specifically, given random features $X=(X_1,X_2,\dots,X_n)$ and two continuous models $f,g$ we have
\[
v(S;X,\alpha \cdot f+g)=\alpha \cdot v(S;X,f)+v(S;X,g), \quad v\in\{\vce,\vme\}.
\]
If the game value $h[N,v]$ is also linear, the linearity extends to explanations
\[
h[N,v(S;X,\alpha \cdot f+g)]=\alpha \cdot h[N,v(S;X,f)]+h[N,v(S;X,g)], \quad  v\in\{\vce,\vme\}
\]
on the space of continuous models. To extend the marginal and conditional games to a more general class of models, we consider equivalence classes of models $L^2(\mu)$ for an appropriate Borel probability measure $\mu$, on which the games are well-defined maps. Once the spaces are defined, the linearity of explanations provides a natural approach to obtaining explanations of certain ML ensembles (such as sums of trees) because the construction of explanations focuses on each single term of the ensemble, simplifying the process of determining the appropriate game for a given case.

\subsection{Stability theory of single feature explainers based on linear game values} \label{subsec::single_feat_stability}

We begin the discussion by introducing  linear operators associated with the conditional game 
and then investigating their properties.  
A necessary ingredient for constructing such an operator is a linear game value which allows quantifying the contribution of each feature. For simplicity, in this section, we work with  the linear game value $h$ in the (marginalist) form
\begin{equation}\label{lingameform}
h_i[N,v]=\sum_{S \subseteq N\setminus\{i\}} w(S,n) \big[ v(S\cup\{i\}) - v(S) \big], \quad i \in N=\{1,2,\dots,n\}.
\end{equation}
Such game values are determined by weights $w(S,n)$ where $S$ is a proper subset of $N$.
Notice that the Shapley value \eqref{shapform} is of the form above. Indeed, game values of this form satisfy desirable properties such as linearity \hyperref[axiom:EP]{(LP)} and the null-player property \hyperref[axiom:NPP]{(NPP)} (cf. Appendix \ref{app::gameaxioms}).

\begin{definition}\label{def::condoperator} Let $h$ be a game value as in \eqref{lingameform} and $X=(X_1,\dots,X_n)$ be defined on $(\Omega,\mathcal{F},\PP)$. 

\begin{itemize}
  \item [(i)] The conditional game operator $
\oper^{\CE}: L^2(\Omega,\mathcal{F},\P) \to  L^2(\Omega,\mathcal{F},\P)^n$
associated with $h,X$ is defined by 
\begin{equation}\label{condoperator}
\oper_i^{\CE}[Z; h,X]:=\sum_{S \subseteq N\setminus\{i\}} w(S,n) 
\big[ \E[Z|X_{S\cup\{i\}}] - \E[Z|X_{S}] \big], \quad i \in N,
\end{equation}
where we set $\E[Z|X_{\varnothing}]:=\E[Z]$.

\item [$(ii)$] The pullback conditional game operator $\bar{\oper}^{\CE}: L^2(P_X) \to  L^2(\Omega,\mathcal{F},\P)^n$ associated with $h,X$ is defined by 
\begin{equation*}
\bar{\oper}^{\CE}[f;h,X]:=h[N,\vce(\cdot ;X,f)].
\end{equation*}
\end{itemize}
\end{definition}

For the ease of notation, throughout this section we denote the Hilbert space $L^2(\Omega,\mathcal{F},\P)$ by $L^2(\PP)$ and  assume that $X=(X_1,\dots,X_n)$ is a random vector defined on $(\Omega,\mathcal{F},\P)$.

\begin{theorem}[\bf properties]\label{prop::condoperator} Let h, X and 
$\oper^{\CE}$ be as in Definition \ref{def::condoperator}. Then:
\begin{itemize}

\item [$(i)$] $\oper^{\CE}_i$ is a bounded linear, self-adjoint operator satisfying
\begin{equation}\label{condoperatorstab}
\|\oper^{\CE}_i[Z;h,X]\|_{L^2(\PP)} \leq \Big(\sum_{S \subseteq N\setminus\{i\}} |w(S,n)|\Big) \|Z\|_{L^2(\P)}.
\end{equation}

\item [$(ii)$] Let $X_i \in L^2(\P)$. If $X_i \indep X_{N\backslash \{i\}}$ and \hyperref[axiom:NN]{(NN)} holds, then $\|\oper^{\CE}_i\|=\sum_{S \subseteq N\setminus\{i\}} |w(S,n)|$.

\item [$(iii)$] 
$\{Z\in L^2(\PP):\, Z \indep X\}  \subseteq \{Z\in L^2(\PP):\, \E[Z|X_{S \cup \{i\}}] = \E[Z|X_{S}], \,\, S \subseteq N \setminus \{i\} \} \subseteq {\rm Ker}(\oper_i^{\CE})$, $i\in N$.

\item [$(iv)$] 
$\{Z\in L^2(\PP):\, Z \indep X\}\subseteq \{Z\in L^2(\PP): \E[Z|X]=const \,\, \text{$\P$-a.s.}  \} 
\subseteq {\rm Ker}(\oper^{\CE}).$

\item[$(v)$] ${\rm Ker}(\oper^{\CE})=\{Z\in L^2(\PP): \E[Z|X]=const \,\, \text{$\P$-a.s.}  \} $
if $h$ satisfies axiom \hyperref[axiom:TPG]{(TPG)}.

\item [$(vi)$] 
If $h$ satisfies the efficiency property \hyperref[axiom:EP]{(EP)}, then $\sum_{i=1}^n \oper_i^{\CE}(Z) = \E[Z|X]-\E[Z]$.
\end{itemize}
\end{theorem}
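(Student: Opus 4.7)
For part (i), the key observation is that each $\E[\,\cdot\,|X_S]$ is, by standard $L^2$ theory, the orthogonal projection of $L^2(\PP)$ onto the closed subspace $L^2(\sigma(X_S))$; hence it is linear, self-adjoint, and a contraction. Since $\oper^{\CE}_i$ is a real linear combination of operators of the form $\E[\,\cdot\,|X_{S\cup\{i\}}]-\E[\,\cdot\,|X_S]$, linearity and self-adjointness are immediate. For the norm bound I will use that $\sigma(X_S)\subseteq\sigma(X_{S\cup\{i\}})$, so by the Pythagorean identity applied to nested projections, $\|\E[Z|X_{S\cup\{i\}}]-\E[Z|X_S]\|_{L^2}^2=\|\E[Z|X_{S\cup\{i\}}]\|_{L^2}^2-\|\E[Z|X_S]\|_{L^2}^2\leq\|Z\|_{L^2}^2$; the stated inequality then follows by the triangle inequality. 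For part (ii), under $X_i\indep X_{N\setminus\{i\}}$ and $w(S,n)\geq 0$, I will saturate the bound by taking $Z=g(X_i)$ with $\E[g(X_i)]=0$ and $\|g(X_i)\|_{L^2}=1$: independence collapses $\E[Z|X_S]=\E[Z]=0$ and measurability gives $\E[Z|X_{S\cup\{i\}}]=Z$, so $\oper^{\CE}_i[Z]=\bigl(\sum_S w(S,n)\bigr)Z$, which matches the right-hand side.

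Parts (iii) and (iv) are structural: if $Z\indep X$ then $\E[Z|X_A]=\E[Z]$ for every $A\subseteq N$ (including $A=\varnothing$), so every bracketed term in \eqref{condoperator} vanishes, giving both inclusions in (iii). For (iv), the first inclusion follows from the same observation applied to $A=N$. For the second inclusion, if $\E[Z|X]$ equals a constant $c$ almost surely, then for any $S\subseteq N$ the tower property yields $\E[Z|X_S]=\E[\E[Z|X]|X_S]=c$ a.s., so all conditional expectations agree and $\oper^{\CE}_i[Z]=0$ for every $i\in N$.

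Parts (v) and (vi) both rest on viewing the map $S\mapsto v(S):=\E[Z|X_S]$ as a (random) set function on $N$ with $v(\varnothing)=\E[Z]$ and $v(N)=\E[Z|X]$, for which, by \eqref{lingameform} and \eqref{condoperator}, one has $\oper^{\CE}_i[Z;h,X]=h_i[N,v]$ componentwise. For (vi), applying the efficiency property \hyperref[axiom:EP]{(EP)} in its non-cooperative form (as noted in \S\ref{subsec::preliminaries}) gives $\sum_{i=1}^n\oper^{\CE}_i[Z]=v(N)-v(\varnothing)=\E[Z|X]-\E[Z]$. For (v), the inclusion $\supseteq$ is part (iv); the reverse inclusion is where the axiom \hyperref[axiom:TPG]{(TPG)} enters: by its content (as recorded in Appendix \ref{app::gameaxioms}), $h[N,v]=0$ forces $v(N)=v(\varnothing)$, i.e.\ $\E[Z|X]=\E[Z]$ $\PP$-a.s.\ on the random game $v$ above.

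The main technical points I expect to require care are the interpretation of ``equality in $(i)$'' in part (ii), which I read as equality of operator norms realised by a centred function of $X_i$ alone; and part (v), where the reduction to the axiom (TPG) must be applied pathwise/almost surely to the random game $S\mapsto\E[Z|X_S]$. Everything else reduces to standard $L^2$ facts about nested conditional expectations and the linearity of $h$.
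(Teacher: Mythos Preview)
Your proposal is correct and follows essentially the same route as the paper: nested orthogonal projections for $(i)$, a centred function of $X_i$ alone to saturate the bound in $(ii)$ (the paper uses $X_i-\E[X_i]$ specifically), the tower property for $(iii)$--$(iv)$, and the pointwise application of (TPG) to the centred random game for $(v)$. The only cosmetic difference is that the paper makes the centring explicit in $(v)$ by applying (TPG) to $Z-\E[Z]$ rather than invoking the non-cooperative form, but the content is identical.
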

\begin{proof}
See Appendix \ref{app::prop::condoperator_proof}.
\end{proof}

\begin{remark}\rm An immediate consequence of Theorem \ref{prop::condoperator}$(i)$-$(iv)$ is the following stronger inequality
\begin{equation*}
\|\oper^{\CE}_i[Z;h,X]\|_{L^2(\PP)} \leq \Big(\sum_{S \subseteq N\setminus\{i\}} |w(S,n)|\Big) \|Z-\E[Z]\|_{L^2(\P)}.
\end{equation*}
\end{remark}

We next present two corollaries to Theorem \ref{prop::condoperator}; see Appendix \ref{app::prop::condoperator_proof} 
for their proofs.
\begin{corollary}\label{corr::cond_operator_cons}
Let $Y \in L^2(\Omega,\mathcal{F},\P)$ and $h$ be as in \eqref{lingameform}. Set $\epsilon := Y - \E[Y|X]$. Then
\begin{equation}\label{dataconsist}
\oper^{\CE}[Y ; h,X]=\oper^{\CE}[ \E[Y|X]; h,X], \quad \oper^{\CE}[ \epsilon; h,X] =0.
\end{equation}
\end{corollary}
\begin{proof}
Follows immediately from that fact that $\epsilon\in {\rm Ker}(\oper^{\CE})$ due to Theorem \ref{prop::condoperator}$(iv)$.
\end{proof}

Equation \eqref{dataconsist}  states that if the regressor is independent of the noise, then the conditional explanations of the model and the response variable coincide.

\begin{corollary}\label{corr::cond_operator_cont} Let $h$, $X$, $\bar{\oper}^{\CE}$ be as in Definition \ref{def::condoperator}.
\begin{itemize}
\item [$(i)$] The operator $\bar{\oper}^{\CE}$ is a bounded linear operator satisfying
\[
\|\bar{\oper}^{\CE}[f_1;h,X]-\bar{\oper}^{\CE}[f_2;h,X]\|_{L^{2}(\P)^n}\leq C\|f_1-f_2\|_{L^2(P_X)}.
\]
Here $C:=\sqrt{n}\max_{i}(C_i)$ where $C_i$ is the constant on the right-hand side of \eqref{condoperatorstab}.

\item [$(ii)$]  For a game value $h$ of the form \eqref{lingameform} 
  which satisfies \hyperref[axiom:NN]{(NN)} and the efficiency property \hyperref[axiom:EP]{(EP)},  the Lipschitz inequality from $(i)$ can be improved as
\begin{equation}\label{effvalbound}
\|\bar{\oper}^{\CE}[f_1;h,X]-\bar{\oper}^{\CE}[f_2;h,X]\|_{{L^{2}(\P)^n}}\leq \|f_1-f_2\|_{L^2(P_X)}.
\end{equation}

\item [$(iii)$] 
 One has ${\rm Ker}(\bar{\oper}^{\CE})\supseteq\{f\in L^2(P_X):f=const \,\, \text{$P_X$-a.s.}\}$ with equality achieved if $h$ satisfies \hyperref[axiom:TPG]{(TPG)}.
\end{itemize}
\end{corollary}
\begin{proof}
See Appendix \ref{app::prop::condoperator_proof}.
\end{proof}

\begin{remark}
\rm
Arguments in  \ref{app::prop::condoperator_proof}  also show that  
$\|\oper^{\CE}[Z_1;h,X]-\oper^{\CE}[Z_2;h,X]\|_{L^{2}(\P)^n}\leq C\|Z_1-Z_2\|_{L^2(\P)}$
with $C$ as above; and if conditions $w(S,n)\geq 0$ and the efficiency are satisfied, the inequality may be sharpened as  
$\|\oper^{\CE}[Z_1;h,X]-\oper^{\CE}[Z_2;h,X]\|_{L^{2}(\P)^n}\leq \|\E[Z_1-Z_2|X]\|_{L^2(\P)}\leq \|Z_1-Z_2\|_{L^2(\P)}$.
For details of those arguments, see the proof of Corollary \ref{corr::cond_operator_cont} in Appendix \ref{app::prop::condoperator_proof}.
\end{remark}

Corollary  \ref{corr::cond_operator_cont} implies that for two distinct models $f_1(x)$, $f_2(x)$ that approximate the data well, the  conditional explanations are consistent with those of the data.

We next take a similar approach in constructing an operator based on the marginal game. To choose an appropriate space of models, note that for any bounded $f \in \mathcal{C}_{\mathcal{B}(\RR^n)}$ the marginal game is given by
\[
\vpdp(S;X;f)= \int f(X_S,x_{-S}) P_{X_{-S}}(dx_{-S}), \quad S \subseteq N,
\]
which implies that 
\begin{equation*}
\E\big[\vpdp(S;X;f)\big]= \int f(x_S,x_{-S}) [P_{X_S} \otimes P_{X_{-S}}](d x_S, dx_{-S}).
\end{equation*}
Since the marginal explanations based on the game value \eqref{lingameform} are  linear combinations of $\vpdp(S;X;f)$, $S \subseteq N$, natural domains for the marginal operator are the spaces $L^q(\intP_X)$, $q\geq 1$, with the corresponding co-domains being $L^q(\P)$, where
\begin{equation}\label{Ptilde}
\intP_X:=\frac{1}{2^n}\sum_{S\subseteq N} P_{X_S} \otimes P_{X_{-S}}
\end{equation}
with the corresponding $L^q$-norm
\[
 \|f\|^q_{L^q(\intP_X)} := \frac{1}{2^n} \sum_{S \subseteq N}  \int f^q(x_S, x_{-S})   [P_{X_S} \otimes P_{X_{-S}}](dx_S,dx_{-S}),
\]
where we ignore the variable ordering in $f$ to ease the notation, and we assign $P_{X_{\varnothing}} \otimes P_X=P_X \otimes P_{X_{\varnothing}}=P_X$. 
In what follows, we develop the $L^2$-theory for the marginal explanations.

\begin{definition}\label{def::margoperator} Let $h$, $X$ be as in Definition \ref{def::condoperator}. The marginal game operator $
\bar{\oper}^{\ME}: L^2(\intP_X) \to  L^2(\Omega,\mathcal{F},\P)^n$
associated with $h,X$ is defined by 
\begin{equation}\label{margoperator}
\bar{\oper}^{\ME}[f;h,X]:=h[N,\vpdp(\cdot\,;X,f)].
\end{equation}
\end{definition}

\begin{theorem}[\bf properties]\label{prop::margoperator} 

Let $X$, $h$, $f$, and $(\bar{\oper}^{\ME},L^2(\tP_X))$ be as in Definition \ref{def::margoperator}. Then:
\begin{itemize}

\item [$(i)$] $\bar{\oper}_i^{\ME}$ is a well-defined, bounded linear operator satisfying

\begin{equation*}
\begin{aligned}
\|\bar{\oper}^{\ME}_i[f;h,X]\|_{L^2(\PP)} & \leq 2^{\frac{n+1}{2}}\Big(\sum_{S \subseteq N \setminus \{i\}} w^2(S,n)\Big)^{\frac{1}{2}} \|f\|_{L^2(\intP_X)}. 
\end{aligned}
\end{equation*}

\item [$(ii)$] $\{f \in L^2(\tP_X): f=const\,\,  \text{$\tP_X$-a.s.}\} \subseteq {\rm Ker}(\bar{\oper}^{\ME})$.

\item [$(iii)$]  If axiom \hyperref[axiom:TPG]{(TPG)} holds, then ${\rm Ker}(\bar{\oper}^{\ME}) \subseteq\{f \in L^2(\tP_X): f=const\,\,  \text{$P_X$-a.s.}\}$.

\item [$(iv)$] If axiom \hyperref[axiom:TPG]{(TPG)} holds and $\tP_X\ll P_X$, ${\rm Ker}(\bar{\oper}^{\ME}) = \{f \in L^2(\tP_X): f=const\,\,  \text{$\tP_X$-a.s.}\}$.

\item [$(v)$] If $f(x)=f(x_{N \setminus \{i\} })$ for some $i \in N$, then $i$ is a null player for $\vpdp(\cdot; X,f)$.

\item [$(vi)$] $\{f \in L^2(\tP_X): f(x)=f(x_{N\backslash \{i\}}) \text{  $\tP_X$-a.s.}\} \subseteq {\rm Ker}(\bar{\oper}_i^{\ME})$.

\item [$(vii)$]  If $h$ satisfies the efficiency property \hyperref[axiom:EP]{(EP)}, then 
$\sum_{i=1}^n \bar{\oper}_i^{\ME}[f] = f(X)-\E[f(X)]$.
\end{itemize}
\end{theorem}
\begin{proof}
See Appendix \ref{app::prop::margoperator_proof}.
\end{proof}

\begin{remark}\rm An immediate consequence of Theorem \ref{prop::margoperator}$(i)$-$(ii)$ is the following stronger inequality
\begin{equation*}
\|\bar{\oper}^{\ME}_i[f;h,X]\|_{L^2(\PP)} \leq 2^{\frac{n+1}{2}}\Big(\sum_{S \subseteq N \setminus \{i\}} w^2(S,n)\Big)^{\frac{1}{2}} \|f-\tilde{f}_0\|_{L^2(\intP_X)}, \quad \tilde{f}_0 := \E_{x \sim \intP_X}[f(x)]. 
\end{equation*}
\end{remark}

\begin{lemma}\label{lemm::discont} Let $X$, $h$ be as in Definition \ref{def::margoperator}. Let $f_1,f_2 \in L^2(\intP_X)$ satisfy $f_1(x) - f_2(x) = \sum_{i=1}^n \eta_i(x_i)$. Suppose that the weights in \eqref{margoperator} satisfy \hyperref[axiom:NVA]{(NVA)}. Then 
\begin{equation*} 
\|\bar{\oper}^{\ME}[f_1;h,X]-\bar{\oper}^{\ME}[f_2;h,X]\|_{L^{2}(\P)^n} \geq C \big( \|f_1 - f_2\|_{L^{2}( \intP_{X})}-|\E[f_1(X)-f_2(X)]| \big)
\end{equation*}
for some $C>0$ independent of $f_1,f_2$ provided that $\eta_i\in L^2(P_{X_i})$ for each $i$.
\end{lemma}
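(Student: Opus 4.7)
The plan is to exploit linearity and reduce to a single perturbation $g := f_1 - f_2 = \sum_{i=1}^n h_i(x_i)$, for which $\bar{\oper}^{\ME}[g]$ can be computed explicitly. Direct substitution in \eqref{shapgame} gives
\[
\vpdp(S;X,g) = \sum_{i \in S} h_i(X_i) + \sum_{i \notin S} \E[h_i(X_i)],
\]
so for any $i \notin S$ one has $\vpdp(S\cup\{i\};X,g) - \vpdp(S;X,g) = h_i(X_i) - \E[h_i(X_i)]$. Plugging this into \eqref{lingameform} collapses the sum and yields
\[
\bar{\oper}^{\ME}_i[g;h,X] = \alpha_i\,\bigl(h_i(X_i) - \E[h_i(X_i)]\bigr), \qquad \alpha_i := \sum_{S \subseteq N\setminus\{i\}} w(S,n),
\]
so that $\|\bar{\oper}^{\ME}[g]\|^2_{L^2(\P)^n} = \sum_i \alpha_i^2 \|\tilde{h}_i\|^2_{L^2(P_{X_i})}$ with $\tilde{h}_i := h_i - \E[h_i(X_i)]$. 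I will assume $\alpha_{\min} := \min_i |\alpha_i| > 0$, which holds for any efficient game value (then each $\alpha_i = 1$); otherwise the claimed inequality is vacuous for an additive $g$ concentrated on a coordinate with $\alpha_i = 0$.

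Next, center: set $\mu := \E[g(X)] = \sum_i \E[h_i(X_i)]$ and $\tilde{g} := g - \mu = \sum_i \tilde{h}_i$. The triangle inequality in $L^2(\widetilde{P}_X)$ immediately gives $\|\tilde{g}\|_{L^2(\widetilde{P}_X)} \geq \|g\|_{L^2(\widetilde{P}_X)} - |\E[g(X)]|$, so it suffices to establish a dimensional bound
\[
\sum_{i=1}^n \|\tilde{h}_i\|^2_{L^2(P_{X_i})} \geq c_n\,\|\tilde{g}\|^2_{L^2(\widetilde{P}_X)}
\]
with some $c_n > 0$ depending only on $n$; chained with the previous paragraph this yields the lemma with $C = \alpha_{\min}\sqrt{c_n}$.

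The main work — and the only step beyond routine algebra — is a short combinatorial computation of $\|\tilde{g}\|^2_{L^2(\widetilde{P}_X)}$. Expanding $\tilde{g}^2 = \sum_i \tilde{h}_i^2 + 2\sum_{j<k}\tilde{h}_j\tilde{h}_k$ and integrating against each $P_{X_S} \otimes P_{X_{-S}}$: the diagonal terms always contribute $\E[\tilde{h}_i^2(X_i)]$, while a cross term $\tilde{h}_j\tilde{h}_k$ contributes $\E[\tilde{h}_j(X_j)\tilde{h}_k(X_k)]$ when $j,k$ lie on the same side of the partition $(S, N\setminus S)$ and vanishes otherwise (the product structure makes the two factors independent and each is centered). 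Counting the $2 \cdot 2^{n-2} = 2^{n-1}$ same-side subsets out of $2^n$ total and dividing by $2^n$ produces the identity
\[
\|\tilde{g}\|^2_{L^2(\widetilde{P}_X)} = \tfrac{1}{2}\sum_i \|\tilde{h}_i\|^2_{L^2(P_{X_i})} + \tfrac{1}{2}\|\tilde{g}(X)\|^2_{L^2(\P)}.
\]
Bounding the last term by Cauchy--Schwarz gives $\|\tilde{g}(X)\|^2_{L^2(\P)} \leq n\sum_i \|\tilde{h}_i\|^2_{L^2(P_{X_i})}$, whence $c_n = 2/(n+1)$ and $C = \alpha_{\min}\sqrt{2/(n+1)}$. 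I expect the only delicate point to be the counting of same-side versus opposite-side contributions in the sum over $S$; everything else is direct.
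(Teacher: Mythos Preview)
Your argument is correct and follows the same overall skeleton as the paper's proof: both compute $\bar{\oper}^{\ME}_i$ explicitly on an additive perturbation, reduce to the centered case, and then bound $\|\tilde g\|_{L^2(\widetilde P_X)}$ by $\big(\sum_i\|\tilde h_i\|_{L^2(P_{X_i})}^2\big)^{1/2}$. The difference is in that last step. The paper avoids your combinatorial expansion entirely by observing that for a one-variable function $\|h_i\|_{L^2(P_{X_i})}=\|h_i\|_{L^2(\widetilde P_X)}$ (each marginal of $\widetilde P_X$ on a single coordinate is $P_{X_i}$), after which a direct Cauchy--Schwarz in $L^2(\widetilde P_X)$ gives $\|\sum_i h_i\|_{L^2(\widetilde P_X)}^2\le n\sum_i\|h_i\|_{L^2(\widetilde P_X)}^2$, i.e.\ $C=1/\sqrt{n}$. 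Your route, via the identity $\|\tilde g\|_{L^2(\widetilde P_X)}^2=\tfrac12\sum_i\|\tilde h_i\|^2+\tfrac12\|\tilde g(X)\|_{L^2(\PP)}^2$, is longer but buys you a sharper constant $C=\alpha_{\min}\sqrt{2/(n+1)}$ and an exact formula that is of some independent interest. You are also more careful than the paper in carrying the factor $\alpha_i=\sum_{S\subseteq N\setminus\{i\}}w(S,n)$ and flagging the nondegeneracy hypothesis $\alpha_{\min}>0$; the paper's displayed line $\|\bar{\oper}^{\ME}_i[f_1-f_2]\|_{L^2(\PP)}=\|h_i(X_i)\|_{L^2(\PP)}$ silently takes $\alpha_i=1$, which holds for efficient $h$ but not in general.
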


\begin{proof}
See Appendix  \ref{app::lemm::discont}.
\end{proof}

Theorem \ref{prop::margoperator}$(i)$ states that the marginal operator is bounded in $L^2(\intP_X)$ and hence the marginal explanations are continuous in $L^2(\intP_X)$. In addition, Lemma \ref{lemm::discont}  guarantees (in special cases) that models that are far apart in $L^2(\intP_X)$ yield marginal explanations that are far apart. Under dependencies in predictors, however, two models that are close in $L^2(P_X)$ may yield (as we will see) marginal explanations that are far apart in $L^2(\intP_X)$, which may cause  the map  $(X,f)\mapsto \bar{\oper}^{\ME}[f;X]$ to be unbounded on some other domains; see the discussion below in \S \ref{sec::instab_marg}.

\begin{remark}
\rm The theory we developed in \S \ref{subsec::single_feat_stability} views explanations as maps from a space of models to a space of random variables. While the intuitive notions of true-to-the-model and true-to-the-data introduced in \cite{Chen-Lundberg} are not equivalent to the continuity in $L^2(\intP_X)$ and $L^2(P_X)$, respectively, they are related. Roughly speaking, for explanations to be true-to-the-data, it is necessary for the explanation map to be continuous in $L^2(P_X)$, and to be true-to-the-model continuity in $L^2(\intP_X)$ is required. Below we present a simple example illustrating that marginal explanations depend on the model representation, while the conditional ones do not.
\end{remark}

\begin{example}\label{ex::marg_instab} \rm
Let $X=(X_1,X_2,X_3)$  with $\E[X_i]=0$. Suppose that $X_i=Z + \epsilon_i$, $\epsilon_i\sim \mathcal{N}(0,\delta)$, $i\in\{1,2\}$, for some small $\delta > 0$, where $Z \sim \mathcal{N}(0,1)$. Also suppose that
$\epsilon_1,\epsilon_2,Z,X_3$ are independent, and let the response variable be
\begin{equation*}
Y  = f_0(X) := X_1 + X_2 + X_3.
\end{equation*}

Note that there are many good models defined on $\Chi=\widetilde{\Chi}=\RR^3$ that represent the same data in $L^2$ sense. For instance, consider
\begin{equation*}
  f_{\alpha}(x)=(1+\alpha)x_1+(1-\alpha)x_2+x_3, \quad x \in \RR^3, \, \alpha \in [0,1],
\end{equation*}
in which case the response variable can be expressed by 
\[
Y = f_{\alpha}(X) + \epsilon_{\alpha}, \quad \|\epsilon_{\alpha}\|_{L^2(\P)} \leq \sqrt{2}\delta,
\]
where $\epsilon_{\alpha}:=\alpha(\epsilon_2-\epsilon_1)$. 
Note that the models satisfy: 
\begin{equation*}
 f_{\alpha} \in L^2(\tP_X), \quad \|f_{\alpha}-f_{\beta}\|_{L^2(P_X)} = \sqrt{2} \delta |\alpha-\beta|, \quad |\alpha-\beta| \leq \|f_{\alpha}-f_{\beta}\|_{L^2(\intP_X)}<\infty.
\end{equation*}

Consider next the conditional explanations based on Shapley value $h=\varphi$. Direct computations of the explanations for the response variable give:
\begin{equation*}
\begin{aligned}
\oper^{\CE}_1[Y; \varphi,X]&=\frac{1}{2}\big( 2 X_1 + \E[X_2|X_1] -  \E[X_1|X_2] \big)=X_1 + O(\delta),\\
\oper_2^{\CE}[Y; \varphi,X]&=\frac{1}{2}\big( 2 X_2 + \E[X_1|X_2] -  \E[X_2|X_1] \big)=X_2 + O(\delta),\\
\oper^{\CE}_3[Y; \varphi,X]&=X_3.
\end{aligned}
\end{equation*}
Using the fact that $\epsilon_{\alpha}$ and $X_3$ are independent, we obtain the explanations of the residuals to be
\begin{equation*}
\begin{aligned}
\oper^{\CE}_i[\epsilon_{\alpha}; \varphi,X]&=\frac{\alpha}{2}\big( \epsilon_1-\epsilon_2 \pm (\E[\epsilon_1|X_1] + \E[\epsilon_2|X_2]) \big)=O(\delta), \, i\in\{1,2\} \quad \text{and} \quad \oper^{\CE}_3[\epsilon_{\alpha}; \varphi,X]=0.\\
\end{aligned}
\end{equation*}
Then, employing the linearity of $\oper^{\CE}$, the conditional Shapley explanations for $f_{\alpha}$ are then given by

\begin{equation}\label{ex::obs_shap}
\begin{aligned}
\oper^{\CE}_1[f_{\alpha}(X); \varphi,X]  =X_1 + O(\delta), \quad \oper^{\CE}_2[f_{\alpha}(X); \varphi,X] =X_2 + O(\delta), \quad \oper^{\CE}_3[f_{\alpha}(X); \varphi,X]=X_3.
\end{aligned}
\end{equation}

Furthermore, for any two models $f_{\alpha}$ and $f_{\beta}$, we have
\[
 \|\bar{\oper}^{\CE}_i[f_{\alpha};\varphi,X]-\bar{\oper}^{\CE}_i[f_{\beta}; \varphi,X]\|_{L^2(\P)} \leq (\alpha-\beta) O(\delta), \quad i \in \{1,2,3\},
\]
where $|O(\delta)|\leq 3 \delta$. Thus, as $\delta \to 0$, we get the same conditional explanations in the limit for all models $f_{\alpha}$.

On the other hand, computing marginal expectations, we obtain
\begin{equation}\label{ex::marg_shap}
\bar{\oper}^{\ME}_1[f_{\alpha};\varphi,X] = (1+ \alpha) X_1, \quad \bar{\oper}^{\ME}_2[f_{\alpha};\varphi,X] = (1-\alpha)  X_2,  \quad \bar{\oper}^{\ME}_3[f_{\alpha};\varphi,X] = X_3.
\end{equation}
Furthermore, for any two models $f_{\alpha}$ and $f_{\beta}$, we have
\begin{equation*}
 \bar{\oper}^{\ME}_i[f_{\alpha};\varphi,X]-\bar{\oper}^{\ME}_i[f_{\beta}; \varphi,X]=\pm(\alpha-\beta)X_i=\pm(\alpha-\beta)Z+O(\delta), \quad i \in \{1,2\} \quad \text{in \, $L^2(\P)$}.
\end{equation*}

Comparing equations \eqref{ex::obs_shap} and \eqref{ex::marg_shap}, we see that the conditional Shapley values for predictors $X_1,X_2$ are independent of the representative model up to small additive noise, while that is not the case for the marginal ones. 
\end{example}

\subsection{Stability of marginal explanations on a space equipped with $L^2(P_X)$-norm}\label{sec::instab_marg}

The objective of this section is to investigate when the marginal explanations behave as the conditional ones. That is, we will determine when we can expect that the two models that to have similar predictions have similar marginal explanations, and how the dependencies in features impact dissimilarity. To answer these questions, it is necessary to investigate the stability of marginal explanations on a space equipped with $L^2(P_X)$-norm.

If one attempts to equip the space $L^2(\intP_X)$ with the $L^2(P_X)$-norm, then the marginal game operator may not always be well-defined or bounded; see Theorem \ref{prop::margoperatorwellpos} and Theorem \ref{thm::margoperatorunbound}. To understand this, define the following space: 
\begin{equation}\label{datasubspace}
H_X := \bigg( \Big\{[f]: [f]=\big\{\tilde{f}: \text{$\tilde{f}=f$ $P_X$-a.s. and } \int |\tilde{f}(x)|^2 \intP_X(dx) < \infty \big\}\Big\}, \, \| \cdot \|_{L^2(P_X)} \bigg) \hookrightarrow L^2(P_X).
\end{equation}

Note that either $H_X$  contains exactly the same elements as $L^2(\intP_X)$ or some elements of $L^2(\intP_X)$ are placed in the same equivalence class of $H_X$. Strictly speaking, $H_X$  is a quotient space of $L^2(\intP_X)$ modulo $H^0_X:=\{f\in L^2(\intP_X): \|f\|_{L^2(P_X)}=0\}$ equipped with the $L^2(P_X)$-norm; keep in mind that, since $P_X \ll \intP_X$, if $f_1=f_2$ $\intP_X$-a.s., then $f_1=f_2$ $P_X$-almost surely.  

It turns out, as the lemma below states, that the absolute continuity of $\intP_X$ with respect to $P_X$ is a necessary and sufficient condition for the marginal game to be a well-defined map on $H_X$.
\begin{lemma}\label{lmm::marg_game_wellposed}
Let $X=(X_1,\dots,X_n)$ be the predictors. The map $f \in H_X \mapsto \{\vme(S;X,f)\}_{S \subseteq N} \in (L^2(\P))^{2^n}$ is well-defined if and only if $\intP_X \ll P_X$. Consequently, $(\bar{\oper}^{\ME}[\cdot;h,X],H_X)$ is well-defined for every linear game value $h$ if and only if $\intP_X \ll P_X$.
\end{lemma}
\begin{proof}
See Appendix \ref{app::lmm::marg_game_wellposed}. 
\end{proof}

In other words, the above lemma states that if the density of $\tilde{P}_X$  with respect to $P_X$  exists, then the game value as an operator on $H_X$ is well-defined. A geometric consequence of the above lemma is given in Remark \ref{remark::geometry}.

\begin{remark}\label{remark::geometry} \rm
The continuity condition $\intP_X \ll P_X$ can be related to the shape of the support of $P_X$. Indeed, it requires
${\rm{supp}}(\intP_X)={\rm{supp}}(P_X)$. It is not hard to show that, conversely, this condition implies $\intP_X \ll P_X$ when $P_X$ admits a density function which is Lebesgue a.e. positive on ${\rm{supp}}(P_X)$. On the other hand, assumptions on the shape of ${\rm{supp}}(P_X)$ can preclude the continuity $\intP_X \ll P_X$. E.g., if the support is a lower-dimensional compact subset of $\Bbb{R}^n$ without any product structure, then $\intP_X \not\ll P_X$. See Appendix \ref{app::measures_relationship} for details.
\end{remark}

Given the lemma above, it is not surprising that the absolute continuity also comes up with regard to the marginal operator.

\begin{theorem}[\bf well-posedness]\label{prop::margoperatorwellpos}
Let $X=(X_1,\dots,X_n)$ be the predictors, and $h$  defined in \eqref{lingameform}. 
\begin{itemize}
  \item [$(i)$] Suppose $\intP_X \ll P_X$. Then $H_X \cong(L^2(\intP_X),\|\cdot\|_{L^2(P_X)})$ and $(\bar{\oper}^{\ME}[\cdot;h,X],H_X)$ acting via the formula \eqref{margoperator} is well-defined.

  \item [$(ii)$] Suppose  $\intP_X \not \ll P_X$. Then, for each $[f] \in H_X$ there exist $f_1,f_2 \in [f]$, such that $\|f_1 - f_2 \|_{L^2(\tP_X)}\neq 0$. Consequently, $H_X \cong(L^2(\intP_X)/H_X^0,\|\cdot\|_{L^2(P_X)})$, and $(\bar{\oper}^{\ME}[\cdot;h,X],H_X)$ is well-defined if and only if $H^0_X=\big\{f \in L^2(\tP_X): \|f\|_{L^2(P_X)}=0 \big\} \subseteq {\rm Ker}(\bar{\oper}^{\ME}[\cdot;h,X],L^2(\tP_X))$.

\end{itemize}
\end{theorem}
\begin{proof}
See Appendix \ref{app::prop::margoperatorwellpos}.
\end{proof}

Part $(ii)$ of Theorem \ref{prop::margoperatorwellpos} states that, even if $\intP_X \not \ll P_X$,  the marginal operator on $H_X$  may still be well-defined if $H_X^0$  is in the kernel of the marginal operator on $L^2(\tilde{P}_X)$ since functions in equivalence classes of $H_X^0$ when plugged into the formula \eqref{lingameform} yield zero explanations. In such a situation,  the linear combination of terms $\vme(S;f,X)$ encoded by $h$ gives rise to a well-defined map on $H_X$ even though at least one assignment 
$[f]\mapsto \vme(S;f,X)$ should be ill-posed, as according to Lemma \ref{lmm::marg_game_wellposed}.

\begin{example}\label{ex::marg_illposed} \rm
 Consider $h=\varphi$. Let $X=(X_1,X_2)$ satisfy $X_2=g(X_1)+Z$ where $Z$ is a bounded random variable independent of $X_1$ and $g$ is continuous. Suppose that the supports of $X_1,X_2$ are $\Chi_1=\Chi_2=\RR$, and that $|Z| \leq M$. In this case, $\Chi \subseteq \{(x_1,x_2): x_1\in \RR, |x_2-g(x_1)| \leq M\}$ where $\Chi$ is the support of $(X_1,X_2)$, and hence the complement $\Chi^C$ is a non-empty open set. Pick any open rectangle $R=(a,b) \times (c,d) \subset \Chi^C$ and set $f_R(x):=\1_{R}(x)\in L^2(\tP_X)$. Then, using the fact $P_X(R)=0$, we obtain \text{$\PP$-a.s.} 
\[
\vpdp(\varnothing;X,f_R)=\vpdp(\{1,2\};X,f_R)=0 \quad 
\]
 and
\[
\vpdp(\{1\};X,f_R)=\P(X_2 \in (c,d)) \1_{(a,b)}(X_1), \quad \vpdp(\{2\};X,f_R)=\P(X_1 \in (a,b)) \1_{(c,d)}(X_2).
\]
Then, we obtain \text{$\PP$-a.s.}
\[
\varphi_1[\vme(\cdot;X,f_R)]=-\varphi_2[\vme(\cdot;X,f_R)] = \frac{1}{2} \big( \P(X_2 \in (c,d)) \1_{(a,b)}(X_1) - \P(X_1 \in (a,b)) \1_{(c,d)}(X_2) \big)
\]
and hence, recalling that $(a,b)\subset \Chi_1$, $(c,d)\subset \Chi_2$ 
and $(a,b)\times(c,d)\subset\Chi^C$, we have 
\[
\|\varphi_i[\vme(\cdot;X,f_R)]\|^2_{L^2(\P)}=\frac{1}{4} \big( \P(X_2 \in (c,d))^2 \P(X_1 \in (a,b)) + \P(X_1 \in (a,b))^2 \P(X_2 \in (c,d)) \big)>0.
\]
Note that $f_R \in H_X$ satisfies $\|f_R\|_{H_X}=\|f_R\|_{L^2(P_X)}=0$, and hence $f_R \in [0]$-equivalence class of $H_X$. Since the marginal Shapley formula for $0$ and $f_R$ yields different outputs, the operator $(\bar{\oper}^{\ME},H_X)$ is ill-posed. 
\end{example}

The above discussion motivates us to focus our investigation on the case $\intP_X \ll P_X$. In this case, the Radon-Nikodym derivative of $\intP_X$ with respect to $P_X$ exists and encodes information about feature dependencies. The following lemma, which will be helpful for our analysis, provides a representation of the Radon-Nikodym derivative and the space $L^2(\intP_X)$.

\begin{lemma}\label{lmm::_marg_value_bound_px}
Let $X \in \RR^n$ be predictors. Suppose  $\tilde{P}_X \ll P_X$. Let $r:=\frac{d \intP_X}{dP_X}$. Then $L^2(\intP_X)$ can be identified with the weighted 
$L^2$-space $L^2_r(P_X)$ where 
\begin{equation}\label{radon_decomp}
  r = \tfrac{1}{2^n} \sum_{S \subseteq N}r_S\geq \frac{1}{2^{n-1}}, \quad \text{where} \quad 0 \leq r_S:=\tfrac{d P_{X_{S}} \otimes P_{X_{-S}}}{d P_X}\in L^1(P_X) \quad \text{with} \quad \|r_S\|_{L^1(P_X)}=1.
\end{equation}
\end{lemma}
\begin{proof}
See Appendix \ref{app::lmm::_marg_value_bound_px}.
\end{proof}

We next establish the conditions when the marginal game is continuous, that is, bounded. This will help us to determine when the marginal operator on $H_X$ is bounded.

\begin{lemma}[\bf game boundedness]\label{lmm::marggameunbound}
Suppose $\intP_X \ll P_X$. Let $r$, $r_S$ be as in Lemma \ref{lmm::_marg_value_bound_px}.

\begin{itemize}

  \item [$(i)$] Suppose that $r=\frac{d \intP_X }{dP_X} \in L^{\infty}(P_X)$, which is equivalent to
\begin{equation}\label{boundradon}\tag{BG}
[P_{X_S} \otimes P_{X_{-S}}](A \times B) \leq M \cdot P_{(X_S,X_{-S})}(A \times B), \,\, A \in \mathcal{B}(\RR^{|S|}), \, B \in \mathcal{B}(\RR^{|-S|})
\end{equation}
for any $S \subseteq N$ and some $M \geq 0$. Then the map $f \in H_X \mapsto \vme(S;X,f) \in L^2(\P)$, $S \subseteq N$, is bounded. 

\item [$(ii)$] Let $\varnothing \neq S \subset N$. Suppose that either
\begin{equation}\label{blowupcondgame}\tag{UG1}
\small \sup \bigg\{ \frac{[P_{X_S} \otimes P_{X_{-S}}](A \times B)}{P_X(A \times B)} \cdot P_{X_{-S}}(B), \,\, A \in \mathcal{B}(\RR^{|S|}), \, B \in \mathcal{B}(\RR^{|-S|}), P_X(A \times B)>0 \bigg\} = \infty.
\end{equation}
or the non-negative, well-defined Borel function 
\begin{equation}\label{blowupcondgame2}\tag{UG2}
\rho(x_S) := \int r^{1/2}_S(x_S,x_{-S}) P_{X_{-S}}(dx_{-S})
\end{equation}
with values in $\R \cup \{\infty\}$ is not $P_{X_S}$-essentially bounded. 

Then the map $f \in H_X  \mapsto \vme(S;X,f) \in L^2(\P)$ is unbounded.
\end{itemize}
\end{lemma}
\begin{proof}
See Appendix \ref{app::lmm::marggameunbound}.
\end{proof}

\begin{theorem}[\bf game value boundedness]\label{thm::margoperatorunbound}

Let $X,h$ be as in Theorem \ref{prop::margoperatorwellpos}. Suppose $\intP_X \ll P_X$, and let $r$ and $r_S$ be as in Lemma \ref{lmm::_marg_value_bound_px}.

\begin{itemize}

\item [$(i)$]
Suppose \eqref{boundradon} holds. Then $H_X = L^2(P_X)$ and for $f \in L^2(P_X)$ 
\begin{equation}\label{marg_value_bound}
  \| \bar{\oper}^{\ME}_i[f;h,X] \|_{L^2(\PP)} \leq \Big( 1+2 \cdot \max_{S \subseteq N} \|r_S-1\|_{L^{\infty}(P_X)} \Big) \Big(\sum_{S \subseteq N\setminus \{i\}} |w(S,n)| \Big) \| f \|_{L^2(P_X)}
\end{equation}
Consequently, $(\bar{\oper}^{\ME},H_X)$ is  bounded.

\item [$(ii)$] Suppose there exist two distinct indices $i,j \in N$ such that
\begin{equation}\label{blowupcond}\tag{UO}
\sup \bigg\{ \frac{[P_{X_i} \otimes P_{X_j}](A \times B)}{P_{(X_i,X_j)}(A \times B)} \cdot P_{X_j}(B), \,\,\, A,B \in \mathcal{B}(\RR), P_{(X_i,X_j)}(A \times B)>0 \bigg\} = \infty.
\end{equation}
Suppose that the weights in \eqref{margoperator} satisfy the non-negativity condition \hyperref[axiom:NN]{(NN)} and
\begin{equation}\label{weightsblowup}
 \sum_{S\subseteq N \setminus\{i,j\}} w(S,n) > 0.
\end{equation}
Then $(\bar{\oper}_i^{\ME},H_X)$, $(\bar{\oper}_j^{\ME},H_X)$, and $(\bar{\oper}^{\ME},H_X)$ are unbounded linear operators. 
\end{itemize}
\end{theorem}

\begin{proof}
See Appendix \ref{app::thm::margoperatorunbound}.
\end{proof}

\noindent
In \S\ref{sec::Group_expl}, after a careful treatment of  more general game values, the theorem above on stability will be extended to explainers with a coalition structure; see Proposition \ref{prop::coalqgameprop}.

\begin{remark}\rm
  When $r_S=1, \ \forall S\subseteq N$ (that is the predictors are independent  and hence $\intP_X=P_X$), the bound in \eqref{marg_value_bound} becomes that of \eqref{condoperatorstab}. See Appendix \ref{app::measures_relationship}.
\end{remark}


\begin{remark}\label{BCisSharp}\rm
In Appendix \ref{app::on_condition_uo}, we shall explain how condition \eqref{blowupcond} for the unboundedness of the marginal operator emerges naturally by considering the case  of $h$ being the Shapley value $\varphi$ for which the weights $w(S,n)$ are known (cf. \eqref{shapform}).
\end{remark}

\begin{remark} \rm
In Theorem \ref{thm::margoperatorunbound}, we showed that if $r=\frac{d \intP_X}{dP_X}$ exists and belongs to $ L^{\infty}(P_X)$, then $H_X=L^2(P_X)$. It turns out that the converse is true as well. That is, if $\intP_X \ll P_X$ and $H_X=L^2(P_X)$, then $r \in L^{\infty}(P_X)$; see Lemma \ref{suppl::rn_noundedness}.
\end{remark}

Theorem \ref{thm::margoperatorunbound} suggests that there are two regimes for well-defined marginal explanations. In the first one, the explanations are bounded but the Lipschitz bound increases as the strength of dependencies increases. In the second one, the marginal operator is unbounded. Below are two examples that illustrate both cases.

\begin{example}\label{ex::marg_unbound_energy}\rm
Let $f(x) = \frac{1}{\sqrt{\delta}}(x_1 - x_2)$, $\delta>0$. Let $X=(X_1,X_2)$  with $\E[X_i]=0$. Let $X_i=Z + \epsilon_i$, $\epsilon_i\sim \mathcal{N}(0,\delta^2)$, $i\in\{1,2\}$, where $Z \sim \mathcal{N}(0,1)$, and $\epsilon_1,\epsilon_2,Z$ are independent.

First, note that $f(X)=\frac{1}{\sqrt{\delta}}(\epsilon_1-\epsilon_2)$ and hence, by independence of $\epsilon_1$ and $\epsilon_2$, we obtain
\[
\|f\|^2_{L^2(P_X)}=Var(f(X))= \delta^{-1} \cdot (Var(\epsilon_1)+Var(\epsilon_2))=2 \delta.
\]
Then, since  $\bar{\oper}_1^{\ME}[f;\varphi,X]=\frac{1}{\sqrt{\delta}}X_1, \bar{\oper}_2^{\ME}[f;\varphi,X]=-\frac{1}{\sqrt{\delta}} X_2$,
we conclude
\[
\|\bar{\oper}^{\ME}[f;\varphi,X]\|^2_{L^2(\P)^2} = \frac{2}{\delta}+O(\delta).
\]
Thus, as $\delta \to 0^+$, $\|f\|_{L^2(P_X)} \to 0$, but $\|\bar{\oper}^{\ME}[f;\varphi,X]\|_{L^2(\P)^2} \to \infty$.
\end{example}

\begin{example} \rm
Let $Y \sim \exp(1)$ and $Z\sim \mathcal{N}(0,1)$. Let $X=(X_1,X_2)$ be a random vector with values in $\R^2$ such that $P_X=\tfrac{1}{2}P_{(Y,Y)}+\tfrac{1}{2} P_Z \otimes P_Z$. By design, $\intP_X \ll P_X$ and hence the marginal Shapley value is a well-defined operator on $L^2(P_X)$. Take $t \in \R_+$ and define a square $R^t := [t-1,t] \times [t,t+1] =: I_1^t \times I_2^t$. Then, since $\lim_{t\to +\infty}\frac{(P_Y(I_j^t))^2}{P_Z(I_j^t)}=\infty$, $j\in\{1,2\}$, we have
\[
\lim_{t \to +\infty}\frac{[P_{X_1} \otimes P_{X_2}](R^t)}{P_X(R^t)}\cdot P_{X_j}(I_j^t) =\infty, \quad j \in \{1,2\}
\]
which by Theorem \ref{thm::margoperatorunbound} implies that the marginal Shapley value on $L^2(P_X)$ is unbounded.
\end{example}

The absolute continuity condition also allows to express the Wasserstein distance of the two probability measures using the Radon-Nikodym derivative, explaining how the latter controls the strength of dependencies among the predictors.

\begin{lemma} Let $X=(X_1,\dots,X_n)\in L^1(\P)$ be the predictors. Let $r$, $r_S$ be as in Lemma \ref{lmm::_marg_value_bound_px}.
\begin{equation*}
W_1( \intP_X, P_{X}) \leq  \int |x| \cdot |r(x)-1| \, P_{X}(dx) \leq \frac{1}{2^n}\sum_{S\subseteq N} \int |x| \cdot |r_S(x)-1| \, P_{X}(dx) < \infty
\end{equation*}
\end{lemma} 
\begin{proof}
Follows from Lemma \ref{lmm::_marg_value_bound_px}, Lemma \ref{app::lmm::W1_bound}, and the triangle inequality.
\end{proof}

The above lemma illustrates that dependencies are controlled by the Radon-Nikodym derivative. When $r=1$,  then $r_S=1$, $S \subseteq N$ and the two measures coincide. When $r$ deviates from $1$, the dependencies start to impact the distance. As a consequence, the marginal and conditional explanations start  to differ from one another. The estimate on this difference is discussed below in the special case when the Radon-Nikodym derivative is bounded.

\begin{lemma}\label{lmm::cond_marg_value_bound}
Suppose  $\tilde{P}_X \ll P_X$. Let $r$, $r_S$ be as in Lemma \ref{lmm::_marg_value_bound_px}. Suppose $r \in L^2(P_X)$.
\begin{itemize}

  \item [$(i)$] Let $f \in L^2_{r^2}(P_X)$. Then, for every $S \subseteq N$
    \begin{equation*}
 \E\big[ (\vce(S;X,f) - \vme(S;X,f))^2 \big] \leq \| (r_S-1) \cdot f \|^2_{L^2(P_X)} < \infty.
  \end{equation*}

  \item [$(ii)$] Let $f \in L^2_{r^2}(P_X)$. Let $h[N,v]$ be a game value in the form \eqref{lingameform}. Then
  \begin{equation*}
  \begin{aligned}
  &\Big(\E\big[ \big(h[N,\vce(\cdot;X,f)] - h[N,\vme(\cdot;X,f)] \big)^2 \big]\Big)^{1/2} \\
  &\leq \sum_{S \subseteq N \setminus\{i\} } |w(S,n)| \Big( \| (r_S-1) \cdot f \|_{L^2(P_X)} + \| (r_{S \cup \{i\}}-1) \cdot f \|_{L^2(P_X)} \Big).
  \end{aligned}
  \end{equation*}
\end{itemize}
\end{lemma}
\begin{proof}
See Appendix \ref{app::lmm::cond_marg_value_bound}.
\end{proof}

As a corollary we obtain the following approximation result.
\begin{proposition}[\bf approximation]\label{prop::cond_marg_value_bound}
Let $h$, $X$ be as in Definition \ref{def::condoperator} and $r$, $r_S$ as in Lemma \ref{lmm::_marg_value_bound_px}. Suppose $\tilde{P}_X \ll P_X$ with $r=\frac{d \intP_X}{ d P_X } \in L^{\infty}(P_X)$. Then $H_X = L^2(P_X)$, $r_S \in L^{\infty}(P_X)$, $S \subseteq N$, and for $f \in L^2(P_X)$ 
\begin{equation*}
 \| \bar{\oper}^{\CE}_i[f;h,X] - \bar{\oper}^{\ME}_i[f;h,X] \|_{L^2(\PP)} \leq 2 \cdot \Big( \max_{S \subseteq N} \|r_S-1\|_{L^{\infty}(P_X)} \Big) \Big(\sum_{S \subseteq N\setminus \{i\}} |w(S,n)| \Big) 
  \| f \|_{L^2(P_X)}, \,\, i \in N.
\end{equation*}
\end{proposition}
\begin{proof}
See Appendix \ref{app::prop::cond_marg_value_bound}.
\end{proof}

\begin{remark}\rm 
It is crucial to point out that $\mu$-consistency of explanations is merely a stability (continuity) requirement with the Lipschitz bound determining the relative scale between explanation differences and the differences of associated models. Thus, the three criteria that are useful for the design of explanations are: a) $\mu$-consistency which determines the type of similarity of explanations, b) the Lipschitz bound which determines relative scaling of explanations and models, and c) the game which determines the ``shape'' of explanations.
\end{remark}

\vspace{5pt}

\noindent{\bf Global feature importance.} The above analysis extends to global feature attributions inherited from game values as follows. Given a game value $h$ and predictors $X=(X_1,\dots,X_n)$ define the global conditional and marginal attributions by
\[
 \beta(v,X,f):=\{\beta_i(v,X,f)\}_{i\in N}, \quad \beta_i:= \|h_i[N,v(\cdot;X,f)] \|_{L^2(\P)}, \,\, v \in \{\vce,\vme\}.
\]
Then, according to Corollary \ref{corr::cond_operator_cont} and Theorem \ref{prop::margoperator}, the global explanations satisfy the continuity condition $|\beta(v,X,f_1-f_2)| \leq C\|f_1-f_2\|_{L^2(\mu)}$, $f_1,f_2 \in L^2(\mu)$, with $\mu=P_X$ when $v=\vce$ and $\mu=\intP_X$ when $v=\vme$ for some $C=C(h)$. Furthermore, if $h$ satisfies conditions of Corollary \ref{corr::cond_operator_cont}$(ii)$ and $v=\vce$, then $C=1$.

\vspace{5pt}

\noindent{\bf Conclusion.} The difference between conditional and marginal explanations can be summarized as follows: 
\begin{itemize}
   \item [(1)] $f_n \to f$ in $L^2(P_X)$ implies $\bar{\oper}^{\CE}[f_n] \to \bar{\oper}^{\CE}[f]$.
 
   \item [(2)] $ f_n \to f$ in $L^2(\intP_X)$ implies $\bar{\oper}^{\ME}[f_n] \to \bar{\oper}^{\ME}[f]$. 

   \item [(3)] $f_n,f \in L^2(\tP_X)$ and $f_n \to f$ in $L^2(P_X)$ in general does not imply $\bar{\oper}^{\ME}[f_n] \rightarrow \bar{\oper}^{\ME}[f]$. 
\end{itemize}

Results of this subsection on the stability of conditional or marginal explanations will be vastly generalized in \S\ref{sec::Group_expl} for more general game values that are not necessarily in the form of \eqref{lingameform}; see Proposition \ref{prop::quotgameexpl}$(iii)$ and Proposition \ref{prop::coalqgameprop}. In \S\ref{sec::Group_expl}, we alleviate the instability of marginal explanations discussed in this section through constructing group explainers that can unify marginal and conditional approaches.

\subsection{Splitting of explanation energy on dependencies}\label{subsec::splitting}

We next provide an example that showcases that model's \textit{energy} (in the sense of its squared norm) is split on conditional explanations and some of it is dissipated. To see this, recall that the efficiency property puts a constraint on the vector $\bar{\oper}^{\CE}[f;h,X]$; its components should add up to $f(X)-\E[f(X)]$. As a consequence, in light of Corollary \ref{corr::cond_operator_cont}$(ii)$, the energy of the conditional explanation vector is bounded by that of the (centered) model:
\begin{equation}\label{energyineq}
\|\bar{\oper}^{\CE}[f;h,X]\|^2_{L^2(\PP)^n}=\sum_{i=1}^n \|\bar{\oper}_i^{\CE}[f;h,X]\|^2_{L^2(\PP)}  \leq \|f-f_0\|^2_{L^2(P_X)}, \quad f_0 := \E[f(X)]. 
\end{equation}

By contrast, in view of the Rashomon effect \citep{Breiman2001} and Theorem \ref{prop::margoperatorwellpos}, the energy of the model can be significantly lower than that of the marginal explanations; see Example \ref{ex::marg_unbound_energy}.

It is worth mentioning that, when the game value $h$ is efficient, then the independence of explanations (both marginal or conditional) leads to energy conservation. In general, for conditional explanations, we have the following result on the energy conservation.
\begin{lemma}\label{lemm::cond_operator_split}
Let $h$ be an efficient game value in the form \eqref{lingameform} with $w(S,n)\geq 0$. The equality in \eqref{energyineq} is achieved if and only if
\begin{equation}\label{conservenergycond}
\langle f(X) - \bar{\oper}^{\CE}_i[f;h,X], \bar{\oper}^{\CE}_i[f;h,X]  \rangle_{L^2(\P)} = 0, \quad \forall i \in N.
\end{equation}
\end{lemma}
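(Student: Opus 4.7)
The plan is first to use efficiency to collapse the sum of explanations to $f(X) - \E[f(X)]$, then exploit the quadratic structure of the $L^2(\P)$-norm to turn the energy equality into an aggregate statement about off-diagonal Gram entries, and finally use self-adjointness and positive semi-definiteness of $\oper_i^{\CE}$ to promote that aggregate orthogonality to the pointwise condition appearing in \eqref{conservenergycond}.

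I would start by writing $\phi_i := \bar{\oper}_i^{\CE}[f;h,X]$ and $f_0 := \E[f(X)]$ and recording two preliminary facts. First, since each $\phi_i$ is a linear combination of differences $\E[f|X_{S\cup\{i\}}] - \E[f|X_S]$ whose expectations all equal $\E[f(X)]$, one has $\E[\phi_i]=0$. Second, Theorem \ref{prop::condoperator}(vi) together with the efficiency of $h$ yields $\sum_{i=1}^n \phi_i = f(X) - f_0$ in $L^2(\P)$. Expanding $\|f-f_0\|^2_{L^2(P_X)} = \|\sum_i\phi_i\|^2_{L^2(\P)}$ and using $\E[\phi_i]=0$ to discard the constant $f_0$ from $f(X) - \phi_i = f_0 + \sum_{j\ne i}\phi_j$ gives
\[
\gamma_i := \langle f(X) - \phi_i, \phi_i\rangle_{L^2(\P)} = \sum_{j\ne i}\langle\phi_j,\phi_i\rangle_{L^2(\P)}, \qquad \sum_{i=1}^n\gamma_i = \|f-f_0\|_{L^2(P_X)}^2 - \sum_{i=1}^n\|\phi_i\|_{L^2(\P)}^2.
\]
Hence equality in \eqref{energyineq} amounts to the single scalar condition $\sum_i\gamma_i=0$, and the implication ``\eqref{conservenergycond} $\Rightarrow$ equality'' follows immediately by summation.

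For the reverse implication the crux is to prove $\gamma_i\ge 0$ for every $i$, after which $\sum_i\gamma_i=0$ forces each $\gamma_i=0$. Setting $T_i := \oper_i^{\CE}(\cdot;h,X)$, Theorem \ref{prop::condoperator}(i) gives self-adjointness of $T_i$ on $L^2(\P)$, so $\gamma_i = \langle f(X), (T_i-T_i^2)\,f(X)\rangle_{L^2(\P)}$. Two further structural observations do the work. (a) $T_i$ is positive semi-definite: using the tower property and $w(S,n)\ge 0$,
\[
\langle T_iZ, Z\rangle_{L^2(\P)} = \sum_{S\subseteq N\setminus\{i\}} w(S,n)\Big(\|\E[Z|X_{S\cup\{i\}}]\|_{L^2(\P)}^2 - \|\E[Z|X_S]\|_{L^2(\P)}^2\Big)\ge 0,
\]
since conditional expectation is an $L^2$-contraction whose $L^2$-norm only grows under refinement of the conditioning $\sigma$-algebra. (b) $\|T_i\|_{L^2(\P)\to L^2(\P)}\le 1$: testing $h$ against the indicator game $v(S):=\1_{\{i\in S\}}$ yields $h_j[N,v]=0$ for $j\ne i$ and $h_i[N,v]=\sum_{S\subseteq N\setminus\{i\}}w(S,n)$, so \hyperref[axiom:EP]{(EP)} forces $\sum_{S\subseteq N\setminus\{i\}}w(S,n)=1$; combined with the Pythagorean contraction $\|\E[Z|X_{S\cup\{i\}}] - \E[Z|X_S]\|_{L^2(\P)}\le\|Z\|_{L^2(\P)}$ this gives $\|T_iZ\|_{L^2(\P)}\le\|Z\|_{L^2(\P)}$. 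Hence the spectrum of $T_i$ lies in $[0,1]$, so $T_i - T_i^2\succeq 0$, and $\gamma_i\ge 0$.

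The main obstacle is precisely this positivity claim $\gamma_i\ge 0$, which is not a soft consequence of efficiency alone but requires all three hypotheses in concert: the linear form \eqref{lingameform} (for self-adjointness via Theorem \ref{prop::condoperator}(i), and to allow extraction of $\sum w(S,n)=1$ from a single indicator game), non-negativity $w(S,n)\ge 0$ (to convert the tower-property telescope into a sum of non-negative terms), and efficiency (to calibrate those weights). Absent $\gamma_i\ge 0$, the condition $\sum_i\gamma_i=0$ could in principle correspond to sign cancellations among the $\gamma_i$'s, so the real content of the lemma is in ruling that out.
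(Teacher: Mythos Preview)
Your proof is correct and reaches the same destination as the paper, but via a different mechanism for the key inequality. Both arguments reduce the equivalence to the chain
\[
\sum_{i}\|\phi_i\|^2_{L^2(\P)}\;\le\;\sum_i\langle f(X),\phi_i\rangle_{L^2(\P)}\;=\;\|f-f_0\|^2_{L^2(P_X)},
\]
so that equality in \eqref{energyineq} holds iff each $\gamma_i=\langle f(X)-\phi_i,\phi_i\rangle\ge 0$ vanishes. The paper obtains $\gamma_i\ge 0$ by revisiting the Cauchy--Schwarz step in the proof of Corollary~\ref{corr::cond_operator_cont}$(ii)$ (equations \eqref{giant1}--\eqref{giant2}): writing $\phi_i$ as a convex combination of increments $\E[f|X_{S\cup\{i\}}]-\E[f|X_S]$ and comparing $\|\phi_i\|^2$ to the corresponding weighted sum of squared norms. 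You instead argue abstractly that $T_i=\oper_i^{\CE}$ is self-adjoint, positive semi-definite, with $\|T_i\|\le 1$, hence $T_i-T_i^2\succeq 0$ by spectral calculus, which gives $\gamma_i=\langle f(X),(T_i-T_i^2)f(X)\rangle\ge 0$ directly. Your route is more operator-theoretic and cleanly isolates which structural facts are doing the work (self-adjointness from the projection form, PSD from the tower property and $w\ge 0$, contraction from efficiency); the paper's route is more hands-on but has the virtue of being a one-line corollary of an inequality already established earlier in the text.
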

\begin{proof}
See Appendix \ref{app::prop::condoperator_proof}.
\end{proof}

\begin{example}\label{ex::energycons}\rm 
Suppose $X=(X_1,\dots,X_n)$ are independent and let $f(x):=\sum_{j=1}^n g_j(x_j)$. Then for each $i \in N$ 
\[
\bar{\oper}^{\CE}_i[f;\varphi,X]=\bar{\oper}^{\ME}_i[f;\varphi,X]=\sum_{j=1}^n \bar{\oper}^{\ME}_i[g_j;\varphi,X] = \bar{\oper}^{\ME}_i[g_i;\varphi,X] = g_i(X_i)-\E[g_i(X_i)]
\]
where we used the independence of predictors, linearity of  $\bar{\oper}^{\ME}$,  null-player property, and efficiency property (see Theorem \ref{prop::margoperator}  for details). Thus, the conditional  (and marginal) explanations are independent and hence
\[
\|f-f_0\|^2_{L^2(P_X)}= Var(f(X))=\sum_{i=1}^n Var(g_i(X))= \|\bar{\oper}^{\ME}[f;\varphi,X]\|^2_{L^2(\P)^n} = \|\bar{\oper}^{\CE}[f;\varphi,X]\|^2_{L^2(\P)^n}.
\]
\end{example}

We next show that, under dependencies, model's energy can be dissipated on explanations.

\begin{example}\label{ex::energy_dissip_dep}\rm
Let $X=(X_1,\dots,X_n)$ and $f(x)$ be globally Lipschitz. Suppose that $X_i=Z + \epsilon_i$, where $Z \in L^2(\P)$ is a latent variable, and $\epsilon_i \sim \mathcal{N}(0,\delta)$, for each $i \in N$. 

Define $\bar{f}(x):= \frac{1}{n}\sum_{i=1}^n f(x_i,x_i,\dots,x_i)$. Since $\bar{f}$ is a symmetric function (that is, the order of the input components does not matter) and the Shapley value $\varphi$ is a symmetric game value, we must have
\[
\oper^{\CE}_i[ \bar{f}(X);\varphi,X]=\oper^{\CE}_j[ \bar{f}(X);\varphi,X], \quad i,j\leq n.
\]
Since $f$ is globally Lipschitz, by the linearity of $\oper^{\CE}$, we obtain for each $i \in N$
\[
\oper^{\CE}_i[ f(X);\varphi,X]=\oper^{\CE}_i[ \bar{f}(X);\varphi,X] + O(\delta) \quad \text{in $L^2(\P)$},
\]
Then, the last two equality imply that for $i,j\leq n$
\[
\bar{\oper}^{\CE}_i[ f;\varphi,X]=\bar{\oper}^{\CE}_j[ f;\varphi,X] + O(\delta), \quad \text{in $L^2(\P)$}.
\]
Hence, by efficiency of $\varphi$, for every $j \leq n$ we obtain
\[
f(X) - f_0 = \sum_{i=1}^n \bar{\oper}^{\CE}_i[ f;\varphi,X] = n \cdot \oper^{\CE}_j[ f(X);\varphi,X] + O(\delta) \quad \text{in $L^2(\P)$},
\]
where $f_0:=\E[f(X)]$. This implies that
\[
\| \bar{\oper}^{\CE}[f;\varphi,X] \|_{L^2(\P)^n} = \frac{1}{\sqrt{n}} \| f - f_0 \|_{L^2(P_{X})} + O(\delta). 
\]
\end{example}

The following example illustrates that  \eqref{conservenergycond} can be violated even if the predictors are independent. Here, the main cause of the energy dissipation is interaction in between the variables in the model (meaning that the model fails to be additive, unlike Example \ref{ex::energycons}).

\begin{example}\label{ex::energy_dissip_int}
\rm Suppose $X=(X_1,\dots,X_n)$ are independent and let $f(x)=\prod_{i=1}^n x_i$. Let $h=\varphi$. Then, using independence of predictors, direct computations of the Shapley value for yield
\[
\bar{\oper}^{\ME}_i[f;\varphi,X] = \bar{\oper}^{\CE}_i[f;\varphi,X]  = \frac{1}{n} \big( f(X)-f_0 \big), \quad i \in N, \quad f_0=\E[f(X)].
\]
Thus, we conclude
\[
\|\bar{\oper}^{\ME}[f;\varphi,X]\|_{L^2(\P)^n}=\|\bar{\oper}^{\CE}[f;\varphi,X]\|_{L^2(\P)^n} = \frac{1}{\sqrt{n}} \|f-f_0\|_{L^2(P_X)}.
\] 
\end{example}

\subsection{Relevant works in the context of stability}

The theory we developed in \S \ref{subsec::single_feat_stability} views explanations as maps from a space of models to a space of random variables. 
In this context, for an explanation map to be true-to-the-data we require it to be continuous in $L^2(P_X)$ while to be true-to-the-model we require it to be continuous in $L^2(\intP_X)$. These are satisfied by the marginal and conditional game values, respectively. Below is a discussion of other relevant works in the context of the stability theory presented in \S \ref{subsec::single_feat_stability}.

The interventional TreeSHAP algorithm \cite{LundbergIntTreeShap} produces an approximation of the Shapley value for the marginal game in the case of tree-based models, where the estimation error depends on the size of the background dataset; see \cite[Lemma 2.3]{Kotsiopoulos2023}. Therefore, these values are approximately ``true-to-the-model'' when viewed as random maps (which is achieved by replacing the observation $x$ in the explainer with $X$).

The path-dependent TreeSHAP method \cite{LundbergTreeSHAP}, however, relies on the implementation as it is shown in \cite{Filom2023}. In particular, the authors of \cite{Filom2023} construct a predictive model which can be represented by two distinct statistically similar trees. Consequently, this algorithm cannot be true-to-the-data as it violates the continuity in $L^2(P_X)$, nor is it true-to-the-model\footnote{It follows that the path-dependent TreeSHAP method (in the presence of dependencies) is an ad-hoc approximation of the conditional Shapley value rather than an approximation as claimed by the authors.}. Indeed, the path-dependent TreeSHAP value is not a well-defined map on the space of models. The continuity can be achieved only if the tree structure itself is incorporated in the functional space.

The article \cite{Filom2023} also shows that the ``eject'' variant of TreeSHAP introduced in \cite{Campbell2022} is also implementation dependent, which implies that the method (in the population limit) is neither true-to-the model nor true-to-the data given our definitions. 

Not all explanation methods that rely on the model's internal structure are ill-posed or unstable. For instance, the explanation technique developed in \cite{Filom2023} for a Catboost ML model is implementation invariant, does not require a background dataset, and relies only on the internal parameters of the Catboost model. The authors show that the algorithm approximates the marginal game values, making the method approximately true-to-the-model, and the estimation error depends on the size of the training set.

The paper \cite{ChenLShapley2019} introduces the so called L-Shapley value for structured data, where the observation $x\sim X$ is augmented with a graph in the feature space and $Y$ is a discrete random variable. The paper designs an information-theoretic game via the cross entropy between the distribution of $p(y|x)$ and $p(y|x_S)$. The objective is to approximate the Shapley value for this game by utilizing the graph structure, which allows to reduce complexity and remove weak interactions of features. Since the aforementioned game considers the conditional distribution of response variables, replacing $x_S$ with $X_S$ in $p(y|x_S)$ leads to continuity of (random) explanations in $L^1(\PP)$. If one adjusts the game to depend on the model $f$, where $Y=f(X)$, then one can obtain continuity in $L^1(P_X)$ of the map $f \mapsto p(f(X)|X_S)$, in turn leading to game values that are true-to-the-data. Furthermore, the L-Shapley value is not immune to contribution splitting (see \S \ref{subsec::splitting}) due to the property of additivity, which is not ameliorated by the information-theoretic setup of the game.

The work \cite{Kumar2020} discusses the Shapley value for the marginal and conditional games and provide certain perceived criticisms, some of which are addressed in our work. Specifically, they discuss the issue of a proxy predictor in the context of the conditional game, and illustrate that the attributions are different when a predictor is dropped from a regressor. As we show in \S \ref{sec::TGExplainers}, mutual information can be employed to group together proxy predictors. Incorporating these groups into the game itself can help mitigate contribution splits. Using the grouping methodology, the aforementioned issue is mitigated because the information of a group containing proxy predictors will not affect the conditional explanation when proxies are removed. 

The papers \cite{Sundararajan2019,Janzing2020,Kumar2020} points out the differences of the conditional and marginal game values by illustrating how the conditional Shapley value assigns a non-zero attribution to predictors not explicitly used by the model and that the marginal Shapley value considers ``out-of-distribution'' samples to assign attributions. While these are portrayed as criticisms, they are however properties of the corresponding explanations that align with the definitions of true-to-the-data and true-to-the-model,  respectively. Furthermore, the authors of  \cite{Kumar2020} present the additivity axiom (not to be confused with efficiency) for the sum of two games as being useful only in the context when the models themselves are additive. However, our theory indicates that the linearity of the Shapley value allows for explanations to be extended from additive to non-additive models while preserving the natural linearity property. In the literature, there are also other games (e.g. the baseline games) associated with ML models \cite{Janzing2020, Sundararajan2019,Covert2022}, but they are outside the scope of our study.

While our work investigates the stability of Shapley value explanations for each given instance, these can also be adapted for global feature importance; see Section \S \ref{subsec::single_feat_stability}. The Rashomon effect will impact the results when using the marginal Shapley value for feature importance. Meanwhile, the conditional explanations are shown to be independent of the model structure as they explain the response variable itself; see Corollary \ref{corr::cond_operator_cont}. This in turn implies that the corresponding feature importance will be unaffected by the Rashomon effect. A highly relevant work to this is the paper \cite{Fisher2019} that describes another global variable importance technique that seeks to address the Rashomon effect. Due to the existence of many models that approximate the data well, the authors define a collection of models called the $\epsilon$-Rashomon set containing models that have similar predictive power within some threshold $\epsilon$. Then, the work seeks to find a measure of global feature importance for the entire $\epsilon$-Rashomon set. To accomplish this, the authors define a feature importance, called model reliance (MR), by evaluating the ratio of losses between models with that feature with and without noise (this can be viewed as evaluating the expected loss when switching off and on the predictor). This ratio is then calculated for all models in the $\epsilon$-Rashomon set and the final global importance, called model class reliance (MCR), is defined as the interval with the minimum MR and maximum MR for the given predictor. The authors provide estimators for MR and MCR, with corresponding error bounds. In contrast, we suggest constructing predictor groups based on dependencies. For independent groups, the marginal and conditional explanations coincide (see Proposition \ref{prop::quotgameexpl}), which means that evaluating the former takes into account both the Rashomon effect and the issue of contribution splitting among proxy predictors. Thus, in our case we tackle the Rashomon effect not via a collection of models, but by unifying the marginal and conditional explanations in the context of game values.

\subsection{Game value extensions to non-cooperative games}\label{sec::prelimgames}

In this  subsection, we discuss possible extensions of generic linear game values, which are not necessarily in the form \eqref{lingameform}, to non-cooperative games such as marginal and conditional ones. Recall that a cooperative game with $n$ players is a set function $v$ that acts on a finite set of players $N \subset \mathbb{N}$ and satisfies $v(\varnothing)=0$. 
Typically, $N=\{1,2,\dots,n\}$. 
Recall that a game value is a map $(N,v) \mapsto h[N,v]\in \R^n$ that determines the worth of each player $i \in N$ of a game $v$. A set $T \subseteq N$ is called a \textit{carrier} of $v$ if $v(S)=v(S \cap T)$ for all $S \subseteq N$. 

In what follows, we shall repeatedly refer to the properties of game values outlined in Appendix \ref{app::gameaxioms} such as linearity \hyperref[axiom:LP]{(LP)}, efficiency property \hyperref[axiom:EP]{(EP)}, 
null-player property \hyperref[axiom:NPP]{(NPP)} etc.

Let $V_0$ be the set of all cooperative games with finitely many players.
Let us next consider set functions with finite carriers that violate the condition $v(\varnothing)=0$. To this end, let us denote the collection of such games by
\begin{equation}\label{noncoopgames}
V = \{ (N,v): \,\,  v(\varnothing)\in \RR, \quad v(S)=\tilde{v}(S), \,\, S \subseteq N,\quad  |S|\geq 1, \,\,\text{for some} \,\, (N,\tilde{v}) \in V_0 \}.
\end{equation}

One way to construct an extension of a linear game value to $V$ is to introduce a new player and turn a non-cooperative game $v$ with $n$ players into a cooperative one with $n+1$ players. Another approach is to incorporate the value $v(\varnothing)$ into the values of the extension. Here, we consider the latter approach to avoid dealing with an extra player.

In what follows, for each $v \in V$, the cooperative game $\tilde{v}$ denotes its projection 
onto $V_0$ as in \eqref{noncoopgames} (it agrees with $v$ on non-empty sets). Given a linear game value $h$, we seek an extension $\bar{h}$ to $V$ that satisfies:

\begin{enumerate}\label{extgamevalhyp}
    \item [(E1)] $\bar{h}[N,\tilde{v}]=h[N,\tilde{v}]$ for $(N,\tilde{v}) \in V_0$,
    \item [(E2)] $\bar{h}$ is linear on $V$.
\end{enumerate}

\begin{lemma}\label{lmm:extlingame}
Let $h$ be a linear game value. An extension $\bar{h}$ satisfying (E1)-(E2) has the representation:
\begin{equation}\label{extrepr}
{\bar{h}_i[N,v]=h_i[N,\tilde{v}] + \gamma_i v(\varnothing), \quad i \in N=\{1,2,\dots,n\},}
\end{equation}
{ where $\{\gamma_i\}_{i=1}^n$ are constants that depend on $N$}. Furthermore, any game in the form \eqref{extrepr} satisfies properties (E1)-(E2). In addition, if $h$ is symmetric, then $\bar{h}$ is symmetric if and only if $\gamma_i=\gamma_j$, for each $i,j \in N$.
\end{lemma}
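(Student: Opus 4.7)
The argument rests on decomposing any $v\in V$ into its projection $\tilde v\in V_0$ plus a multiple of a single ``indicator'' game, and then invoking the two requirements (E1)-(E2). Concretely, I would introduce the set function $e_{\varnothing}\in V$ defined by $e_{\varnothing}(\varnothing)=1$ and $e_{\varnothing}(S)=0$ for all $S\neq\varnothing$; its projection $\widetilde{e_{\varnothing}}$ is the zero game in $V_0$. Then for every $v\in V$ one has the identity $v=\tilde v+v(\varnothing)\,e_{\varnothing}$, which is immediate by checking $S=\varnothing$ and $S\neq\varnothing$ separately. This decomposition is the only structural fact about $V$ that the proof requires.

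\emph{Necessity of the form \eqref{extrepr}.} Given any extension $\bar h$ satisfying (E1)-(E2), the linearity axiom (E2) yields
\[
\bar h_i[N,v]=\bar h_i[N,\tilde v]+v(\varnothing)\,\bar h_i[N,e_{\varnothing}],
\]
and (E1) applied to $\tilde v\in V_0$ gives $\bar h_i[N,\tilde v]=h_i[N,\tilde v]$. Setting $\gamma_i:=\bar h_i[N,e_{\varnothing}]$, which depends only on $N$ (and on $\bar h$), produces exactly the representation in \eqref{extrepr}. Conversely, to verify that any $\bar h$ defined by \eqref{extrepr} satisfies (E1)-(E2), I would observe that if $v=\tilde v\in V_0$ then $v(\varnothing)=0$, so the $\gamma_i$-term vanishes and (E1) holds; and for linearity, use the obvious identities $\widetilde{av_1+bv_2}=a\tilde v_1+b\tilde v_2$ and $(av_1+bv_2)(\varnothing)=av_1(\varnothing)+bv_2(\varnothing)$ together with linearity of $h$ on $V_0$, which distribute termwise through \eqref{extrepr}. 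Both verifications are one-line computations.

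\emph{Symmetry characterization.} Recall that $h$ is symmetric if $h_{\sigma(i)}[N,\sigma v]=h_i[N,v]$ for every permutation $\sigma$ of $N$, where $(\sigma v)(S):=v(\sigma^{-1}S)$. The projection commutes with the action of $\sigma$, i.e.\ $\widetilde{\sigma v}=\sigma\tilde v$, and $(\sigma v)(\varnothing)=v(\varnothing)$. Plugging into \eqref{extrepr} gives
\[
\bar h_{\sigma(i)}[N,\sigma v]=h_{\sigma(i)}[N,\sigma\tilde v]+\gamma_{\sigma(i)}\,v(\varnothing)=h_i[N,\tilde v]+\gamma_{\sigma(i)}\,v(\varnothing),
\]
using symmetry of $h$ in the last step. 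Comparing with $\bar h_i[N,v]=h_i[N,\tilde v]+\gamma_i v(\varnothing)$, symmetry of $\bar h$ is equivalent to $\gamma_{\sigma(i)}=\gamma_i$ for every $i\in N$ and every permutation $\sigma$, which is equivalent to $\gamma_i=\gamma_j$ for all $i,j\in N$ (take $\sigma$ to be the transposition $(i\ j)$; the converse is trivial). The proof thus assembles into a short, essentially algebraic argument, and I anticipate no serious obstacle: the only subtle point is to remember that the decomposition $v=\tilde v+v(\varnothing)\,e_{\varnothing}$ takes place inside $V$ (in particular $e_{\varnothing}\in V$), so the linearity hypothesis (E2) legitimately applies.
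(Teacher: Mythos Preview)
Your proposal is correct and follows essentially the same approach as the paper: the paper introduces the game $u_0$ (your $e_{\varnothing}$) with $u_0(\varnothing)=1$ and $u_0(S)=0$ otherwise, writes $v=\tilde v+v(\varnothing)u_0$, applies (E1)-(E2) to extract $\gamma_i=\bar h_i[N,u_0]$, and handles symmetry via $\widetilde{\pi v}=\pi\tilde v$ and $(\pi v)(\varnothing)=v(\varnothing)$ exactly as you do. The only cosmetic difference is that the paper presents the sufficiency direction first and is slightly more terse about verifying (E2).
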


\begin{proof} 
First, suppose $\bar{h}$ has the form \eqref{extrepr}. Let $(N,v)$ be any non-cooperative game 
and $(N,\tilde{v})$ the cooperative game which is its projection. Then
\[
\bar{h}_i[N,\tilde{v}]=h_i[N,\tilde{v}]+\gamma_i \tilde{v}(\varnothing)=h_i[N,\tilde{v}].
\]
The function $\bar{h}$ defined above is clearly linear if  $h$ is.

Suppose next that $\bar{h}$ is an extension of $h$ that satisfies (E1)-(E2). Take $(N,v) \in V$. Let $u_0$ denote a non-cooperative game satisfying $u_0(\varnothing)=1$ and $u_0(S)=0$ for any non-empty $S\subseteq N$. Observe that $v$ can be expressed as $v=\tilde{v}+v(\varnothing)u_0$. Then, using (E1) and (E2), we conclude
\[
\bar{h}[N,v]=h[N,\tilde{v}]+v(\varnothing)\bar{h}[N,u_0].
\]
Setting  $\gamma_i = \bar{h}_i[N,u_0]$, we obtain \eqref{extrepr}.

Finally, suppose that $h$ is symmetric and let $\bar{h}$ be its extension. Let $\pi$ be any permutation of $N$. Then
\[
\bar{h}_i[N,\pi v]=h_i[N,\widetilde{\pi v}] + \gamma_i \left(\pi v(\varnothing)\right)=h_{\pi(i)}[N,\tilde{v}] + \gamma_i v(\varnothing),
\]
where we used the fact that $\widetilde{\pi v}=\pi \tilde{v}$ and $\pi v(\varnothing)=v(\varnothing)$.
Then $\bar{h}_i[\pi v]=\bar{h}_{\pi(i)}[v]$ if and only if $\gamma_i=\gamma_{\pi(i)}$. Since $\pi$ is arbitrary, we conclude that $\bar{h}$ is symmetric if and only if $\gamma_i=\gamma_j$ for all $i,j$.
\end{proof}

For example, consider the Shapley value $\varphi$ defined in \eqref{shapform}. 
The same formula can be applied to non-cooperative games to construct an extension.  
In that case, one has $\gamma_i(\varphi,n)=-\frac{1}{n}$  and the extension satisfies $\bar{\varphi}_i[N,v]=\varphi_i[N,\tilde{v}]-\frac{1}{n}v(\varnothing)$. The efficiency property for the extension then reads as
\begin{equation*}
\sum_{i=1}^n\bar{\varphi}_i[N,v] = v(N) - v(\varnothing).
\end{equation*}

Another well-known value is the \textit{Banzhaf value} \citep{Banzhaf1965} given by
\begin{equation}\label{banzhafform}
Bz_i[N,v]=\sum_{S \subseteq N\backslash\{i\}}\frac{1}{2^{n-1}} \big[ v(S \cup \{i\}) - v(S)\big], \quad i \in N.
\end{equation}
The Banzhaf value assumes that every player is equally likely to enter any coalition unlike the Shapley value that assumes that players are
equally likely to join coalitions of the same size,
and that all coalitions of a given size are equally likely.  There is only one property that differs between the two values. Shapley value satisfies the efficiency property \eqref{effp}, while Banzhaf value satisfies the \textit{total power property} \eqref{tpp} instead.

To extend the Banzhaf value using the same formula, we set $\gamma_i(Bz,n)=-\frac{1}{2^{n-1}}$. In this case, the extension $\overline{Bz}$ has exactly the representation \eqref{banzhafform} and the total power property \hyperref[axiom:TPP]{(TPP)} reads as
\begin{equation*}\label{tppext}
\sum_{i=1}^n\overline{Bz}_i[N,v] = \text{TPP}(Bz[N,\tilde{v}])-\frac{n}{2^{n-1}}v(\varnothing).
\end{equation*}

Given a game value $h$ and its extension $\bar{h}$, we will abuse the notation and write
\begin{equation*}
\bar{h}(X; f,v):=\bar{h}[N,v( \cdot \,; X,f)], \quad v \in \{\vce,\vpdp\}.
\end{equation*}

\begin{definition}
Let $h$ be a linear game value and $\bar{h}$ its extension. We say that $\bar{h}$ is centered if $\bar{h}[N,c]=0$ for any constant non-cooperative game $(N,c)\in V$.
\end{definition}

Notice that the extensions of Shapley and Banzhaf values we introduced above are centered. 

\begin{lemma}\label{lmm::unitgame}
Let $h$ be a linear game value and $\bar{h}$ its extension with $\gamma=\{\gamma_i\}_{i=1}^n$ as in \eqref{extrepr}. Let $u$ denote a unit, non-cooperative game, that is, $u(S)=1$ for all $S \subseteq N$ and any   
$N$. Then
\begin{itemize}
\item [$(i)$] $\bar{h}$ is centered if and only if $\gamma=-h[N,\tilde{u}]$.

\item [$(ii)$] $\bar{h}$ is centered if and only if  $\bar{h}[N,v]=h[N,(v-v(\varnothing)u)]$.

\item [$(iii)$] 
 If $h$ has the form 
\begin{equation*}
h_i[N,\tilde{v}] = \sum_{S\subseteq N\setminus\{i\}} w(i,N,S) \big[\tilde{v}(S\cup\{i\})-\tilde{v}(S)\big], \quad i \in N,
\end{equation*}
where $w(i,N,S)$ ($i\in N$, $S \subseteq N$) 
are constants,
then  it extends by the same formula to a centered game value for non-cooperative games:
\begin{equation}\label{specialgameval}
\bar{h}_i[N,v] = \sum_{S\subseteq N\setminus\{i\}} w(i,N,S) \big[v(S\cup\{i\})-v(S)\big], \quad i \in N.
\end{equation}

\item [$(iv)$] Let $X=(X_1,\dots,X_n)$ be the predictors and $f$ a model. Let $f_0=\E[f(X)]$. Then for $v \in \{\vce,\vpdp\}$
\begin{equation}\label{extprobgame}
\bar{h}(X;f,v)=h(X;f-f_0,v)+f_0 \bar{h}[N,u].
\end{equation}
As a consequence, if $\bar{h}$ is centered, then $\bar{h}(X;f,v)=h(X;f-f_0,v)$, $v \in \{\vce,\vpdp\}$.
\end{itemize}
\end{lemma}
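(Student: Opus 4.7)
The plan is to treat the four parts in order, using the representation \eqref{extrepr} and the linearity axiom (E2) as the two main tools, and to identify the unit game $u$ as the natural ``unit'' against which the centering constraint is measured.

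For part $(i)$, I first observe that the unit game $u$ satisfies $u(\varnothing)=1$, so by \eqref{extrepr}, $\bar{h}_i[N,u]=h_i[N,\tilde{u}]+\gamma_i$. Any constant non-cooperative game $(N,c)$ equals $c\cdot u$, so by (E2) we get $\bar{h}[N,c]=c\,\bar{h}[N,u]$. Hence $\bar{h}$ is centered if and only if $\bar{h}[N,u]=0$, which is equivalent to $\gamma_i=-h_i[N,\tilde{u}]$ for every $i\in N$.

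For part $(ii)$, I decompose $v=(v-v(\varnothing)u)+v(\varnothing)u$; the first summand is cooperative since it vanishes on $\varnothing$. By (E2),
\begin{equation*}
\bar{h}[N,v]=\bar{h}[N,v-v(\varnothing)u]+v(\varnothing)\bar{h}[N,u].
\end{equation*}
Applying (E1) to the cooperative game $v-v(\varnothing)u$, the first term equals $h[N,v-v(\varnothing)u]$. If $\bar{h}$ is centered, the second term vanishes, giving the claimed identity. Conversely, if $\bar{h}[N,v]=h[N,v-v(\varnothing)u]$ for every $v\in V$, then evaluating on a constant game $c\cdot u$ gives $\bar{h}[N,cu]=h[N,0]=0$ by linearity of $h$.

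For part $(iii)$, I apply the given weighted-difference formula directly to a non-cooperative $v$ and separate the summand at $S=\varnothing$, which contributes $-w(i,N,\varnothing)v(\varnothing)$; the remaining terms coincide with $h_i[N,\tilde{v}]$ since $\tilde{v}$ and $v$ agree on non-empty sets. This shows the formula fits the form of \eqref{extrepr} with $\gamma_i=-w(i,N,\varnothing)$. Then, evaluating $h_i[N,\tilde{u}]$ with the same formula, only the $S=\varnothing$ term survives (since $\tilde{u}$ is $0$ on $\varnothing$ and $1$ elsewhere, so the difference $\tilde{u}(S\cup\{i\})-\tilde{u}(S)$ is nonzero only for $S=\varnothing$), yielding $h_i[N,\tilde{u}]=w(i,N,\varnothing)$. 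Hence $\gamma_i=-h_i[N,\tilde{u}]$, and centeredness follows from part $(i)$.

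For part $(iv)$, the key observation is that for both games $v\in\{\vce,\vpdp\}$, the map $f\mapsto v(\cdot;X,f)$ is linear, and applied to the constant function $f_0$ it produces $f_0\cdot u$ (since $\E[f_0|X_S]=f_0$ and $\E[f_0(x_S,X_{-S})]=f_0$ for every $S$). Writing $f=(f-f_0)+f_0$ and using (E2) together with (E1) applied to the cooperative game $v(\cdot;X,f-f_0)$ (which vanishes on $\varnothing$ since $\E[(f-f_0)(X)]=0$), I obtain
\begin{equation*}
\bar{h}(X;f,v)=h(X;f-f_0,v)+f_0\,\bar{h}[N,u],
\end{equation*}
which specializes to $\bar{h}(X;f,v)=h(X;f-f_0,v)$ when $\bar{h}$ is centered, by part $(i)$.

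The steps are essentially routine linear-algebraic bookkeeping; the only place requiring care is the coefficient identification in part $(iii)$, where one must correctly isolate the $S=\varnothing$ contribution in both $\bar{h}_i[N,v]$ and $h_i[N,\tilde{u}]$.
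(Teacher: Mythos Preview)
Your proposal is correct and follows essentially the same approach as the paper. The only noteworthy difference is in part $(iii)$: the paper observes directly that for a constant game $c$ every difference $c(S\cup\{i\})-c(S)$ vanishes, hence $\bar{h}_i[N,c]=0$; you instead identify $\gamma_i=-w(i,N,\varnothing)$, compute $h_i[N,\tilde{u}]=w(i,N,\varnothing)$, and appeal to part $(i)$---a slightly longer but equally valid route that has the minor bonus of making the extension constants explicit.
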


\begin{proof}\rm
Let $c$ be any constant non-cooperative game, that is, for some constant $c_0$, $c(S)=c_0$ for all $S \subseteq N$. Then, by \eqref{extrepr}, we have  $\bar{h}[N,c]=\bar{h}[N,c_0 u]=c_0\bar{h}[N,u]=c_0 (h[N,\tilde{u}]+\gamma)$. This proves $(i)$. The statement $(ii)$ follows from (E1)-(E2) and $(i)$.

Suppose $\bar{h}$ has the form \eqref{specialgameval}. Then for any constant non-cooperative game $c$ and each $i \in N$, we have $c(S\cup\{i\})=c(S)$, $S \subseteq N$, and hence $\bar{h}_i[N,c]=0$.
This proves $(iii)$.

Let $v=\vce$. For any $S \subseteq N$ we have
\[
\vce(S;X,f)=\E[f(X)|X_S]=\E[f(X)-f_0|X_S]+f_0 = \vce(S;X,f-f_0)+f_0 u(S)
\]
and hence using the linearity of $\bar{h}$ and the representation \eqref{extrepr} we obtain
\[
\bar{h}(X;f,\vce)=h[N,\vce(\cdot; X, f-f_0)] +f_0\bar{h}[N,u],
\]
which establishes \eqref{extprobgame} for $v=\vce$. The proof of \eqref{extprobgame} for $v=\vpdp$ is similar.
\end{proof}

See \ref{app::gameaxioms}  and \ref{app::gamevalrepr} in the appendix for more on game values.

\section{Group explainers with coalition structures}\label{sec::Group_expl}

In this section, we construct explainers that quantify predictor contributions to the model output by considering predictor unions. In particular, given predictors $X \in \RR^n$ and a partition $\mathcal{P}=\{S_1,S_2,\dots,S_m\}$ of $N$, our objective is to utilize the partition $\mathcal{P}$ to explain the contribution of each predictor $X_i$ under a coalition structure as well as the contribution of each group $X_{S_j}$. We will refer to such explainers as explainers with a coalition structure. Predictor groups are formed based on dependencies, which allows for the reduction of the predictor dimensionality and constructs explanations which unify the marginal and conditional approaches.


\subsection{ Trivial group explainers} \label{sec::TGExplainers}

A game value determines the worth of each individual player while group explainers are capable of quantifying the attribution of any subset of players. In this subsection, we present a simple way of constructing group explainers that work for any partition of players. A more sophisticated approach appears in \S\ref{sec::qgameexpl}.


\begin{definition}\label{def::trivgexp}
Let $X=(X_1,\dots,X_n)$ be the predictors, $f$ a model, and $\mathcal{P}=\{S_1,S_2,\dots,S_m\}$ a partition of predictors. Let $h$ be a linear game value and $\bar{h}$ its extension. A trivial group explainer based on $(\bar{h},\mathcal{P})$ is defined by
\begin{equation}\label{gengroupexplainer}
\bar{h}_{S_j}(X;f,v)=\sum_{i \in S_j} \bar{h}_i(X; f,v), \quad S_j \in \mathcal{P}, \quad v \in \{\vce,\vpdp\}.
\end{equation}
\end{definition}


Recall that if predictors  $X$ are independent, then the conditional and marginal game coincide, which implies that $\bar{h}[\vce]=\bar{h}[\vpdp]$. This is in general no longer true if the independence is dropped. \\

\begin{example}\label{1stExample} \rm 
For example, consider the model:
\begin{equation*}
X=(X_1,X_2,X_3), \quad f(X)=X_2X_3,  \quad \mathcal{P}=\{\{1,2\},\{3\}\},
\end{equation*}
with $(X_1,X_2)$ independent of $X_3$ and $\E[X]=0$. Computing the marginal explanations for $h=\varphi$ gives
\[
\bar{\varphi}_{\{1,2\}}(X;f,\vpdp)=\bar{\varphi}_{\{3\}}(X;f,\vpdp)=\frac{1}{2}X_2X_3,
\]
while the conditional ones are given by
\[
\bar{\varphi}_{\{1,2\}}(X;f,\vce)=\frac{1}{2}X_2X_3 - \frac{1}{6} \E[X_2|X_1] X_3, \quad \bar{\varphi}_{\{3\}}(X;f,\vce)=\frac{1}{2}X_2X_3 + \frac{1}{6} \E[X_2|X_1] X_3.
\]
Consequently, the two types of explanations differ whenever $\E[X_2|X_1] \neq 0$.
\end{example}

Suppose next that the partition $\mathcal{P}$ yields independent unions $X_{S_1},X_{S_2},\dots,X_{S_m}$. The question we are interested in is what the relationship between the group explanations $\bar{h}_{S_j}(X; \vce , f)$ and $\bar{h}_{S_j}(X; \vpdp, f)$ is.

\begin{proposition}\label{prop::additivesumshap}
Let $X$, $\mathcal{P}$, $h$, $\bar{h}$ be as in Definition \ref{def::trivgexp}. Let $\bar{h}$ be centered. Suppose that $X_{S_1},X_{S_2},\dots,X_{S_m}$ are independent and $f$ is additive across the elements of $\mathcal{P}$,
\begin{equation}\label{additiveP}
f(X)=\sum_{j \in M} f_j(X_{S_j}), \quad m=|\mathcal{P}|, \quad  M=\{1,2,\dots,m\}.
\end{equation}

\begin{itemize}

  \item [(i)]  The games $\vce(\cdot \,;X,f), \vpdp(\cdot\,;X,f)$ can be expressed as follows:
\begin{equation}\label{decomposition}
\begin{aligned}
  \vce(\cdot;X,f )   & = \sum_{j\in M} \vce_j(\cdot;X,f)-(m-1)\E[f(X)]u \\
  \vpdp(\cdot\,;X,f) & = \sum_{j\in M} \vpdp_j (\cdot;X,f) - (m-1)\E[f(X)]u
\end{aligned}
\end{equation}
where  $u$ is the unit, non-cooperative game and 
\begin{equation}\label{decomposed game}
\vce_j(S;X,f) := \vce(S\cap S_j;X,f), \quad 
\vpdp_j(S;X,f) := \vpdp(S\cap S_j;X,f), \quad j \in M.
\end{equation}

\item [(ii)] If $h$ satisfies \hyperref[axiom:NPP]{(NPP)}, then
\begin{equation}\label{decompgame}
  \bar{h}_{i}[N,\vce] = \bar{h}_{i}[N,\vce_j], \quad \bar{h}_{i}[N,\vpdp] = \bar{h}_{i}[N,\vpdp_j], \quad i\in S_j,
\end{equation}
while for $i \notin S_j$ we have $\bar{h}_{i}[N,\vce_j]=\bar{h}_{i}[N,\vme_j]=0$.

\item [(iii)] If $h$ satisfies \hyperref[axiom:EP]{(EP)} and \hyperref[axiom:NPP]{(NPP)}, then

\begin{equation*}
 \bar{h}_{S_j}(X;f,\vce) = \bar{h}_{S_j}(X;f,\vpdp)=f_j(X_{S_j})-\E[f_j(X_{S_j})]
\end{equation*}
Consequently, $\bar{h}_{S_j}(X;f,\vce)$ and $\bar{h}_{S_j}(X;f,\vme)$ are bounded and continuous on $L^2(P_X)$.
\end{itemize}

\end{proposition}

\begin{proof}
For any $S\subseteq N$ one has
\begin{equation*}
\begin{split}
\vce(S;X,f)
&=\sum_{ j\in M }\E[f_j(X_{S_j})|X_{S \cap S_j}]
=\sum_{ j\in M }\E\Big[f(X)-
\sum_{ k\in M, k\neq j }f_k(X_{S_k})\Big|X_{S \cap S_j}\Big]\\
&=\sum_{ j\in M }\E[ f(X)|X_{S \cap S_j}]
-\sum_{ j\in M }\sum_{ k\in M, k\neq j }\E[f_k(X_{S_k})]\\
&=\sum_{j\in M} \vce(S\cap S_j;X,f)-
(|M|-1)\Big(\sum_{k\in M}\E[f_k(X_{S_k})]\Big)\\
&=\sum_{j\in M} \vce_j(S;X,f)-(m-1)\E[f(X)].
\end{split}
\end{equation*}
Notice that on the first line we used 
$\E[f_j(X_{S_j})|X_S]=\E[f_j(X_{S_j})|X_{S\cap S_j}]$ 
which is due to the following fact: if $W,Y,Z$ are vectors of random variables on the same probability space with $(W,Y)$ independent of $Z$, then 
$\E[W|Y,Z]=\E[W|Y]$.

Similarly, for the marginal game, we have the following which concludes the proof of $(i)$.
\begin{equation*}
\begin{split}
\vpdp(S;X,f) &=\E[f(x_{S},X_{-S})]|_{x_S=X_{S}}
=\sum_{j\in M} \E[f_j(x_{S_j \cap S},X_{S_j \setminus S})]|_{x_{S_j \cap S}=X_{S_j \cap S}}\\   
&=\sum_{j\in M} \bigg(\E[f(x_{S_j \cap S},X_{-(S_j\cap S)})]|_{x_{S_j \cap S}=X_{S_j \cap S}}
-\E\big[\sum_{ k\in M, k\neq j }f_k(X_{S_k})\big]\bigg)\\
&=\sum_{j\in M}\E[f(x_{S_j \cap S},X_{-(S_j\cap S)})]|_{x_{S_j \cap S}=X_{S_j \cap S}}-\sum_{ j\in M }\sum_{ k\in M, k\neq j }\E[f_k(X_{S_k})]\\
&=\sum_{j\in M} \vpdp(S\cap S_j;X,f)-
(|M|-1)\big(\sum_{k\in M}\E[f_k(X_{S_k})]\big)\\
&=\sum_{j\in M} \vpdp_j(S;X,f)-(m-1)\E[f(X)].
\end{split}    
\end{equation*}
Applying the centered extension $\bar{h}$ to either of the games appearing in \eqref{decomposition}, the contribution of the constant game $(m-1)\E [f(X)]u$ would be zero. Thus
\begin{equation*}
\bar{h}_i[N,\vce(\cdot;X,f)]
=\sum_{j\in M} \bar{h}_i[N,\vce_j(\cdot;X,f)]=\sum_{j\in M} h_i[N,\vce_j(\cdot;X,f-f_0)],    
\end{equation*}
where $f_0:=\E[f(X)]$. 
Notice that we have used Lemma \ref{lmm::unitgame}$(iv)$ for the last equality.

Now if $h$  satisfies \hyperref[axiom:NPP]{(NPP)}, the terms 
$h_i[N,\vce_j(\cdot;X,f-f_0)]$ vanish unless $i\in S_j$ because $S_j$ is clearly a carrier for the cooperative game $\big(N,\vce_j(\cdot;X,f-f_0)\big)$. Therefore, if $i\in S_{j_*}$, the last equation simplifies to $\bar{h}_i[N,\vce(\cdot;X,f)]=\bar{h}_i[N,\vce_{j_*}(\cdot;X,f)]$. This establishes $(ii)$ for the conditional game. The proof for the marginal game is similar. 

Finally, suppose $h$ satisfies \hyperref[axiom:EP]{(EP)} as well. Again, we only present the proof for conditional game. 
Invoking $(ii)$ one has
\begin{equation*}
\bar{h}_{S_j}(X;f,\vce)=\sum_{i \in S_j}\bar{h}_i[N,\vce_j(\cdot;X,f)]=
\sum_{i \in S_j} h_i[N,\vce_j(\cdot;X,f-f_0)]. 
\end{equation*}
Since $S_j$ is a carrier for the cooperative game $\big(N,\vce_j(\cdot;X,f-f_0)\big)$, by the efficiency property 
 the last summation is equal to $\vce_j(N;X,f-f_0)=\vce(S_j;X,f-f_0)$.
One can easily check that the last term coincides with $f_j(X_{S_j})-\E[f_j(X_{S_j})]$.
\end{proof}

Proposition \ref{prop::additivesumshap} considers models that are additive across the partition $\mathcal{P}=\{S_1,\dots,S_m\}$. Proposition \ref{prop::additivesumshap}$(i)$ states that $\vce$ and $\vme$ can be expressed (up to a constant) as a sum of games $\vce_j$  and $\vme_j$, $j\in M$, respectively, each having a carrier $S_j$. Consequently, Proposition \ref{prop::additivesumshap}$(ii)$ implies that, if \hyperref[axiom:NPP]{(NPP)} holds, the marginal and conditional attributions of $i\in S_j$  are equal to  attributions $\bar{h}_i$ at the games $\vce_j$  and $\vme_j$, respectively, with $i \notin S_j$ contributing zero to the total payoff of those games. Finally, Proposition \ref{prop::additivesumshap}$(iii)$ implies that the trivial group explanations for marginal and conditional games are both continuous  on $L^2 (P_X)$ if \hyperref[axiom:EP]{(EP)} and \hyperref[axiom:NPP]{(NPP)} hold; hence the Rashomon effect does not impact the group explanations $\bar{h}_{S_j}[N,\vme]$.

\begin{remark} \rm
The method  of constructing \eqref{gengroupexplainer} was explored in \cite{aas} for additive models and proved to be effective in producing Shapley-based 
explanations consistent with both the model and data.
However, if the assumption of model additivity across the partition $\mathcal{P}$, made in Proposition \ref{prop::additivesumshap}, is dropped, the marginal and conditional trivial group explanations will in general no longer be equal as illustrated by Example \ref{1stExample}.
\end{remark}

\begin{corollary}\label{corr_additivecomplex}
 Let $\mathcal{P}$, $f$, $h$, $\bar{h}$, and $u$ be as in Proposition \ref{prop::additivesumshap}. For $j \in M$, define the games on the set of players $S_j$ by
    \[
    \nu_j^{\CE}(S) := \vce( S\, ;X_{S_j},f_j),\quad \nu_j^{\ME}(S) := \vpdp( S \,;X_{S_j},f_j), \quad S\subseteq S_j.
    \]
If $h$ satisfies \hyperref[axiom:NPP]{(NPP)} and 
\hyperref[axiom:CDP]{(CDP)}, then for each $i \in S_j$
\begin{equation}\label{CDPeffect}
\bar{h}_i[N,\vce(\cdot;X,f)]=\bar{h}_{i}[S_j,\nu_j^{\CE}], \quad \bar{h}_i[N,\vpdp(\cdot;X,f)]=\bar{h}_{i}[S_j,\nu_j^{\ME}].
\end{equation}
\end{corollary}

\begin{proof}
We have shown in \eqref{decompgame} that $\bar{h}_i[N,\vce(\cdot;X,f)]$ and $\bar{h}_i[N,\vpdp(\cdot;X,f)]$  coincide respectively with $\bar{h}_{i}[N,\vce_j]$ and $\bar{h}_{i}[N,\vpdp_j]$ provided that $i\in S_j$ and \hyperref[axiom:NPP]{(NPP)} holds.
The games $(N,\vce_j)$ and $(N,\vpdp_j)$, defined in \eqref{decomposed game}, have $S_j$ as a carrier. Here, $\nu_j^{\CE}$ and $\nu_j^{\ME}$ are their restrictions to games with $S_j$ as the set of players. The game values for these restrictions should agree with the original game values if $h$ satisfies \hyperref[axiom:CDP]{(CDP)}.
\end{proof}

If a model is additive across the elements of the partition $\mathcal{P}$ with independent components, Corollary \ref{corr_additivecomplex} states that the marginal and conditional game values for $i \in S_j$  can be expressed as the corresponding game values for the universe of players $S_j$ and the restrictions $\vce|_{S_j}$ and $\vme|_{S_j}$, respectively, when \hyperref[axiom:NPP]{(NPP)} and 
\hyperref[axiom:CDP]{(CDP)} hold. If one drops \hyperref[axiom:CDP]{(CDP)}, then \eqref{CDPeffect} is no longer guaranteed; instead, one can expect only  \eqref{decompgame}. 

Consequently, the corollary implies that the computation of $\bar{h}_i[N,\vce]$, $i \in S_j$, can be done in time $O(2^{|S_j|})$ and hence the complexity of group explanations $\bar{h}_{S_j}[N,\vce]$, and that of single feature explanations in $S_j$, is $O(2^{|S_j|})$ rather than $O(2^{|N|})$, the complexity of a generic linear game value stated in Lemma \ref{lmm::lingamevalrepr}. Similar conclusion holds for marginal explanations.



\subsection{Group explainers based on quotient games}\label{sec::qgameexpl}

The trivial group explainers obtained by \eqref{gengroupexplainer} are based on 
single-feature explanations and do not utilize the structure imposed by the partition $\mathcal{P}$. Constructing explainers that explicitly incorporate the coalition structure of $\mathcal{P}$ might be advantageous when the partition is based on dependencies. In the case when predictors within each union $X_{S_j}$ share significant amount of mutual information, the change in value of one of the predictors causes 
a certain 
change in value of other predictors in the union, and thus, 
the predictors within the union ``act in agreement'' with one another.


To design explainers of unions with the partition in mind, we make use of quotient games. 
This is the content of this subsection.
\begin{definition}\label{def::quotientgame}
Given a cooperative game $(N,v)$ with $N=\{1,2,\dots,n\}$ 
and a partition $\mathcal{P}=\{S_1,S_2,\dots,S_m\}$ of $N$, the {\it quotient game} $(M,v^{\mathcal{P}})$, where $M=\{1,2,\dots,m\}$,
is defined by
\begin{equation*}
v^{\mathcal{P}}(A):=v\big( \cup_{j \in A} S_j\big), \quad \quad A \subseteq M.
\end{equation*}
For non-cooperative games, we adapt the same definition; note that one always has $v^{\mathcal{P}}(\varnothing)=v(\varnothing)$.  

\end{definition}

By design, the quotient game is played by the unions; that is, the game $v^{\mathcal{P}}$ is obtained by restricting $v$ to unions $S_j\in \mathcal{P}$ by viewing the elements of the partition $\mathcal{P}$ as players. The complexity of the quotient game value $h[M,v^{\mathcal{P}}]$ is of the order $2^{|\cP|} \cdot O(v)$, where $O(v)$ stands for the complexity of the game evaluation for any $S \subseteq N$; this fact follows directly from the representation formula \eqref{lingamevalrepr} of Lemma \ref{lmm::lingamevalrepr}. This motivates us to define explanations of predictor unions using quotients.


\begin{definition}\label{def::qgameexplainer}
Let $X$, $f$, $\mathcal{P}$, $h$, $\bar{h}$ be as in Definition \ref{def::trivgexp}. The quotient game explainer based on $(\bar{h},\mathcal{P})$ is defined by
\begin{equation*}
\bar{h}_{S_j}^{\mathcal{P}}(X;f,v):=\bar{h}_j[M,v^{\mathcal{P}}(\cdot\,; X,f)], \quad S_j \in \mathcal{P}, \quad v \in \{\vce,\vpdp\}.
\end{equation*}
\end{definition}

For the quotient explainers we have the following result.

\begin{lemma}\label{lmm::boundquot}
Let $X,f$, $\mathcal{P}$, $h$, $\bar{h}$ be as in Definition \ref{def::trivgexp}  and suppose that $\bar{h}$ is centered. Then 
\begin{itemize}
\item [(i)] For $v\in \{\vce,\vpdp\}$ we have
\[
    \bar{h}_{S_j}^{\mathcal{P}}(X;f,v) = h_{S_j}^{\mathcal{P}}(X;f-f_0,v), \quad f_0 := \E[f(X)].
\]

\item [(ii)] The quotient marginal and conditional explanations satisfying the following bounds:
\begin{equation*}
\| \bar{h}_j[\vceP(\cdot;X,f)] \|_{L^2(\PP)} \leq \bar{C} \|f\|_{L^2(P_X)}, \quad \| \bar{h}_j[\vpdpP(\cdot;X,f)] \|_{L^2(\PP)} \leq 2^{|\mathcal{P}|} \bar{C} \|f\|_{L^2(\tilde{P}_{X,\cP})}
\end{equation*}
where $\bar{C}=\bar{C}(\bar{h},|\cP|,j)$ and $\tilde{P}_{X,\cP}:=\frac{1}{2^m}\sum_{A \subseteq M, Q_A=\cup_{j\in A}S_j} P_{X_{Q_A}} \otimes P_{X_{-Q_A}} \ll \tilde{P}_{X}$.

Consequently, if the Radon-Nikodym derivative  $r_{Q_A}=\tfrac{d P_{X_{Q_A}} \otimes P_{X_{-Q_A}}}{d P_X}$, $Q_A=\cup_{j \in A} S_j$, exists and belongs to $L^\infty(P_X)$ for each $A \subseteq M$, the map $f \in L^2(P_X) \mapsto \bar{h}_{S_j}^{\mathcal{P}}(X;f,\vme)$ is well-defined and bounded for each $j \in M$.

\item [(iii)] Let $h$ have the form \eqref{lingameform} and satisfy the properties outlined in Corollary \ref{corr::cond_operator_cont}$(ii)$, including \hyperref[axiom:EP]{(EP)}. Then
\begin{equation}\label{eff_quotient_bound}
\sum_{j=1}^m \| \bar{h}_{S_j}^{\mathcal{P}}(X;\vce,f)\|^2_{L^2(\PP)}  \leq \|f-f_0\|^2_{L^2(P_X)} \leq \|f\|^2_{L^2(P_X)}, \quad f_0 := \E[f(X)]. 
\end{equation}
\end{itemize}
\end{lemma}

\begin{proof}

Note that for any $A \subseteq M$
\[
\vceP(A;X,f-f_0)=\E[f(X)-f_0|X_{\cup_{j \in A}S_j}]=\vceP(A;X,f)-f_0 u(A)
\]
(where $u$ is the unit non-cooperative game) and hence 
\[
    \bar{h}_j[\vceP(\cdot;X,f)]=h_j[M,\vceP(\cdot;X,f-f_0)],
\]
which proves $(i)$ for $v=\vce$. The proof of $(i)$ for $v=\vpdp$ is similar.

Next, let $\gamma=\{\gamma_j\}_{j=1}^m$ be the constants that come up in extending $h$ to non-cooperative games on $M$ as in Lemma \ref{lmm:extlingame}. Thus, by Lemma \ref{lmm::lingamevalrepr}, the extension $\bar{h}$ satisfies the following growth condition for any $(M,w)$:
\begin{equation*}
|\bar{h}_j[M,w]| \leq |\gamma_j| \cdot w(\varnothing) + C \big(\sum_{A \subseteq M, A\ne \varnothing} |w(S)| \big), \quad j\in M,
\end{equation*}
where $C=C(h,M,j)$ is a constant that depends on $h$, $M$, and $j$.

Then, setting  $w=v^{\CE,\mathcal{P}}$, for any $f \in L^2(P_X)$ we have
\[
\|\bar{h}_{S_j}^{\mathcal{P}}(X;f,\vce)\|_{L^2(\PP)}\leq |\gamma_j|\cdot |f_0|  +  C \sum_{\substack{A \subseteq M\\ A\ne \varnothing}} \|\E[f(X)|X_{\cup_{j \in A}S_j}]\|_{L^2(\PP)} \leq  \max(|\gamma_j|,C) \|f\|_{L^2(P_X)}.
\]

Similarly, setting  $w=v^{\ME,\mathcal{P}}$ and $Q_A=\cup_{j \in A}S_j$, $A \subseteq M$, for any $f \in L^2(\intP_X)$ we have
\[
\|\bar{h}_{S_j}^{\mathcal{P}}(X;f,\vme)\|_{L^2(\PP)}\leq |\gamma_j|\cdot |f_0| + C \sum_{\substack{A \subseteq M\\ A\ne \varnothing}} \|f\|_{L^2(P_{X_{Q_A}} \otimes P_{X_{-Q_A}})} \leq  2^{|\mathcal{P}|}\max(|\gamma_j|,C)  \|f\|_{L^2(\intP_{X,\mathcal{P}})}.
\]
Then, setting $\bar{C}=\max(\gamma_j,C)$ and using $(ii)$ together with the two  inequalities above proves $(ii)$.
The proof of $(iii)$ follows the steps in the proof of Corollary \ref{corr::cond_operator_cont}$(ii)$.
\end{proof}

When the elements of the partition $\mathcal{P}$ are independent, we have $\tilde{P}_{X,\cP}=P_X$, and hence the following stability result.

\begin{proposition}[\bf approximation]\label{prop::quotgameexpl} 
Let $X,f$, $\mathcal{P}$, $h$, $\bar{h}$ be as in Definition \ref{def::trivgexp}  and suppose that $\bar{h}$ is centered. 
\begin{itemize}

\item [(i)] 

Suppose $r_{Q_A}=\tfrac{d P_{X_{Q_A}} \otimes P_{X_{-Q_A}}}{d P_X}$, $Q_A=\cup_{j \in A} S_j$, exists and belongs to 
$L^\infty(P_X)$ for each $A \subseteq M$. Then $\big(L^2(\intP_{X,\cP}),\|\cdot\|_{L^2(P_X)}\big) = L^2(P_X)$ and for $f \in L^2(P_X)$
\begin{equation*}
\bar{h}_{S_j}^{\mathcal{P}}(X;f,\vce)=\bar{h}_{S_j}^{\mathcal{P}}(X;f,\vme) + \mathcal{I}(f,\{r_{Q_A}\}_{A \subseteq M}) \quad in \quad L^2(\P),
\end{equation*}
with the error term $\mathcal{I}$ satisfying the bound
\[
 \|\mathcal{I}(f,\{r_{Q_A}\}_{A \subseteq M})\|_{L^2(\P)} \leq C(h,\cP) \cdot \Big( \max_{A \subseteq M} \|r_{Q_A}-1\|_{L^{\infty}(P_X)} \Big) \cdot \| f \|_{L^2(P_X)}.
\]

\item [(ii)] If $X_{S_1},X_{S_2},\dots,X_{S_m}$ are independent, the marginal and conditional games coincide, i.e. $\vceP=\vpdpP$, and hence
\begin{equation*}
\bar{h}_{S_j}^{\mathcal{P}}(X;f,\vpdp)=\bar{h}_{S_j}^{\mathcal{P}}(X;f,\vce), \quad S_j \in \mathcal{P}.
\end{equation*}

\item [(iii)]
If $X_{S_1},X_{S_2},\dots,X_{S_m}$ are  independent, $h$ satisfies \hyperref[axiom:EP]{(EP)} and \hyperref[axiom:NPP]{(NPP)} properties and $f$ 
is additive across $\mathcal{P}$ as in \eqref{additiveP}, then
\begin{equation}\label{qgamevssums}
 \bar{h}_{S_j}^{\mathcal{P}}(X;f,v)=\bar{h}_{S_j}(X;f,v),\quad S_j \in \mathcal{P}, \quad v \in \{\vce,\vpdp\}.
\end{equation}

\end{itemize}
\end{proposition}

\begin{proof}

Let $\intP_{X,\cP}$ be as in Lemma \ref{lmm::boundquot}$(i)$. Suppose the Radon-Nykodym derivative $r_{Q_A}$ exists and belongs to $L^{\infty}(P_X)$ for each $A \subseteq M$. Then $(L^2(\intP_{X,\cP}),\|\cdot\|_{L^2(P_X)}) = L^2(P_X)$. The remaining part of the statement $(i)$ follows directly from Lemma \ref{lmm::_marg_value_bound_px}, Lemma \ref{lmm::lingamevalrepr}, and Lemma \ref{app::lmm::cond_marg_value_bound}$(i)$.

Let $A \subseteq M$. Let $S_A:=\cup_{j \in A}S_j$. Then, by the independence of $X_{S_A}$ and $X_{-S_A}$ we have
\begin{equation*} \label{pdpcequot}
\vceP(A;X,f)=\E[f(X_{S_A},X_{-S_A})|X_{S_A}]=\E[f(x_{S_A},X_{-S_A})]|_{x_{S_A}=X_{S_A}}=\vpdpP(A;X,f).
\end{equation*}
Since $A\subseteq M$ was arbitrary, the two quotient games coincide. This together with Lemma \ref{lmm::boundquot} implies $(ii)$.

Suppose now that $f$ is additive across the elements of $\mathcal{P}$ and that $h$ is linear and satisfies the efficiency property. Let $f_0=\E[f(X)]$. First, let us assume that $f_0=0$. Then $\vce$ and $\vpdp$ are cooperative games and hence the quotient games are as well. 
Note that for any $A \subseteq M$ we have
\[
\vpdpP(A;X,f)=\sum_{j \in A}f_j(X_{S_j})+\sum_{k \notin A}\E[f_k(X_{S_k})]=\sum_{j \in A}(f_j(X_{S_j})-\E[f_j(X_{S_j})]).
\]
In particular, when $A$ is a singleton $\{j\}$ one has $\vpdpP(\{j\};X,f)=f_j(X_{S_j})-\E[f_j(X_{S_j})]$.
We deduce that in general 
$\vpdpP(A;X,f)=\sum_{j\in A}\vpdpP(\{j\};X,f)$.

This proves that $\vpdpP$ is a non-essential cooperative game. Then, $(ii)$ together with Lemma \ref{lmm::nonessential} implies
\[
h_j[M,\vceP(\cdot;X,f)]=h_j[M,\vpdpP(\cdot;X,f)]=\vpdpP(\{j\};X,f)=f_j(X_{S_j})-\E[f_j(X_{S_j})].
\]
Then, the above relationship and Proposition \ref{prop::additivesumshap}$(ii)$ give $(iii)$ for the case $f_0=0$.

Now for a general $f$, applying the result just obtained to $f-f_0$ yields:
$$
 h_{S_j}^{\mathcal{P}}(X;f-f_0,v)=\bar{h}_{S_j}(X;f-f_0,v),\quad S_j \in \mathcal{P}, \quad v \in \{\vce,\vpdp\}.
$$
The left-hand side is equal to $\bar{h}_{S_j}^{\mathcal{P}}(X;f,v)$ by Lemma \ref{lmm::boundquot}$(i)$ while the right-hand side is the same as $\bar{h}_{S_j}(X;f,v)$ because translating $f$ by a constant does not alter $\bar{h}_{S_j}(X;f,v)$
according to Proposition \ref{prop::additivesumshap}$(iii)$. 
Consequently, we obtain 
\eqref{qgamevssums}.
\end{proof}

Proposition \ref{prop::quotgameexpl}$(i)$ states that the quotient game explanations always coincide for games $\vce$ and $\vpdp$ whenever the unions are independent, and hence always leads to explanations that are continuous in $L^2(P_X)$ and thus $P_X$-consistent in the sense of Definition \ref{def::consistency}. Proposition \ref{prop::quotgameexpl}$(ii)$ states that when a model is additive across elements of the partition $\mathcal{P}$ then the quotient and trivial group explainers actually coincide, while, in general, such equality is not guaranteed as illustrated in Example \ref{1stExample}. Furthermore, when each union is treated as a player, $2^{|\mathcal{P}|}\cdot O(v)$ becomes an upper bound for the complexity of computing the quotient game explainer obtained from $v\in\{\vce,\vme\}$. In particular, when $|\mathcal{P}|=O(\log(n^{1-\delta}))$, the complexity becomes linear.

\noindent{\bf Implications on feature importance.} Proposition \ref{prop::quotgameexpl}$(i)$ has direct implications on the processes that make use of quantifying local feature attributions (or global ones, as described in Section \S \ref{subsec::single_feat_stability}) to make data-informed decisions. Suppose the response variable satisfies $Y=f_*(X)$ for some reference model $f_*\in L^2(P_X)$ and $h$ is an efficient game value in the form \eqref{lingameform}. Then, for all models in the $(L^2,\epsilon)$-Rashomon set about $f_*$,  their marginal group explanations $h_{S_j}^{\mathcal{P}}[N,\vme]$ will differ by $\epsilon$ for global explanations (and in an $L^2$-sense for local ones) from those of $f_*$, no matter the functional representation of the models, even if some predictors are dropped from consideration. This implication addresses the issues discussed in \cite{Kumar2020,Janzing2020} and \cite{Sundararajan2019}, and is an alternative solution to the global feature importance method discussed in \cite{Fisher2019} when the predictors in each group $S_j$ are strongly dependent.

In a real-life setting, for example, financial institutions are required by the Equal Credit Opportunity Act \cite{ECOA} to inform customers on which factors impacted an adverse credit decision. Using group feature attributions based on the marginal quotient game leads to explanations that are true-to-the-data. Consequently, if a customer applies at different times when distinct models are used to assess credit risk, the explanations generated from those models will be similar, which guarantees explanation consistency throughout time. This also means that those explanations provide high fidelity information to the customer on what actions to take for obtaining credit in the future.

\vspace{5pt}

Using the quotient game approach, there are certain considerations one must take into account:
\begin{enumerate}[label=(\alph*)]
\item if the partition is changed the game values have to be recomputed; 
\item knowing the quotient game values does not help with computation of single feature explanations, which are expensive computationally;
\item even if single feature explanations are known, the trivial and quotient game explanations in general are not equal (see
Example \ref{1stExample}); this case causes loss of continuity of marginal, trivial group explanations with respect to models in $L^2(P_X)$ when dependencies are present.
\end{enumerate}

The aforementioned difficulties can be overcome, when explainers are constructed with the help of coalition values 
that utilize the partition structure $\mathcal{P}$ for computation of single players. We discuss such explainers in the next section.

\subsection{Explainers based on games with coalition structure}\label{sec::ExCoal}


A more advanced way to design explainers with the partition in mind is to employ cooperative game theory with coalition structure, in which the objective is to compute the payoffs of players in a game where players form unions acting in agreement within the union. 

The games with coalitions were introduced by \cite{Aumann1974} and later many more researchers contributed to the development of this subject. Some of the notable works are \cite{Owen}, \cite{Owen1982}, \cite{Tijs1981}, \cite{Dubey1981}, \cite{Amer1995}, \cite{Alonso2002}, \cite{Albizuri2004}, \cite{Casas2003}, \cite{Vidal-Puga2012}. See also the work by \citet{Lorenzo-Freire} containing a detailed exposition on  games with coalitions.

\begin{definition}
Let $N \subset \mathbb{N}$ and $\mathcal{P}=\{S_1,S_2,\dots,S_m\}$ be a partition of $N$. A {\it coalitional value} $g$ is a map that assigns to every game with a coalition structure $(N,v,\mathcal{P})$ a vector
\[
g[N,v,\mathcal{P}]=\{g_i[N,v,\mathcal{P}]\}_{i \in N}
\]
where $g_i[N,v,\mathcal{P}]$ denotes the payoff for the player $i \in N$.
\end{definition}
Note that any game value $h[N,v]$ could be viewed as a coalitional value that has no explicit dependence on the partition $\mathcal{P}$. Furthermore, the map $(N,v) \mapsto g[N,v,\bar{N}]$, where  $\bar{N}=\{\{i\}: i \in N\}$ denotes the partition containing singletons, induces a game value. 
Properties of game values such as linearity \hyperref[axiom:LP]{(LP)}, efficiency \hyperref[axiom:EP]{(EP)} etc. (see Appendix \ref{app::gameaxioms}) extend to coalitional game values in an obvious way.

Some notable (non-trivial) coalitional values are the Owen value and the Banzhaf-Owen value respectively defined by
\begin{equation}\label{BzOw}
\begin{aligned}
Ow_i[N,v,\mathcal{P}] &= \sum_{R \subseteq M \setminus \{j\} } \sum_{T \subseteq S_j \setminus \{i\} } 
\frac{r!(m-r-1)!}{m!} \frac{t!(s_j-t-1)!}{s_j!}\big[ v(Q\cup T \cup \{i\})-v(Q\cup T) \big]\\
BzOw_i[N,v,\mathcal{P}] &= \sum_{R \subseteq M \setminus \{j\} } \sum_{T \subseteq S_j \setminus \{i\} } 
\frac{1}{2^{m-1}} \frac{1}{2^{s_j-1}} \big[ v(Q\cup T \cup \{i\})-v(Q\cup T) \big]\\
\end{aligned}
\end{equation}
where $i\in S_j$, $t=|T|$, $s_j=|S_j|$, $r=|R|$ and  $Q=\cup_{r\in R}S_r$. The difference between the two values is that the Owen value satisfies the efficiency property, while the Banzhaf-Owen value satisfies the total power property. In addition, the Owen value for partitions consisting of singletons is  the Shapley value \eqref{shapform}, while for such partitions the Banzhaf-Owen value  is the Banzhaf value \eqref{banzhafform}. These properties can be verified directly.


To extend linear coalitional values to games that fail to satisfy $v(\varnothing)=0$, one can carry out the same program as in  \S\ref{sec::prelimgames}. Given a linear coalitional value $g$, we seek an extension $\bar{g}$ to $V$ that satisfies:

\begin{enumerate}\label{extcoalgamevalhyp}
    \item [(E1$'$)] $\bar{g}[N,\tilde{v},\mathcal{P}]=g[N,\tilde{v},\mathcal{P}]$ for $(N,\tilde{v}) \in V_0$.
    \item [(E2$'$)] $\bar{g}$ is linear on $V$.
\end{enumerate}

\begin{lemma}[\bf extension]\label{lmm:extlincoalgame}
Let $g$ be a linear coalitional value. An extension $\bar{g}$ satisfying (E1$'$)-(E2$'$) has the representation:
\begin{equation}\label{extreprcoalval}
\bar{g}_i[N,v,\mathcal{P}]=g_i[N,\tilde{v},\mathcal{P}] + \gamma_i v(\varnothing), \quad i \in N,
\end{equation}
where $\gamma=\{\gamma_i\}_{i=1}^n$ are constants that depend on $N$ and $\cP$. Furthermore, any coalitional value in the form \eqref{extreprcoalval} satisfies (E1$'$)-(E2$'$). In addition, if $g$ is symmetric, then $\bar{g}$ is symmetric if and only if $\gamma_i=\gamma_j$, $i,j \in N$.
\end{lemma}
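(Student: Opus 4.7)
The plan is to mirror the proof of Lemma \ref{lmm:extlingame} almost verbatim, with the partition $\mathcal{P}$ carried along as an extra argument. The argument splits into two directions (the representation formula is both sufficient and necessary), followed by the symmetry refinement.

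For the ``easy'' direction, I would take any $\bar{g}$ defined by \eqref{extreprcoalval} and verify (E1$'$)-(E2$'$). Property (E1$'$) is immediate: when $v=\tilde{v}\in V_0$, one has $v(\varnothing)=0$, so the correction term $\gamma_i v(\varnothing)$ vanishes and $\bar{g}[N,\tilde{v},\mathcal{P}]=g[N,\tilde{v},\mathcal{P}]$. Property (E2$'$) follows because $v\mapsto \tilde{v}$ is linear on $V$ (it merely zeroes out the value on $\varnothing$), $g$ is linear in its second argument by assumption, and $v\mapsto v(\varnothing)$ is linear.

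For the converse, I would follow the same device used in Lemma \ref{lmm:extlingame}: introduce the auxiliary non-cooperative game $u_0$ defined by $u_0(\varnothing)=1$ and $u_0(S)=0$ for every non-empty $S\subseteq N$. Every $v\in V$ admits the decomposition $v=\tilde{v}+v(\varnothing)u_0$, in which $\tilde{v}\in V_0$. Applying $\bar{g}$, exploiting linearity (E2$'$), and using (E1$'$) on the cooperative summand yields
\begin{equation*}
\bar{g}[N,v,\mathcal{P}]=g[N,\tilde{v},\mathcal{P}]+v(\varnothing)\,\bar{g}[N,u_0,\mathcal{P}].
\end{equation*}
Setting $\gamma_i:=\bar{g}_i[N,u_0,\mathcal{P}]$ (which depends on both $N$ and $\mathcal{P}$ through the coalitional value $\bar{g}$) recovers \eqref{extreprcoalval}.

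For the symmetry statement, I would invoke the coalitional analogue of the permutation argument in the game-value proof. Let $\pi$ be a permutation of $N$ compatible with the symmetry axiom of coalitional values, and note that $u_0$ is invariant under any such $\pi$, since its value depends only on whether the argument is empty. Combining this with the decomposition $v=\tilde{v}+v(\varnothing)u_0$ and the assumed symmetry of $g$ applied to $\tilde{v}$, one finds that $\bar{g}_i[N,\pi v,\pi\mathcal{P}]=\bar{g}_{\pi(i)}[N,v,\mathcal{P}]$ holds if and only if $\gamma_i=\gamma_{\pi(i)}$. Varying $\pi$ then forces $\gamma_i=\gamma_j$ for all $i,j\in N$.

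The main subtlety, and the step I expect to require the most care, is the symmetry part: for coalitional values ``symmetry'' is typically stated for permutations that act consistently on both the game $v$ and the partition $\mathcal{P}$, so I would need to make precise that $u_0$ is invariant under such simultaneous actions (which is trivial since $u_0$ only tracks emptiness) before I can conclude $\gamma_i=\gamma_j$. The remaining pieces are routine adaptations of the proof already given for Lemma \ref{lmm:extlingame}.
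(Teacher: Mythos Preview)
Your proposal is correct and follows essentially the same approach as the paper, which simply states that the proof follows the same steps as those in the proof of Lemma~\ref{lmm:extlingame}. Your care about how symmetry interacts with the partition $\mathcal{P}$ is a reasonable refinement of what the paper leaves implicit.
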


\begin{proof}
The proof follows the same steps as those in the proof of Lemma \ref{lmm:extlingame}.
\end{proof}

Abusing the notation as before, for each $i \in N$ we write
\begin{equation}\label{abuse2}
\bar{g}_i(X; v,\mathcal{P},f):=\bar{g}_i[N,v(\cdot\,;X,f),\mathcal{P}], \quad v \in \{\vce,\vpdp\}.    
\end{equation}

Group explainers based on coalitional values can similarly be defined either via sums or quotient games.
\begin{definition}\label{def::coalexpl}
Let $X$, $\mathcal{P}$ be as in Definition \ref{def::trivgexp}. Let $g$ be a linear coalitional value and $\bar{g}$ its extension. The trivial and quotient game explainers based on $(\bar{g},\mathcal{P})$ and $v \in \{\vce,\vpdp\}$ are defined by
\begin{equation*}
\bar{g}_{S_j}(X;v,\mathcal{P},f)=\sum_{i \in S_j} \bar{g}_i[N,v(\cdot \,; X,f),\mathcal{P}], \quad \bar{g}_{S_j}^{\mathcal{P}}(X;v,f)=\bar{g}_j[M,v^{\mathcal{P}}(\cdot\,; X,f), \bar{M}], \quad S_j \in \mathcal{P}.
\end{equation*}
\end{definition}

\begin{definition}
Let $g$ be a linear coalitional value and $\bar{g}$ its extension. We say that $\bar{g}$ is centered if $\bar{g}[N,c,\mathcal{P}]=0$ for any constant non-cooperative game $(N,c)\in V$ and any partition $\cP$.
\end{definition}

\begin{lemma}\label{lmm::unitgamecoalval}
Let $g$ be a linear coalitional value and $\bar{g}$ its extension with $\gamma$ as in \eqref{extreprcoalval}. Let $u$ denote a unit, non-cooperative game, that is, $u(S)=1$ for all $S \subseteq N$ and any $N$. Then
\begin{itemize}
\item [$(i)$] $\bar{g}$ is centered if and only if $\gamma=-g[N,\tilde{u},\mathcal{P}]$.

\item [$(ii)$] If $\bar{g}$ is centered then $\bar{g}[N,v,\mathcal{P}]=g[N,(v-v(\varnothing)u),\cP]$

\item [$(iii)$] If $\bar{g}$ has the marginalist form
\begin{equation}\label{specialcoalval}
\bar{g}_i[N,v,\mathcal{P}] = \sum_{S\subseteq N\setminus\{i\}} w(N,\mathcal{P},S) \big(v(S\cup\{i\})-v(S)\big), \quad i \in N,
\end{equation}
where $w(N,\mathcal{P},S)$ are constants, then it is centered.

\item [$(iv)$] Let $X \in \RR^n$ be predictors and $f$ a model. Let $f_0=\E[f(X)]$. Then for $v \in \{\vce,\vpdp\}$
\begin{equation*}
\bar{g}(X;f,v)=g(X;f-f_0,v)+f_0 \bar{g}[N,u,\mathcal{P}]=
g(X;f-f_0,v)+f_0 (g[N,\tilde{u},\mathcal{P}]+\gamma).
\end{equation*}
As a consequence, if $\bar{g}$ is centered, then  for each $i \in N$ and $S_j \in \mathcal{P}$
\[
\bar{g}(X;f,v,\mathcal{P})=g(X;f-f_0,v,\mathcal{P}), \quad \bar{g}^{\mathcal{P}}_{S_j}(X;f,v)=g^{\mathcal{P}}_{S_j}(X;f-f_0,v), \quad v \in \{\vce,\vpdp\}.
\]
\end{itemize}
\end{lemma}
\begin{proof}
The proof follows the same steps as those in the proof of Lemma \ref{lmm::unitgame}.
\end{proof}


The results of  Theorem \ref{prop::condoperator}, Theorem \ref{prop::margoperator}, and Theorem \ref{thm::margoperatorunbound} can be extended to coalitional values.

\begin{proposition}[\bf properties]\label{prop::coalqgameprop}  
Let $X$, $\mathcal{P}$ be as in Definition \ref{def::trivgexp}. Let $g$ be a linear coalitional value and $\bar{g}$ its extension. 

\begin{itemize}
\item [$(i)$] The linear map $f \mapsto \bar{\oper}^{\CE}[f;\bar{g},X]:=\bar{g}(X;\vce,\cP,f)$ is bounded and hence continuous on $L^2(P_X)$. 

\item [$(ii)$] The linear map $f \mapsto \bar{\oper}^{\ME}[f;\bar{g},X]:=\bar{g}(X;\vpdp,\cP,f)$ is bounded and hence continuous on $L^2(\intP_X)$.

\item [$(iii)$] 
Suppose $g$ is of the form \eqref{lincoalgamevalrepr}, and for any $i \in N$ and subset $S$ such that $\gamma(i, N,\mathcal{P},S)\neq 0$,
the Radon-Nikodym derivative  $r_S=\tfrac{d P_{X_{S}} \otimes P_{X_{-S}}}{d P_X}$ exists and belongs to $L^\infty(P_X)$. Then $f \mapsto \bar{\oper}^{\ME}[f;\bar{g},X]=\bar{g}(X;\vpdp,\cP,f)$ defines a bounded linear operator on 
$L^2(P_X)$ as well.

\item[$(iv)$] Suppose $(\bar{\oper}^{\ME}[\cdot;\bar{g},X],H_X)$ is well-defined. Let $g$ have the form \eqref{specialcoalval} with $w(N,\mathcal{P},S)>0$. Suppose there exists distinct $i,j \in \{1,2,\dots,n\}$  for which \eqref{blowupcond} holds. 
Then $(\bar{\oper}^{\ME}[\cdot;\bar{g},X],H_X)$ is unbounded.

\item [$(v)$] If $\bar{g}$ is centered, then the conclusions of Proposition \ref{prop::quotgameexpl}$(i)$, under group independence, hold for the quotient game explanations.  In particular, the linear map $f \mapsto \bar{g}_{S_j}^{\mathcal{P}}(X;f,v)$, $\{\vce,\vpdp\}$,  is bounded and hence continuous on $L^2(P_X)$.

\end{itemize}
\end{proposition}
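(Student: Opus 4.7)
The plan is to reduce each item to results already established for single‑player game values and to Proposition \ref{prop::quotgameexpl}. The unifying observation is that linearity of $g$ in $v$, together with the fact that games on a finite set $N$ form a finite‑dimensional vector space, forces a representation
\begin{equation*}
g_i[N,v,\mathcal{P}] = \sum_{\varnothing\ne S\subseteq N} c_i(N,\mathcal{P},S)\, v(S),
\end{equation*}
so that the extension $\bar{g}$ defined by \eqref{extreprcoalval} obeys the coarse growth bound $|\bar{g}_i[N,v,\mathcal{P}]| \le C(N,\mathcal{P}) \sum_{S\subseteq N}|v(S)|$, in analogy with \eqref{growthgameval}. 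This reduces every boundedness question to controlling the $L^2$ norms of the games $v(S)$ one subset at a time.

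For $(i)$, substitute $v=\vce(\cdot;X,f)$ and use Jensen's inequality, $\|\E[f(X)|X_S]\|_{L^2(\PP)}\le \|f\|_{L^2(P_X)}$ (with the convention $\E[f(X)|X_\varnothing]:=\E[f(X)]$, noting $|\E[f(X)]|\le \|f\|_{L^2(P_X)}$). The triangle inequality in $L^2(\PP)$ then yields a bound $\|\bar{\oper}^{\CE}_i[f;\bar{g},X]\|_{L^2(\PP)}\le C'\|f\|_{L^2(P_X)}$, and summing over $i\in S_j$ handles the trivial group explainer as well. For $(ii)$, substitute $v=\vpdp(\cdot;X,f)$; by the tower property (or Jensen applied to the inner integral),
\begin{equation*}
\|\vpdp(S;X,f)\|_{L^2(\PP)}^2 \le \int f^2(x_S,x_{-S})[P_{X_S}\otimes P_{X_{-S}}](dx_S,dx_{-S}) \le 2^n \|f\|_{L^2(\tP_X)}^2,
\end{equation*}
so combining with the growth bound gives continuity on $L^2(\tP_X)$.

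For $(iii)$, I would mirror the proof of Theorem \ref{prop::margoperatorunbound}$(iii)$. Assuming the form \eqref{specialcoalval} with strictly positive weights, pick the test family $f_R(x)=\mathbf{1}_R(x_i,x_j)$ with $R=A\times B$ and $P_{(X_i,X_j)}(R)>0$, and compute $\vpdp(S;X,f_R)$ for each $S$. The nonzero contributions to $\bar{\oper}^{\ME}_i[f_R;\bar{g},X]$ come only from the pairs $(S,S\cup\{i\})$ for which exactly one of the two sets contains the index $j$, in which case one of $\vpdp(S;X,f_R)$, $\vpdp(S\cup\{i\};X,f_R)$ factors as a product of a one‑dimensional probability and a one‑dimensional indicator, exactly as in Remark \ref{BCisSharp}. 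Strict positivity of the weights precludes cancellation of leading terms, producing a lower bound of the shape
\begin{equation*}
\frac{\|\bar{\oper}_i^{\ME}[f_R;\bar{g},X]\|_{L^2(\PP)}^2}{\|f_R\|_{L^2(P_X)}^2}
\ge c\,\frac{[P_{X_i}\otimes P_{X_j}](R)}{P_{(X_i,X_j)}(R)}\bigl(P_{X_i}(A)+P_{X_j}(B)\bigr) + O(1),
\end{equation*}
for a constant $c>0$ depending on the weights, which diverges under \eqref{blowupcond}. The main technical obstacle sits here: one must do the bookkeeping over all subsets $S\subseteq N$ (not just those involving $\{i,j\}$) and verify that the positivity assumption genuinely rules out accidental vanishing of the leading coefficient in the ratio above.

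For $(iv)$, define $h_j[M,v]:=g_j[M,v,\bar{M}]$; this is a linear game value on $M$ whose centered extension is $\bar{h}_j[M,v]=\bar{g}_j[M,v,\bar{M}]$ by Lemma \ref{lmm::unitgamecoalval}. Since $\bar{g}_{S_j}^{\mathcal{P}}(X;f,v)=\bar{h}_j[M,v^{\mathcal{P}}(\cdot;X,f)]$ by \eqref{coalexplainers}, the quotient game explainer fits precisely into the framework of Proposition \ref{prop::quotgameexpl} applied to the game value $h$ on the set $M$ with unions $X_{S_1},\dots,X_{S_m}$. The independence assumption on the unions then delivers $\vceP=\vpdpP$, boundedness on $L^2(P_X)$, the translation identity, and the coincidence with the trivial group explainer in the additive case, verbatim.
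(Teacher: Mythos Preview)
Your approach to parts $(i)$, $(ii)$, and $(iv)$ matches the paper's exactly: the paper invokes the representation formula of Lemma \ref{lmm::lingamevalrepr} together with the elementary bounds $\|\vce(S;X,f)\|_{L^2(\PP)}\le\|f\|_{L^2(P_X)}$ and $\|\vpdp(S;X,f)\|_{L^2(\PP)}\le 2^n\|f\|_{L^2(\tP_X)}$ for $(i)$ and $(ii)$, and for $(iv)$ it simply retraces the proof of Proposition \ref{prop::quotgameexpl}$(i)$--$(iv)$, which is precisely what your reduction to the induced game value $h[M,\cdot]=g[M,\cdot,\bar M]$ accomplishes.

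For $(iii)$ your strategy is also the paper's, but one sentence in your sketch is incorrect: you claim the nonzero contributions come only from pairs $(S,S\cup\{i\})$ in which exactly one of the two sets contains $j$. Since $S\subseteq N\setminus\{i\}$, either both sets contain $j$ (when $j\in S$) or neither does, so no pair satisfies your condition; in fact every pair contributes. The correct observation is that $\vpdp(S;X,f_R)$ depends only on $S\cap\{i,j\}$, so one groups the sum in \eqref{specialcoalval} by the value of that intersection. The paper packages this bookkeeping into Proposition \ref{prop::coalvalblowup}, which works from the general representation \eqref{lincoalgamevalrepr} and checks that strict positivity of the weights $w(N,\mathcal{P},S)$ forces the three aggregated coefficients in \eqref{coeflingamehyp_gen} (with $k=i$) to be nonzero; after that, the lower bound you wrote follows. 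So your ``main technical obstacle'' is genuine but resolved cleanly by grouping by $S\cap\{i,j\}$ rather than attempting to discard terms.
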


\begin{proof}
The properties $(i)$ and $(ii)$ follow from Lemma \ref{lmm::lingamevalrepr} and the fact that for any $S \subseteq N=\{1,2,\dots,n\}$
\[
\|\vce(S;X,f)\|_{L^2(\PP)} \leq \|f\|_{L^2(P_X)}, \quad \|\vpdp(S;X,f)\|_{L^2(\PP)} \leq 2^n\|f\|_{L^2(\intP_X)}.
\]
The property $(iii)$ follows from Lemma \ref{lmm::lingamevalrepr}, Lemma \ref{lmm::_marg_value_bound_px}, and the definition of the Radon-Nikodym derivative.
The property $(iv)$ follows from Proposition \ref{prop::coalvalblowup} (take $k$ to be $i$) and the fact that $w(N,\mathcal{P},S)>0$, while $(v)$ can be obtained following the steps in the proof of Proposition \ref{prop::quotgameexpl}$(i)$.
\end{proof}

\begin{remark}\rm
Proposition \ref{prop::coalqgameprop}$(iii)$ can be generalized to linear coalitional game values that are not necessarily 
in the  form of \eqref{specialcoalval}; see Proposition \ref{prop::coalvalblowup}.
\end{remark}


\subsection{Coalitional explainers with two-step formulation}\label{subsubsec::twostepprop}

Having quotient game explanations does not provide one with single feature explanations, which sometimes are desirable. Unlike game values, coalitional values may allow for a more efficient way of computing single feature explanations. This is the case, for example, when a coalitional value can be obtained using a two-step procedure: first by playing a quotient-like game and then a game inside the union. This consequently affects the structure of the coalitional value, which in turn improves the exposition on stability (\S 4.3.3) and allows for extending coalitional values to recursive ones for generic partition trees (\S \ref{sec::parttree1}).

\begin{definition}\label{def::2stepprop}
Let $g[N,v,\mathcal{P}]$ be a linear coalitional value. We say that $g$ satisfies a two-step formulation if for any  $N \subset \mathbb{N}$ and its partition $\cP=\{S_1,\dots,S_m\}$ there exists a linear symmetric game value $h^{(1)}$, a linear game value $h^{(2)}$, and games $v^{(1)},v^{(2)},\dots, v^{(m)}$ played respectively on $S_1$, $S_2,\dots,S_m$ 
(all dependent on $(N,v,\cP)$), such that
\begin{equation*}
g_i[N,v,\cP] = h_{i}^{(2)}[S_j,v^{(j)}], \,\, i \in S_j \quad \text{where} \quad  v^{(j)}(T)=h_j^{(1)}[M, \hat{v}_T ], \,\, T \subseteq S_j,\,\, M=\{1,2,\dots,m\},
\end{equation*}
and $\{\hat{v}_T=\hat{v}_T(N,v,\cP)\}_{j\in M,T\subseteq S_j}$ is a family of intermediate games on $M$ satisfying
\begin{itemize}
\item [$(i)$] for any permutation $\pi:M \to M$
\begin{equation*}
\hat{v}_{T}\left(N,v, \{S_{\pi(k)}\}_{k=1}^n \right) = \pi^{-1} \hat{v}_{T}\left(N,v, \{S_{k}\}_{k=1}^n \right), \quad T \subseteq S_j;
\end{equation*}
\item [(ii)] $\hat{v}_{S_j}(N,v,\cP)=v^{\cP}$ for any $j\in M$;
\item [(iii)] $\hat{v}_T(N,v,\{N\})(\{1\})=v(T)$ for any $T \subseteq N$.
\end{itemize}
\end{definition}

In the definition above, property  $(i)$ ensures that $g[N,u,\mathcal{P}]$ is independent of the ordering of the sets in $\mathcal{P}$. Property $(ii)$ requires that when $T\in \mathcal{P}$ the intermediate game is the quotient game played on the partition elements. Property $(iii)$ requires that for the grand coalition structure 
the total payoff of the intermediate game associated with $T$ is equal to the payoff of $v$ on $T$.

The following lemma shows that $h^{(1)}$  and $h^{(2)}$ can be recovered from $g$ up to multiplicative constants. Conversely, as we shall see later, imposing certain conditions on $h^{(1)}$  and $h^{(2)}$ can result in desirable properties of the coalitional value.

\begin{lemma}\label{lmm::twostepconncoal}
Let $g$ be a coalitional value with a two-step formulation with $h^{(1)}$, $h^{(2)}$ as in Definition \ref{def::2stepprop}. 
\begin{itemize}
  \item [(i)]  $g[N,v,\{N\}] = \alpha h^{(2)}[N,v]$, where $\alpha =h^{(1)}_1[\{1\},\tilde{u}]$.

  \item [(ii)] Suppose $h^{(2)}[\{i\}, \tilde{u}] = \beta$ for any $i \in \mathbb{N}$. Then $g[N,v,\bar{N}] = \beta h^{(1)}[N,v]$. 

\end{itemize}
\end{lemma}
\begin{proof}
Suppose $h^{(1)}$ satisfies $h^{(1)}_1[\{1\},\tilde{u}]=\alpha$. Then for the grand coalition $\cP=\{N\}$ we have
\begin{equation*}
g_i[N,v,\{N\}] = h_i^{(2)}[N,v^{(1)}], \,\, v^{(1)}(T) = h_1^{(1)}[\{1\},\hat{v}_T(N,v,\{N\})]=h^{(1)}[\{1\},\tilde{u}] \cdot \hat{v}_T(N,v,\{N\})(\{1\})=\alpha v(T).
\end{equation*}

Next, suppose  $h^{(2)}[\{i\}, \tilde{u}] = \beta$ for any $i \in \mathbb{N}$. Then by Definition \ref{def::2stepprop}, for the partition $\cP=\bar{N}$  
consisting of singletons we have
\begin{equation*}
g_i[N,v,\bar{N}] = h_i^{(2)}[\{i\},v^{(i)}] = h_i^{(2)}[\{i\},\tilde{u}] \cdot v^{(i)}(\{i\}) = \beta v^{(i)}(\{i\})=\beta h^{(1)}_i[N,\hat{v}_{\{i\}}(N,v,\bar{N})]=\beta h_{i}^{(1)}[N,v].
\end{equation*}
This proves the lemma.
\end{proof}

  In the setting of Definition \ref{def::2stepprop}, $g$ can have infinitely many representations, such as via rescaling. Specifically, if $g$ is a coalitional value with a two-step formulation based on $h^{(1)},h^{(2)}$, then for any $\alpha \neq 0$ 
  \[
  g_i[N,v,\cP]=h_{\alpha,i}^{(2)}[S_j,v^{(j)}], \quad v^{(j)}(T)=h_{\alpha,j}^{(1)}[M,\hat{v}_T], \quad i \in S_j,
  \]
  where $h^{(1)}_{\alpha}:=\alpha h^{(1)}$ and $h^{(2)}_{\alpha}:=(1/\alpha)h^{(2)}$. Under some mild conditions, however, $g$ has a distinct representation in terms of re-normalized game values $h_*^{(1)}$ and $h_*^{(2)}$, and some  scaling constant $\alpha_*$; see Lemma \ref{lmm::canonical}.

\vspace{5pt}

\noindent{\bf Examples.} Before we proceed, let us provide several examples of game-values that have two-step formulation. First, consider the two-step Shapley defined in \citet{Kamijo2009} and given by 
\begin{equation}\label{2shap}
TSh_i[N,v,\mathcal{P}] = \varphi_i[S_j,v] + \frac{1}{|S_j|} \big( \varphi_j[M,v^{\mathcal{P}}]  - v(S_j)\big), \quad i \in S_j.
\end{equation}
Define the intermediate game $\hat{v}$ as follows. For each non-empty $T \subseteq S_j$ define 
the game $\hat{v}_T$ by 
\begin{equation}\label{tshinterm}
\hat{v}_T(A) =  \frac{|T|}{|S_j|}v^{\cP}(A) + |A|\big( v(T) - \frac{|T|}{|S_j|}v^{\cP}( \{j\}) \big), \,\, A\subseteq M.
\end{equation}
Let the game values be $h^{(1)}=h^{(2)}=\varphi$. Then for each $j \in M$ and $i \in S_j$ we have
\begin{equation}\label{2steptsh}
TSh_i[N,v,\cP]  = \varphi_i[S_j,v^{(j)}_{TSh}], \quad v^{(j)}_{TSh}(T) = \varphi_j[M,\hat{v}_T]= v(T)+\frac{|T|}{|S_j|} \big( \varphi_j[M,v^{\cP}]-v(S_j) \big), \,\,T \subseteq S_j. 
\end{equation}

Two other examples are the Owen and Banzhaf-Owen values (see \eqref{BzOw}). 
For each $T \subseteq S_j$ define the intermediate game by
\begin{equation}\label{modqgame}
\hat{v}_T(A)=v^{\cP|T}(A) := \1_{\{ j \notin A \}} v^{\cP}(A) + \1_{\{ j \in A \}} v( \cup_{ k \in A \setminus \{j\} } S_k \cup T), \,\, A\subseteq M.
\end{equation}
Then for each $j \in M$, $i \in S_j$, and $T \subseteq S_j$
\begin{equation*}
\begin{aligned}
Ow_i[N,v,\cP]&=\varphi_i[S_j,v^{(j)}_{Ow}],&  v^{(j)}_{Ow}(T)&=\varphi_j[M,v^{\cP|T}]\\
BzOw_i[N,v,\cP] & =Bz_i[S_j,v^{(j)}_{BzOw}],&  v^{(j)}_{BzOw}(T)&=Bz_j[M,v^{\cP|T}].
\end{aligned}
\end{equation*}
Notice that one has $h^{(1)}=h^{(2)}=\varphi$  (see \eqref{shapform}) in the first two-step formulation while $h^{(1)}=h^{(2)}=Bz$ (see \eqref{banzhafform}) in the second.

More generally, the coalitional value in the form
\begin{equation}\label{gencoalval}
g_i[N,v,\cP]=\sum_{A\subseteq M \setminus \{j\}} \sum_{T \subseteq S_j \setminus \{i\}} w^{(1)}(A,M,j) w^{(2)}(T,S_j,i) \Big( v(Q_A \cup T \cup \{i\}) -v ( Q_A \cup T)\Big), i\in S_j
\end{equation}
where $Q_A=\cup_{\alpha\in A}S_{\alpha}$ can be expressed via two-step formulation where $\hat{v}_T(A)=v^{\cP|T}(A)$ and $h^{(1)},h^{(2)}$ are linear game values in the form \eqref{lingameform} with weights $w^{(1)}$ and $w^{(2)}$, respectively. 

\begin{remark}\rm It follows directly from the definition that a coalitional value $g_i[N,v,\cP]$ with the two-step formulation has complexity $2^{|S_j|+|\cP|}\cdot O(v)$, which can be significantly lower than that of game values. Note also that the two-step Shapley value for singletons is the Shapley value. Thus, in light of \eqref{2steptsh}, the (empirical) quotient game explanations can be re-used for the computation of single feature explanations based on the two-step Shapley value, which in turn lowers the complexity to $O(n (2^{|\cP|}+2^{|S_j|})) \cdot O(v)$, where $v \in \{\hat{v}_*^{\CE},\hat{v}_*^{\ME}\}$ is the estimator of the deterministic conditional and marginal games defined in \eqref{margcondgamedet}. This, however, is not true for the Owen and Banzhaf values since single feature explanations have to be computed without the use of the quotient game values, which gives the complexity stated earlier.
\end{remark}

\subsubsection{Bounds for marginal coalitional values with two-step representation}\label{sec::extra}

Let $g$ be a linear coalitional game value and $\bar{g}$ its centered extension. Generalizing Theorem \ref{thm::margoperatorunbound}, Proposition \ref{prop::coalqgameprop} indicates that, unlike their conditional analogs, marginal explanations
$f\mapsto \bar{g}_i(X; \vpdp,\mathcal{P},f)$ 
are not necessarily bounded in the $L^2(P_X)$-metric. A similar instability for game values motivated grouping the predictors as a remedy and resulted in Proposition \ref{prop::quotgameexpl}. Here, we provide better bounds for marginal coalitional values under the assumption that $g$ admits a two-step formulation with a particular type of intermediate games, i.e. those that appeared in the case of Owen or two-step Shapley value; see \eqref{modqgame} and \eqref{tshinterm}.

\begin{proposition}[{\bf bounds}]\label{prop::extra1} 

Let $g$ be a coalitional value with a two-step formulation with $h^{(1)}$, $h^{(2)}$, and $\hat{v}_T$ as in Definition \ref{def::2stepprop}. Denote the centered extension of $g$ by $\bar{g}$.

\begin{itemize}
    \item [(i)] Suppose $\hat{v}_T(N,v,\cP)=v^{\cP|T}$, $T\subseteq S_j$. Then for  $i\in S_j$ 
    \begin{equation*}
    \begin{split}
    \|\bar{g}_i(X; \vpdp,\mathcal{P},f)\|_{L^2(\PP)}\leq 
    &C\Big(\sum_{Q_A=\cup_{j\in A}S_j,A\subseteq M\setminus\{j\}}\|f\|_{L^2(P_{X_{Q_A}}\otimes P_{X_{-Q_A}})}\\
    &+\sum_{T\subseteq S_j}\sum_{Q_A=\cup_{j\in A}S_j,A\subseteq M\setminus\{j\}}\|f\|_{L^2(P_{X_{Q_A\cup T}}\otimes 
    P_{X_{-(Q_A\cup T)}})}\Big)
    \end{split}
    \end{equation*}
    where $C$ depends only on $h^{(1)}$, $h^{(2)}$ and $\mathcal{P}$.

    \item [(ii)] Suppose  $\hat{v}_T(N,v,\cP)(A)= \frac{|T|}{|S_j|}v^{\cP}(A) + |A|\big( v(T) - \frac{|T|}{|S_j|}v^{\cP}( \{j\}) \big)$, $T\subseteq S_j$. Then for $i\in S_j$ 
    $$
    \|\bar{g}_i(X; \vpdp,\mathcal{P},f)\|_{L^2(\PP)}\leq 
    C\Big(\sum_{Q_A=\cup_{j\in A}S_j,A\subseteq M}\|f\|_{L^2(P_{X_{Q_A}}\otimes P_{X_{-Q_A}})}
    +\sum_{T\subseteq S_j}\|f\|_{L^2(P_{X_T}\otimes P_{X_{-T}})}\Big)
    $$
    where $C$ depends only on $h^{(1)}$, $h^{(2)}$ and $\mathcal{P}$.
\end{itemize}
\end{proposition}
\begin{proof}
See Appendix \ref{app::extra}.
\end{proof}

We next state a corollary to the above proposition which assumes that the elements of the partition $\cP$ are independent. Then, any of the following assumptions yields the continuity of $f\mapsto \bar{g}_i(X; \vpdp,\mathcal{P},f)$, $i\in S_j$, in the $L^2(P_X)$-metric when either the predictors $\{X_i\}_{i \in S_j}$ are independent, or $S_j=\{i\}$.

\begin{corollary}\label{corr:extra3}
Let $X,\cP,f,f_0$ be as in Proposition \ref{prop::extra1}, and $g$ a coalitional game value that satisfies the assumptions of either part of the proposition. Suppose the group predictors $X_{S_1},\dots,X_{S_m}$ are independent. Then there exists a constant $C$ dependent only on game values and the partition such that 
for any $i\in S_j$ we have
$$
\|\bar{g}_i(X; \vpdp,\mathcal{P},f)\|_{L^2(\PP)}\leq C\Big(\|f\|_{L^2(P_X)}
      +\sum_{T\subseteq S_j}\|f\|_{L^2(P_{X_T}\otimes P_{X_{-T}})}\Big).
$$
If furthermore either the predictors $\{X_i\}_{i \in S_j}$ are independent or $S_j=\{i\}$, then 
$$
\|\bar{g}_i(X; \vpdp,\mathcal{P},f)\|_{L^2(\PP)}\leq C\|f\|_{L^2(P_X)}
$$
for another such constant $C$. Finally, in the case of $S_j=\{i\}$, if $h^{(2)}$ is efficient and $h^{(1)}$ satisfies the assumptions of Corollary \ref{corr::cond_operator_cont}$(ii)$, then one can take $C=1$ in the above inequality.
\end{corollary}
\begin{proof}
See Appendix \ref{app::extra}.
\end{proof}


\subsubsection{\bf Trivial group explainers with two-step formulation }\label{sec::qp_prop}

Trivial group explanations (see Definitions \ref{def::trivgexp} and \ref{def::coalexpl}) may differ for the marginal and conditional games, which in turn may break the continuity of marginal explanations with respect to models in $L^2(P_X)$. For game values, for instance, to remedy the situation Proposition \ref{prop::additivesumshap} required the model $f$ to be additive across partition (see \eqref{additiveP}), which is a very stringent requirement for ML models. It turns out that for coalitional values with a two-step formulation, there is no need to impose any conditions on the form of a model as long as $h^{(2)}$ is proportional to an efficient game value. To this end, we provide the following result.

\begin{proposition}\label{prop::unifcoalexpl} 
Let $g$ be a coalitional value with a two-step formulation with $h^{(1)}$, $h^{(2)}$ as in Definition \ref{def::2stepprop}. Suppose $h^{(2)}=\alpha h^{(2)}_*$ where $\alpha \in \RR$ and the game value $h^{(2)}_*$ satisfies \hyperref[axiom:EP]{(EP)}. Then:

\begin{itemize}

\item [$(i)$] For any $(N,v,\cP)$ we have
\begin{equation}\label{QP2stepprelim}
\sum_{i \in S_j} g_i[N,v,\mathcal{P}] = \alpha h^{(1)}_j[M,v^{\mathcal{P}}] = g[M,v^{\cP},\bar{M}]
\end{equation}
Consequently, the marginal trivial group explanations satisfy the improved bounds
\begin{equation}\label{trivgroupcoalbounds1}
\|\bar{g}_{S_j}(X;\vpdp,\cP,f)\|_{L^2(\PP)} \leq 
|\alpha|\cdot C\big(h^{(1)},\cP \big) \cdot \Big(\sum_{ A\subseteq M\setminus\{j\},Q_A=\cup_{j\in A}S_j}\|f\|_{L^2(P_{X_{Q_A}}\otimes P_{X_{-Q_A}})} \Big).
\end{equation}

\item[$(ii)$] If the predictor unions   $X_{S_1},X_{S_2},\dots,X_{S_m}$ are independent, then for $f \in L^2(P_X)$
\begin{equation}\label{QPexplanations}
\bar{g}_{S_j}(X;\vpdp,\cP,f) = \bar{g}_{S_j}(X;\vce,\cP,f), \quad S_j \in \cP.
\end{equation}
Consequently, the linear maps $f \mapsto \bar{g}_{S_j}(X;\vme,\cP,f)$ is bounded, with the Lipschitz constant that depend on $\alpha$, $h^{(1)}$, and $\cP$, and hence it is continuous on  $L^2(P_X)$.

\item [(iii)] 

Suppose $r_{Q_A}=\tfrac{d P_{X_{Q_A}} \otimes P_{X_{-Q_A}}}{d P_X}$, $Q_A=\cup_{j \in A} S_j$, exists and belongs to 
$L^\infty(P_X)$ for each $A \subseteq M$. Then $\big(L^2(\intP_{X,\cP}),\|\cdot\|_{L^2(P_X)}\big) = L^2(P_X)$ and for $f \in L^2(P_X)$
\begin{equation*}
\bar{g}_{S_j}(X;\vce,\cP,f)=\bar{g}_{S_j}(X;\vpdp,\cP,f) + \mathcal{I}(f,\{r_{Q_A}\}_{A \subseteq M}) \quad in \quad L^2(\P),
\end{equation*}
with the error term $\mathcal{I}$ satisfying the bound
\[
 \|\mathcal{I}(f,\{r_{Q_A}\}_{A \subseteq M})\|_{L^2(\P)} \leq C(g,\cP) \cdot \Big( \max_{A \subseteq M} \|r_{Q_A}-1\|_{L^{\infty}(P_X)} \Big) \cdot \| f \|_{L^2(P_X)}.
\]
\end{itemize}
\end{proposition}

\begin{proof} For any $j \in M$ we have
\[
\sum_{i\in S_j}g[N,v,\cP] =\sum_{i \in S_j} h^{(2)}_i[S_j,v^{(j)}]=\alpha_* v^{(j)}(S_j)=\alpha_*h^{(1)}_j[M,v^{\cP}]
\]
where we used the efficiency of $h_*^{(2)}$. Since $h^{(2)}_*$ satisfies  \hyperref[axiom:EP]{(EP)}, Lemma \ref{lmm::twostepconncoal}$(ii)$ implies that $\alpha_*h^{(1)}_j[M,v^{\cP}]=g[M,v^{\cP},\bar{M}]$. This proves \eqref{QP2stepprelim}. The bound \eqref{trivgroupcoalbounds1} follows from \eqref{QP2stepprelim} and the definition of the quotient game. This establishes $(i)$. If $X_{S_1},X_{S_2},\dots,X_{S_m}$ are independent, then  \eqref{QP2stepprelim} and Proposition \ref{prop::quotgameexpl}$(i)$ give $(ii)$. Finally, Proposition \ref{prop::quotgameexpl}$(i)$ and \eqref{QP2stepprelim} imply $(iii)$.
\end{proof}

\begin{proposition}\label{prop::addflow2step} 
Let $g$ be a coalitional value with a two-step formulation with $h^{(1)}$, $h^{(2)}$ as in Definition \ref{def::2stepprop}. Suppose $h^{(1)}$, $h^{(2)}$ satisfy \hyperref[axiom:EP]{(EP)}. Then:

\begin{itemize}
\item [$(i)$] The coalitional value $g$ satisfies \hyperref[axiom:EP]{(EP)}. 

\item [$(ii)$] Let $h^{(1)}$ have the form \eqref{lingameform} and satisfy the properties outlined in Corollary \ref{corr::cond_operator_cont}$(ii)$. Then 
\begin{equation}\label{trivgroupcoalbounds2}
 \sum_{j \in M}\|\bar{g}_{S_j}(X;\vce,\cP,f)\|^2_{L^2(\PP)}  \leq \|f-f_0\|^2_{L^2(P_X)} \leq \|f\|^2_{L^2(P_X)}, \quad f \in L^2(P_X). 
\end{equation}
Consequently, if unions $X_{S_1},\dots,X_{S_m}$ are independent, the same bound holds for $\bar{g}_{S_j}(X;\vme,\cP,f)$.
\end{itemize}
\end{proposition}
\begin{proof}
Let $\{v^{(j)}\}_{j=1}^M$ be as in Definition \ref{def::2stepprop}. Suppose that $h^{(1)},h^{(2)}$ are efficient, then we have
\[
\sum_{i\in N}g[N,v,\cP] =\sum_{j \in M}\sum_{i \in S_j} h^{(2)}_i[S_j,v^{(j)}]=\sum_{j \in M}v^{(j)}(S_j)=\sum_{j \in M}h^{(1)}_j[M,v^{\cP}]=v^{\cP}(M)=v(N),
\]
where we used the property $\hat{v}_{S_j}=v^{\cP}$ hence $(i)$.

Since $h^{(2)}$ is efficient, Proposition \ref{prop::unifcoalexpl}$(i)$ implies $\bar{g}_{S_j}(X;\vce,\cP,f) = h^{(1)}_j[M,v^{\CE,\cP}]$,  $j\in M$. This together with Lemma \ref{lmm::boundquot}$(iii)$ and the properties of $h^{(1)}$ implies \eqref{trivgroupcoalbounds2}, which proves $(ii)$.
\end{proof}

\begin{remark}\rm

Given a scaled efficient game value $h^{(2)}$, Proposition \ref{prop::unifcoalexpl}$(i)$ implies that
\begin{equation}\tag{QP}\label{quotientgame}
\sum_{i \in S_j} g_i[N,v,\mathcal{P}]=g_j[M,v^{\mathcal{P}},\bar{M}], \quad S_j \in \cP, \quad \cP=\{S_1,S_2,\dots,S_m\}
\end{equation}
called the quotient game property. Thus, the two-step formulation of $g$, together with the efficiency of $h^{(2)}$, is equivalent to $g$ satisfying \eqref{quotientgame}. Note that we could have imposed the condition \eqref{quotientgame} on a coalitional value in order to obtain the subsequent stability results. However, we elected not to do this and rather work with the two-step formulation setup. There are several reasons for this: 1) the two-step formulation allows for a simpler requirement on $h^{(2)}$ for a coalitional value to obtain \eqref{quotientgame}; 2) by fixing an efficient $h^{(2)}$, one can engineer a large  class of coalitional values with the \eqref{quotientgame} property by varying $h^{(1)}$ and the intermediate game; 3) the two-step formulation is helpful when designing recursive values for a given partition tree. 
\end{remark}

\noindent {\bf Role of coalitional values in the design of recursive values with additive flows.}
If $h^{(1)}$  and $h^{(2)}$ are both efficient, the coalitional value $g[N,v,\cP]$ satisfies simultaneously \eqref{quotientgame} and \hyperref[axiom:EP]{(EP)}, which induces a recursive coalitional values with an additive flows along any combinatorial tree and which allows to design group explainers based on any parametrized partition tree, which we do in \S \ref{sec::ExTree}. 

To understand the role of the two-step formulation in building recursive values, note that that a coalitional value $g[N,v,P]$  with a normalized two-step formulation can be associated with a combinatorial tree of depth two, except when $\cP=\bar{N}$; see Figure \ref{fig::two_step_trees}. The root of the tree contains $v(N)=g_1[\{1\},v^{N} ,\{\{1\}\}]$, its children contain $h^{(1)}_j[M,v^{\cP}]=g_j[M,v^{\cP},\bar{M}]$, $j\in M$, associated with each element of  $\cP=\{S_1,\dots, S_m\}$, and terminal nodes correspond to $g_i[N,v,\cP]=h^{(2)}[S_j,v^{(j)}]$, $i \in N$. Each terminal node $i$ which is a child of the root corresponds to a singleton $S_j={i}$ for some $j\in M$ and satisfies $g_i [N,v,\cP]=h^{(1)}_j [M,v^{\cP}]=g_j[M,v^{\cP},\bar{M}]$. 

Thus, if $h^{(1)}$  and $h^{(2)}$  are  efficient, for every non-terminal node, the sum of the values in its children equals to the value of the node. This gives an additive flow along the combinatorial tree of depth two. In \S \ref{sec::ExTree} we generalize such coalitional values to a combinatorial tree of any depth and design  recursive values with additive flows. Since one can choose a game $\hat{v}_T$ with any desired properties, one can construct a large (infinite) collection of recursive coalitional values with additive flows and then use it to construct group explainers based on a given parametrized partition tree, see \S \ref{sec::parttree1}.

\vspace{5pt}

\noindent{\bf Examples.} Note that the Shapley value, viewed as a coalitional value, fails to satisfy the quotient game property; see \cite{Lorenzo-Freire}. However, the two-step Shapley values and Owen values satisfy the quotient game property, which can be verified by direct calculations using the two-step formulations and the efficiency of $h^{(2)}=\varphi$:
\[
\begin{aligned}
\sum_{i \in S_j} TSh_i[N,v,\cP] = \sum_{i \in S_j} \varphi[S_j,v^{(j)}_{TSh}] = v^{(j)}_{TSh}(S_j)=\varphi_j[M,v^{\cP}]=TSh_j[M,v^{\cP},\bar{M}]\\
\sum_{i \in S_j} Ow_i[N,v,\cP] = \sum_{i \in S_j} \varphi[S_j,v^{(j)}] = v^{(j)}_{Ow}(S_j)=\varphi_j[M,v^{\cP}]=Ow_j[M,v^{\cP},\bar{M}].
\end{aligned}
\]



\section{Information-theoretic hierarchical clustering of predictors}\label{sec::Grouping}
Given predictors $X\in \RR^n$, the first step in constructing group explainers is to identify disjoint sets $S_j \subseteq N$ that yield a partition $\mathcal{P}=\{S_1,S_2,\dots,S_r\}$ of 
predictor indices, so that $X_{S_1},X_{S_2},\dots,X_{S_r}$ form (weakly) independent unions such that within each group the predictors share a significant amount of mutual information \cite{Cover}. Such partitioning would effectively reduce the dimensionality of the problem and, consequently, lower the complexity of explanations, while also alleviating the issue of explanation splitting. Moreover, as we shall see in \S \ref{sec::qgameexpl}, grouping unifies the conditional and marginal explanations.



Group attribution methods have been discussed previously in the context of linear or simple functional dependencies \citep{aas}. In real datasets, however, the dependencies are often highly non-linear and not necessarily functional. For this reason, to construct a dependency-based partition of predictors, we propose to employ a variable hierarchical clustering technique in conjunction with a state-of-the-art measure of dependence called the Maximal Information Coefficient (MIC), that overcomes the disadvantages of traditional measures and was introduced in \citet{Reshef11,Reshef16}. In what follows, we introduce this measure and describe hierarchical clustering methods based on it. An example that demonstrates the advantage of using MIC in clustering is provided in \S \ref{app::example_var_clust}.


\subsection{Maximal information coefficient as a measure of dependence}\label{sec::MIC}

Immense progress has been made in recent years in designing powerful statistics for measuring
variable dependence. Most notable measures of dependence are investigated in the following works:  \citet{Kraskov2004}, \citet{Zenga2019} and \citet{Paninski2003} on the estimation of mutual information; \citet{Renyi1959} and \citet{Breiman1985} on maximal correlation;  \citet{Szekely2007} and \citet{ Szekely2009} on distance correlation; \citet{Reshef11,Reshef16,Reshef16b} on maximal information coefficient (MIC); \cite{Gretton2005,Gretton2012} on the Hilbert-Schmidt independence criterion,  \citet{Lopez-Paz2013} on the randomized dependence coefficient; \citet{Heller2013} on the Heller-Heller-Gorfine distance, \citet{Heller2016} on $S^{\text{\tiny \it DDP}}$.

\citet{Reshef16} introduced the information-theoretic measure of dependence called
MIC$_*$, the population value of the MIC statistic, defined as a regularized form of mutual information between a pair of random variables.
\begin{definition}[\citet{Reshef16}] Let $(X,Y)$ be jointly distributed random variables. The population maximal information coefficient$_*$ (${\rm{MIC}}_*$) of $(X,Y)$ is defined by
  \[
  {\rm{MIC}}_*(X,Y)=\sup_G \frac{I\big((X,Y)|_G\big)}{\log \|G\|}.
  \]
  Here $G$ denotes a two-dimensional grid, $\|G\|$ denotes the minimum of the number of rows of $G$ and the number of columns of $G$, $I((X,Y)|_G)$ denotes the discrete mutual information of $(X,Y)|_G:=(col_G(X),col_G(Y))$.
\end{definition}
MIC$_*$ has the following remarkable properties:
\begin{itemize}
    \item MIC$_*$ applies to pairs of random variables and returns a value in $[0,1]$ that represents the strength of the relationship between them. That value is $0$ if and only if the variables are independent;
    \item it is {\it transitive} in the sense that it provides a similar value between transformed variables,
    ${\rm{MIC}}_*(X,Y)={\rm{MIC}}_*(g(X),h(Y))$, where $g,h$ are strictly monotonic;
    \item it is {\it equitable}, that is, it outputs a similar value between pairs of variables that exhibit similar noise levels, 
    ${\rm{MIC}}_*(X,Y) \approx {\rm{MIC}}_*(Z,W)$  $\Rightarrow$ ${\rm{MIC}}_*(X+\eps_1,Y+\eps_2) \approx {\rm{MIC}}_*(Z+\eps_1,W+\eps_2)$.
\end{itemize}

There are two statistics, MIC and $\MICe$, that can be used to estimate $\MICstar$. While both of the statistics are consistent estimators, MIC introduced in \citet{Reshef11} can be computed only via an inefficient heuristic approximation, while MIC$_e$ introduced in \citet{Reshef16}  can be computed exactly and efficiently using an appropriate optimization technique which yields a fast algorithm that allows one to estimate ${\rm{MIC}}_*$ in linear time; see Definition \ref{MIC_def} of $\MICe$ and Corollary \ref{corr::mic_complexity} that discusses its complexity. 

\subsection{ Dependency-based hierarchical clustering}

A partition of data points into $K<n$ clusters can be characterized by a grouping map $C:\{1,2,\dots,n\}\to \{1,2,\dots,K\}$ that assigns each observation to a cluster $k \in \{1,2,\dots,K\}$ following a certain rule. A clustering algorithm’s objective is to identify an optimal grouping map that solves a minimization problem $\min_C W(C;d)$ for some energy function $W(C;d)$ based upon the {\it dissimilarity} measure $d(p_i,p_j)$  between points; see \citep[Section 14.3]{Hastie et al}.



Hierarchical clustering algorithms produce hierarchical representations called \textit{dendrograms}; for example, see Figure \ref{fig::hierclus}. In addition to the dissimilarity measure, these algorithms require as an input a measure of dissimilarity between disjoint clusters, called intergroup dissimilarity. A well-known intergroup dissimilarity measure is the group average linkage (GA) given by $d_{GA}(S_1,S_2)=\tfrac{1}{|S_1||S_2|} \sum_{i\in S_1} \sum_{ j \in S_2} d(p_i,p_j)$, which satisfies the statistical consistency property. Two other popular measures are single linkage and complete linkage, that estimate the smallest and the largest pairwise distances, respectively, between points in two clusters.

Agglomerative methods, or recursive merging, start at the bottom where each single data point represents a cluster and at each new level merge a selected pair of clusters into a single one. The clusters picked for merging are those for which intergroup dissimilarity achieves the smallest value. This procedure yields a binary partition tree (a parameterized tree in which exactly two branches coalesce) where the height of each node is proportional to the value of the intergroup dissimilarity between the two child nodes with the terminal nodes located at zero height; see \citep[Section 14.3.12]{Hastie et al}.

In our work, we seek to generate a dendrogram that accurately encodes the strength of dependencies between predictors. To this end, we propose to use the dissimilarity measure between predictors based on regularized mutual information given by
\begin{equation*}
d_{\MICstar}(X_i,X_j)=1-\MICstar(X_i,X_j) \in [0,1],
\end{equation*}
estimated by the statistic $1-\MICe\big(\{x_i^{(\ell)}\}_{\ell=1}^M,\{x_j^{(\ell)}\}_{\ell=1}^M\big)$ based on observations $\{(x_1^{(\ell)},\dots,x_n^{(\ell)})\}_{\ell=1}^M$.

The advantage of the MIC-based clustering algorithm is that properties of MIC are carried over to the partition tree. In particular, the shape of the tree has the following desirable properties: (a) the tree height, representing the strength of dependencies in predictors, is always $\leq 1$; (b) in light of transitivity, the geometry of the tree is invariant under strictly monotone transformations; and (c) in light of equitability, the height of each subtree reflects information about the noise level among predictors corresponding to the terminal nodes of the subtree.

Given an MIC-based dendrogram of height $h$, the parameter  $\alpha \in (0,h)$ characterizing the strength of dependencies induces a partition of predictors 
$\mathcal{P}_{\alpha}=\{S_1^{\alpha},S_2^{\alpha},\dots,S_{m_{\alpha}}^{\alpha}\}$
whose elements correspond to the terminal nodes of subtrees obtained by cross-sectioning the tree at height $\alpha$. Under the assumption that coalescence of branches happens at distinct heights, $\alpha \mapsto \cP_{\alpha}$ is a left-continuous partition map which characterizes the dendrogram and gives rise to a nested sequence of partitions starting at singletons $\{\{X_1\},\{X_2\},\dots,\{X_n\}\}$ and terminating at the grand coalition $\{X_1,\dots,X_n\}$; for details on the  construction and properties of the partition map, see \S \ref{sec::parttree}. In what follows, these dependency-based partitions are used to construct group explainers based on coalitional values, which incorporate the partition into their structure. For an illustration on the use of hierarchical clustering to produce partitions and construct group explainers see the example in \S \ref{app::example_var_clust}.

\section{ Numerical examples}\label{sec::examples}
This section contains examples that illustrate the theoretical aspects discussed in \S\ref{sec::obser_inter_expl} and  \S\ref{sec::Group_expl}. In the following computations we replace the deterministic marginal explanations with their empirical analogs, which are evaluated as corresponding game values for the empirical marginal game $\hat{v}^{\ME}$ defined by
\begin{equation}\label{empmarggame}
    \hat{v}^{\ME}(S;x,f, \bar{D}_X) := \frac{1}{|\bar{D}_X|}\sum_{ \tilde{x} \in \bar{D}_X} f(x_S,\tilde{x}_{-S}),
\end{equation}
where $x$ is an observation, $f$ is a model, and $\bar{D}_X=\{\tilde{x}^{(k)}\}_{k=1}^K$ is a background dataset of predictor observations used for averaging.

\subsection{Pedagogical example on instability of marginal explanations}\label{sec::pedag_example}

The results of \S \ref{sec::obser_inter_expl} show that marginal explanations viewed as linear operators may not be well-defined or stable in $L^2(P_X)$. Demonstrating the instability numerically is not a trivial task because the space $L^2(P_X)$ of models is much larger than any class of models obtained via training. Nevertheless, we numerically investigate the stability of marginal explanations by training a small collection of different models on the same dataset, or on perturbed datasets, and then compare the differences between the resulting explanations as well as between the predictions of those models.

We start with a pedagogical example that showcases the instability of marginal explanations under predictor dependencies. Consider the following data generating model. Let $X=(X_1,X_2,X_3)$ be predictors such that the pair $(X_1,X_2)$ is independent of $X_3$, with the distribution given by 
\begin{equation}\label{ex1_preds_}
\begin{aligned}
Z &   \sim { Unif}(-1,1) \\
X_1 & = Z + \epsilon_1, \quad \epsilon_1 \sim \mathcal{N}(0,\delta),\\
X_2 & = \sqrt{2}\sin( Z (\pi/4)) + \epsilon_2, \quad \epsilon_2 \sim \mathcal{N}(0,\delta),\\
X_3 & \sim { Unif}\big([-1,-0.5] \cup [0.5,1] \big).
\end{aligned}
\end{equation}
where $\delta>0$ is chosen later. The model for the output variable is assumed to be 
\begin{equation}\label{ex1_resp_model_}
Y=f_*(X_1,X_2,X_3)+\epsilon_3=3X_2X_3+\epsilon_3, \quad \epsilon_3\sim {Unif}(-0.05,0.05).
\end{equation}

Note that in the true regressor $f_*$ the variable $X_1$ is a dummy variable (it is not explicitly used). For this reason, the marginal explanation approach will assign zero attribution to this variable.

By design, the dependencies in predictors allow for the existence of many models from $L^2(P_X)$ that approximate the response variable well but have different representations. In what follows we demonstrate that the generated explanations differ in such cases where different models with distinct representations approximate the data well.

\begin{figure}
  \centering
  \begin{subfigure}[t]{0.3\textwidth}
    \centering
    \includegraphics[width=\textwidth]{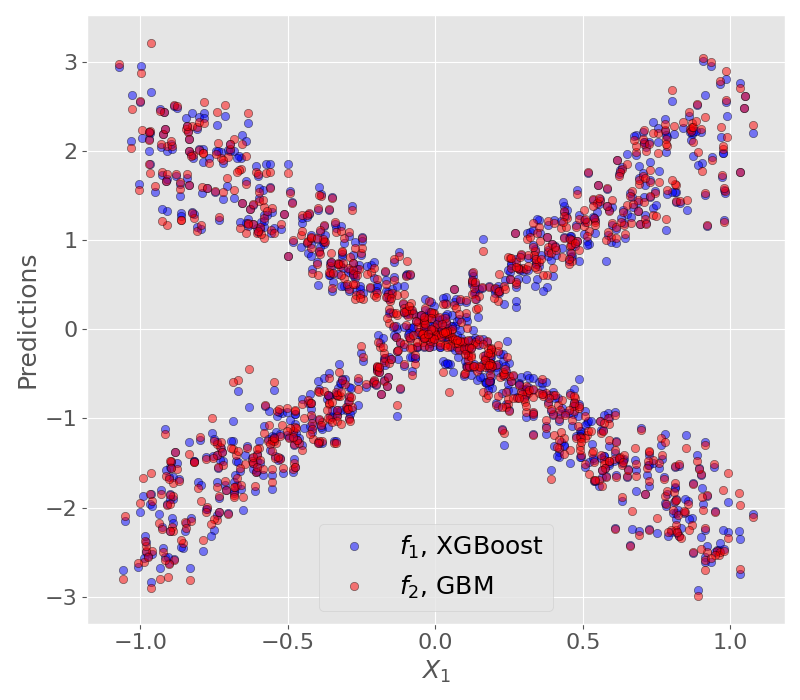}\caption{\footnotesize Predictions vs $X_1$.}\label{fig::ex1_pred_vs_X1}
  \end{subfigure}
    \begin{subfigure}[t]{0.3\textwidth}
    \centering
    \includegraphics[width=\textwidth]{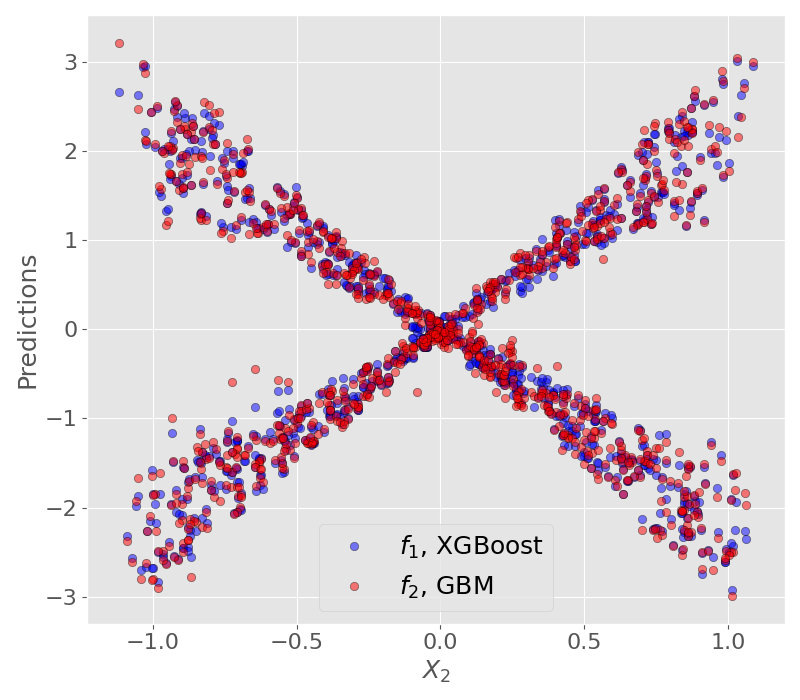}\caption{\footnotesize Predictions vs $X_2$.}\label{fig::ex1_pred_vs_X2}
  \end{subfigure}   
  \begin{subfigure}[t]{0.3\textwidth}
    \centering
    \includegraphics[width=\textwidth]{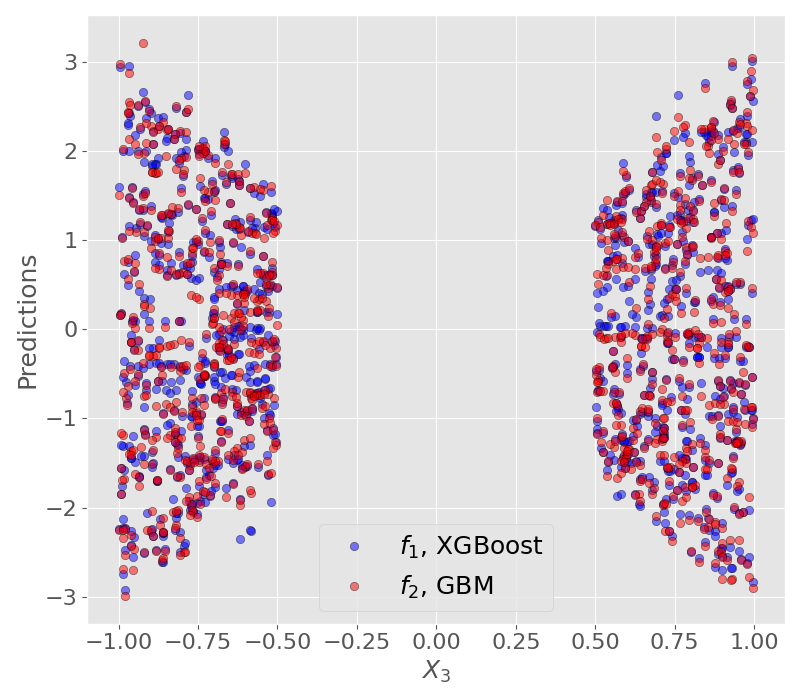}\caption{\footnotesize Predictions vs $X_3$.}\label{fig::ex1_pred_vs_X3}
  \end{subfigure}     
  \begin{subfigure}[t]{0.3\textwidth}
    \centering
    \includegraphics[width=\textwidth]{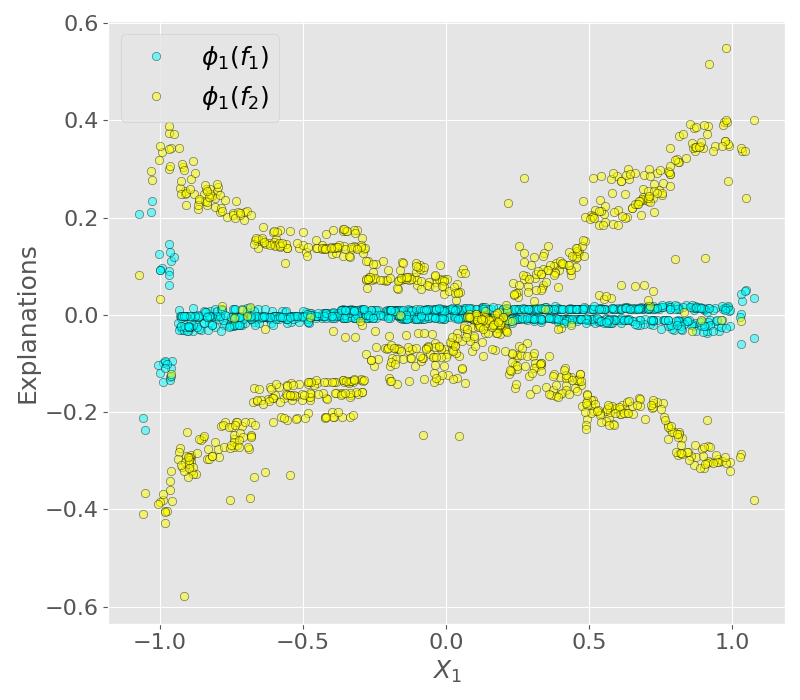}\caption{\footnotesize Explanations $\varphi_1$ vs $X_1$.}\label{fig::ex1_explan_vs_X1}
  \end{subfigure}
  \begin{subfigure}[t]{0.3\textwidth}
    \centering
    \includegraphics[width=\textwidth]{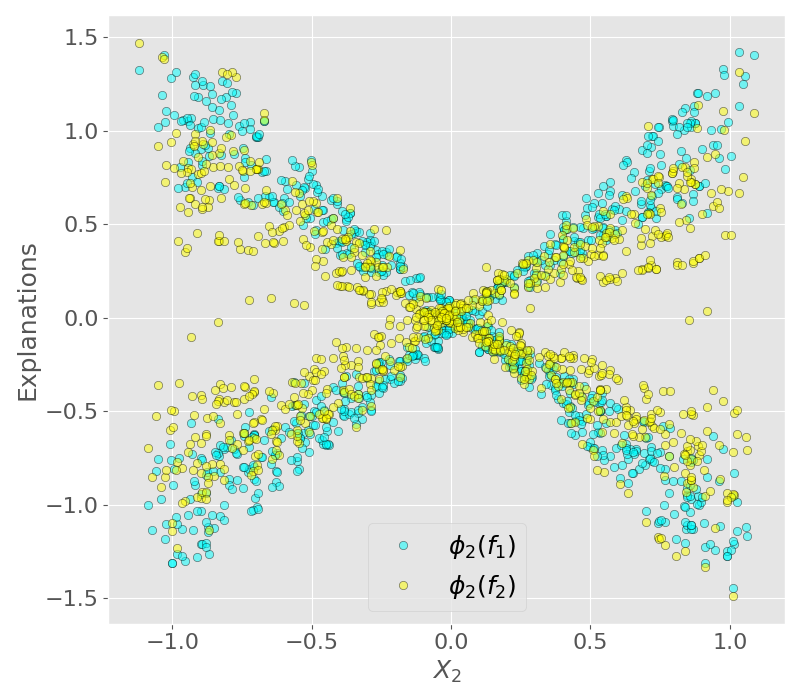}\caption{\footnotesize Explanatons $\varphi_2$ vs $X_2$.}\label{fig::ex1_explan_vs_X2}
  \end{subfigure}
  \begin{subfigure}[t]{0.3\textwidth}
    \centering
    \includegraphics[width=\textwidth]{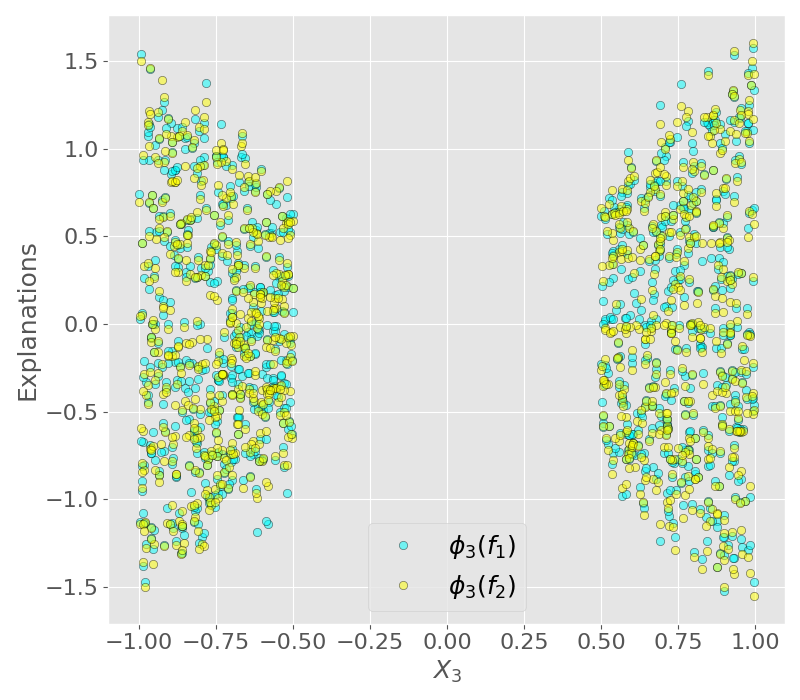}\caption{\footnotesize Explanatons $\varphi_3$ vs $X_3$.}\label{fig::ex1_explan_vs_X3}
  \end{subfigure}
  \caption{ Predictions and marginal Shapley values for XGBoost ($f_1$) and GBM ($f_2$) models.}\label{fig::ex1_pred_explan} 
\end{figure}

\subsubsection{Case 1: Two models trained on the same dataset.}

In our first experiment, we set $\delta=0.05$ and use a training dataset $D_X^{(train)}$ with $25000$ samples drawn from the data generating model \eqref{ex1_resp_model_} and train two regressors $f_1$ and $f_2$ using the XGBoost and Gradient Boosting (GBM) machine learning algorithms, respectively. The GBM model was trained using the following parameters: {n\_estimators}=$500$, {min\_samples\_split}=$5$, {subsample}=$1.0$, {learning\_rate}=$0.1$. For the XGBoost model we used: {n\_estimators}=$300$, {max\_depth}=$5$, {subsample}=$1.0$, {learning\_rate}=$0.1$, {alpha}=$10$, {lambda}=$10$.

Performance metrics for the two models on the training and test datasets, both having 25000 samples, were evaluated. Specifically, the relative $L^2$-errors are approximately $0.058$ and $0.038$, respectively, with the model norms satisfying $\|f_1\|_{L^2(P_X)} \approx 1.379$, $\|f_2\|_{L^2(P_X)} \approx 1.378$. The estimated relative $L^2$-difference between the two models is $0.065$.

We next pick $m=1000$ samples at random from the training dataset, constructing the dataset $D^{(e)}_X$ of predictor observations used for explanation, and use the two regressors to predict the response variable. Figures \ref{fig::ex1_pred_vs_X1}-\ref{fig::ex1_pred_vs_X2} depict the predicted values for each model versus predictors $X_1$ and $X_2$, respectively, where we see that both trained models have similar predictions.

We then evaluate the marginal explanations of each predictor for the two models, along each sample $x \in D^{(e)}_X$. To accomplish this, we make use of the empirical game $\hat{v}^{\ME}(\cdot,x; f_k,\bar{D}_X)$ defined in \eqref{empmarggame} with a background dataset $\bar{D}_X$ used for averaging, which is constructed by randomly drawing $1000$ samples from the training dataset. Specifically, we compute the Shapley values $\varphi_i[N,\hat{v}^{\ME}](x)$, where $i \in N=\{1,2,3\}$, for each observation $x \in D^{(e)}_X$ and each model $f_k$, $k \in \{1,2\}$. Figures \ref{fig::ex1_explan_vs_X1}-\ref{fig::ex1_explan_vs_X3} depict the distribution of the marginal explanations for predictors $X_1, X_2, X_3$ for each model, across the dataset $D^{(e)}_X$, where we see that the XGBoost model, due to regularization, treats the first predictor as a dummy variable, while the representation of the GBM model relies heavily on the predictor $X_1$.

\begin{figure}
  \centering
    \begin{subfigure}[t]{0.35\textwidth}
    \centering
    \includegraphics[width=\textwidth]{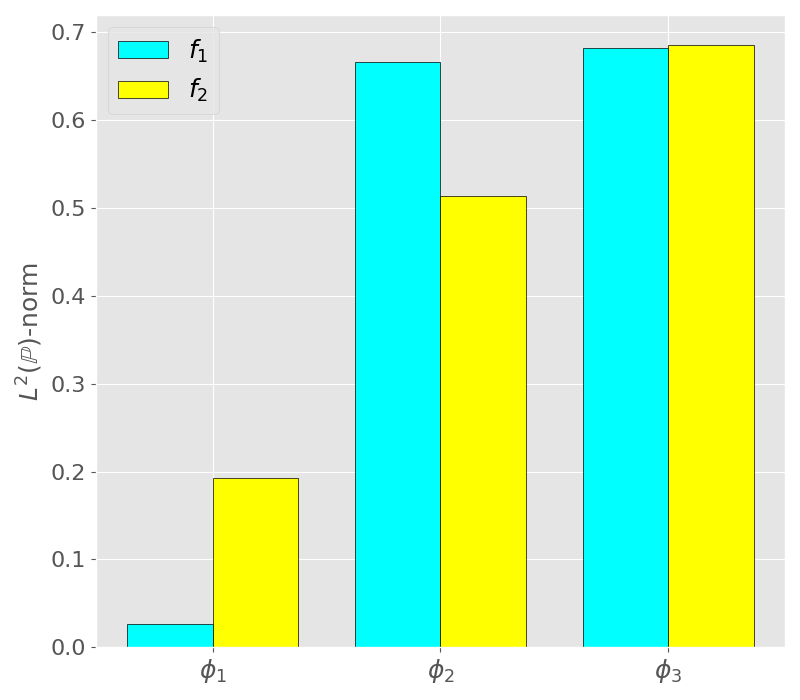} \caption{\footnotesize Explanation norms.}\label{fig::ex1_norm_expl}
  \end{subfigure}  
  ~~
  \begin{subfigure}[t]{0.35\textwidth}
    \centering
    \includegraphics[width=\textwidth]{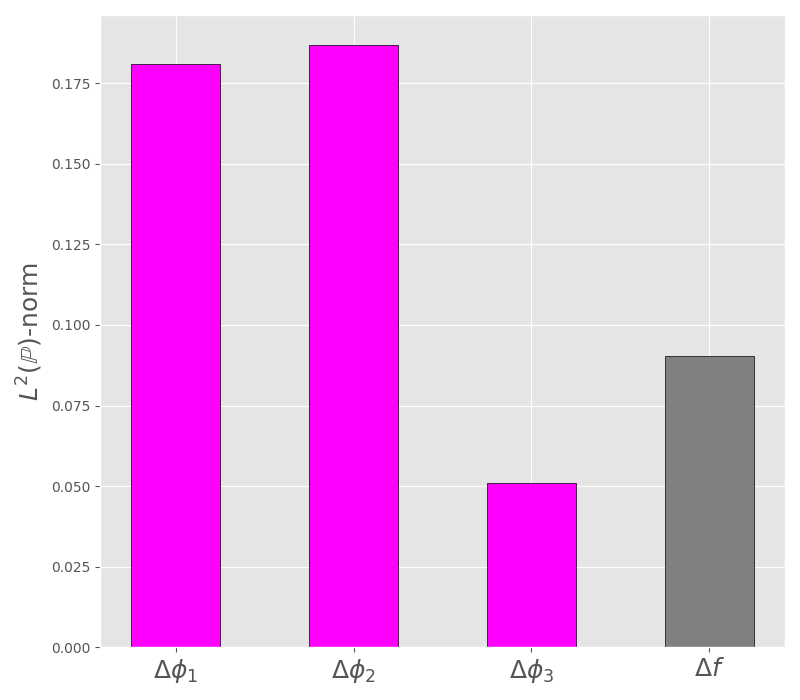} \caption{\footnotesize Explanation norms of $f_1-f_2$.}\label{fig::ex1_quant_expl_indiv}
  \end{subfigure}
~~
  \begin{subfigure}[t]{0.35\textwidth}
    \centering
    \includegraphics[width=\textwidth]{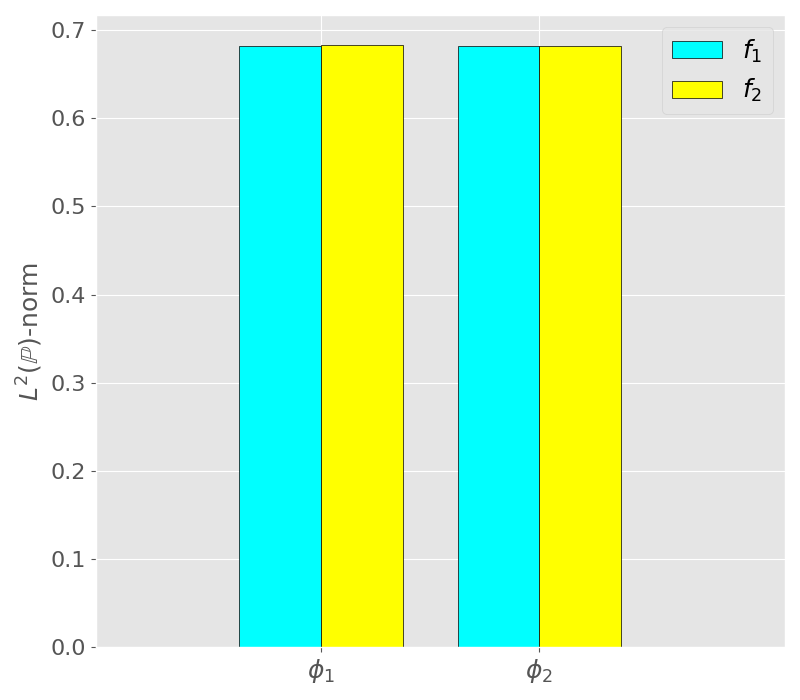} \caption{\footnotesize Quotient explanation norms.}\label{fig::ex1_quant_expl_group}
  \end{subfigure}
  ~~
  \begin{subfigure}[t]{0.35\textwidth}
    \centering
    \includegraphics[width=\textwidth]{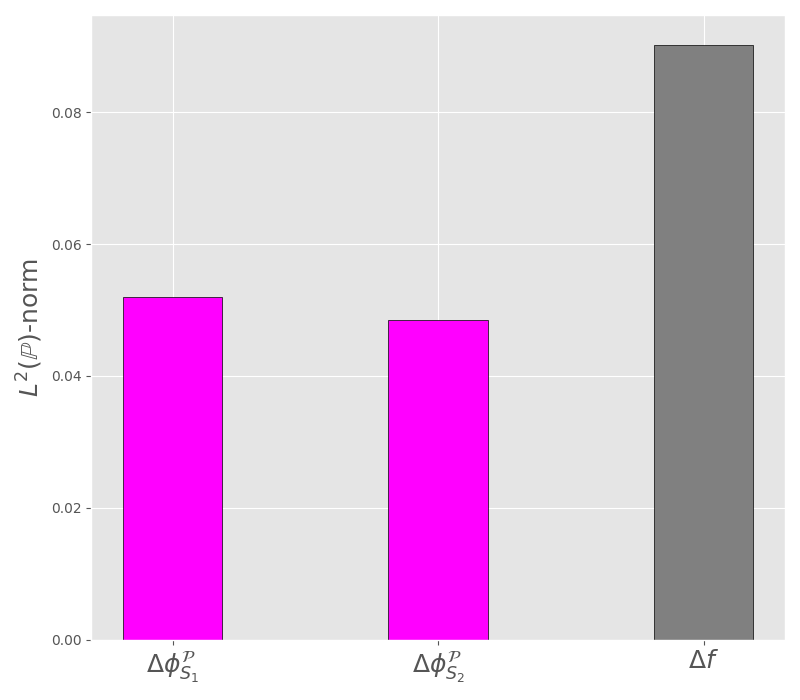} \caption{\footnotesize Quotient explanations of $f_1-f_2$.}\label{fig::ex1_quant_expl_group2}
  \end{subfigure}
  \caption{ Global individual and quotient explanations.}\label{fig::ex1_quant_expl}
\end{figure}

\begin{figure}
  \centering
  \begin{subfigure}[t]{0.35\textwidth}
    \centering
      \includegraphics[width=\textwidth]{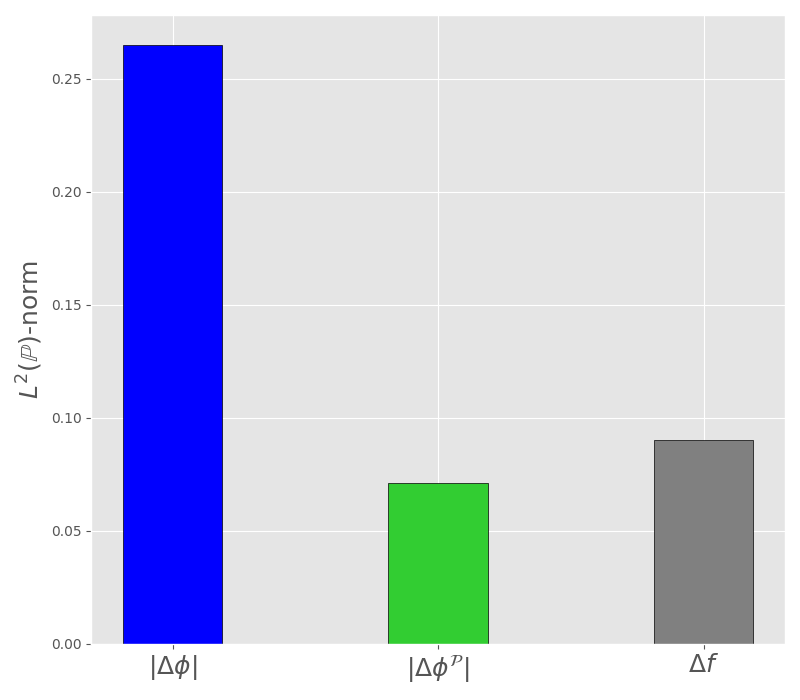} \caption{\footnotesize Total gain in stability.}\label{fig::ex1_quant_expl_gain_stab_tot}
  \end{subfigure}
  ~~
  \begin{subfigure}[t]{0.35\textwidth}
    \centering
    \includegraphics[width=\textwidth]{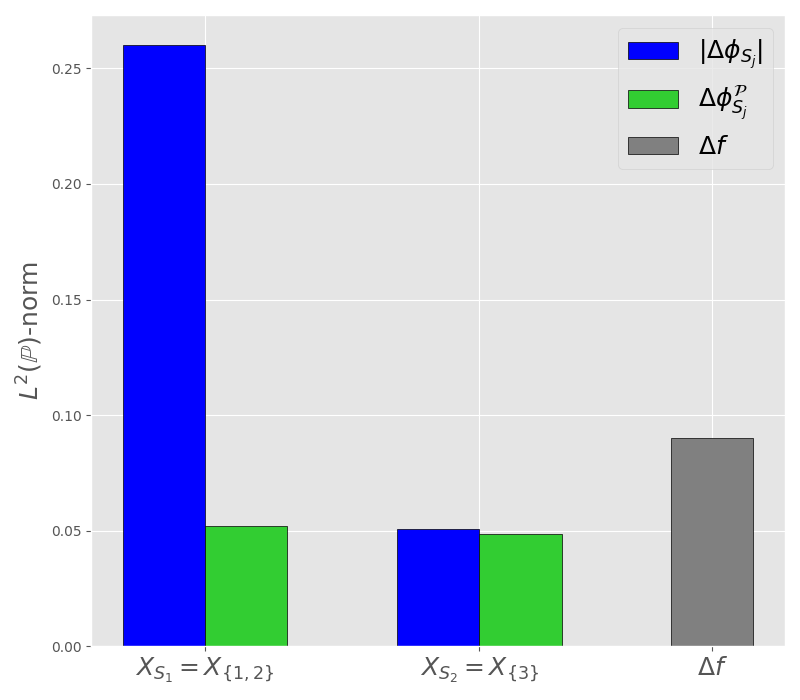} \caption{\footnotesize Gain in stability across groups.}\label{fig::ex1_quant_expl_gain_stab_group}
  \end{subfigure}
  \caption{Gain stability.}\label{fig::ex1_gain_stab}
\end{figure}

To quantify the global attribution of each predictor, we estimate the $L^2$-norms of the marginal Shapley values for each model, $\beta_i(f_k,\vhat^{\ME}):=\|\varphi_i(X;f_k,\vhat^{\ME})\|_{L^2(\P)}$, $i \in N$, which are depicted in Figure \ref{fig::ex1_norm_expl} and recorded in Table \ref{table::exp1_norms}. The values demonstrate that the features $X_1,X_2$ are utilized differently in each model, suggesting that they have different functional representations.

Recall that by Corollary \ref{corr::cond_operator_cont}$(ii)$ (due to the efficiency property of $\varphi$) the conditional Shapley operator is a linear, bounded operator with norm bounded by one and, hence, the conditional Shapley value satisfies $|\beta(f_1-f_2,\vce)|/\|f_1-f_2\|_{L^2(P_X)} \leq 1$, where $\beta:=(\beta_1,\beta_2,\beta_3)$. This bound ensures that the total distance $|\beta(f_1-f_2,\vce)|$ between these explanations is always smaller than the $L^2(P_X)$-distance between the models, and the same is true for any component and sub-vector of the vector $\beta(f_1-f_2,\vce)$. Meanwhile, in theory, in the presence of dependencies, the bound for the marginal explanations may in general be infinite or significantly larger than one, which depends on the relationship between $P_X$ and $\intP_X$.

To understand the degree of the instability in marginal explanations, we estimate the norm of the difference of the marginal Shapley values for the two models. Given the linearity of the marginal operator, this norm is equal to the norm of the Shapley values for the model difference $\beta_i(f_1-f_2,\vhat^{\ME})=\|\varphi_i(X;f_1-f_2,\hat{v}^{\ME})\|_{L^2(\P)}$, whose estimate is given by
\begin{equation}\label{ex1_shap_diff_}
\beta(f_1-f_2,\vhat^{\ME})=(\beta_1,\beta_2,\beta_3)(f_1-f_2,\vhat^{\ME})\approx (0.181, 0.186, 0.051), \,\, \|f_1-f_2\|_{L^2(P_X)} \approx 0.09.
\end{equation}
Observe that the total distance between marginal explanations $|\beta(f_1-f_2,\vhat^{\ME})|_2=0.265$ is approximately three times larger than the $L^2(P_X)$-distance between models. Moreover, the distances between explanations for features $X_1$ and $X_2$ are also approximately twice that of the models; see Figure \ref{fig::ex1_quant_expl_indiv}.  We also note that the total differences between explanations is significant and constitutes about 20\% of the train models' norm; for comparison see Table \ref{table::exp1_norms}.

To understand the effect of grouping, we construct quotient marginal explanations of the trained models for each sample $x \in D^{(e)}_X$. To accomplish this, we employ the empirical quotient marginal game and generate explanations corresponding to the partition $\cP$ based on dependencies, given by $\mathcal{P}=\{\{1,2\},\{3\}\} = \{S_1,S_2\}$. This is done by evaluating quotient Shapley values $\varphi_j[M,\hat{v}^{\ME,\cP}](x)$, $j \in M=\{1,2\}$, for each observation $x \in D^{(e)}_X$ and each model $f_k$, $k \in \{1,2\}$. 

We then use these explanations to quantify the global attribution of predictor groups by estimating the norms $\beta^{\cP}_j(f_k,\vhat^{\ME}):=\|\varphi^{\cP}_{S_j}(X;f_k,\vhat^{\ME})\|_{L^2(\P)}$, $j \in M$, which are depicted in Figure \ref{fig::ex1_quant_expl_group} and recorded in Table \ref{table::exp1_norms}. These values indicate that grouping by dependencies yields (on average) similar group explanations regardless of the functional representation.

To assess the level of the instabilities in quotient explanations,  we quantify the difference between quotient explanations and compare it with that of between the models. The $L^2$-distance between marginal quotient explanations is given by $\beta^{\cP}(f_1-f_2,\vhat^{\ME})=(\beta^{\cP}_1,\beta^{\cP}_2)(f_1-f_2,\vhat^{\ME})\approx (0.052, 0.049)$. Figure \ref{fig::ex1_quant_expl_group2} compares these distances with those of the models given in \eqref{ex1_shap_diff_}. As a result of grouping by dependencies, these distances are approximately twice smaller than the distances between the models, unlike the global attributions $\beta_i(f_1-f_2,\vhat^{\ME})$ of individual explanations; for comparison see Table \ref{table::exp1_norms}. Moreover, the total distance between quotient marginal explanations $|\beta^{\cP}(f_1-f_2,\vhat^{\ME})|=0.071$  is strictly smaller than the $L^2(P_X)$-distance between models. Thus, due to grouping, the unit bound in \eqref{effvalbound} is satisfied leading to increased stability in $L^2(P_X)$. Finally, the splitting of explanations across dependent predictors does not occur anymore, as was seen in the GBM model.

To estimate the gain in stability due to grouping, we introduce a method that will be useful when dealing with large datasets and where dependencies are not that obvious. Recall that Corollary \ref{corr::cond_operator_cont}$(ii)$   implies $|\beta(f_1-f_2,\vce)| \leq \|f_1-f_2\|$ while Proposition \ref{lmm::boundquot} implies $|\beta^{\cP}(f_1-f_2,\vce)|\leq\|f_1-f_2\|_{L^2(P_X)}$. Thus, to quantify the total gain in stability for marginal explanations across all features we propose to compare $|\beta(f_1-f_2,\vhat^{\ME})|$ and $|\beta^{\cP}(f_1-f_2,\vhat^{\ME})|$ with $\|f_1-f_2\|_{L^2(P_X)}$, which is accomplished in Figure \ref{fig::ex1_quant_expl_gain_stab_tot}. To quantify the gain in stability across each group in $\cP=\{S_1,S_2\}$, we compare the norm of the sub-vector $\beta_{S_j}(f_1-f_2,\vhat^{\ME})$, measuring the aggregated difference across the group $S_j$, with that of $\beta^{\cP}_j(f_1-f_2,\vhat^{\ME})$.  Figure \ref{fig::ex1_quant_expl_gain_stab_group} illustrates that the differences in aggregated explanations drop significantly after grouping, which showcases the gain in stability. It also illustrates that the unit bound is not satisfied for the aggregated explanations, while it is for the quotient ones.

\begin{table}
\begin{center}
\begin{tabular}{|c| c | c | c | c | c | c | c | c |}
 \hline
  & $\|\cdot\|$  & $\beta_1$ & $\beta_2$ & $\beta_3$ & $|\beta|$ & $\beta_1^{\cP}$ & $\beta_2^{\cP}$ & $|\beta^{\cP}|$\\[0.5ex]
 \hline
   $f_1$ & {\color{blue}$1.379$} & $0.026$ & $0.667$ & $0.683$ & {\color{blue}$0.955$}  & $0.682$ & $0.681$ &  {\color{blue}$0.964$} \\ 
 \hline 
 $f_2$ & {\color{blue}$1.378$} & $0.192$ & $0.514$ & $0.685$ & {\color{blue}$0.878$} & $0.683$ & $0.682$  & {\color{blue}$0.965$} \\ 
 \hline
 $f_1-f_2$ & {\color{blue}$0.090$} & $0.181$ & $0.187$ & $0.051$ & {\color{blue}$0.265$} & $0.052$ & $0.049$ & {\color{blue}$0.071$}  \\ 
 \hline
\end{tabular}
\caption{Global marginal Shapley attributions.}\label{table::exp1_norms}
\end{center}
\end{table}

\subsubsection{Case 2: Models on perturbed datasets}

In our second experiment, we construct five distinct datasets by varying the level of noise in the predictors from the previous subsection, and train five corresponding ML models. We then construct a test dataset as a mixture of the five training sets and use its observations for both explanations and averaging. This experiment demonstrates that the models with similar predictive power on the test dataset, which in turn is close in distribution to the training sets, have widely different explanations. It also illustrates how grouping features based on dependencies rectifies the explanation instabilities. The details of the experiment are provided below.

First, for each $\delta \in \{ \delta_i \}_{i=1}^5=\{0.0, 0.001, 0.0025, 0.005, 0.01\}$, which represents the noise level in predictors, we construct a corresponding dataset $D(\delta)=\{(x_{\delta}^{(k)},y_{\delta}^{(k)})\}_{k=1}^K$, containing $K=25000$ observations sampled from the distribution $(X_{\delta},Y_{\delta})$ where $X_{\delta}$ is given by \eqref{ex1_preds_} with  noise $\epsilon_1,\epsilon_2 \sim \mathcal{N}(0,\delta)$, and $Y_{\delta}$ is constructed using the response model \eqref{ex1_resp_model_}. Then for each $i \in \{1,\dots, 5\}$ an XGBoost regressor $f_i(x)$ is trained on the dataset $D(\delta_i)$, utilizing the same hyperparameters as in the previous experiment.

To compare the explanations of these models, a test dataset $D=\{(x^{(k)},y^{(k)})\}_{k=1}^K$ is constructed by drawing  $K=25000$ samples from the distribution $(X,Y)$ such that $X=\sum_{i=1}^5 1_{\{C=i\}} \cdot X_{\delta_i}$ is a mixture, where $C$ is a random variable satisfying $\P(C=i)=0.2$, and $Y$ is obtained using the response model \eqref{ex1_resp_model_}.

Performance metrics for the XGBoost models on the mixture dataset were evaluated. Specifically, the relative $L^2$-errors for the five models are approximately $0.051$, $0.045$, $0.041$, $0.052$ and $0.046$, respectively, with the norms $\|f_i\|_{L^2(P_{X})}$  of the models recorded in Table \ref{table::exp1b_norms}, which illustrates that all trained models have similar predictive power on the test set.

We next evaluate the $L^2(P_{X})$-distance between the true model $f_*$ and each trained model $f_k$, $k\in\{1,\dots,5\}$. The estimated values of the distances are given by
\begin{equation}\label{ex1_pred_diff_mix_}
\|f_k-f_* \|_{L^2(P_{X})}\approx (0.069,0.062,0.056,0.071,0.064), \quad \|f_* \|_{L^2(P_X)}\approx 1.37,
\end{equation}
and also recorded in Table \ref{table::exp1b_norms}. Thus, the predictions of the trained models on the mixture dataset are close in an $L^2$-sense to those of $f_*$. In particular, this implies that $\{f_k\}_{k=1}^5$ live in an $(L^2,\epsilon)$-Rashomon set of models  about $f_*$ (defined in \S\ref{sec::notation}) with $\epsilon=0.071$, which constitutes about 5\% relative $L^2$-distance.

\begin{figure}
  \centering
  \begin{subfigure}[t]{0.4\textwidth}
    \centering
    \includegraphics[width=\textwidth]{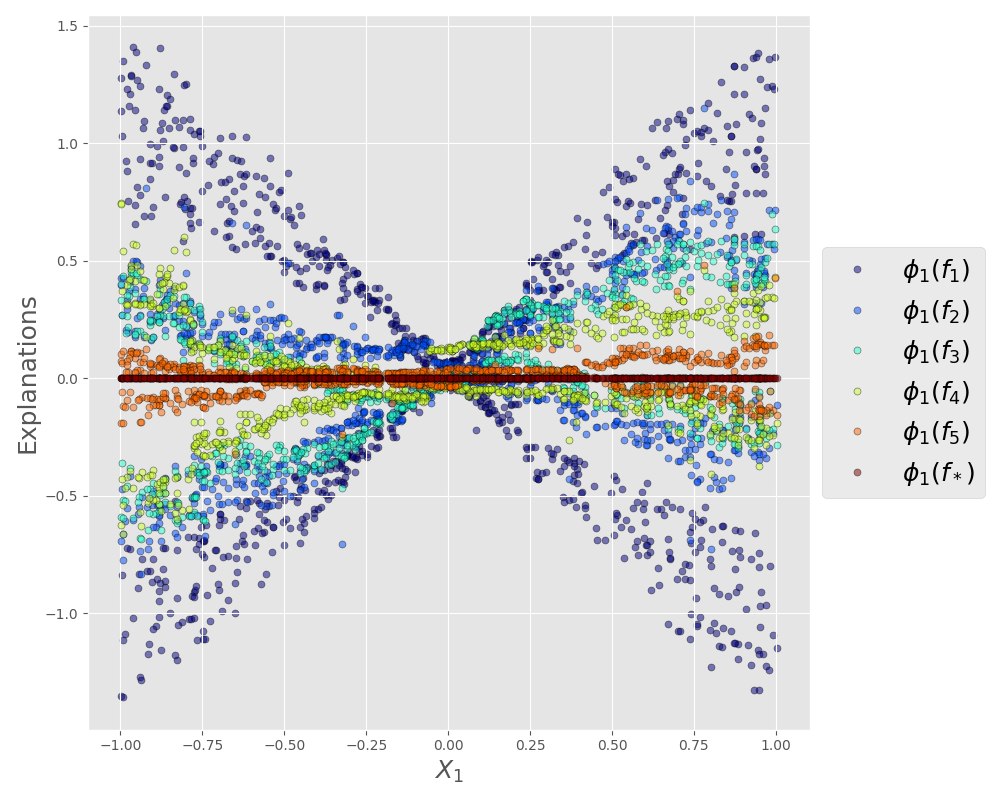}\caption{\footnotesize Explanations $\varphi_1$ vs $X_1$.}\label{fig::ex1b_explan_vs_X1}
  \end{subfigure}
  ~~
  \begin{subfigure}[t]{0.4\textwidth}
    \centering
    \includegraphics[width=\textwidth]{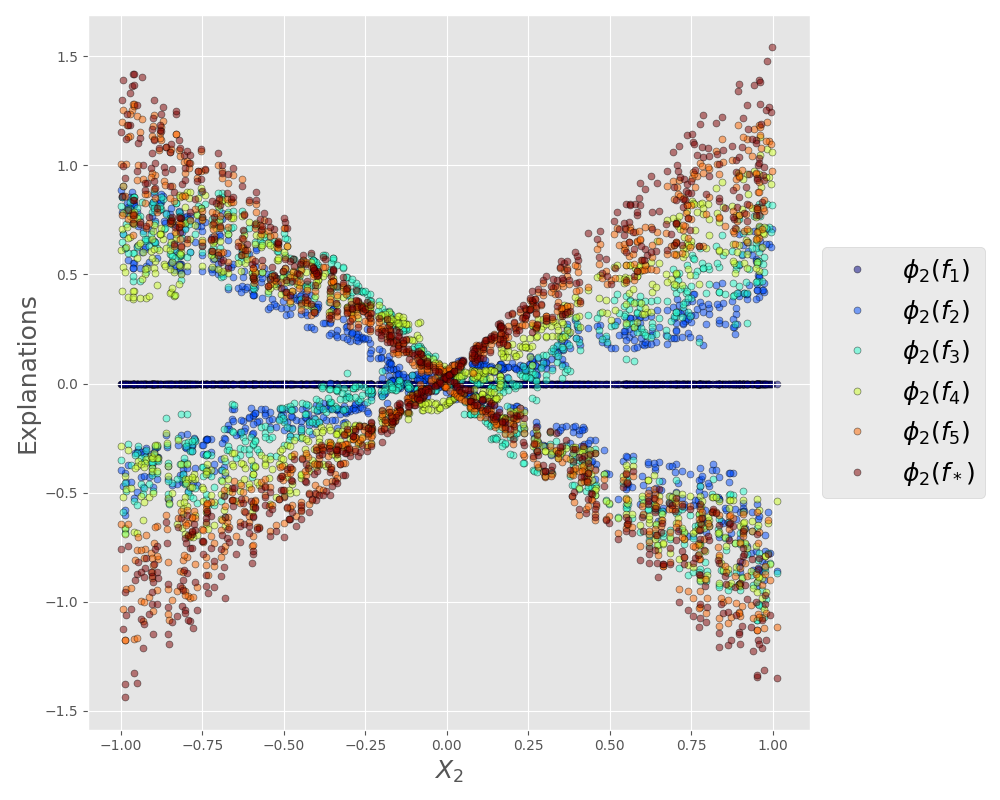}\caption{\footnotesize Explanations $\varphi_2$ vs $X_2$.}\label{fig::ex1b_explan_vs_X2}
   \end{subfigure}
  ~~\\
\begin{subfigure}[t]{0.4\textwidth}
    \centering
    \includegraphics[width=\textwidth]{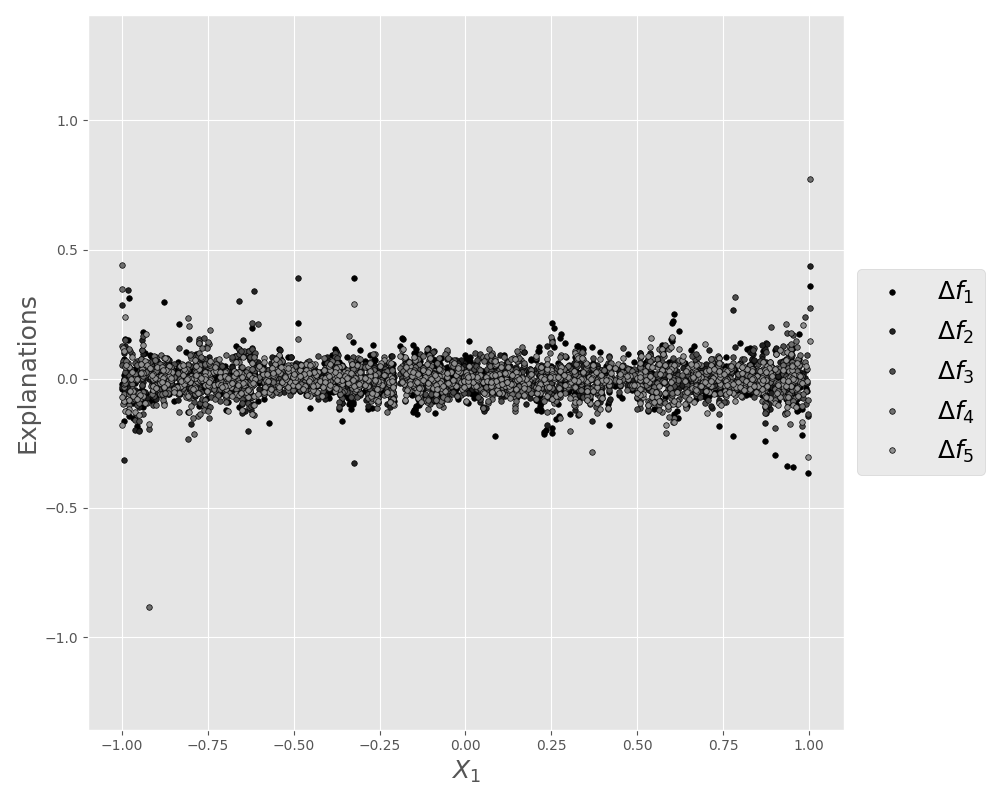}\caption{\footnotesize Differences of predictions vs $X_1$.}\label{fig::ex1b_diff_explan_vs_X1}
  \end{subfigure}
  ~~
  \begin{subfigure}[t]{0.4\textwidth}
    \centering
    \includegraphics[width=\textwidth]{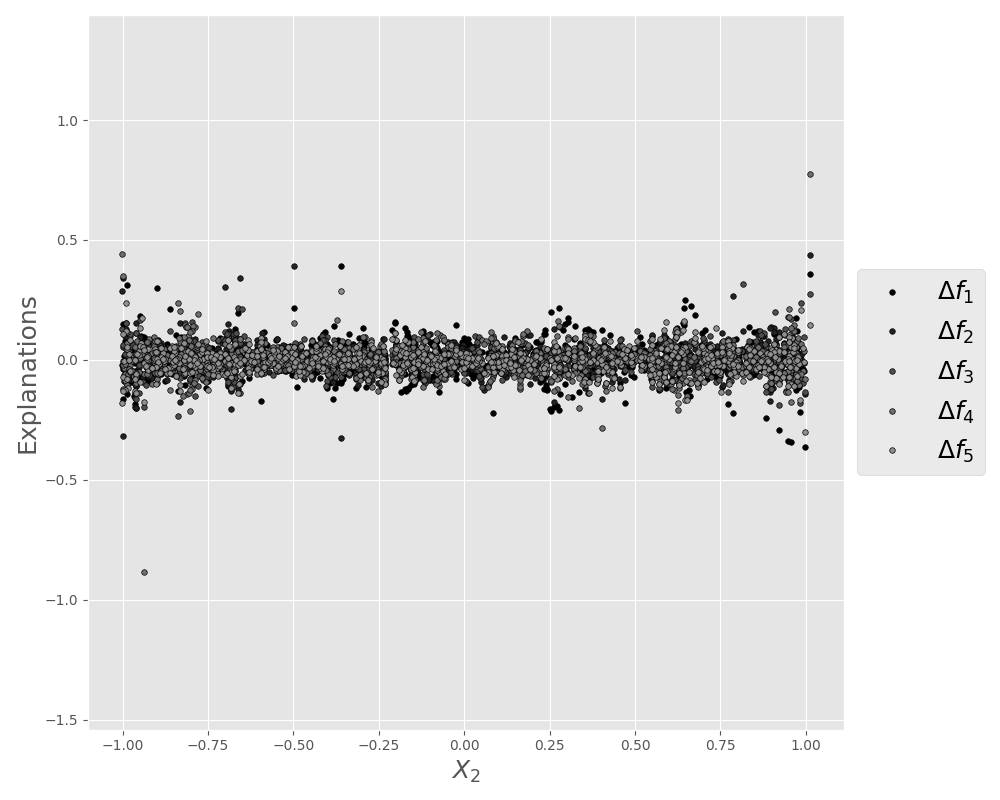}\caption{\footnotesize Differences of predictions vs $X_2$.}\label{fig::ex1b_diff_explan_vs_X2}
  \end{subfigure}
  \begin{subfigure}[t]{0.4\textwidth}
    \centering
    \includegraphics[width=\textwidth]{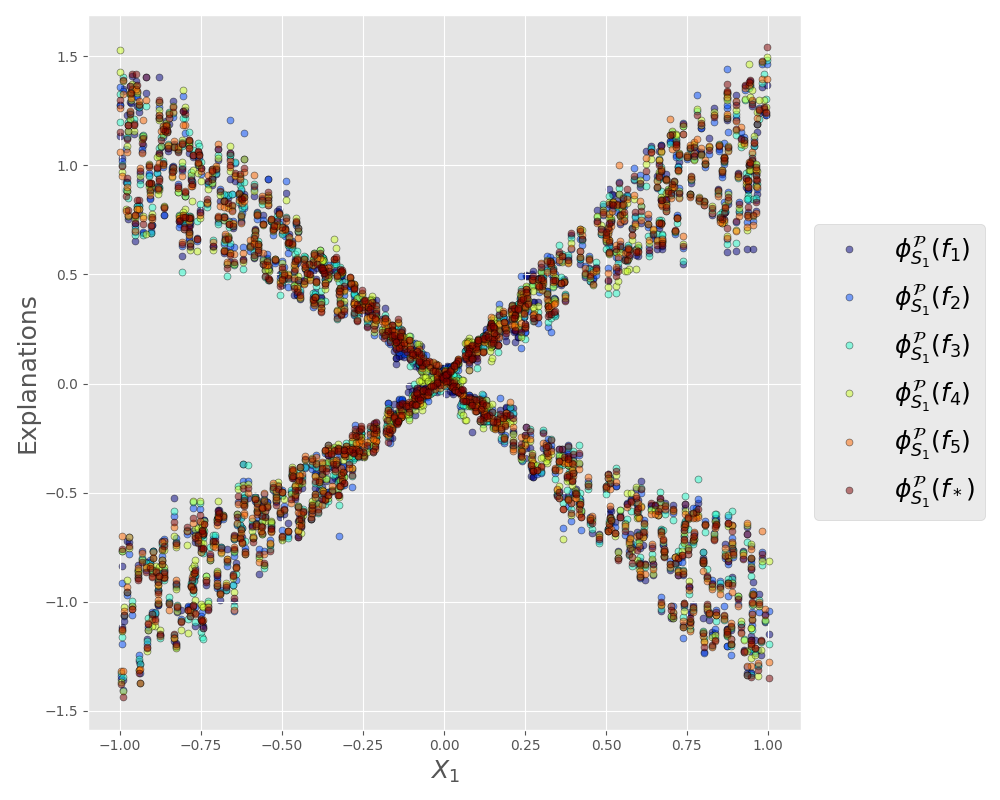}\caption{\footnotesize Explanations $\varphi_{S_1}^{\cP}$ vs $X_1$.}\label{fig::ex1b_quot_explan_vs_X1}
  \end{subfigure}
  ~~
  \begin{subfigure}[t]{0.4\textwidth}
    \centering
    \includegraphics[width=\textwidth]{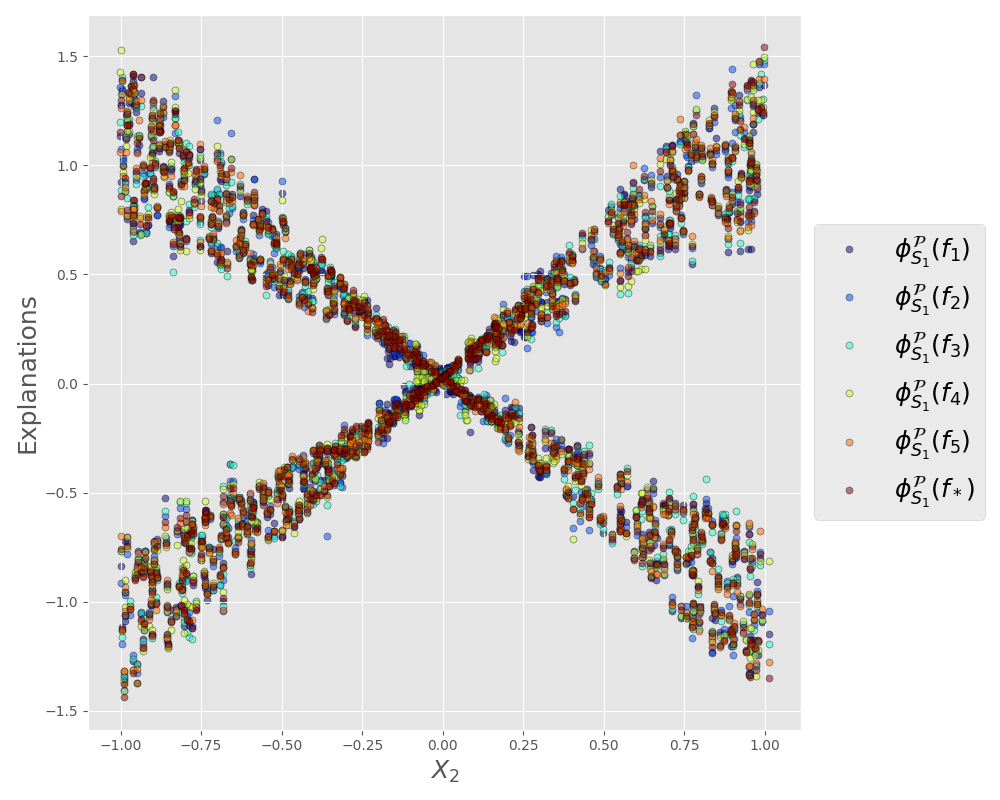}\caption{\footnotesize Explanations $\varphi_{S_2}^{\cP}$ vs $X_2$.}\label{fig::ex1b_quot_explan_vs_X2}
  \end{subfigure}
\caption{\footnotesize Individual and quotient marginal explanations.} \label{fig::ex1b_expl_mix}
\end{figure}

\begin{figure}
  \centering
    \begin{subfigure}[t]{0.4\textwidth}
    \centering
    \includegraphics[width=\textwidth]{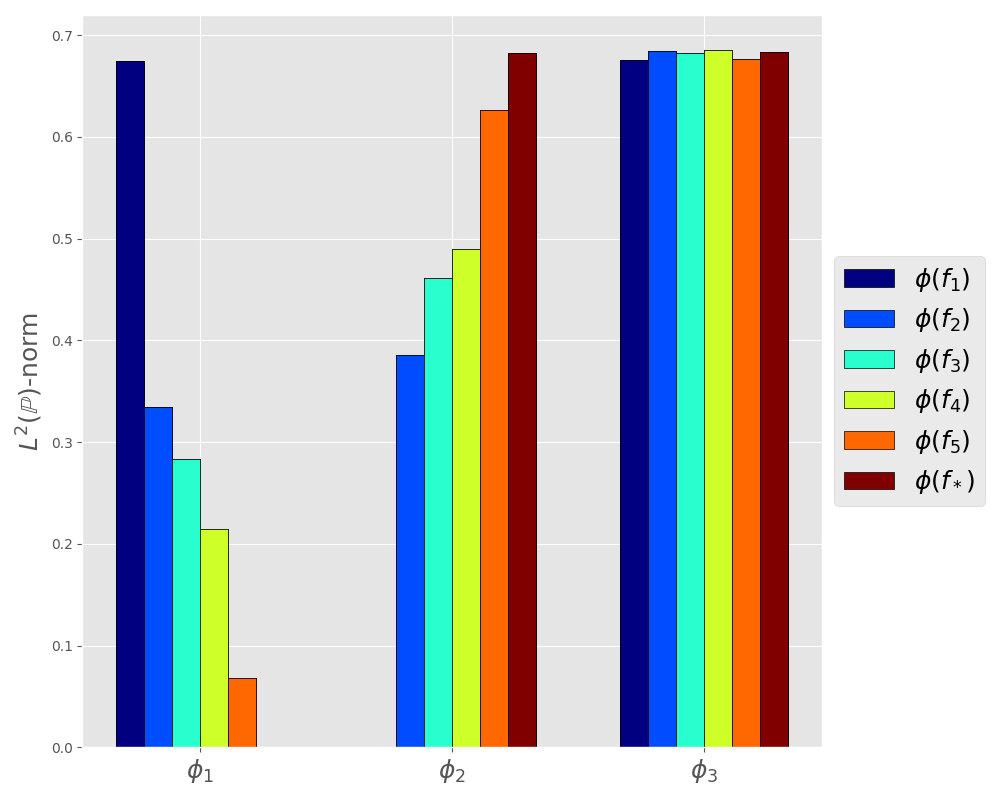} \caption{\footnotesize Explanation norms.}\label{fig::ex1b_norm_expl_mix}
  \end{subfigure}  
~~
  \begin{subfigure}[t]{0.4\textwidth}
    \centering
    \includegraphics[width=\textwidth]{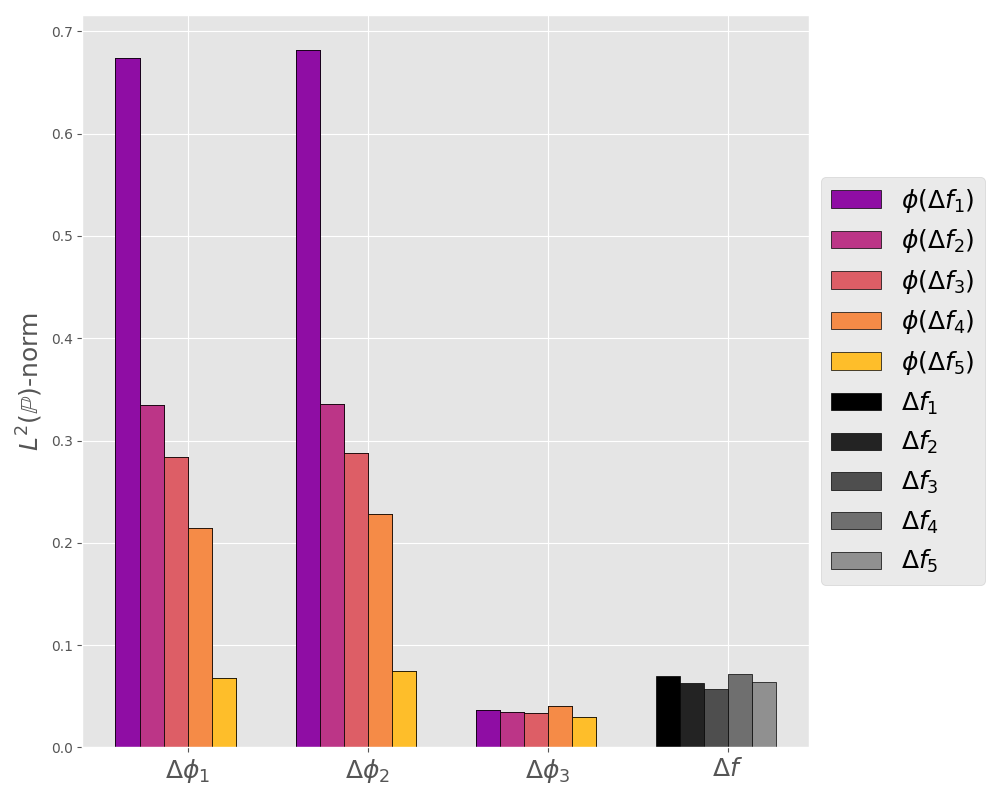} \caption{\footnotesize Global explanations of $\Delta f_i$.}\label{fig::ex1b_quant_expl_indiv_mix}
  \end{subfigure}
~~\\
  \begin{subfigure}[t]{0.4\textwidth}
    \centering
    \includegraphics[width=\textwidth]{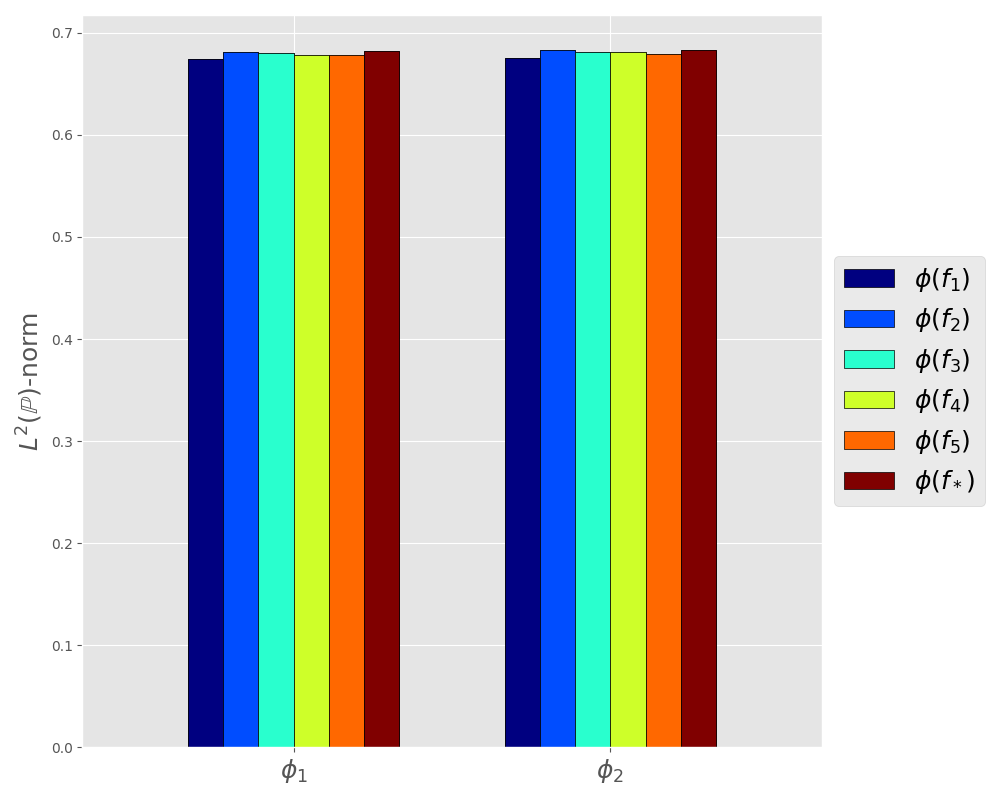} \caption{\footnotesize Quotient explanation norms.}\label{fig::ex1b_quant_expl_group}
  \end{subfigure}
~~
  \begin{subfigure}[t]{0.4\textwidth}
    \centering
    \includegraphics[width=\textwidth]{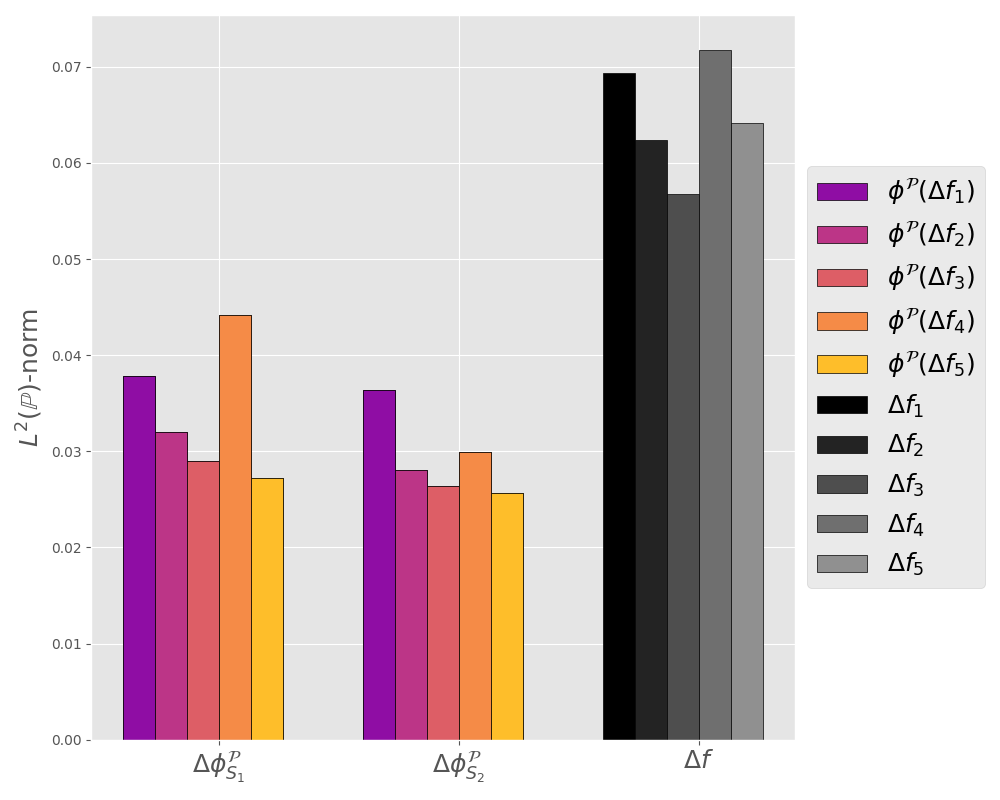} \caption{\footnotesize Global quotient explanations of $\Delta f_i$.}\label{fig::ex1b_quant_expl_group2}
  \end{subfigure}
  \caption{ Individual and quotient explanation norms.}\label{fig::ex1b_quant_expl}
\end{figure}

We next pick $m=1000$ samples at random from the mixture dataset, to construct the dataset  $D^{(e)}_{X}$ of predictor observations used for explanations. 
We also subsample the predictors from the mixture set and obtain a background dataset $\bar{D}_{X}$ with $1000$ samples. The background dataset is used for construction of the empirical marginal game $\hat{v}^{\ME}(S;x,f, \bar{D}_X)$ defined in \eqref{empmarggame} where {\color{blue}$x \in D^{(e)}_{X}$} is an observation and $f \in \{f_*,f_1,\dots f_5\}$.

We then evaluate the empirical marginal explanations $\varphi_i[N,\hat{v}^{\ME}](x)$ for each observation $x \in D_X^{(e)}$ and each predictor across the six models, the true model and the five XGBoost models. The computations are done by means of the interventional TreeSHAP method \cite{LundbergIntTreeShap}, which computes empirical marginal Shapley values for tree-based models such as XGBoost.

Figures \ref{fig::ex1b_explan_vs_X1}-\ref{fig::ex1b_diff_explan_vs_X2} depict the scatterplots of explanations (and their differences) for each model across the dataset $D^{(e)}_{X}$, where we see that explanations differ substantially, indicating that the trained models have different functional representations. In particular,  $f_1$ treats both predictors $X_1,X_2$ similarly due to the strong dependence between them, while $f_5$ treats the first predictor as a dummy variable which is similar to the model $f_*$.

\begin{figure}
  \centering
  \begin{subfigure}[t]{0.4\textwidth}
    \centering
    \includegraphics[width=\textwidth]{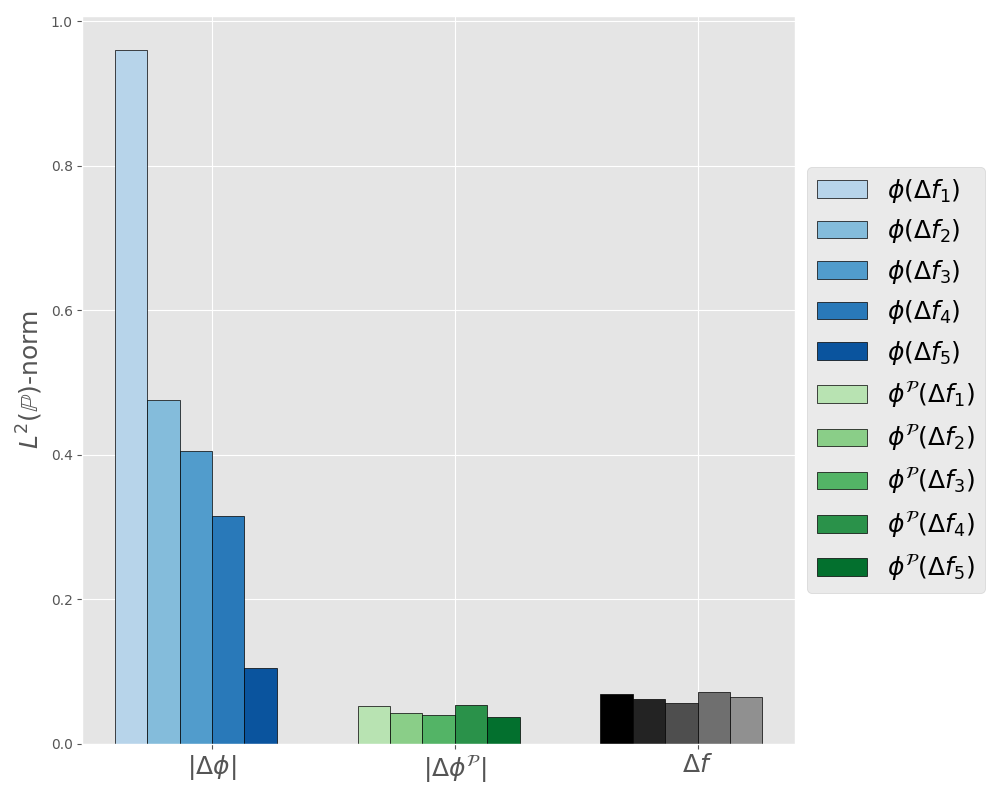} \caption{\footnotesize Total gain in stability.}\label{fig::ex1b_quant_expl_totgain_stab_abs_mix}
  \end{subfigure}  
  ~~
  \begin{subfigure}[t]{0.4\textwidth}
    \centering
    \includegraphics[width=\textwidth]{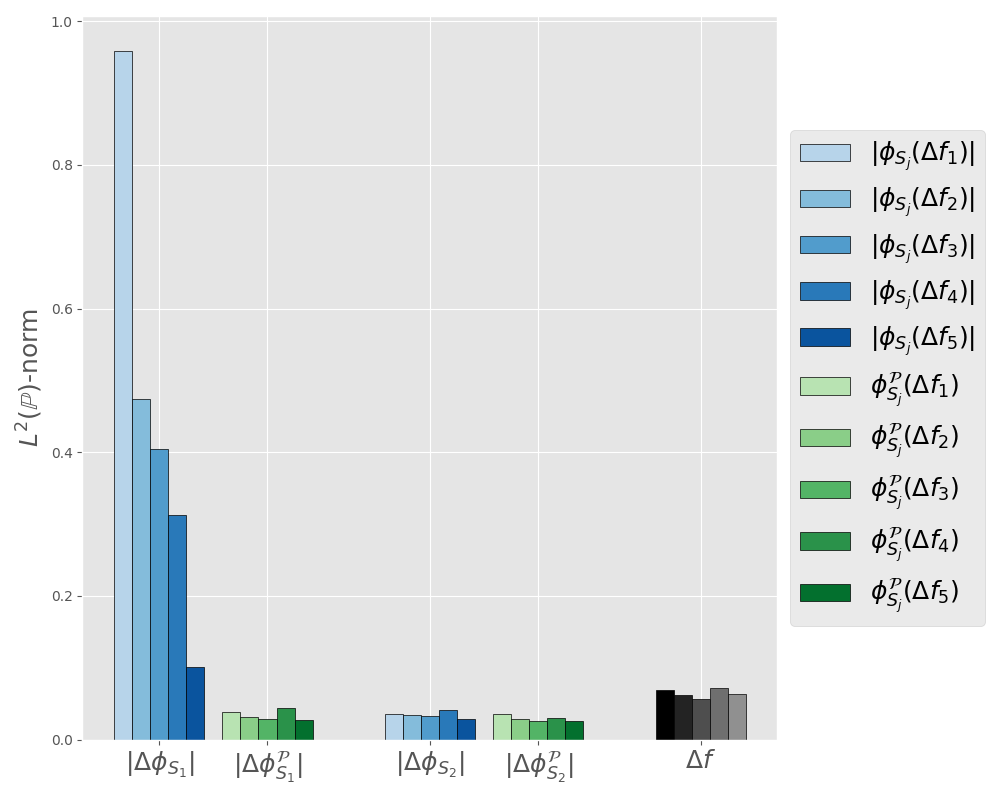} \caption{\footnotesize Gain in stability.}\label{fig::ex1b_quant_expl_gain_stab_abs_mix}
  \end{subfigure}
  \caption{Explanation norms and effect of grouping.}\label{fig::ex1b_gain_stab_mix}
\end{figure}

Recall that by Corollary \ref{corr::cond_operator_cont}$(ii)$ (due to the efficiency property of $\varphi$) the conditional Shapley operator is a linear, bounded operator with norm bounded by one and, hence, the conditional Shapley value satisfies $|\beta(f_1-f_2,\vce)|/\|f_1-f_2\|_{L^2(P_X)} \leq 1$, where $\beta:=(\beta_1,\beta_2,\beta_3)$. This bound ensures that the total distance $|\beta(f_1-f_2,\vce)|$ between these explanations is always smaller than the $L^2(P_X)$-distance between the models, and the same is true for any component and sub-vector of the vector $\beta(f_1-f_2,\vce)$. Meanwhile, in theory, in the presence of dependencies, the bound for the marginal explanations may in general be infinite or significantly larger than one, which depends on the relationship between $P_X$ and $\intP_X$.

 To quantify the global attribution of each predictor, we estimate the $L^2$-norms of the marginal Shapley values for each model, $\beta_i(f_k,\vhat^{\ME})=\|\varphi_i(X;f_k,\vhat^{\ME})\|_{L^2(\P)}$, $i \in N$, which are depicted in Figure \ref{fig::ex1b_norm_expl_mix} and recorded in Table \ref{table::exp1b_norms}. These values also demonstrate that the features $X_1,X_2$ are utilized differently across the models.

To understand the degree of the instability in marginal explanations, we estimate the distance between the marginal Shapley values of the reference model $f_*$ and $f_k$ for every $k \in \{1,\dots,5\}$ and each predictor $X_i, i\in\{1,2,3\}$, which are equal to the norm of the Shapley values for the model difference $\beta_i(f_i-f_*,\vhat^{\ME})=\|\varphi_i(X;f_i-f_*,\hat{v}^{\ME})\|_{L^2(\PP)}$, and then compare with those of the model. Figure \ref{fig::ex1b_quant_expl_indiv_mix}, where $f_k - f_*$ is denoted as $\Delta f_k$, showcases the comparison between the distances of the individual feature explanations and the model distances, again for each trained model.

\begin{table}
\begin{center}
\begin{tabular}{|c| c | c | c | c | c | c | c | c |}
 \hline
  & $\|\cdot\|$  & $\beta_1$ & $\beta_2$ & $\beta_3$ & $|\beta|$ & $\beta_1^{\cP}$ & $\beta_2^{\cP}$ & $|\beta^{\cP}|$\\[0.5ex]
 \hline
   $f_1$ & {\color{blue}$1.370$}  & $0.674$ & $0.000$  & $0.675$ & 
           {\color{blue}$0.954$}  & $0.674$ & $0.675$  & {\color{blue}$0.954$} \\ 
 \hline 
 $f_2$   & {\color{blue}$1.375$}  & $0.334$ & $0.385$  & $0.685$ & 
           {\color{blue}$0.854$}  & $0.681$ & $0.683$  & {\color{blue}$0.964$} \\ 
 \hline 
 $f_3$   & {\color{blue}$1.374$}  & $0.285$ & $0.461$  & $0.682$ & 
           {\color{blue}$0.871$}  & $0.680$ & $0.681$  & {\color{blue}$0.962$} \\ 
 \hline 
 $f_4$   & {\color{blue}$1.375$}  & $0.214$ & $0.228$ & $0.040$ & 
           {\color{blue}$0.315$}  & $0.679$ & $0.681$ & {\color{blue}$0.962$} \\ 
 \hline 
 $f_5$   & {\color{blue}$1.374$}  & $0.068$ & $0.627$  & $0.677$ & 
           {\color{blue}$0.925$}  & $0.678$ & $0.680$  & {\color{blue}$0.960$} \\ 
 \hline
 $f_*$   & {\color{blue}$1.380$}  & $0.000$ & $0.682$  & $0.682$ & 
           {\color{blue}$0.965$}  & $0.682$ & $0.683$  & {\color{blue}$0.965$} \\ 
 \hline
 $f_1-f_*$ & {\color{blue}$0.069$} & $0.674$ & $0.682$ & $0.036$ &
             {\color{blue}$0.960$} & $0.038$ & $0.036$ & {\color{blue}$0.052$}  \\ 
 \hline
 $f_2-f_*$ & {\color{blue}$0.062$} & $0.334$ & $0.336$ & $0.035$ & 
             {\color{blue}$0.475$} & $0.032$ & $0.028$ & {\color{blue}$0.042$}  \\ 
 \hline
 $f_3-f_*$ & {\color{blue}$0.056$} & $0.284$ & $0.288$ & $0.033$ & 
             {\color{blue}$0.406$} & $0.029$ & $0.026$ & {\color{blue}$0.039$}  \\ 
 \hline
 $f_4-f_*$ & {\color{blue}$0.071$} & $0.214$ & $0.228$ & $0.041$ & 
             {\color{blue}$0.315$} & $0.044$ & $0.029$ & {\color{blue}$0.053$}  \\ 
 \hline
 $f_5-f_*$ & {\color{blue}$0.064$} & $0.067$ & $0.075$ & $0.029$ & 
             {\color{blue}$0.104$} & $0.027$ & $0.026$ & {\color{blue}$0.037$}  \\ 
 \hline
\end{tabular}
\caption{ Global marginal Shapley attributions.}\label{table::exp1b_norms}
\end{center}
\end{table}
As in the previous experiment, we contrast the unit operator bound in \eqref{effvalbound} for conditional explanations in relation to the change in empirical marginal explanations with respect to the $L^2(P_X)$-distance between models. Specifically, the ratio of the  marginal explanation distance to the distance between models varies from approximately $1$ to $10$; see Figure \ref{fig::ex1b_quant_expl_indiv_mix}. Note that the differences between explanations are significant and for some models constitute about 50\% of the true model's norm. Observe also, that the total distances between the vectors of global marginal explanations satisfy $\{|\beta(f_i-f_*,\vhat^{\ME})|\}_{i=1}^5=\{0.960,0.475,0.406,0.315,0.104\}$ and are approximately two-to-fourteen times larger than the $L^2(P_X)$-distance between models; see Figure \ref{fig::ex1b_quant_expl_gain_stab_abs_mix}. We note that the total distance between explanations is significant and, in particular, for the model $f_1$ it constitutes about 60\% of the trained models' norm; see Table \ref{table::exp1b_norms}.

We next construct the quotient marginal explanations for each model. Figure \ref{fig::ex1b_quot_explan_vs_X1}-\ref{fig::ex1b_quot_explan_vs_X2} depict the scatterplots of quotient explanations for each model across the dataset $D_{X}$, where we see that the explanations between the models are similar.

To quantify the difference between quotient explanations, we estimate $L^2(P_X)$-norms of quotient marginal explanations and the $L^2(P_X)$-distances between marginal quotient explanations for the partition $\cP=\{\{1,2\},\{3\}\}=\{S_1,S_2\}$, denoted by $\beta^{\cP}_j(f_i,\vhat^{\ME})$ and $\beta^{\cP}_j(f_i-f_*,\vhat^{\ME})$, $j \in \{1,2\}$, respectively. Figure \ref{fig::ex1b_quant_expl_group} illustrates the former and Figure \ref{fig::ex1b_quant_expl_group2} compares the latter with distances between the models given in \eqref{ex1_pred_diff_mix_}. As in the previous case, again due to grouping, we see that these distances are approximately twice smaller than the distances between the models compared to individual explanations, showcasing the consistency with the bound for conditional explanations. The contrast between Figures \ref{fig::ex1b_quant_expl_indiv_mix} and \ref{fig::ex1b_quant_expl_group2}, as before, demonstrates that grouping by dependencies reduces the splitting of explanations across dependent predictors.

To quantify the gain in stability due to grouping, we apply the approach outlined in the previous experiment. Specifically, we compare the norm of explanation vectors $|\beta(f_i-f_*,\vhat^{\ME})|$ and $|\beta^{\cP}(f_i-f_*,\vhat^{\ME})|$ to quantify the total gain in stability (across all features simultaneously), which is depicted in Figure \ref{fig::ex1b_quant_expl_totgain_stab_abs_mix}. We also compare the norms of the quotient explanations' differences for each $j\in M$ with the length of corresponding subvectors $|\beta_{S_j}(f_i-f_*,\vhat^{\ME})|$. Figure \ref{fig::ex1b_quant_expl_gain_stab_abs_mix} illustrates that the differences in aggregated individual explanations drop significantly after grouping, and well below the $L^2(P_X)$-norm of the model difference, which showcases the gain in stability across each group.

\subsection{Experiments with public datasets}

In this section, we apply the group explanation techniques to public datasets.
We start our investigation with the Default of Credit Card Clients dataset \cite{default_credit_dataset} from the UCI Machine Learning Repository. This dataset contains $30000$ instances, 23 features and a dependent binary variable $Y$ that indicates if an individual defaulted on a payment, where the default is denoted by $Y=1$. The protected attributes `sex', `marriage', and `age' were removed in order to be consistent with regulatory practices. The remaining twenty predictors were used for model training, where we use the training dataset $D_{train}$ with $27000$ samples to build a classification score $p_*(x):=\widehat{\PP}(Y=1|X=x)$ using the CatBoost algorithm, whose corresponding  population minimizer is defined by $f_*(x)=\logit(p_*(x))$. For training we use the following parameters: iterations=$200$, min\_data\_in\_leaf=$5$, depth=$5$, subsample=$0.8$, and learning\_rate=$0.1$.

Performance metrics for the model on the trained dataset, and test dataset with $3000$ samples, were evaluated. Specifically, the mean logloss on the train and test set is approximately $0.40$ and $0.41$ respectively, and the AUC is $0.82$ and $0.80$ respectively.

\begin{figure}
    \centering
    \begin{subfigure}[t]{0.9\textwidth}
        \centering
        \includegraphics[width=\textwidth]{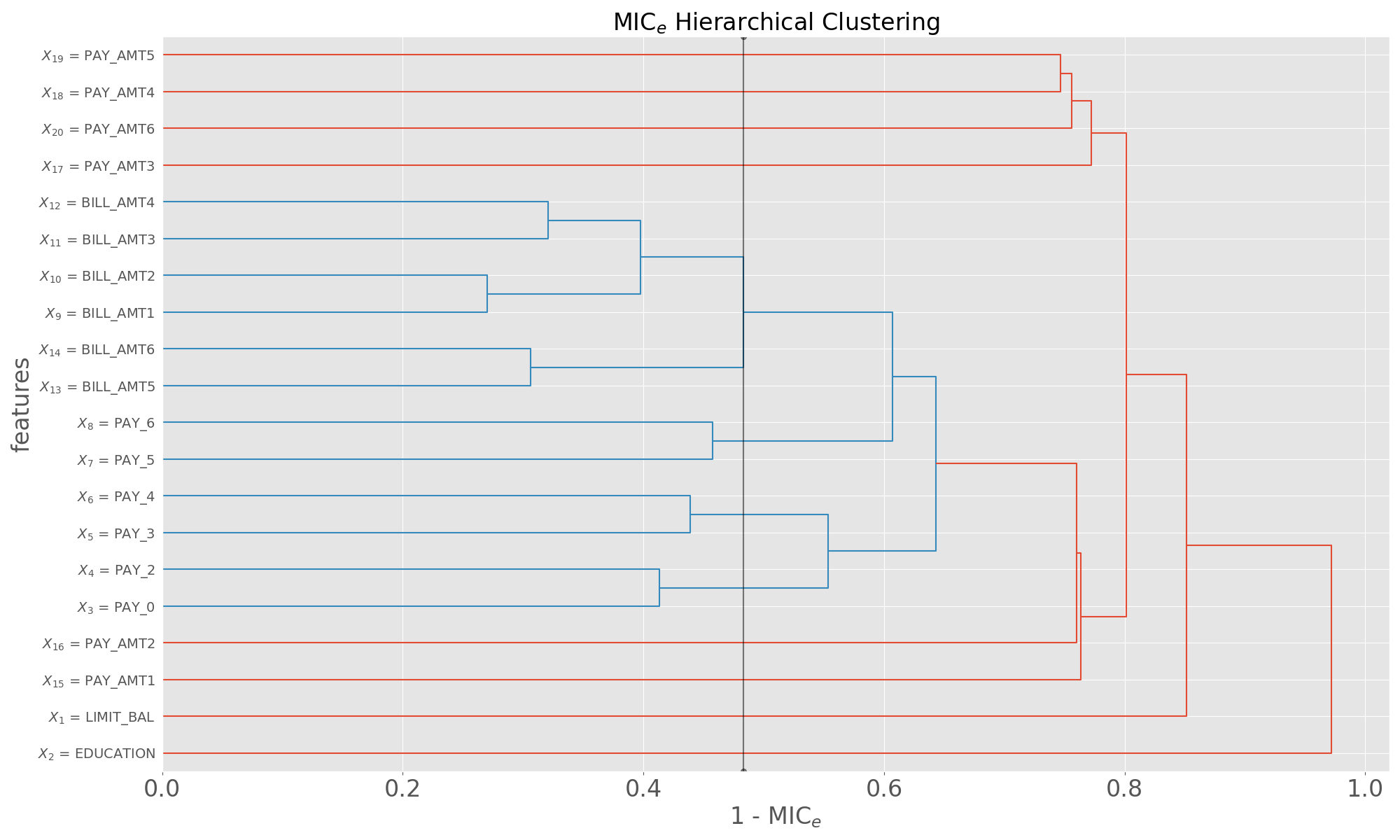}
    \end{subfigure}
    \caption{\footnotesize MIC-based hierarchical clustering for the Default of Credit Card Clients dataset.}\label{fig::MIC_tree_credit_}
\end{figure}

To assess the dependencies, we build a dendrogram based on the MIC-metric and investigate the level of dependence that exists among the twenty predictors. As seen in Figure \ref{fig::MIC_tree_credit_}, the dependencies are not as extreme as the ones designed in the synthetic dataset of \S\ref{sec::pedag_example}. There we were able to showcase the drastic Rashomon effect because the models we designed exhibit stronger dependencies between predictors. In particular, Figures \ref{fig::ex1_norm_expl} and \ref{fig::ex1b_norm_expl_mix} in  \S\ref{sec::pedag_example} depicting the norms of marginal explanations for each predictor illustrate the ``importance'' of these predictors for each model. Some models viewed the first two predictors as being similar, while others put more emphasis on one of them. As we will see, the Rashomon effect is still present in the models trained on the Default of Credit Clients dataset and grouping leads to improved stability, but not as drastically in light of the lack of strong dependencies.

In what follows, we will compute the Owen values (see \eqref{BzOw}) $\{Ow(x;\empvme, \cP, f_*)\}_{i=1}^{20}$ of the empirical marginal game for different partitions $\cP$ obtained by thresholding the tree. Recall from \S\ref{sec::ExTree} that the tree can be viewed as a coalescent process parameterized by $\alpha=1-\MICe$. This yields a sequence of nested partitions $\{\cP^{(k)}\}_{k=0}^{19}$, with $\cP^{(k)}$ corresponding to the $k$-th coalescent, and having $(20-k)$ groups of predictors.

In \S\ref{sec::obser_inter_expl}, we discussed two phenomena associated with efficient game values under dependencies: 1) the energy of the model (i.e. its  squared norm) splitting among explanations of dependent predictors (see \eqref{energyineq}, and Examples \ref{ex::energycons}, \ref{ex::energy_dissip_dep} and \ref{ex::energy_dissip_int}), and 2) similar models that approximate the same data can have very different marginal explanations (see Examples \ref{ex::marg_illposed} and \ref{ex::marg_instab}, and Theorem \ref{thm::margoperatorunbound}) in view of the Rashomon effect. As we will see in the context of marginal explanations, both phenomena may take place simultaneously. In what follows, we will explore these issues by first investigating the issue of energy splits and then discussing the Rashomon effect.

\subsubsection{Ranking and energy splitting of explanations}\label{sec::num_ranking}

Here, we consider partition $\cP^{(8)}=\cP_{0.49}$, that can be obtained by thresholding the dendrogram at $\alpha=0.49$. This partition contains the following groups: \{PAY\_0,PAY\_2\}, \{PAY\_3,PAY\_4\}, \{PAY\_5,PAY\_6\}, \{BILL\_AMT1,\dots,BILL\_AMT6\}, and the rest are singletons; see Figure \ref{fig::MIC_tree_credit_}.

We then compute the (global) empirical marginal Owen explanations of the population minimizer $f_*(x)$, the values $\beta_i(f_*,\vhat^{\ME})=\|Ow_i(X;\vhat^{\ME},\cP^{(8)},f_*)\|_{L^2(\P)}$, $i \in N$, which are depicted in Figure \ref{fig::expl_norms_credit_}.   To accomplish this, we use the empirical game defined in \eqref{empmarggame} with a background dataset $\bar{D}_X:=D_{train}$. To compute the explanations of the population minimizer, given the dimensions of the background dataset, we use the fast, exact algorithm introduced in Filom et al. \cite{FilomTBMarg}, which is designed specifically for the computation of empirical marginal coalitional values of CatBoost ensembles.

We then compute the sums of explanations over each group  (the trivial group explainations introduced in Definition \ref{def::coalexpl}) to obtain the global contributions of the groups themselves, that is, the values $\beta_j^{\cP^{(8)}}(f_*,\vhat^{\ME})=\|Ow_{S_j}(X;\vhat^{\ME},\cP^{(8)},f_*)\|_{L^2(\P)}$, $S_j \in \cP^{(8)}$, which are depicted in Figure \ref{fig::expl_norms_credit_}. Recall from \S\ref{sec::qp_prop} that the group sums are equal to the quotient Shapley values in view of \eqref{quotientgame}. Since groups are not fully independent, the quotient marginal Shapley values are only crude approximants of the conditional ones.

To observe the splits, it is sufficient to compare the contributions of highly dependent predictors that form the coalition with that of the coalition itself as well as with contributions of independent (or almost independent) predictors that form singletons and whose marginal explanations, according to Proposition \ref{prop::unifcoalexpl}, are equal to (or approximate well) the corresponding conditional ones.

The splits are prominent in Figure \ref{fig::expl_norms_credit_} which presents the norms of contributions of the individual predictors together with the corresponding groups. Observe the energy splitting occurring in the predictor group with BILL\_AMT's and contrast the individual and group explanations with, for example, the explanation of LIMIT\_BAL. When one attempts to rank order predictors based on their contributions, LIMIT\_BAL will be placed higher in the ranking compared to each BILL\_AMT. However, when ranking groups, \{LIMIT\_BAL\}, as a singleton, will be placed lower than the group containing the BILL\_AMT predictors. This clearly indicates the issue caused by energy splits to rank ordering based on contributions of individual predictors.


To complete our study, we repeat the same experiment by 
computing the $L^2$-norms of empirical marginal Shapley values of the model $f_*$ and comparing them with those of their sums within groups (i.e. the trivial group explainations, cf. Definition \ref{def::trivgexp}); see Figure \ref{fig::expl_norms_shap_credit_}. It can be seen that the global marginal Owen and Shapley explanations are extremely similar. We believe this is due to the fact that $f_*$ is close to `additive' (since $f_*=\beta +\alpha(\sum \mathcal{T}_j)$ where each oblivious tree $\mathcal{T}_j$ is a function of at most $5$ variables) and the dependencies between predictors are not very strong. Furthermore, the linearity of the game value implies that Owen and Shapley explanations are linear combinations of explanations for each individual tree. Although the marginal Owen and Shapley values for a given tree in general differ (since a tree is not an additive function), the trees in this model are oblivious and not very deep, which apparently leads to values being very similar.

\begin{figure}
    \centering
    \begin{subfigure}[t]{0.8\textwidth}
        \centering
        \includegraphics[width=\textwidth]{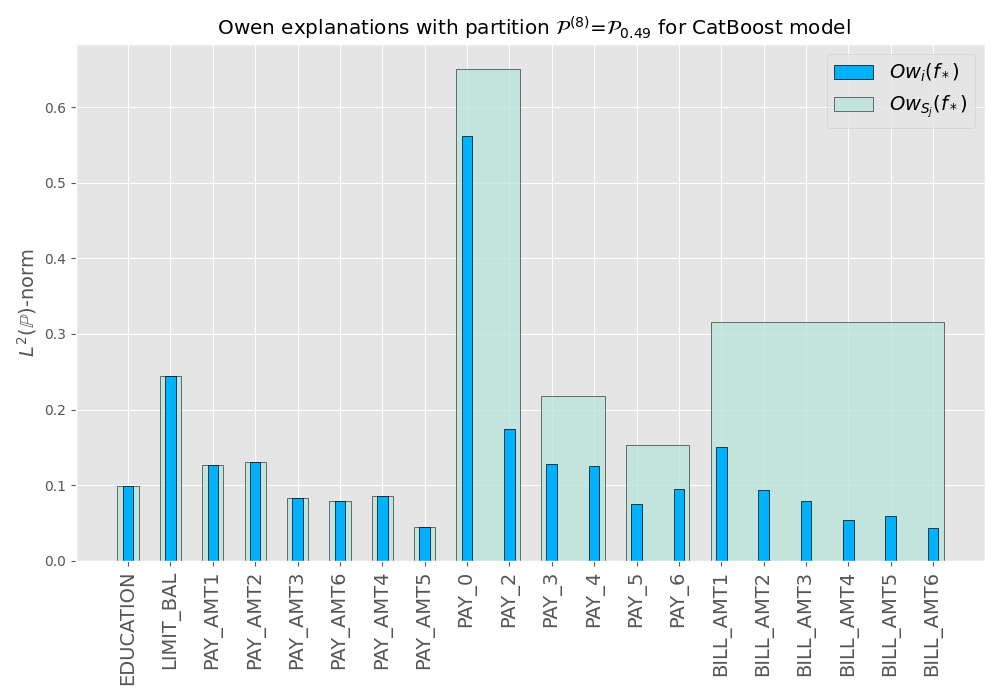}
    \end{subfigure}
    \caption{\footnotesize Global Owen explanation for $\cp^{(8)}=\cP_{0.49}$.}\label{fig::expl_norms_credit_}
\end{figure}

\begin{figure}
    \centering
    \begin{subfigure}[t]{0.8\textwidth}
        \centering
        \includegraphics[width=\textwidth]{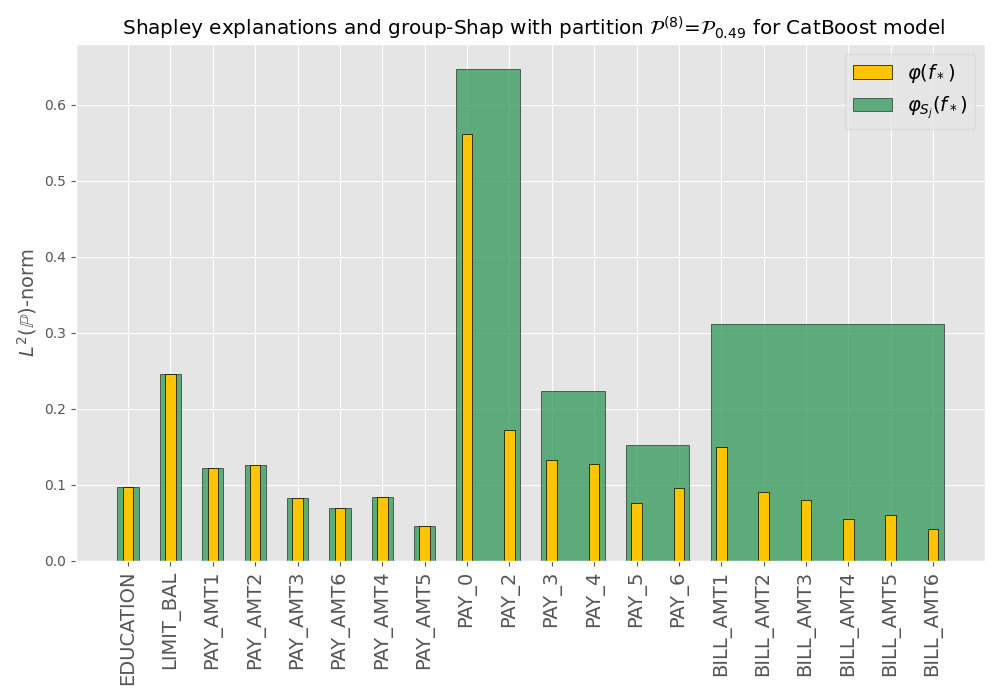}
    \end{subfigure}
    \caption{\footnotesize Global Shapley explanation for $\cP^{(8)}=\cP_{0.49}$.}\label{fig::expl_norms_shap_credit_}
\end{figure}

\subsubsection{Grouping effect on stability}\label{sec::num_stability}

In this section, we continue to explore the Rashomon effect by measuring and comparing the stability of explanations before and after grouping on the Default of Credit Card Clients dataset \cite{default_credit_dataset}.

To understand how the dependencies between groups affect stability, we design the following experiment. Given the reference model $f_*$ and the population minimizer described in Section \ref{sec::num_ranking}, we train a series of new models whose predictions are close to those of $f_*$ by varying the hyperparameters. Specifically, we pick the following parameters at random from the given intervals: iterations $\in [50,300]$, subsample $\in [0.5,1.0]$, depth $\in [2,10]$, learning\_rate $\in[0.025,0.25]$, rsm $\in[0.5,0.1]$, with the rest of the parameters being the same as for $f_*$. We then train a new model $f$ and accept or reject based on the following principle. Given a threshold $\tau \in [0,1]$, we accept the model if it is in the Rashomon ball of relative radius $\tau$, meaning that $\hat{\E}[|f(X)-f_*(X)|^2] \leq \tau \|f_*\|_{L^2(P_X)}$, otherwise it is rejected; here $\|f_*\|_{L^2(P_X)} \approx 1.84$ which is estimated on the test set, and we choose $\tau = 0.1$. We continue this procedure until we train 20 models $\{f_k\}_{k=1}^{20}$.

We next consider partitions of features with varying degrees of dependence, from moderate to strongly dependent. This task is accomplished by cross-sectioning the partition tree of dependencies for a given dataset at different heights. In this analysis the partitions considered are $\cP_{0.49}$, $\cP_{0.62}$, $\cP_{0.65}$ and $\cP_{0.77}$, containing $12$, $10$, $9$, and $5$ groups (see Figure \ref{fig::MIC_tree_credit_2}), respectively, with the subscript indicating the cutoff threshold.

Given these four partitions we evaluate the empirical marginal Owen explanations for each predictor and for each group (via summation) per model. Subsequently, we compute the global explanation of the model difference $\|f_*-f_k\|_{L^2(\P)}$  from individual explanations. Specifically, we first evaluate the explanations $Ow_i (X,f_*-f_k,\empvme,\cP)$  of the model difference $\Delta f_k = f_*-f_k$ and then compute the corresponding norms $\beta_i (f_*-f_k,\cP)=\|Ow_i (x,f_*-f_k,\cP)\|_{L^2(\P)}$, $i \in N$, for each $k\in\{1,\dots,20\}$ and  $\cP \in \{P_{0.49},P_{0.62},P_{0.65},P_{0.77}\}$. We do the same for the group explanations, $\beta_{j}^{\cP} (f_k-f_*,\cP)=\|Ow_{S_j} (X,f_k-f_*,\empvme,\cP)\|_{L^2(\P)}$, $j \in M$.

To contrast the stabilization effect between individual and group explanations, we evaluate the length of the vectors $\beta(f_*-f_k,\cP)=\{\beta_i(f_*-f_k,\cP)\}_{i \in N}$ and $\beta^{\cP}(f_*-f_k,\cP)=\{\beta^{\cP}_{j}(f_*-f_k,\cP)\}_{j \in M}$ and compute the maximum of these quantities across all models $k\in\{1,\dots,20\}$. These are plotted in Figure \ref{fig::Expl_Diff_Tot_Ind_Group_singl} together with the norm of the maximum model difference. As we see in the plot, the total group explanation differences are smaller than the respective total individual ones, showcasing the gain in stability when considering explanations of groups. Furthermore, as we reach partition $\cP_{0.77}$, observe that the total group explanation difference becomes approximately equal to the norm of the difference of the models, illustrating the alleviation of the Rashomon effect due to weaker dependencies between the groups. Note that in Figure \ref{fig::Expl_Diff_Tot_Ind_Group_nosingl} we have removed the energy contributed by singletons from both individual and group explanation vectors, since these do not have any effect on the norm evaluation between the two.

Given that the dependencies are not very strong in the Default of Credit Clients dataset, as seen in the partition tree in Figure \ref{fig::MIC_tree_credit_2}, note that the Rashomon effect in general is not as prominent as in the synthetic example from \S\ref{sec::pedag_example}, but it is still present. Nevertheless, Figure \ref{fig::Expl_Diff_Tot_Ind_Group_} still portrays this effect and its alleviation when predictor groups are considered based on dependencies.

\begin{table}
\begin{center}
\begin{tabular}{|c| c | c | c | c | c | c | c | c |}
\hline
& $\|f_*\|$ & $\max_k \|\Delta f_k\|$ &  $\max_k|\beta(\Delta f_k,\cP)|$ & $\max_k|\beta^{\cP}(\Delta f_k,\cP)|$\\[0.5ex]
\hline
$\cP_{0.49}$   &$1.839$ & $0.267$ & $0.388$ &  $0.364$  \\ 
 \hline 
 $\cP_{0.62}$  &$1.839$ & $0.267$ & $0.390$  & $0.388$   \\ 
 \hline 
 $\cP_{0.65}$  &$1.839$ & $0.267$ & $0.389$  & $0.340$   \\ 
 \hline 
 $\cP_{0.77}$  &$1.839$ & $0.267$ & $0.386$  & $0.269$   \\ 
 \hline
\end{tabular}
\caption{ Global marginal Owen attributions for Default of Credit Card Clients dataset.}\label{table::exp2a_norms}
\end{center}
\end{table}

\begin{figure}
    \centering
    \begin{subfigure}[t]{0.9\textwidth}
        \centering
        \includegraphics[width=\textwidth]{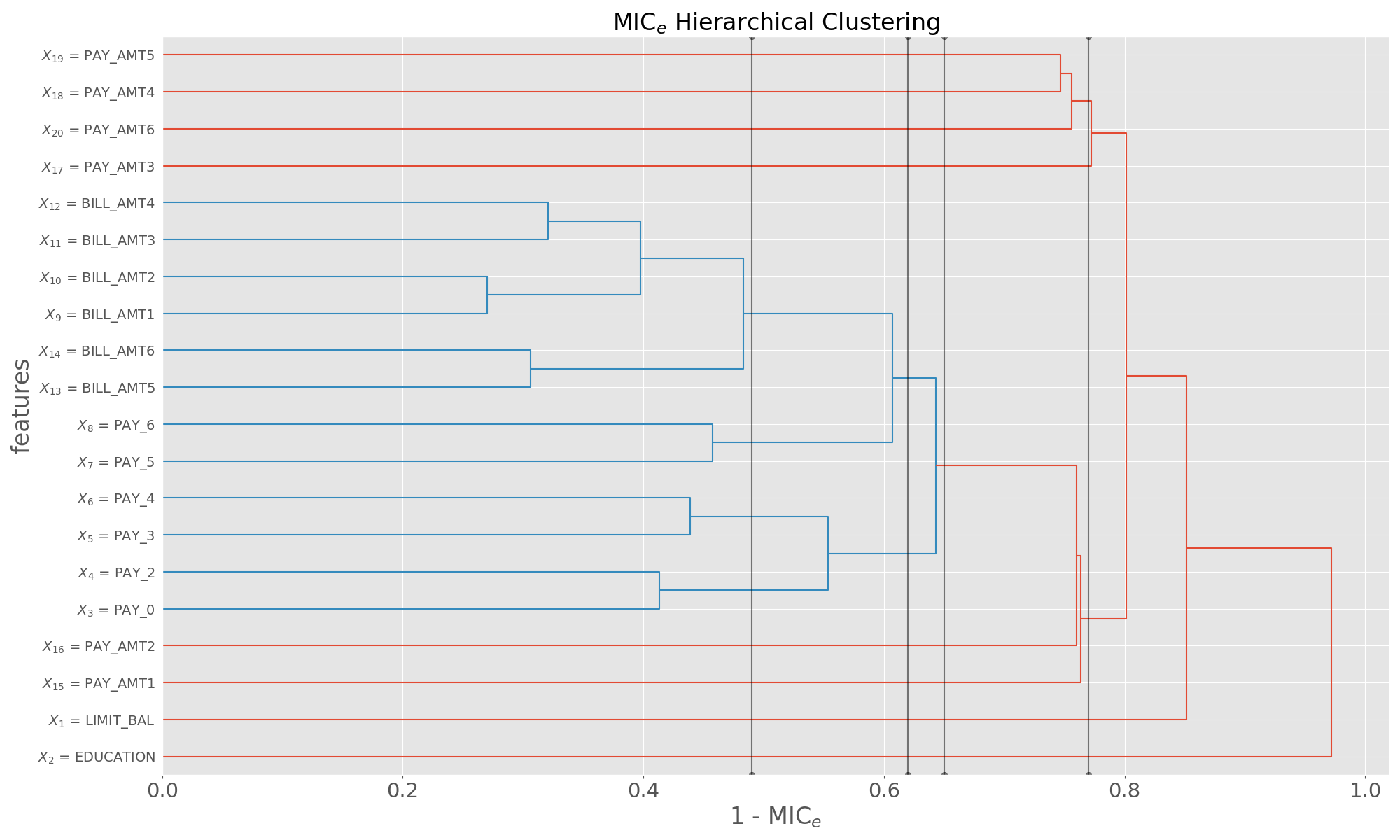}
    \end{subfigure}
    \caption{\footnotesize  MIC-based hierarchical clustering for the Default of Credit Card Clients dataset with 4 cutoffs.}\label{fig::MIC_tree_credit_2}
\end{figure}

\begin{figure}
    \centering
    \begin{subfigure}[t]{0.45\textwidth}
        \centering
        \includegraphics[width=\textwidth]{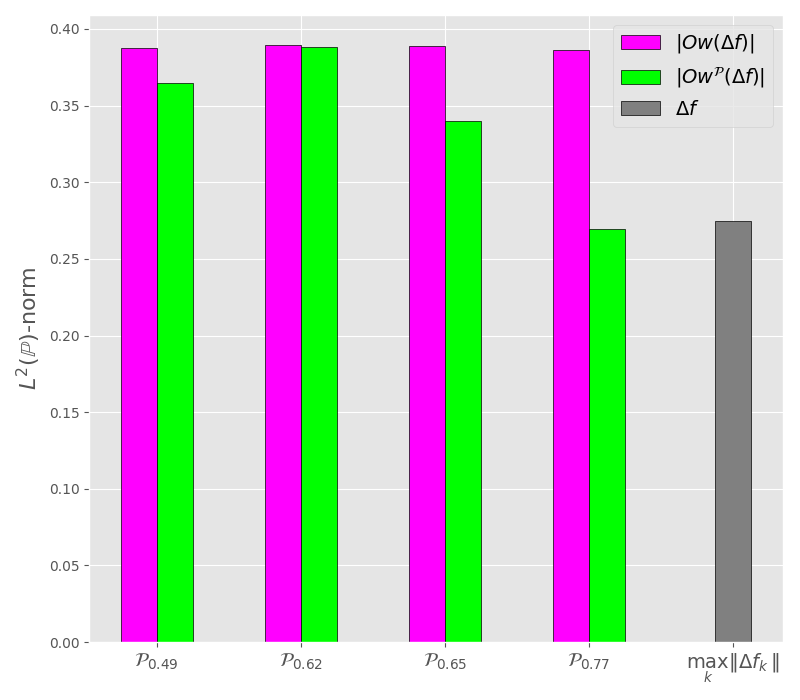}
        \caption{\footnotesize Norms of explanation differences (with singletons).}\label{fig::Expl_Diff_Tot_Ind_Group_singl}
    \end{subfigure}
    ~~
    \begin{subfigure}[t]{0.45\textwidth}
        \centering
        \includegraphics[width=\textwidth]{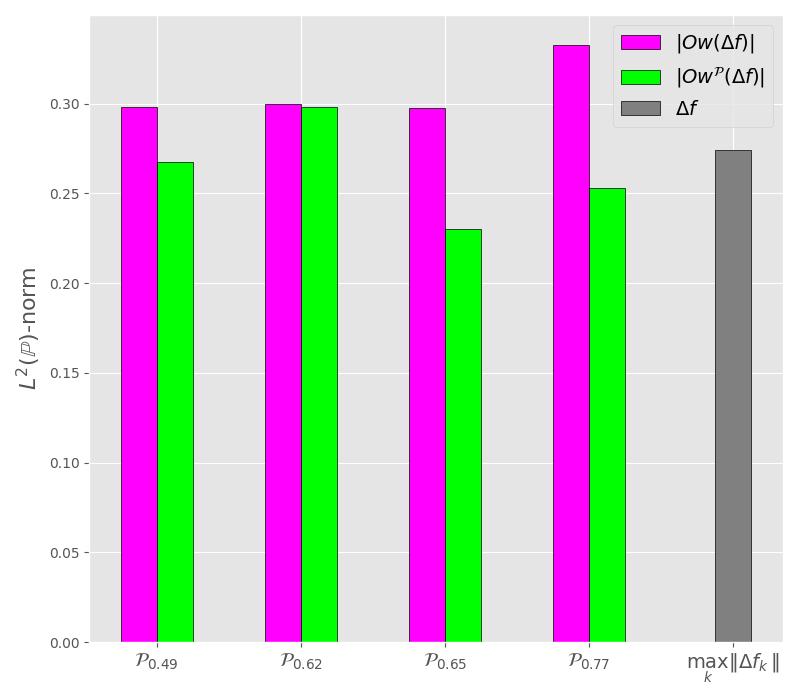}
        \caption{\footnotesize Norms of explanation differences (without singletons).}\label{fig::Expl_Diff_Tot_Ind_Group_nosingl}
    \end{subfigure}    
    \caption{\footnotesize Gain in stability, $|\beta|$ versus $|\beta^{\cP}|$. Deafult of Credt Card Clients dataset.}\label{fig::Expl_Diff_Tot_Ind_Group_}
\end{figure}

We next consider the Superconductivity dataset \cite{superconductivty_dataset}, a regression dataset where the superconductivity critical temperature is predicted based on $81$ features extracted from the superconductor’s chemical formula. The original dataset has $21263$ instances. As before, we first construct a hierarchical clustering tree of feature dependencies using the MIC-based metric (see Figure \ref{fig::MIC_tree_superconductivity}) in order to form partitions. The dataset is then randomly split into training and test sets in $90$:$10$ proportions, and we train a (reference) regressor model $f_*(x)=\hat{\E}[Y|X=x]$ using the CatBoost algorithm. For training we use the following parameters: iterations=$300$, min\_data\_in\_leaf=$5$, depth=$8$, subsample=$0.8$, and learning\_rate=$0.1$. 


Performance metrics for the model on the trained and test datasets, the latter with $2126$ samples, were evaluated. Specifically, the mean square error estimate on the training and test sets is approximately $7.70$ and $9.45$ respectively, which constitutes about $16\%$ and $20\%$ of relative error given that the $L^2$-norm estimate of the reference model is $\|f_*\|_{L^2(P_X)} \approx 47.48$ on the test dataset.

Following the above methodology, we train a series of new models whose predictions are close to the predictions of $f_*$. Specifically, we pick the following parameters at random from the given intervals: iterations $ \in [100,500]$, subsample $\in[0.5,1.0]$, depth $\in[4,10]$, learning\_rate $\in[0.025,0.25]$, rsm $\in [0.5,0.1]$, with the rest of the parameters being the same as for $f_*$. We then train a new model $f$ and accept it if it is in the Rashomon ball centered at $f_*$  of relative size $\tau=0.06$, meaning if $\hat{\E}[|f(X)-f_*(X)|^2] \leq \tau \|f_*\|_{L^2(P_X)}$, or reject otherwise. We continue this procedure until we construct 25 models $\{f_k\}_{k=1}^{25}$.

Similar to the previous dataset, the partitions considered in this analysis are $\cP_{0.3}$, $\cP_{0.4}$, $\cP_{0.5}$, $\cP_{0.56}$, $\cP_{0.60}$ and $\cP_{0.65}$; see Figure \ref{fig::MIC_tree_superconductivity}. Given these six partitions we evaluate, as before, the empirical marginal Owen explanations for each predictor and for each group (via summation) per model and then evaluate the length of their global explanations. These are plotted in Figure \ref{fig::Expl_Diff_Tot_Ind_Group_sc_singl} together with the norm of the maximum model difference. Once again, the total group explanation differences are smaller than the respective total individual ones, showcasing the gain in stability when considering explanations of groups. Furthermore, as we reach partition $\cP_{0.77}$, observe that the total group explanation difference becomes approximately equal to the norm of the difference of the models, illustrating again the alleviation of the Rashomon effect.

\begin{table}
\begin{center}
\begin{tabular}{|c| c | c | c | c | c | c | c | c |}
\hline
& $\|f_*\|$ & $\max_k \|\Delta f_k\|$ &  $\max_k|\beta(\Delta f_k,\cP)|$ & $\max_k|\beta^{\cP}(\Delta f_k,\cP)|$\\[0.5ex]
\hline
$\cP_{0.3}$   &$47.482$ & $2.808$ & $11.286$  & $7.207$  \\ 
\hline 
$\cP_{0.4}$   &$47.482$ & $2.808$ & $11.269$  & $6.479$  \\ 
\hline 
$\cP_{0.5}$   &$47.482$ & $2.808$ & $11.119$  & $5.018$  \\ 
\hline 
$\cP_{0.55}$  &$47.482$ & $2.808$ & $11.146$  & $4.971$  \\ 
\hline
$\cP_{0.60}$  &$47.482$ & $2.808$ & $11.150$  & $4.054$  \\ 
\hline
$\cP_{0.66}$  &$47.482$ & $2.808$ & $11.167$  & $2.454$  \\ 
\hline
\end{tabular}
\caption{ Global marginal Owen attributions for Superconductivity dataset.}\label{table::exp2b_norms}
\end{center}
\end{table}

We would like to contrast this dataset with the Default of Credit Card Clients dataset. Note that due to the stronger dependencies among the features of the Superconductivity dataset, the Rashomon effect is much more apparent in this case compared to the previous dataset, which also means that the alleviation of the Rashomon effect due to evaluating group explanations is also more striking.

\begin{figure}
    \centering
    \begin{subfigure}[t]{0.9\textwidth}
        \centering
        \includegraphics[width=\textwidth]{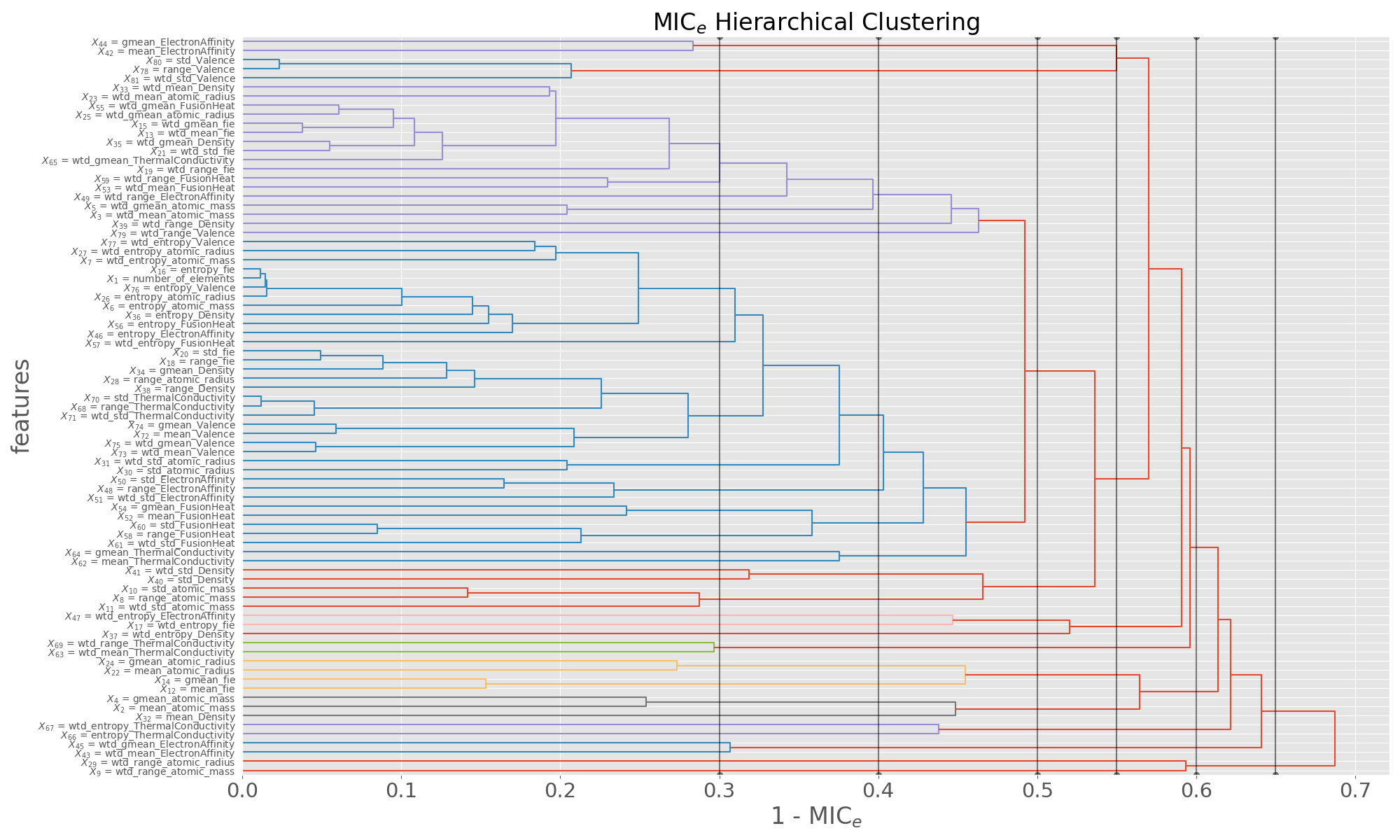}
    \end{subfigure}
    \caption{\footnotesize MIC-based hierarchical clustering for the Superconductivity dataset with $6$ cutoffs.}\label{fig::MIC_tree_superconductivity}
\end{figure}

\begin{figure}
    \centering
    \begin{subfigure}[t]{0.45\textwidth}
        \centering
        \includegraphics[width=\textwidth]{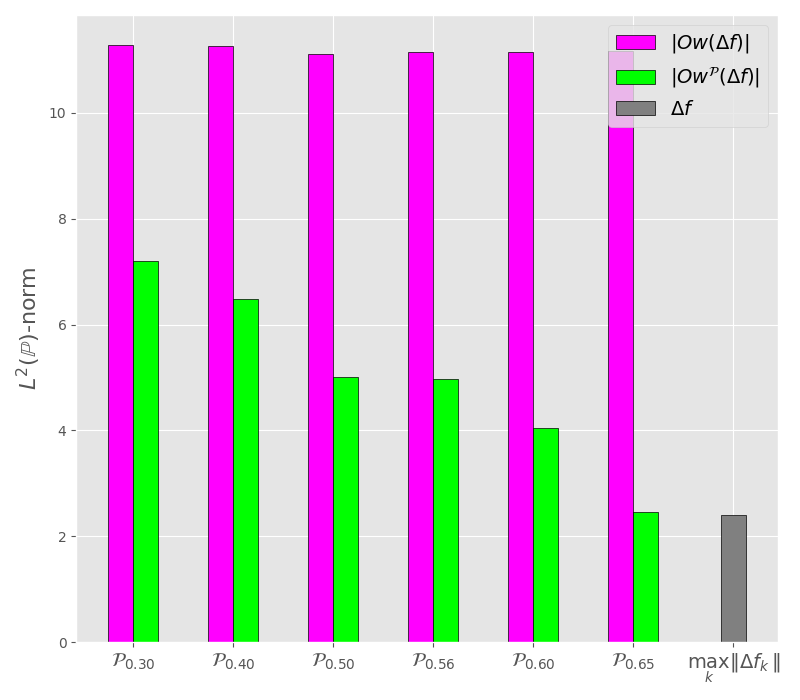}
        \caption{\footnotesize Norms of explanation differences (with singletons).}\label{fig::Expl_Diff_Tot_Ind_Group_sc_singl}
    \end{subfigure}
    ~~
    \begin{subfigure}[t]{0.45\textwidth}
        \centering
        \includegraphics[width=\textwidth]{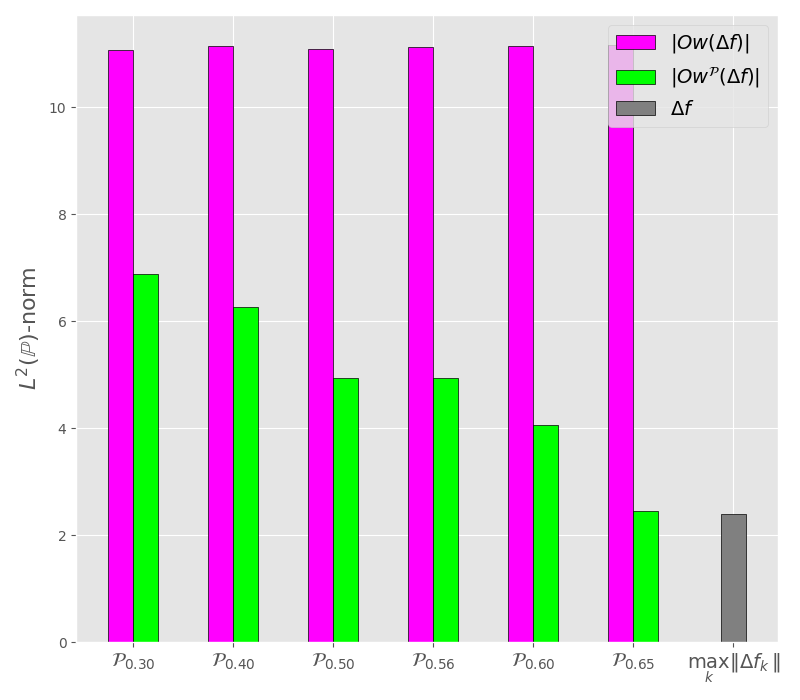}
        \caption{\footnotesize Norms of explanation differences (without singletons).}\label{fig::Expl_Diff_Tot_Ind_Group_sc_nosingl}
    \end{subfigure}    
    \caption{\footnotesize Gain in stability, $|\beta|$ versus $|\beta^{\cP}|$. Superconductivity dataset.}\label{fig::Expl_Diff_Tot_Ind_Group_sc}
\end{figure}

\section{Conclusion}\label{sec::conclusion}

In this work, we presented a comprehensive and rigorous treatment of machine learning explainers arising from the cooperative game theory by utilizing tools from functional analysis. We carefully set up explainers obtained from applying a linear game value to the conditional or marginal games associated with a machine learning model as appropriate linear operators, and we investigated their continuity. This highlighted the differences between the two games: the former takes into account the joint distribution of the predictors, whereas the latter highly depends on the structure of the model and may result in explanations which are unstable in the natural data-based metric. We provided numerous examples illuminating our theoretical results.

The conditional and marginal explanations often differ when the predictors are dependent (which is almost always the case). 
To unify the two paradigms and to address the instability of marginal explanations, we proposed partitioning the predictors based on dependency and then using coalitional game values. This approach also lowers the computational complexity of generating explanations. Various game-theoretical group explainers were constructed, and we showed that many of them coincide once the predictors are partitioned into independent groups.

In practice, for partitioning the predictors, we proposed a variable hierarchical clustering technique that employs a state-of-the-art measure of dependence called the maximal information coefficient, a regularized version of mutual information that can capture non-linear dependencies. This approach was tested on models trained on various datasets. The experiments showcased the benefits of grouping the predictors for generating feature attributions.

\section*{Acknowledgment}
{The authors would like to thank Steve Dickerson (former CAO, Decision Management at Discover Financial Services (DFS)) and Raghu Kulkarni (SVP, Chef Data Scientist at DFS) for formulation of the problem as well as helpful business and compliance insights. We also would like to thank professors Markos Katsoulakis and Robin Young from the University of Massachusetts Amherst, and Hangjie Ji from North Carolina State University for their valuable comments and suggestions that aided us in writing this article.}

\begin{appendices}

\section*{Appendix}



\section{On game values}\label{app::gamevalues}

\subsection{Game value axioms}\label{app::gameaxioms}

A cooperative game is a pair $(N,v)$ defined by the finite set of players $N \subset \mathbb{N}$ (typically, $N=\{1,2,\dots,n\}$) and a set function $v$ defined on the collection of all subsets $S \subseteq N$, which satisfies $v(\varnothing)=0$. 
A set $T \subseteq N$ is called a carrier of $v$ if $v(S)=v(S \cap T)$ for all $S \subseteq N$.
A game value is a map $(N,v) \mapsto h[N,v]=(h_i[N,v])_{i \in N}$.

We now list  some of useful game value properties:
\begin{enumerate}

    \item [(LP)]\label{axiom:LP} (linearity)  
    For two cooperative games $(N,v)$ and $(N,w)$ we have
\begin{equation}\label{linp}
    h[N,a v+w] = a h[N,v]+ h[N,w], \ a\in \R.  
\end{equation}
    
\item [(EP)]\label{axiom:EP}  (efficiency) The sum of the values is equal to the value of the game
\begin{equation}\label{effp}
\sum_{i\in N} h_i[N,v] = v(N).
\end{equation}

\item [(SP)]\label{axiom:SP} (symmetry) For any permutation $\pi$ on $N$ and game $(N,v)$
\begin{equation}\label{symp}
h_{\pi(i)}[N,\pi v]=h_i[N,v],\quad \pi v(\cdot)=v(\pi^{-1}\cdot).
\end{equation}

\item [(TPP)]\label{axiom:TPP} (total power) 
\begin{equation}\label{tpp}
\textstyle    \sum_{i=1}^n h_i[N,v] = \frac{1}{2^{n-1}} \sum_{i=1}^n \sum_{S \subseteq N \backslash \{i\}} [v(S \cup \{i\}) - v(S)]
\end{equation}

\item [(NPP)]\label{axiom:NPP} (null player) A null player $i\in N$ is a player that adds no worth to the game $v$, which means 
\begin{equation}\label{npp}
    v(S\cup \{i\})=v(S), \quad S\subseteq N\setminus \{i\}.
\end{equation}
We say that  $h$ satisfies the null-player property if $h_i[N,v]=0$ whenever $i \in N$ is a null player.

\item [(CDP)]\label{axiom:CDP} {(carrier dependence)} For any $v$ with a carrier $T \subseteq N$, 
\begin{equation}\label{cip}
h_i[N,v] = h_i[T,v], \quad i \in T.
\end{equation}

\item [(TPG)]\label{axiom:TPG} (total payoff growth) There exists strictly increasing $g:\RR_+ \times \mathbb{N} \to \RR_+$ satisfying $g(0,n)=0$ and $g(a,n)>0$ for $a>0$ such that for all games $v$ with the carrier $N$ 
\begin{equation}\label{growth}
    \textstyle\sum_{i=1}^n |h_i[N,v]| \geq g( |v(N)|,|N|) \geq 0.
\end{equation}

\item [(SEP)]\label{axiom:SEP} (singleton efficiency property) For any $i \in \mathbb{N}$, 
\begin{equation}\label{SEP}
    h[\{i\},v]=v(\{i\}).
\end{equation}

\item [(NN)]\label{axiom:NN} $h[N,v]$ is a linear game value in the marginalist form \eqref{lingameform} with weights satisfying $w(S,n)\geq 0, S \subset N$.

\item [(NVA)]\label{axiom:NVA} $h[N,v]$ is a linear game value in the marginalist form \eqref{lingameform} with weights satisfying 
\[
\Big(\sum_{S \subseteq N \setminus \{i\}} w(S,n) \Big) \neq 0, \quad i \in N.
\]

\end{enumerate}

A coalitional game value is a finer assignment $(N,v,\mathcal{P})\mapsto g[N,v,\mathcal{P}]$ whose inputs are coalitional games 
$(N,v,\mathcal{P})$ where $\mathcal{P}$ is a partition of the set of players $N$. 
Such objects appear in \S\ref{sec::ExCoal}. All axioms for game values discussed above have immediate generalizations to coalitional game values. 
Nevertheless, we shall need the following property of coalitional values $g$.
\begin{enumerate}
\item [(SIP)]\label{axiom:SIP} (singleton invariance property) 
If $(N,\tilde{u})$ is a unit cooperative game  (i.e. $\tilde{u}(S)=1$ for all
$\varnothing\neq S\subseteq N$), for any $i \in \mathbb{N}$ one has 
\begin{equation}\label{SIprop}
g_i[\{i\},\tilde{u},\{\{i\}\}]=g_1[\{1\},\tilde{u},\{\{1\}\}].
\end{equation}
\end{enumerate}

\subsection{Canonical representation of linear game values}\label{app::gamevalrepr}

\begin{lemma}\label{lmm::lingamevalrepr}

Let $h$ be a linear game value and $g$ a linear coalitional value. Then

\begin{itemize}

\item [$(i)$] For each $N \subset \mathbb{N}$, there exist constants $\{\gamma(i,N,S)\}_{i \in N,S \subseteq N}$ such that
\begin{equation}\label{lingamevalrepr}
h_i[N,v]=\sum_{S \subseteq N} \gamma(i,N,S)v(S).
\end{equation}
When $v$ is non-cooperative, extensions $\bar{h}$ of $h$ admit a similar representation provided that numbers $\gamma(i,N,\varnothing)$ are the same as $\gamma_i$'s from Lemma \ref{lmm:extlingame}.
  \item [$(ii)$] For each $N \subset \mathbb{N}$ and a partition $\mathcal{P}$ of $N$, there exist constants $\{\gamma(i,N,\mathcal{P},S)\}_{i \in N, S \subseteq N}$ such that
\begin{equation}\label{lincoalgamevalrepr}
g_i[N,v,\mathcal{P}]=\sum_{S \subseteq N} \gamma(i,N,\mathcal{P},S)v(S).
\end{equation}

\end{itemize}
\end{lemma}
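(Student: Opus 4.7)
\textbf{Proof plan for Lemma \ref{lmm::lingamevalrepr}.} The plan is to exploit the fact that, for a fixed finite set $N$, the collection of cooperative games on $N$ forms a finite-dimensional real vector space of dimension $2^{|N|}-1$, and the game value $h$ (resp.\ the coalitional value $g$ with the partition held fixed) acts linearly on this space. Hence $h_i[N,\cdot]$ is determined by its values on any basis, and expressing any game in coordinates relative to that basis yields the desired closed form.

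First I would introduce the \emph{Dirac (indicator) basis}: for each nonempty $T\subseteq N$, define the game $\delta_T$ by $\delta_T(S)=1$ if $S=T$ and $\delta_T(S)=0$ otherwise. These $2^{|N|}-1$ games are linearly independent (evaluate at each subset) and any cooperative game $(N,v)$ admits the unique expansion
\[
v \;=\; \sum_{\varnothing\neq T\subseteq N} v(T)\,\delta_T.
\]
Applying the linearity axiom \hyperref[axiom:LP]{(LP)} of $h$ componentwise gives
\[
h_i[N,v] \;=\; \sum_{\varnothing\neq T\subseteq N} v(T)\,h_i[N,\delta_T].
\]
Thus setting $\gamma(i,N,T):=h_i[N,\delta_T]$ for $T\neq\varnothing$ and $\gamma(i,N,\varnothing):=0$ produces the representation \eqref{lingamevalrepr}, proving part $(i)$ for cooperative games. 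For the non-cooperative extension statement, invoke Lemma \ref{lmm:extlingame} to write $\bar{h}_i[N,v]=h_i[N,\tilde v]+\gamma_i\,v(\varnothing)$; applying the above formula to $\tilde v$ (whose values on nonempty subsets coincide with those of $v$) and then defining $\gamma(i,N,\varnothing):=\gamma_i$ gives the claimed extended representation.

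For part $(ii)$, fix $N$ together with the partition $\mathcal{P}$ of $N$, and regard the assignment $v\mapsto g[N,v,\mathcal{P}]$ as a linear map on the same finite-dimensional space of cooperative games on $N$. Repeating the argument with the Dirac basis and the linearity of $g$ in its second slot yields
\[
g_i[N,v,\mathcal{P}] \;=\; \sum_{\varnothing\neq S\subseteq N} \gamma(i,N,\mathcal{P},S)\,v(S),
\]
where $\gamma(i,N,\mathcal{P},S):=g_i[N,\delta_S,\mathcal{P}]$ and $\gamma(i,N,\mathcal{P},\varnothing):=0$. This establishes \eqref{lincoalgamevalrepr}.

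There is essentially no obstacle here: the only subtlety is a bookkeeping one, namely making sure that the coefficient $\gamma(i,N,\varnothing)$ is chosen consistently with the convention $v(\varnothing)=0$ in the cooperative case and with the extension constant $\gamma_i$ of Lemma \ref{lmm:extlingame} in the non-cooperative case. Once this is noted, the proof reduces to a one-line use of linearity against the Dirac basis.
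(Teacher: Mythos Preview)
Your proposal is correct and follows essentially the same approach as the paper: both introduce the indicator (Dirac) games $\delta_T$ (the paper calls them $u_S$) as a basis for the space of cooperative games, expand $v=\sum_{T}v(T)\delta_T$, and apply linearity to read off $\gamma(i,N,T)=h_i[N,\delta_T]$. Your treatment of the non-cooperative extension via Lemma~\ref{lmm:extlingame} is slightly more explicit than the paper's, but the argument is the same.
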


\begin{proof}
For each non-empty set $S \subseteq N$ define the game $u_S(A)=\1_{\{S=A\}}(A)$, where $A \subseteq N$. Let also $u_{\varnothing}=0$. Then $v=\sum_{S \subseteq N} v(S)u_S$ and hence by the linearity of $h$ we obtain
\[
h[N,v]=\sum_{S \subseteq N} h[N,u_S] v(S).
\]
Setting $\gamma(i,N,S)= h_i[N,u_S]$, proves $(i)$. The proof of $(ii)$ is similar.
\end{proof}

\begin{lemma}\label{lmm::nonessential}
Let $h$ be a linear game value that satisfies efficiency and null-player properties. Let $(N,v)$ be a non-essential cooperative game, that is, $v(S)=\sum_{i \in S} v(\{i\})$. Then $h_i[N,v]=v(\{i\})$.
\end{lemma}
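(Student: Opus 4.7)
The plan is to exploit the decomposition of any non-essential game into a linear combination of simple ``indicator'' games, one per player, and then to compute the game value on these simple games using the null-player and efficiency properties.

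Concretely, first I would introduce the auxiliary cooperative games $w_j$ defined by $w_j(S):=\mathbf{1}_{\{j\in S\}}$ for $S\subseteq N$ and each $j\in N$. Since the game $v$ is non-essential, one has $v(S)=\sum_{i\in S} v(\{i\})=\sum_{j\in N} v(\{j\})\,w_j(S)$, hence $v=\sum_{j\in N} v(\{j\})\,w_j$ as set functions on $2^N$. Applying linearity \hyperref[axiom:LP]{(LP)} of $h$, this yields $h_i[N,v]=\sum_{j\in N} v(\{j\})\,h_i[N,w_j]$, and the lemma will reduce to computing $h_i[N,w_j]$ for all $i,j\in N$.

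Second, I would observe that in the game $w_j$, every player $k\neq j$ is a null player in the sense of \eqref{npp}: indeed, for $S\subseteq N\setminus\{k\}$, $w_j(S\cup\{k\})=\mathbf{1}_{\{j\in S\cup\{k\}\}}=\mathbf{1}_{\{j\in S\}}=w_j(S)$ because $k\neq j$. By \hyperref[axiom:NPP]{(NPP)} we then get $h_k[N,w_j]=0$ for $k\neq j$. Combining this with the efficiency property \hyperref[axiom:EP]{(EP)} applied to $w_j$ gives $h_j[N,w_j]=\sum_{k\in N}h_k[N,w_j]=w_j(N)=1$. Hence $h_i[N,w_j]=\delta_{ij}$ and substituting back yields $h_i[N,v]=v(\{i\})$, as desired.

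There is no real obstacle here: the decomposition $v=\sum_j v(\{j\})w_j$ is immediate from non-essentiality, and the rest only uses \hyperref[axiom:LP]{(LP)}, \hyperref[axiom:NPP]{(NPP)}, and \hyperref[axiom:EP]{(EP)}. The only thing to double-check is that $w_j$ is actually a cooperative game, i.e.\ that $w_j(\varnothing)=0$, which is clear, so the invocation of null-player and efficiency axioms is legitimate.
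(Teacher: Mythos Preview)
Your proof is correct and follows essentially the same approach as the paper. The paper defines $v_i(S):=v(S\cap\{i\})$, which for a cooperative game is exactly $v(\{i\})\,w_i(S)$ in your notation; so your decomposition $v=\sum_j v(\{j\})w_j$ and the paper's $v=\sum_i v_i$ are the same, and both proofs then invoke \hyperref[axiom:NPP]{(NPP)}, \hyperref[axiom:EP]{(EP)}, and \hyperref[axiom:LP]{(LP)} in the same way.
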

\begin{proof}
For each $i \in N$, define a game $v_i$ by $v_i(S)=v(S \cap \{i\})$. Then $\{i\}$ is a carrier of $v_i$. Since $h$ satisfies the null-player property, one has 
$h_k[N,v_i]=0$ for $k \neq i$. Then, by the efficiency property, we obtain
\[
v(\{i\})=v_i(N)=\sum_{k \in N} h_k[N,v_i]=h_i[N,v_i].
\]
Since the game is non-essential, for any non-empty $S \subseteq N$ we have
\[
v(S)=\sum_{i \in S} v(\{i\}) = \sum_{i \in S} v(S \cap \{i\}) = \sum_{i \in S} v_i(S) = \sum_{i \in N} v_i(S)
\]
where we used the fact that $v_i(S)=0$ for $i \notin S$. Thus, using the linearity of $h$,  we conclude
\[
h_i[N,v]=\sum_{k \in N} h_i[N,v_k]=h_i[N,v_i]=v(\{i\}).
\]
\end{proof}

\subsection{On the Radon-Nikodym derivative of probability measures} \label{app::radon-nykodim} 

Let $\mathcal{B}(\RR^k)$ denote the $\sigma$-algebra of Borel sets. The space of all Borel probability measures on $\RR^k$ is denoted by $\mathscr{P}(\RR^k)$. The space of probability measure with finite $q$-th moment is denoted by 
\[
\mathscr{P}_q(\RR^k)=\Big\{ \mu\in \mathscr{P}(\RR^k): \int_{\RR^k} |x|^q d\mu(x) < \infty \Big\}.
\]

\begin{definition}[{\bf push-forward}]\label{def::push-forward}\rm

  Let $\PP$ be a probability measure on a measurable space $(\Omega,\mathcal{F})$. Let $X \in \RR^n$ be a random vector defined on $(\Omega,\mathcal{F},\PP)$. The push-forward probability distribution of $\PP$ by $X$ is defined by
  \[
  P_X(A):=\PP\big( \{ \omega \in \Omega: X(\omega) \in A \} \big), \quad A\in \mathcal{B}(\RR^n).
  \]
\end{definition}

\begin{definition}[{\bf absolute continuity}]
Let $\mu,\nu $ be measures on a measurable space $(\Omega,\mathcal{F})$. $\mu$ is said to be absolutely continuous with respect to $\nu$, denoted as $\mu \ll \nu$, if $\nu(A)=0$ implies $\mu(A)=0$ for $A \in \mathcal{F}$.
\end{definition}

\begin{theorem}[\bf Radon-Nikodym derivative]
Suppose that $\mu,\nu$ are two $\sigma$-finite measures defined on a measurable space $(\Omega,\mathcal{F})$. If $\mu \ll \nu$, then there exists an $\mathcal{F}$-measurable function $r:\Omega \to [0,\infty)$, written as $r=\frac{d \mu}{d \nu}$, such that for any measurable set $A \in \mathcal{F}$, $\mu(A)=\int_A r(x) \, \nu(dx)$.
\end{theorem}
\begin{proof}
See Royden and Fitzpatrick \cite{Royden2010}.
\end{proof}

\begin{corollary}\label{app::lmm::rad_nik_prob_meas} Suppose that $\mu,\nu$ are two probability measures defined on a measurable space $(\Omega,\mathcal{F})$. If $\mu \ll \nu$, then the Radon-Nikodym derivative $\frac{d \mu}{d \nu}$ belongs to $L^1(\Omega,\mathcal{F},\nu)$ and is of norm $1$.
\end{corollary}

\begin{lemma}\label{app::lmm::rad_nik_equiv}
Let $\mu,\nu $ be probability measures on a measurable space $(\Omega,\mathcal{F})$. Suppose that $\mu \ll \nu $. Then the following statements are equivalent:

\begin{itemize}

\item [$(i)$] $\frac{d\mu}{d \nu} \in L^{\infty}(\Omega,\mathcal{F},\nu) $ in which case $\mu(A) \leq \big\| \frac{d\mu}{d \nu} \big\|_{L^{\infty}(\Omega,\mathcal{F},\nu)} \cdot \nu(A)$, $A \in \mathcal{F}$.

\item [$(ii)$] There exists $M > 0$ such that $\mu(A) \leq M \cdot \nu(A)$, all $A \in \mathcal{F}$.

\end{itemize}

\end{lemma}
\begin{proof} Suppose $(i)$ holds. Then for any $A \in \mathcal{F}$ as $\mu \ll \nu$, we have 
\[
\mu(A) \leq \int_A \frac{d\mu}{d\nu}(x) \nu(dx) \leq \Big\| \frac{d\mu}{d\nu}\Big\|_{L^{\infty}(\Omega,\mathcal{F},\nu)} \cdot \nu(A).
\]
This proves that $(i)$ implies $(ii)$.

Suppose $(ii)$ holds. Suppose there exists  $B \in \mathcal{F}$ of positive $\nu$-measure such that $\frac{d\mu}{d\nu}>M$ on $B$. Then
\[
M \cdot \nu(B) \geq \mu(B) = \int_B \frac{d\mu}{d\nu}(x) \, \nu(dx) > M \cdot \nu(B),
\]
which is a contradiction. Hence $\frac{d\mu}{d\nu}\leq M$ $\nu$-almost surely. Since $\nu$ and $\mu$ are probability measures, we must have $\frac{d\mu}{d\nu}\geq 0$ $\nu$-almost surely. This implies $(i)$.
\end{proof}

\begin{lemma}\label{app::lmm::cond_marg_rel}
Let $X=(X_1,\dots,X_n)$, $Z=(Z_1,\dots,Z_m)$ be random vectors on a measurable space $(\Omega,\mathcal{F},\PP)$ such that  $P_X \otimes P_Z \ll P_{(X,Z)} $. Suppose $f \in L^2_{1+r^2}(P_{(X,Z)})$, where $r:=\frac{d P_X \otimes P_Z }{ dP_{(X,Z)}}$. Then
\begin{equation}\label{cond_marg_meas_bound}
\int \Big(\int f(x,z) P_X(dx) - \int f(x,z) P_{X|Z=z}(dx) \Big)^2 P_Z(dz) \leq \| (r-1) \cdot f \|^2_{L^2(P_{(X,Z)})}.
\end{equation}
\end{lemma}
\begin{proof}
Take $B \in \mathcal{B}(\RR^m)$. Then, by definition of Radon-Nikodym derivative, we have
\[
\int 1_{B}(z) \cdot f(x,z) \, P_X \otimes P_Z (dx,dz) = \int  1_{B}(z) \cdot f(x,z) r(x,z) \, P_{(X,Z)} (dx,dz)
\]
and hence
\[
\begin{aligned}
&\int_B  \Big( \int f(x,z) \, P_X(dx) - \int  f(x,z)  P_{X|Z=z} (dx) \Big) P_Z(dz) \\
&=\int_B \Big( \int f(x,z) (r(x,z)-1)  P_{X|Z=z}(dx) \Big) P_Z(dz).
\end{aligned}
\]

Since $B \in \mathcal{B}(\RR^m)$ is arbitrary, we conclude that for $P_Z$-almost all $z$
\[
\int f(x,z) \, P_X(dx) - \int  f(x,z)  P_{X|Z=z} (dx) = \int f(x,z) (r(x,z)-1)  P_{X|Z=z}(dx).
\]
This implies \eqref{cond_marg_meas_bound}.
\end{proof}




\begin{definition}[\bf Wasserstein]
The Wasserstein distance $W_1$ on $\mathcal{P}_1(\RR^k)$ is given by \cite{Kantorovich1958}
\[
W_1(\mu,\nu)= \sup \Big\{ \int \psi(x) [\mu-\nu](dx), \quad \psi \in Lip_1(\RR^k)=\big\{u:|u(x)-u(x')|\leq |x-x'| \big\} \Big\}.
\]
\end{definition}

\begin{lemma}[\bf Wasserstein bound]\label{app::lmm::W1_bound}
Let $\mu,\nu \in \mathscr{P}_1(\RR^k)$. Suppose $\mu \ll \nu$. Then 
\begin{equation}\label{W1_bound}
W_1(\mu, \nu) \leq \int |x| \cdot |r(x)-1| \, \nu(dx) < \infty, \quad r:=\frac{d \mu}{d \nu}.
\end{equation}
\end{lemma}
\begin{proof}
Take $\psi \in Lip_1(\RR^{k})$. Then, by definition of the Radon-Nikodym derivative, we have 
\[
\int \psi(x) \, \mu (dx) - \int \psi(x) \, \nu (dx) = \int (\psi(x)-\psi(0))(r-1) \, \nu(dx)
\]
and hence
\[
W_1(\mu,\nu) = \sup \Big\{ \int (\psi(x)-\psi(0))(r-1) \, \nu(dx), \,\, \psi \in Lip_1(\RR^k) \Big\}.
\]
Since $\varphi \in Lip_1(\RR^k)$, $|\psi(x)-\psi(0)| \leq |x|$,
which implies \eqref{W1_bound}.
\end{proof}

\subsubsection{Proof of Lemma \ref{lmm::_marg_value_bound_px} } \label{app::lmm::_marg_value_bound_px}

\begin{proof}
Since $\intP_X \ll P_X$, for each $S \subseteq N$ we have $P_{X_S} \otimes P_{X_{-S}} \ll P_X$ and hence, by Corollary \ref{app::lmm::rad_nik_prob_meas},  
the Radon-Nikodym derivative exists and satisfies $0 \leq r_S:=\frac{d P_{X_{S}} \otimes P_{X_{-S}}}{d P_X} \in L^1(P_X)$. Then for any $A \in \mathcal{B}(\RR^n)$ we have
\[
\int_A r(x) \, P_X(dx) = \intP_X(A)= \frac{1}{2^n} \sum_{S \subseteq N} P_{X_S} \otimes P_{X_{-S}}(A) = \frac{1}{2^n} \sum_{S \subseteq N} \int_A r_S(x) \, P_X(dx).
\]
Since $A \in \mathcal{B}(\RR^n)$ is arbitrary, we conclude $r=\frac{1}{2^n}\sum_{S \subseteq N} r_S$. By Corollary \ref{app::lmm::rad_nik_prob_meas}, we have $\|r_S\|_{L^1(P_X)}=1$.
\end{proof}

\subsection{Properties of conditional and marginal game operators}\label{app::gameops}

\subsubsection{Proof of Lemma \ref{lmm::marg_game_wellposed}} \label{app::lmm::marg_game_wellposed}
\begin{proof}
Suppose that the map $[f] \in H_X \mapsto \vme(S;f,X) \in L^2(\P)$ is well-defined for every $S \subseteq N$. Suppose that $\intP_X \not \ll P_X$. Then there exists $A \subset \mathcal{B}(\RR^n)$ and $S \subset N$ such that $P_X(A)=0$ and $P_{X_S}\otimes P_{X_{-S}}(A)>0.$ Set $f_*(x)=1_A(x)$. Since $\|f_*\|^2_{H_X}=P_X(A)=0$, we conclude $f \in [0]_{H_X}$.  Hence $\vme(S;f_*,X)=\vme(S;0,X)$ $\PP$-almost surely. This however leads to a contradiction because
\[
\E[\vme(S;f_*,X)] = \int \int 1_A(x) \, P_{X_S} (dx_S) P_{X_{-S}} dx_{-S}) = P_{X_S} \otimes P_{X_{-S}}(A) > 0.
\]

Suppose that $\intP_X \ll P_X$. Any $f \in [0]_{H_X}$ is $P_X$-almost surely zero; it is thus 
almost surely zero with respect to $\tilde{P}_X$, in particular with respect to 
any probability measure $P_{X_S} \otimes P_{X_{-S}}$ where $S\subseteq N$. This implies that $\vme(S;f,X) \in L^2(\P)$ is zero:
$$
\E\big[|\vpdp(S;X;f)|\big]= \int |f(x_S,x_{-S})|\,[P_{X_S} \otimes P_{X_{-S}}](d x_S, dx_{-S})=0.
$$
\end{proof}

\subsubsection{Proof of Lemma \ref{lmm::cond_marg_value_bound}} \label{app::lmm::cond_marg_value_bound}

\begin{proof}

Take $f \in L^2_{r^2}(P_X)$. Take $S \subset N$. Then, by Lemma \ref{lmm::_marg_value_bound_px} and Lemma \ref{app::lmm::cond_marg_rel}, we have $f \in L^2_{1+r^2_S}(P_X)$ and \[
\begin{aligned}
& \E\big[ \vce(S;X,f) - \vme(S;X,f) \big]^2 = \E_{x_S \sim P_{X_S}}\big[ \big(\E[f(x_S,X_{-S})|X_S=x_S] - \E[f(x_S,X_{-S})]\big)^2 \big] \\
&= \int \Big( \int f(x_S,x_{-S}) P_{X_{S}|X_{-S}=x_{-S}}(dx_{-S}) - \int f(x_S, x_{-S}) P_{X_{-S}}(dx_{-S})\Big)^2 P_{X_S}(dx_S)\\
& \leq \| (r_S-1) \cdot f \|^2_{L^2(P_X)}.
\end{aligned}
\]
This proves $(i)$. The item $(ii)$ follows directly from $(i)$ and the  representation \eqref{lingameform} of $h$.
\end{proof}

\subsubsection{Proof of Proposition \ref{prop::cond_marg_value_bound}} \label{app::prop::cond_marg_value_bound}

\begin{proof}
Let $r \in L^{\infty}(P_X)$. Then by Proposition \ref{prop::margoperatorwellpos} we have 
$H_X \cong (L^2(\intP_X),\|\cdot\|_{L^2(P_X)}) \subseteq L^2(P_X)$. Take $f \in L^2(P_X)$. Then, by the definition of Radon-Nikodym derivative, we obtain 
\[
\int |f(x)|^2 \intP_X(dx) = \int r(x)|f(x)|^2 \, P_X(dx) \leq \|r\|_{L^{\infty}(P_X)} \int |f(x)|^2 \, P_X(dx) < \infty, 
\]
and hence $f \in H_X$. This proves that $H_X=L^2(P_X)$. 

The remaining part of the statement follows  from Lemma \ref{lmm::_marg_value_bound_px} and Lemma \ref{lmm::cond_marg_value_bound}$(ii)$.
\end{proof}

\subsubsection{Proof of Theorem \ref{prop::condoperator}  and related corollaries} \label{app::prop::condoperator_proof}

\begin{proof}
The linearity of the operator is a consequence of the linearity of the expected value. To estimate the norm, observe that
\[
\begin{aligned}
\E\big[\E[Z|X_{S\cup \{i\}}] - \E[Z|X_{S}]\big]^2 &= \E\big[\E[Z|X_{S\cup \{i\}}] - \E[\E[Z|X_{S\cup \{i\}}]|X_S]\big]^2 \\
&\leq Var( \E[Z|X_{S \cup \{i\}}]) \leq \|Z\|^2_{L^2(\PP)}.
\end{aligned}
\]
Hence $\|\oper_i^{\CE}[Z]\|_{L^2(\PP)} \leq \sum_{S \subseteq N: i \in N}|w(S,n)|\cdot\|Z\|_{L^2(\PP)}$ which gives the estimate of the operator norm \eqref{condoperatorstab}.

Next, note that the  operator $\oper_i^{\CE}$ can be expressed as
\begin{equation}\label{projform}
\oper_i^{\CE} = \sum_{S \subseteq N\setminus\{i\}} w(S,n) \big( P_{S\cup\{i\}} - P_{S} \big),
\end{equation}
where $P_S$ is the orthogonal projection operator with values in $L^2(\Omega,\sigma(X_{S}),\P)$ defined by $P_S[Z]:=\E[Z|X_S]$. Since $P_S$ and $I-P_S$ project on orthogonal spaces, we have
\[
\langle P_S[Z_1], Z_2 \rangle = \langle P_S[Z_1], P_S [Z_2] \rangle = \langle Z_1, P_S[Z_2] \rangle \quad \text{for all} \quad Z_1,Z_2 \in L^2(\Omega,\mathcal{F},\P),
\]
and hence, using \eqref{projform}, we conclude that $\oper_i^{\CE}={\oper_i^{\CE}}^*$. This proves $(i)$.

Suppose that $w(S,n) \geq 0$ for all $S \subseteq N$, and $X_i$ is independent of $X_{N \setminus \{i\}}$. Then
\[
\|\oper_i^{\CE}[X_i - \E[X_i]]\|_{L^2(\PP)} = \Big(\sum_{S \subseteq N\setminus\{i\}}w(S,n)\Big) \|X_i - \E[X_i]\|_{L^2(\PP)} \geq \|\oper_i^{\CE}\| \cdot \|X_i - \E[X_i]\|_{L^2(\PP)} 
\]
which implies $(ii)$. 

The first inclusion in $(iii)$ is obvious and the second one follows from \eqref{condoperator}.
Part $(iv)$ follows from $(iii)$ because ${\rm Ker}(\oper^{\CE})=\bigcap_{i=1}^n{\rm Ker}(\oper_i^{\CE})$ contains subspaces 
\begin{equation*}
\begin{split}
&\bigcap_{i=1}^n\{Z\in L^2(\PP):\, \E[Z|X_{S \cup \{i\}}] = \E[Z|X_{S}], \,\, S \subseteq N \setminus \{i\} \}\supseteq \\
&\{Z\in L^2(\PP): \E[Z|X]=const \,\, \text{$\P$-a.s.} \} \supseteq\{Z\in L^2(\PP):\, Z \indep X\}. 
\end{split}    
\end{equation*}
Suppose next the \hyperref[axiom:TPG]{(TPG)} property, i.e. \eqref{growth}, holds. Then for any $Z \in {\rm Ker}(\oper^{\CE})$:
\[
0=\sum_{i=1}^n |\oper_i^{\CE}[Z - \E[Z]]|\geq g(|\E[Z|X]-\E[Z]|,|N|) \geq 0.
\]
Since $g(a,n)=0$ if and only if $a=0$, we obtain $\E[Z|X]=\E[Z]$ $\PP$-a.s.  This concludes the proof of $(v)$. The property $(vi)$ follows directly from the efficiency property \hyperref[axiom:EP]{(EP)}.

\end{proof}

\begin{proof}[Proof of Corollary \ref{corr::cond_operator_cons}]
Follows immediately from that fact that $\epsilon\in {\rm Ker}(\oper^{\CE})$ due to Theorem \ref{prop::condoperator}$(iv)$.
\end{proof}

\begin{proof}[Proof of Corollary \ref{corr::cond_operator_cont}]\label{app::corr::cond_operator_cont}
Parts $(i)$ and $(iii)$ of the corollary follow immediately from $\bar{\oper}^{\CE}[f]=\oper^{\CE}[f(X);h,X]$, and parts $(i)$, $(iv)$ and $(v)$ of Theorem \ref{prop::condoperator}. 
Part $(ii)$ is more subtle: By Theorem \ref{prop::condoperator}$(vi)$, the efficiency property puts a constraint on
$\bar{\oper}^{\CE}[f]=\big(\bar{\oper}_1^{\CE}[f],\dots,\bar{\oper}_n^{\CE}[f]\big)$; its components should add up to $f(X)-\E[f(X)]$. As we shall see, this constraint allows for a better estimation of the norm of this vector. There is no loss of generality in assuming that $\E[f(X)]=0$ since constant functions lie in the kernel. Now it suffices to establish
$\|\bar{\oper}^{\CE}[f]\|_{L^{2}(\P)^n}\leq \|f\|_{L^2(P_X)}$.
Notice that
\begin{equation}\label{giant1}
\begin{split}
\|f\|^2_{L^2(P_X)}
&=\|f(X)\|^2_{L^2(\PP)}=\langle f(X),\sum_{i=1}^n\bar{\oper}_i^{\CE}[f]\rangle_{L^2(\PP)}
=\sum_{i=1}^n \langle f(X),\bar{\oper}_i^{\CE}[f]\rangle_{L^2(\PP)}\\
&=\sum_{i=1}^n\Big(\sum_{S \subseteq N\setminus\{i\}} w(S,n)\langle f(X),\E[f(X)|X_{S \cup \{i\}}]-\E[f(X)|X_S]\rangle_{L^2(\PP)}\Big)\\
&=\sum_{i=1}^n\Big(\sum_{S \subseteq N\setminus\{i\}} w(S,n)\|\E[f(X)|X_{S \cup \{i\}}]-\E[f(X)|X_S]\|^2_{L^2(\PP)}\Big).
\end{split}    
\end{equation}
The last equality is based on interpreting conditional expectation as orthogonal projections which indicates that the inner products
$\langle f(X)-\E[f(X)|X_S],\E[f(X)|X_S]\rangle_{L^2(\PP)}$, $\langle f(X)-\E[f(X)|X_{S \cup \{i\}}],\E[f(X)|X_{S \cup \{i\}}]\rangle_{L^2(\PP)}$ and
$\langle\E[f(X)|X_{S \cup \{i\}}]-\E[f(X)|X_S],\E[f(X)|X_S]\rangle_{L^2(\PP)}$
are all zero. The number $\|f\|^2_{L^2(P_X)}$, as described above, is not smaller than $\|\bar{\oper}^{\CE}[f]\|^2_{L^{2}(\P)^n}$ due to:
\begin{equation}\label{giant2}
\begin{split}
&\|\bar{\oper}^{\CE}[f]\|^2_{L^{2}(\P)^n}
=\sum_{i=1}^n\|\bar{\oper}_i^{\CE}[f]\|^2_{L^{2}(\P)}
=\sum_{i=1}^n \big\|\sum_{S \subseteq N\setminus\{i\}} w(S,n) 
\big[ \E[f(X)|X_{S\cup\{i\}}] - \E[f(X)|X_{S}] \big]\big\|^2_{L^{2}(\P)}\\
&\quad\leq \sum_{i=1}^n\bigg(\Big(\sum_{S \subseteq N\setminus\{i\}} w(S,n)\Big)
\Big(\sum_{S \subseteq N\setminus\{i\}} w(S,n)\|\E[f(X)|X_{S\cup\{i\}}] - \E[f(X)|X_{S}]\|^2_{L^{2}(\P)}\Big)\bigg)\\
&\quad= \sum_{i=1}^n\Big(\sum_{S \subseteq N\setminus\{i\}} w(S,n)\|\E[f(X)|X_{S \cup \{i\}}]-\E[f(X)|X_S]\|^2_{L^2(\PP)}\Big);
\end{split}    
\end{equation}
where on the second line we used Cauchy-Schwarz along with $w(S,n)\geq 0$ while the third line relies on  
$\sum_{S \subseteq N\setminus\{i\}} w(S,n)=1$ which follows from the efficiency property.
\end{proof}

\begin{proof}[Proof of Lemma \ref{lemm::cond_operator_split}]\label{app::cond_operator_split}
One just needs to examine the part of the proof of Corollary \ref{corr::cond_operator_cont}$(ii)$ which established 
\eqref{energyineq} in the case of $f_0= \E[f(X)]=0$ (and hence generally). That argument was based on expanding 
$\|f\|^2_{L^2(P_X)}=\sum_{i=1}^n\langle f(X),\bar{\oper}_i^{\CE}[f]\rangle_{L^2(\PP)}$
in \eqref{giant1}, and inequalities $\|\bar{\oper}_i^{\CE}[f]\|^2_{L^{2}(\P)}\leq\langle f(X),\bar{\oper}_i^{\CE}[f]\rangle_{L^2(\PP)}$
in \eqref{giant2}. Thus the equality in 
$$\|\bar{\oper}^{\CE}[f]\|^2_{L^2(\PP)^n}=\sum_{i=1}^n\|\bar{\oper}_i^{\CE}[f]\|^2_{L^{2}(\P)}\leq 
\sum_{i=1}^n\langle f(X),\bar{\oper}_i^{\CE}[f]\rangle_{L^2(\PP)}=\|f\|^2_{L^2(P_X)}$$
is achieved if and only if $\|\bar{\oper}_i^{\CE}[f]\|^2_{L^{2}(\P)}=\langle f(X),\bar{\oper}_i^{\CE}[f]\rangle_{L^2(\PP)}$
for all $i\in N$. The general case (when $f_0 \neq 0$) follows from applying this result to $f-f_0$ and using the fact that $\langle f_0 , \bar{\oper}^{\CE}_i[f;h,X]\rangle_{L^2(\P)} = f_0 \cdot  \E[\bar{\oper}^{\CE}_i[f;h,X]]=0$ if $h$ has the form \eqref{lingameform}, and the fact that constants are in the kernel of $\bar{\oper}^{\CE}$.
\end{proof}

\subsubsection{Proof of Theorem \ref{prop::margoperator} } \label{app::prop::margoperator_proof}

\begin{proof}
If $f=f_*$ $\tP_X$-a.s., then for any $S\subseteq N$
$$
v^{\ME}(S\cup\{i\};X,f)=v^{\ME}(S\cup\{i\};X,f_*)\,\, \text{$\P$-a.s.}
$$
which implies, in view of \eqref{lingameform}, that $\bar{\oper}^{\ME}$ is well-defined on $L^2(\intP_X)$.

Now let $\bar{f}_S(x_S)=\E[f(x_S,X_{-S})]$. Then
\[
\begin{aligned}
\| \bar{\oper}^{\ME}_i[f] \|_{L^2(\PP)} 
&\leq \sum_{S \subseteq N \setminus \{i\}} |w(S,n)| \cdot\| \bar{f}_{S \cup \{i\}}(X_{S \cup \{i\}}) - \bar{f}_{S}(X_{S}) \|_{L^2(\PP)} \\
&\leq \Big(\sum_{S \subseteq N \setminus \{i\}}w^2(S,n)\Big)^{\frac{1}{2}} 
\Big(\sum_{S \subseteq N \setminus \{i\}}\|\bar{f}_{S \cup \{i\}}(X_{S \cup \{i\}}) - \bar{f}_{S}(X_{S}) \|^2_{L^2(\PP)}\Big)^{\frac{1}{2}}
\\
&= \Big(\sum_{S \subseteq N \setminus \{i\}}w^2(S,n)\Big)^{\frac{1}{2}}
\Big(2\sum_{S \subseteq N}\|f\|^2_{L^2(P_{X_S} \otimes P_{X_{-S}})}\Big)^{\frac{1}{2}}
\\
&=2^{\frac{n+1}{2}}\Big(\sum_{S \subseteq N \setminus \{i\}} w^2(S,n)\Big)^{\frac{1}{2}}\cdot \|f\|_{L^2(\intP_X)}.
\end{aligned}
\]
which establishes  $(i)$.

We next prove $(ii)$. Suppose $f=c$ $\tP_X$-a.s. for some constant $c \in \RR$. Let $f_*(x):=c$ for each $x \in \RR^n$. Note that for any $S \subseteq N$, including $S=\varnothing$, we have
\[
\vpdp(S \cup \{i\};X,f_*)-\vpdp(S;X,f_*)=0 \quad \text{$\PP$-a.s.},
\]
and from $\eqref{lingameform}$ it follows that $\bar{\oper}^{\ME}[f_*]=0 \in L^2(\PP)$.
Note that $f=f_*$ $\tP_X$-a.s. and hence, using the fact that $\bar{\oper}^{\ME}$ is well-defined, we conclude that 
 $\bar{\oper}^{\ME}[f]=0 \in L^2(\PP)$ which establishes $(ii)$.

Suppose that $f \in {\rm Ker}(\bar{\oper}^{\ME})$ and \eqref{growth} holds. Then
\[
0 = \sum_{i=1}^n |\bar{\oper}_i^{\ME}[f - \E[f(X)] | \geq g(|f(X)-\E[f(X)]|,n) \geq 0,
\]
and hence $f=\E[f(X)]$ $P_X$-a.s., which proves $(iii)$.

Suppose that $\tP_X \ll P_X$ and \eqref{growth} holds. Then for any constant $c \in \RR$, $f=c$ $P_X$-a.s. implies $f=c$ $\tP_X$-a.s. and hence, using $(ii)$ and $(iii)$, we obtain $(iv)$.

Next, if $f(x)=f(x_{N \setminus\{i\}})$, then $\bar{f}_{S \cup \{i\}}(X_{S \cup \{i\}})=\bar{f}_{S}(X_{S})$ and hence $\bar{\oper}_i^{\ME}[f]=0 \in L^2(\PP)$, which gives $(v)$ and $(vi)$. 

Finally, property $(vii)$ follows directly from the efficiency property  \hyperref[axiom:EP]{(EP)}.
\end{proof}

\subsubsection{Proof of Lemma \ref{lemm::discont} } \label{app::lemm::discont}
\begin{proof} 
Let us first assume that $\E[f_1(X)-f_2(X)]=0$. Then, without loss of generality, we can assume that $\E[\eta_i(X_i)]=0$; otherwise,
we can define $\tilde{\eta}_i:=\eta_i-\E[\eta_i(X_i)]$ and write $f_1-f_2=\sum_{i=1}^n \tilde{\eta}_i(x_i)$. 

 Let $\gamma_i:=\sum_{S \subseteq N\setminus\{i\}}w(S,n)$, $i \in N$. For each $i \in N$, we have
\[
\|\bar{\oper}^{\ME}_i[f_1-f_2;h,X]\|_{L^2(\PP)}
=|\gamma_i|\|\eta_i\|_{L^2(P_{X_i})}=\frac{|\gamma_i|}{2^n}\sum_{S \subseteq N}\|\eta_i\|_{L^2( P_{X_S} \otimes P_{X_{-S}} )}=|\gamma_i|\|\eta_i\|_{L^2(\intP_{X})}. 
\]
In light of \hyperref[axiom:NVA]{(NVA)}, $\gamma_i \neq 0, i\in N$. Hence 
\[
\|\bar{\oper}^{\ME}[f_1]-\bar{\oper}^{\ME}[f_2]\|^2_{L^2(\PP)^n} = \sum_{i=1}^n \gamma^2_i\|\eta_i\|^2_{L^2(\intP_X)} \geq C \|f_1-f_2\|^2_{L^2(\intP_X)}
\]
for some $C>0$ that depends on $n$ and $\{\gamma_i\}_{i\in N}$ only.

Suppose now $\E[f_1(X)-f_2(X)]=c$. Then, using the above inequality and Theorem \ref{prop::margoperator}$(iii)$,
we obtain
\[
\begin{aligned}
\|\bar{\oper}^{\ME}[f_1]-\bar{\oper}^{\ME}[f_2]\|_{L^2(\PP)^n} &= \|\bar{\oper}^{\ME}[f_1-f_2 - c]\|_{L^2(\PP)^n} \\
&\geq \sqrt{C}\|f_1-f_2-c\|_{L^2(\intP_X)} \geq \sqrt{C} \big(\|f_1-f_2\|_{L^2(\intP_X)} - |c| \big)
\end{aligned}
\]
which proves the statement.
\end{proof}

\subsubsection{Proof of Theorem \ref{prop::margoperatorwellpos} 
(well-posedness) }  \label{app::prop::margoperatorwellpos}

\begin{proof}
Given the definition of $\tP_X$ in \eqref{Ptilde}, one has $P_X(A)\leq 2^n\cdot\tP_X(A)$ for any Borel subset of $\Bbb{R}^n$. In particular, $P_X\ll\tP_X$ and there exists a well-defined bounded linear map $\tilde{I}:L^2(\tP_X)\rightarrow L^2(P_X)$ that takes the $L^2(\tP_X)$-class of a function to its $L^2(P_X)$-class; notice that $\tilde{I}$ is not necessarily injective or surjective in general. Observe that $H_X$ is the image of $\tilde{I}$; and recall that ${\rm{Im}}(\tilde{I})$ can be identified with $L^2(\tP_X)/{\rm{Ker}}(\tilde{I})$ as vector spaces. 
Thus, the well-defined operator $(\bar{\oper}^{\ME},L^2(\tP_X))$ 
can be pushforwarded via $\tilde{I}$ to  a well-defined operator  $(\bar{\oper}^{\ME},H_X)$ 
if and only if 
$$H_X^0={\rm{Ker}}(\tilde{I})=\big\{f \in L^2(\tP_X): \|f\|_{L^2(P_X)}=0 \big\}\subseteq {\rm Ker}(\bar{\oper}^{\ME},L^2(\tP_X)).$$  
Part $(ii)$ describes the situation where $H_X^0$ is non-trivial while part $(i)$ addresses the case where $\tilde{I}$ is an embedding onto the subspace $H_X$. The latter happens precisely when $\tP_X\ll P_X$.  This finishes the proof.
\end{proof}

\subsubsection{Proof of Lemma \ref{lmm::marggameunbound} (game boundedness)}  \label{app::lmm::marggameunbound}

    \begin{proof}
    
By Lemma \ref{app::lmm::rad_nik_equiv} the condition \eqref{boundradon} is equivalent to $r=\frac{d \intP_X }{dP_X} \in L^{\infty}(P_X)$. Then for any $S \subseteq N$
\begin{equation*}
\begin{aligned}
\|v(S;X,f)\|^2_{L^2(\PP)} &\leq \int |f(x)|^2 \, [P_{X_S} \otimes P_{X_{-S}}](dx_S,dx_{-S}) \\
&=\int r_S |f(x)|^2 P_X(dx) \leq  2^n \|r\|_{L^{\infty}(P_X)} \int |f(x)|^2 \, P_X(dx)
\end{aligned}
\end{equation*}
where $r_S$ is given by \eqref{radon_decomp}. This proves $(i)$.

Let $\varnothing \neq S \subset N$. First, suppose that the condition \eqref{blowupcondgame} holds. Suppose that $A \in \mathcal{B}(\RR^{|S|})$, $B \in \mathcal{B}(\RR^{|-S|})$, and $P_X(A \times B)>0$. Set $f(x)=1_{A}(x_S) \cdot 1_B(x_{-S})$. Then
\[
 \begin{aligned}
 &\frac{\E[\vme(S;X,f)]^2}{\|f\|^2_{L^2}(P_X)} = \frac{1}{P_X( A \times B)}\int \Big(\int 1_{A}(x_S) \cdot 1_B(x_{-S}) \, P_{X_{-S}}(dx_{-S}) \Big)^2 P_{X_S}(dx_S) \\
 &=\frac{P_{X_S}(A)\cdot (P_{X_{-S}}(B))^2}{P_X(A \times B)}=
 \frac{[P_{X_S} \otimes P_{X_{-S}}](A \times B)}{P_X(A \times B)} \cdot P_{X_{-S}}(B).
\end{aligned}
\]
Then \eqref{blowupcondgame} and the relationship above imply that the map $f \in H_X \mapsto \vme(S;X,f) \in L^2(\PP)$ is unbounded. This proves the first part of $(ii)$.

To prove the second part, suppose the map $f \in H_X  \mapsto \vme(S;X,f) \in L^2(\P)$ is bounded. Then there exists $c_*>0$ such that for any $f \in H_X$ we have
\[
\int \Big( \int f(x_S,x_{-S}) P_{X_{-S}} (dx_{-S})\Big)^2 P_{X_{S}}(dx_{S}) \leq c_* \int f^2(x) P_X(dx).
\]
Let $A \in \mathcal{B}(\R^{|S|})$. Then, by above and the definition of $r_S \geq 0$, we obtain
\[
\int_A \rho^2(x_S) P_{X_S}(dx_S) \! = \! \int_A \! \Big( \int r_S^{1/2}(x_S,x_{-S}) P_{X_{-S}} (dx_{-S})\Big)^2 P_{X_{S}}(dx_S) \leq  c_* \! \int 1_{A}(x_S) P_{X_S}(dx_S).
\]
Since $A$ was arbitrary, $0\leq \rho^2 \leq c_*$ $P_{X_S}$-almost surely. This proves the second part of $(ii)$.
\end{proof}

\subsubsection{Proof of Theorem \ref{thm::margoperatorunbound} (boundedness)}  \label{app::thm::margoperatorunbound}

\begin{proof}

By Lemma \ref{app::lmm::rad_nik_equiv} the condition \eqref{boundradon} is equivalent to $\intP_X \ll P_X$ with $\frac{d \intP_X }{dP_X} \in L^{\infty}(P_X)$. Hence by Proposition \eqref{prop::cond_marg_value_bound} we have $H_X = L^2(P_X)$ and therefore for every $f \in L^2(P_X)$ we have
\[
\| \bar{\oper}^{\ME}_i[f; h,X] \|_{L^2(\PP)} \leq \| \bar{\oper}^{\ME}_i[f; h,X]  -  \bar{\oper}^{\CE}_i[f; h,X] \|_{L^2(\PP)} + \| \bar{\oper}^{\CE}_i[f; h,X] \|_{L^2(\PP)}=:I_1 + I_2.
\]
Combining the bound for $I_1$ given by Proposition \ref{prop::cond_marg_value_bound} and the bound for $I_2$ obtained from  Theorem \ref{prop::condoperator}$(i)$ together with the definition \ref{def::condoperator}, we obtain \ref{marg_value_bound}. This proves $(i)$.

Suppose next \eqref{blowupcond} holds for some distinct $i,j\in N$. 
Let $w(S,n)$ be the weights as in \eqref{lingameform}. Define 
\begin{equation}\label{cocarcoef}
w_{\{i,j\}}:= \sum_{S\subset N: i \notin S, j \notin S} w(S,n), \quad w_{\{i\}}:=\sum_{S\subset N: i \notin S, j \in S} w(S,n), \quad w_{\{j\}}:=\sum_{S\subset N: i\in S, j \notin S} w(S,n).
\end{equation}

Suppose \hyperref[axiom:NN]{(NN)} holds, that is, $w(S,n)\geq0$ for $S \subset N$. Suppose also \eqref{weightsblowup} holds for the indices $i,j$. Then 

\begin{equation}\label{coeflingamehyp}
\underline{w}_{i,j} := \min \{ |w_{\{i\}}|, |w_{\{j\}}|, |w_{\{i,j\}}| \} > 0.
\end{equation}
For instance, for the Shapley value, one always has 
$w_{\{i\}}=w_{\{i,j\}}=\frac{1}{2}$ (which allows one to simplify some of the computations below; cf. Remark \ref{BCisSharp}). 

First, consider a special case $n=2$. In that case, we have $i=1,j=2$. Let $R=A \times B \subseteq \RR^2$  where $A,B$ are Borel sets. Denote $f_R:=\1_R(x_1,x_2)=\1_A(x_1) \1_B(x_2)$. Then, by \eqref{lingameform}  for $i=1$, we obtain
\[
\begin{aligned}
\bar{\oper}^{\ME}_1[f_R] = h_1[\vpdp(\cdot;X,f_R)]&=w(\varnothing) \big[ \vpdp(\{1\};X,\1_R) -  \vpdp(\varnothing;X,\1_R) \big]\\
&\quad  + w(\{2\}) \big[ \vpdp(\{1,2\};X,\1_R) -  \vpdp(\{2\};X,\1_R) \big]\\
&=w(\varnothing) \big( \1_A(X_1) P_{X_2}(B) -  P_X(R)) \big)\\
&\quad  + w(\{2\}) \big( \1_R(X_1,X_2) -  \1_B(X_2) P_{X_1}(A) \big)\\
\end{aligned}
\]
where we suppress the dependence on $n$ in the coefficients $w(S,n)$. 

Let us denote $p:=P_X(R)$, $\alpha:=P_{X_1}(A)$, and $\beta := P_{X_2}(B)$. Then
\[
\begin{aligned}
 (\bar{\oper}^{\ME}_1[f_R])^2 &  
=w^2(\varnothing) \big(  \1_A(X_1) \beta^2 + p^2 - \1_A(X_1) 2 \beta p \big) \\
& \quad + w^2(\{2\}) \big( \1_R(X_1,X_2) +  \1_B(X_2) \alpha^2   -  \1_R(X_1,X_2) 2\alpha  \big) \\
& \quad + 2w(\varnothing)w(\{2\})( \1_R(X_1,X_2) \beta - \1_R(X_1,X_2) \alpha \beta  - p \1_R(X_1,X_2) + \1_B(X_2) \alpha p). 
\end{aligned}
\]
Then, taking the expectation we obtain
\[
\begin{aligned}
& \E[(\bar{\oper}^{\ME}_1[f_R])^2]  \\
&=w^2(\varnothing) \big(  \alpha \beta^2 + p (p- 2 \alpha \beta)  \big) \\
& \quad + w^2(\{2\}) \big( p(1-2\alpha) +  \beta \alpha^2   \big) \\
& \quad + 2w(\varnothing)w(\{2\}) p ( \beta - p )\\
& \geq  w^2(\varnothing) \alpha \beta^2 + w^2(\{2\}) \beta \alpha^2 - 2p (w^2(\varnothing)+w^2(\{2\})) - 2p|w(\varnothing)w(\{2\})|.
\end{aligned}
\]

Note that $\|f_R\|^2_{L^2(P_X)}=P_X(R)=p$ and hence, assuming that $p>0$, we conclude
\begin{equation*}
\begin{aligned}
& \frac{1}{\|f_R\|^2_{L^2(P_X)}}\E[(\bar{\oper}^{\ME}_1[f_R])^2]  \geq  \underline{w} ^2 \bigg( \frac{\alpha \beta^2 + \beta \alpha^2}{p} \bigg) - 6 \overline{w}^2 
\end{aligned}
\end{equation*}
where $\underline{w}:=\min_{S \subset N} |w(S)|$ and $\overline{w}:=\max_{S \subset N} |w(S)|$. 
Now if \eqref{blowupcond} and \eqref{weightsblowup} hold for $i=1$ and $j=2$, then \eqref{coeflingamehyp} holds for $i=1$ and $j=2$ and hence $\underline{w}^2>0$ in the inequality above. Then the right-hand side of the inequality is unbounded and, hence, $(\bar{\oper}_1^{\ME},H_X)$ is unbounded.

Performing similar calculations for $\bar{\oper}_2^{\ME}$, we come to the  conclusion that if \eqref{blowupcond} and \eqref{weightsblowup} hold for $i=1$ and $j=2$, then $(\bar{\oper}_2^{\ME},H_X)$ is unbounded. This proves  part $(iii)$ for $n=2$.

Next, consider a general case of $n \geq 2$.  Suppose \eqref{blowupcond} holds with for some distinct $i,j \in \{1,2,\dots,n\}$. Let $R=A \times B \subseteq \RR^2$, where $A,B$ are Borel sets. Define a function of $n$ variables as follows $f_R(x_1,x_2,\dots,x_n):=\1_{R}(x_i,x_j)$. By construction, $f_R$ does not depend explicitly on $x_k$ for each $k\in N\setminus \{i,j\}$, and hence by Theorem \ref{prop::margoperator}$(vi)$, 
 $T:=\{i,j\}$ is a carrier for  $\vpdp(\cdot; X, f_R)$. 
Hence, by \eqref{lingameform}, we obtain
\[
\begin{aligned}
\bar{\oper}_i^{\ME}[f_R] &= w_{\{i,j\}} \Big(\vpdp(\{i\};X,\1_R)-\vpdp(\varnothing;X,\1_R) \Big)\\
& \quad + w_{\{i\}}\Big(\vpdp(\{i,j\};X,f_R)-\vpdp(\{j\};X,f_R)\Big) \\
\bar{\oper}_j^{\ME}[f_R] &= w_{\{i,j\}} \Big(\vpdp(\{j\};X,f_R)-\vpdp(\varnothing;X,f_R) \Big)\\
& \quad + w_{\{j\}}\Big(\vpdp(\{i,j\};X,f_R)-\vpdp(\{i\};X,f_R)\Big)
\end{aligned}
\]
where $w_{\{i\}},w_{\{j\}}$, and $w_{\{i,j\}}$ are defined in \eqref{cocarcoef}.

Note that for each $S \subseteq T=\{i,j\}$
\[
\vpdp(S;X,f_R)=\vpdp(S;(X_i,X_j),\1_R(x_i,x_j)).
\]
Then, denoting $\alpha:=P_{X_i}(A)$, $\beta:=P_{X_j}(B)$, $p:=P_{(X_i,X_j)}(R)$ and proceeding as in the case $n=2$, we obtain
 \begin{equation}\label{lowboundngen}
\begin{aligned}
& \frac{\E[(\bar{\oper}^{\ME}_i[f_R])^2]}{\|f_R\|^2_{L^2(P_X)}},\frac{\E[(\bar{\oper}^{\ME}_j[f_R])^2]}{\|f_R\|^2_{L^2(P_X)}} \geq 
 \underline{w}_{i,j}^2 \bigg( \frac{\alpha \beta^2 + \beta \alpha^2}{p} \bigg) - 6 \overline{w}^2_{i,j},
\end{aligned}
\end{equation}
where $\underline{w}_{i,j}$ is defined in \eqref{coeflingamehyp}, 
and $\overline{w}_{i,j}:=\max \{ |w_{\{i\}}|,|w_{\{j\}}|,|w_{\{i,j\}}|\}$, 
where we have assumed that $\|f_R\|^2_{L^2(P_X)}=P_{(X_i,X_j)}(R)=p>0$. 

Note that by \eqref{coeflingamehyp} the coefficient $\underline{w}_{i,j}>0$. Then,  \eqref{blowupcond} for the given distinct indices $i,j \in N$ together with \eqref{lowboundngen} imply that $\bar{\oper}_i^{\ME}$, $\bar{\oper}_j^{\ME}$, and $\bar{\oper}^{\ME}$ are unbounded on $H_X$. This proves $(iii)$.
\end{proof}

Theorem \ref{thm::margoperatorunbound} on the boundedness/unboundedness of marginal game operators can be extended to linear game values that are not in the form of \eqref{lingameform}, or more generally, to coalitional game values; cf. Proposition \ref{prop::coalqgameprop}$(iii)$.
\begin{proposition}\label{prop::coalvalblowup}
With the notation as before, denote the predictors by $X=(X_1,\dots,X_n)$. Take $\mathcal{P}$ to be a partition of the predictors. 
Let $g$ be a linear coalitional value and $\bar{g}$ an extension of it. Therefore, there are constants $\{\gamma(i,N,\mathcal{P},S)\}_{i \in N, S \subseteq N}$ such that  
\begin{equation*}
\bar{g}_i[N,v,\mathcal{P}]=\sum_{S \subseteq N} \gamma(i,N,\mathcal{P},S)v(S), \quad i \in N,
\end{equation*}
for any game $v$ (cooperative or non-cooperative); see Lemma \ref{lmm::lingamevalrepr}.  Suppose that \eqref{blowupcond} holds for distinct indices $i,j \in N$, and that for some $k\in N$
\begin{equation}\label{coeflingamehyp_gen}
\underline{\gamma}^{(k)}_{i,j}:=\min \{ |\gamma_{\{i\}}^{(k)}|, |\gamma_{\{k\}}^{(k)}|, |\gamma_{\{i,j\}}^{(k)}| \} > 0
\end{equation}
where
\begin{equation*}\label{cocarcoef_gen}
\begin{aligned}
&\gamma_{\{i\}}^{(k)}:= \sum_{S\subseteq N: i \in S, j \notin S} \gamma(k,N,\mathcal{P},S), \,\, 
\gamma_{\{j\}}^{(k)}:=\sum_{S\subseteq N: i \notin S, j \in S } \gamma(k,N,\mathcal{P},S), \,\, \\
&\gamma_{\{i,j\}}^{(k)}:=\sum_{S\subseteq N: i,j \in S} \gamma(k,N,\mathcal{P},S).
\end{aligned}
\end{equation*}
Suppose the coalitional marginal game operator $f \mapsto \bar{\oper}^{\ME}[f; g,X] := \bar{g}[N,\vpdp(\cdot;X,f),\mathcal{P}]$ is well-defined on $H_X$. Then $(\bar{\oper}_k^{\ME}[\cdot; g,X],H_X)$, and thus $(\bar{\oper}^{\ME}[\cdot; g,X],H_X)$,  are unbounded.
\end{proposition}

\begin{proof}
As in the proof of Theorem \ref{prop::margoperatorwellpos}, take $R$ to be a rectangle $A\times B$ and set $f_R(x_1,x_2,\dots,x_n):=\1_{R}(x_i,x_j)$. As before, we denote $P_{X_i}(A)$, $P_{X_j}(B)$ and $P_{(X_i,X_j)}(R)$ by 
$\alpha$, $\beta$ and $p$ respectively. Notice that for any $S\subseteq N$ one has  
$$
\vpdp(S;X,f_R)=\vpdp(S\cap\{i,j\};X,f_R),
$$
and this term becomes $p=\|f_R\|^2_{L^2(P_X)}$ if $S\cap\{i,j\}=\varnothing$. Define
$$
\hat{\gamma}^{(k)}_{i,j}:=\sum_{S\subseteq N: i,j \notin S}\gamma(k,N,\mathcal{P},S).
$$
We now have
\begin{equation*}
\begin{aligned}
&\bar{\oper}^{\ME}_k[f_R; g,X]-\hat{\gamma}^{(k)}_{i,j}\cdot\|f_R\|^2_{L^2(P_X)}\\
&=\gamma_{\{i\}}^{(k)}\cdot\vpdp(\{i\};X,f_R)+\gamma_{\{j\}}^{(k)}\cdot\vpdp(\{j\};X,f_R)+\gamma_{\{i,j\}}^{(k)}\cdot\vpdp(\{i,j\};X,f_R)\\
&=\gamma_{\{i\}}^{(k)}\cdot \1_A(X_i)P_{X_j}(B)+\gamma_{\{j\}}^{(k)}\cdot P_{X_i}(A)\1_B(X_j)+\gamma_{\{i,j\}}^{(k)}\cdot \1_R(X_i,X_j),
\end{aligned}
\end{equation*}
which implies 
\begin{equation*}
\begin{split}
\E\Big[\big|\bar{\oper}^{\ME}_k[f_R; g,X]-\hat{\gamma}^{(k)}_{i,j}\cdot\|f_R\|^2_{L^2(P_X)}\big|^2\Big]
&= \big(\gamma_{\{i\}}^{(k)}\big)^2\alpha\beta^2+\big(\gamma_{\{j\}}^{(k)}\big)^2\alpha^2\beta + \big(\gamma_{\{i,j\}}^{(k)}\big)^2 p\\
&\quad+2\gamma_{\{i\}}^{(k)}\gamma_{\{j\}}^{(k)}\alpha\beta p+2\gamma_{\{i\}}^{(k)}\gamma_{\{i,j\}}^{(k)}\beta p
+2\gamma_{\{j\}}^{(k)}\gamma_{\{i,j\}}^{(k)}\alpha p\\
&\geq \big({\underline{\gamma}^{(k)}_{i,j}}\big)^2(\alpha\beta^2+\alpha^2\beta+p)-6\big({\overline{\gamma}^{(k)}_{i,j}}\big)^2p,
\end{split}    
\end{equation*}
where $\overline{\gamma}^{(k)}_{i,j}:=\max \{ |\gamma_{\{i\}}^{(k)}|, |\gamma_{\{k\}}^{(k)}|, |\gamma_{\{i,j\}}^{(k)}| \}$.
We conclude that
\begin{equation*}
\begin{split}
\frac{\|\bar{\oper}^{\ME}_k[f_R; g,X]\|^2_{L^2(\PP)}}{\|f_R\|^2_{L^2(P_X)}}
&\geq 
\frac{1}{2}\cdot\frac{\E\Big[\big|\bar{\oper}^{\ME}_k[f_R; g,X]-\hat{\gamma}^{(k)}_{i,j}\cdot\|f_R\|^2_{L^2(P_X)}\big|^2\Big]
}{\|f_R\|^2_{L^2(P_X)}}-\big(\hat{\gamma}^{(k)}_{i,j}\big)^2p\\
&\geq \frac{1}{2}\big({\underline{\gamma}^{(k)}_{i,j}}\big)^2\big(\frac{\alpha^2\beta+\alpha\beta^2}{p}\big)
+\frac{1}{2}\Big(\big({\underline{\gamma}^{(k)}_{i,j}}\big)^2-6\big({\overline{\gamma}^{(k)}_{i,j}}\big)^2\Big)-\big(\hat{\gamma}^{(k)}_{i,j}\big)^2.
\end{split}
\end{equation*}
The terms appearing on the last line are all constants except $\frac{\alpha^2\beta+\alpha\beta^2}{p}$ which can become arbitrarily large according to \eqref{blowupcond}. Given that ${\underline{\gamma}^{(k)}_{i,j}}>0$ due to \eqref{coeflingamehyp_gen}, the same is true about   
$\frac{\|\bar{\oper}^{\ME}_k[f_R; g,X]\|^2_{L^2(\PP)}}{\|f_R\|^2_{L^2(P_X)}}$, hence the unboundedness of 
$f\mapsto \bar{\oper}^{\ME}_k[f; g,X]$.
\end{proof}

\subsection{On condition (UO)}\label{app::on_condition_uo}
In our investigation of the marginal explanation operators, in Theorem 3.16, we set forth a condition that, if true, causes the operators to be unbounded with respect to the $\|\cdot\|_{L^2(P_X)}$ norm even when $\intP_X\ll P_X$. Recall the \eqref{blowupcond} (Unbounded Operator): 
\begin{equation}\label{suppl::blowupcond} 
\sup \bigg\{ \frac{[P_{X_i} \otimes P_{X_j}](A \times B)}{P_{(X_i,X_j)}(A \times B)} \cdot P_{X_j}(B), \,\,\, A,B \in \mathcal{B}(\RR), P_{(X_i,X_j)}(A \times B)>0 \bigg\} = \infty.
\end{equation}
Theorem 3.16 asserts that, given predictors $X=(X_1,\dots,X_n)$ and a game value $(N,v)\mapsto h[N,v]=(h_i[N,v])_{i\in N}$ whose coefficients satisfy a positivity condition specified therein, if \eqref{suppl::blowupcond} is satisfied for distinct indices $i,j\in N$, 
then the associated maps $f\mapsto \bar{\oper}^{\ME}_i[f;h,X]$ and $f\mapsto \bar{\oper}^{\ME}_j[f;h,X]$ are unbounded when the domain is equipped with $\|\cdot\|_{L^2(P_X)}$. Here, we point out that the expression in \eqref{suppl::blowupcond} emerges naturally when $h$ is the Shapley value $\varphi$ (whose coefficients are of course positive). 
With $R=A\times B$, and setting $f_R(x):=\mathbbm{1}_R(x_i,x_j)$, we shall argue that 
\begin{equation}\label{aux}
\frac{\|\bar{\oper}_i^{\ME}[f_R;\varphi,X]\|_{L^2(\PP)}^2}{\|f_R\|_{L^2(P_X)}^2}
=\frac{1}{4}\frac{[P_{X_i} \otimes P_{X_j}](R)}{P_{(X_i,X_j)}(R)}\big(P_{X_i}(A)+P_{X_j}(B)\big)+O(1)    
\end{equation}
as $A$ and $B$ vary among Borel subsets of $\RR$ with $P_{(X_i,X_j)}(A\times B)>0$.
This will indicate that for the Shapley value, the unboundedness of marginal explanations, at least once restricted to indicator functions, results in condition \eqref{suppl::blowupcond} from the paper---hence motivating condition \eqref{suppl::blowupcond}. To establish the equality above, we revisit the following from the proof of Theorem 3.16:
\begin{equation*}
\begin{split}
\bar{\oper}_i^{\ME}[f_R;h,X] &= w_{\{i,j\}} \Big(\vpdp(\{i\};X,f_R)-\vpdp(\varnothing;X,f_R) \Big)\\
& \quad + w_{\{i\}}\Big(\vpdp(\{i,j\};X,f_R)-\vpdp(\{j\};X,f_R)\Big)\\
&=w_{\{i,j\}} \Big(\mathbbm{1}_A(X_i)P_{X_j}(B)-P_{(X_i,X_j)}(R)\Big)\\
& \quad + w_{\{i\}}\Big(\mathbbm{1}_R(X_i,X_j)-\mathbbm{1}_B(X_j)P_{X_i}(A)\Big)
\end{split}
\end{equation*}
where the $w_{\{i\}}$ and $w_{\{i,j\}}$ are defined it terms of the coefficients $w(S,n)\,(S\subset N)$ of the game value $h$ as:  
\begin{equation*}
w_{\{i,j\}}:= \sum_{S\subset N: i \notin S, j \notin S} w(S,n), \quad w_{\{i\}}:=\sum_{S\subset N: i \notin S, j \in S} w(S,n).
\end{equation*}
When $h=\varphi$, the coefficients are given by $w(S,n)=\frac{1}{n\binom{n-1}{|S|}}$, and:
\begin{equation*}
\begin{split}
&w_{\{i,j\}}=\sum_{s=0}^{n-2}\frac{1}{n\binom{n-1}{s}}\cdot\binom{n-2}{s}=\sum_{s=0}^{n-2}\frac{n-s-1}{n(n-1)}=\frac{(n-1)+\cdots+1}{n(n-1)}=\frac{1}{2},\\
&w_{\{i\}}=\sum_{s=1}^{n-1}\frac{1}{n\binom{n-1}{s}}\cdot\binom{n-2}{s-1}=\sum_{s=1}^{n-1}\frac{s}{n(n-1)}=\frac{1+\cdots+(n-1)}{n(n-1)}=\frac{1}{2}.
\end{split}
\end{equation*}
Substituting in the formula above, we have 
\begin{equation*}
\|\bar{\oper}_i^{\ME}[f_R;\varphi,X]\|_{L^2(\PP)}^2=
\frac{1}{4}\cdot\Bbb{E}\Big[\big(\mathbbm{1}_A(X_i)P_{X_j}(B)-P_{(X_i,X_j)}(R)+\mathbbm{1}_R(X_i,X_j)-\mathbbm{1}_B(X_j)P_{X_i}
(A)\big)^2\Big]
\end{equation*}
which can be simplified as 
\begin{equation*}
\frac{1}{4}\Big(P_{X_i}(A)P_{X_j}(B)^2+P_{X_i}(A)^2P_{X_j}(B)\Big)+P_{(X_i,X_j)}(R)\cdot(\text{a bounded term})
\end{equation*}
where the  bounded term in parentheses  is 
$$
\frac{1}{4}\Big(1-P_{(X_i,X_j)}(R)+2P_{X_j}(B)-2P_{X_i}(A)-2P_{X_i}(A)P_{X_j}(B)\Big)\in (-1,1).
$$
Dividing by $\|f_R\|_{L^2(P_X)}^2=P_{(X_i,X_j)}(R)$, we arrive at \eqref{aux}, as desired.

\subsection{On $H_X$ and the Radon-Nikodym derivative $r=\frac{d \intP_X}{ dP_X}$}\label{suppl::rn_boundedness}

In Theorem \ref{thm::margoperatorunbound} we established that if $r=\frac{d \intP_X}{ dP_X}$ exists and belongs to $L^{\infty}(P_X)$, then $H_X=L^2(P_X)$ where 
\begin{equation*} 
\begin{aligned}
H_X &:= \bigg( \Big\{[f]: [f]=\big\{\tilde{f}: \text{$\tilde{f}=f$ $P_X$-a.s. and } \int |\tilde{f}(x)|^2 \intP_X(dx) < \infty \big\}\Big\}, \, \| \cdot \|_{L^2(P_X)} \bigg) \\
& \hookrightarrow L^2(P_X).
\end{aligned}
\end{equation*}

It turns out that the reverse is true as well. Specifically, we have the following.

\begin{lemma}\label{suppl::rn_noundedness}
Suppose $\intP_X \ll P_X$ and $r:=\frac{d \intP_X}{ dP_X}$. The following statements are equivalent:
\begin{itemize}

  \item[$(i)$] $r \in L^{\infty}(P_X)$.

  \item[$(ii)$] $H_X=L^2(P_X)$.
\end{itemize}
  \end{lemma}
\begin{proof}
First, suppose $r \in L^{\infty}(P_X)$. By construction, $H_X$ is a subset  of $L^2(P_X)$. Thus, to show that $H_X=L^2(P_X)$ it suffices to show that $L^2(P_X) \subseteq L^2(\intP_X)$. Pick any $f \in L^2(P_X)$. For any $k>0$ we have
\[
\begin{aligned}
\int 1_{\{|f|\leq k\}} f^2(x) \intP_X(dx) & = \int 1_{\{|f|\leq k\}} r(x) f^2(x) P_X(dx) \\
&\leq \|r\|_{L^{\infty}(P_X)} \int f^2(x) P_X(dx) < \infty.
\end{aligned}
\]
Then sending $k \to \infty$ and using the monotone convergence theorem we conclude that $f \in L^2(\tilde{P}_X)$. Thus, $L^2(P_X) \subseteq L^2(\intP_X)$. This proves that $H_X = L^2(P_X)$.

Next, suppose that $H_X=L^2(P_X)$. Then for every $f \in L^2(P_X)$ we have
\[
\infty > \int f^2(x) \intP_X(dx) = \int f^2(x) r(x) P_X(dx).
\]
Thus, for every $f \in L^2(P_X)$, we have $f r^{1/2} \in L^2(P_X)$. 

Set $A_k:=\{x \in \R^n: r(x)\geq k  \text{ $P_X$-a.s.}\}$ for every nonnegative integer $k\geq 0$. Suppose $r$ is not $P_X$-essentially bounded. Then $P_X(A_k)>0$ for every $k \geq 0$ and 
\[
f_*(x) := \bigg( \sum_{k=1}^{\infty} \frac{1}{k^2} 1_{A_k}(x) \frac{1}{P_X(A_k)} \bigg)^{1/2}
\]
is well-defined. Then, by the monotone convergence theorem we have
\[
\int f_*^2(x) P_X(dx) = \int \bigg(\sum_{k=0}^{\infty} \frac{1}{k^2} 1_{A_k}(x) \frac{1}{P_X(A_k)} \bigg) P_X(dx) = \sum_{k=0}^{\infty} \frac{1}{k^2} < \infty.
\]
Thus, $f_* \in L^2(P_X)$. However, for every $K \geq 0$ we have
\[
\int f_*^2(x) r(x) P_X(dx) \geq \int \big(\sum_{k=0}^{K} \frac{1}{k^2} 1_{A_k}(x) \frac{1}{P_X(A_k)}\bigg) r(x) P_X(dx) \geq \sum_{k=0}^{K} \frac{1}{k}.
\]
Sending $K \to \infty$, we conclude that $f_* r^{1/2} \notin L^2(P_X)$, which is a contradiciton. Hence $r$ is $P_X$-essentially bounded.
\end{proof}
In the above proof, to construct $f_*$, we used help from MathOverflow \cite{MathOverflow2}. As \cite{MathOverflow2} points out, an alternative proof is to show that $r$ induces a bounded linear functional on $L^1(P_X)$ using the uniform boundedness principle and then apply the Riesz representation theorem.

\subsection{On the relationship between probability measures $P_X$ and $\intP_X$}\label{app::measures_relationship}
The comparison of  probability measures $P_X$ and $\intP_X:=\frac{1}{2^n}\sum_{S\subseteq N}P_{X_S}\otimes P_{X_{-S}}$ lies at the heart of the analysis of conditional and marginal explanations carried out in this paper. Recall that the former is the joint probability distribution of predictors $X=(X_1,\dots,X_n)$ while the latter probability measure on $\Bbb{R}^n$ emerged naturally in our investigation of marginal explanations. 

\begin{proposition}
The following three statements are equivalent. 
\begin{enumerate}[label=(\alph*)]
\item The predictors are independent. 
\item $P_{X_S}\otimes P_{X_{-S}}$ coincides with $P_X$ for every $S\subseteq N$. 
\item $\intP_X$ coincides with $P_X$. 
\end{enumerate}
\end{proposition}
\begin{proof}
 Obviously $(a)\implies(b)\implies(c)$.  
It remains to show that $(c)\implies (a)$. We prove this by induction on $n$. First, we claim that if $\intP_X=P_X$ where 
$X=(X_1,\dots,X_{n})$, then any $n-1$ of these random variables are independent. By symmetry, it suffices to show that $X_1,\dots,X_{n-1}$ are independent. Let $\pi:\Bbb{R}^n\rightarrow\Bbb{R}^{n-1}$ denote the projection onto the first $n-1$ coordinates. Then 
$\pi_*P_{X}=P_{X'}$ where $X':=(X_1,\dots,X_{n-1})$. Also the pushforward of 
$\intP_X=\frac{1}{2^n}\sum_{S\subseteq N}P_{X_S}\otimes P_{X_{N\setminus S}}$ by $\pi$ is equal to $\intP_{X'}=\frac{1}{2^{n-1}}\sum_{S\subseteq N'}P_{X_S}\otimes P_{X_{N'\setminus S}}$ where $N':=\{1,\dots,n-1\}$.
This is due to the fact that for every $S\subseteq N'$, $P_{X_S}\otimes P_{X_{N'\setminus S}}$ can be realized as the pushforward of two terms in $\tP_X$: $P_{X_{S}}\otimes P_{X_{N\setminus S}}$ and $P_{X_{S\cup\{n\}}}\otimes P_{X_{N\setminus (S\cup\{n\})}}$. Consequently, applying $\pi_*$ to $\intP_X=P_X$ yields $\intP_{X'}=P_{X'}$, and thus by the induction hypothesis, the independence of $X_1,\dots,X_{n-1}$. Now since any $n-1$ of the random variables $X_1,\dots,X_n$ are independent, for any non-empty and proper subset $S$ of $N$ we have $P_{X_{S}}\otimes P_{X_{N\setminus S}}=P_{X_1}\otimes\cdots\otimes P_{X_n}$. When $S=\varnothing\text{ or }N$, the measure $P_{X_{S}}\otimes P_{X_{N\setminus S}}$ coincides with $P_X$. Therefore, $\intP_X=P_X$ amounts to 
$$
\frac{1}{2^n}\big((2^n-2)P_{X_1}\otimes\cdots\otimes P_{X_n}+2P_X\big)=P_X
$$
which results in $P_{X_1}\otimes\cdots\otimes P_{X_n}=P_X$, i.e. random variables $X_1,\dots,X_n$ are independent.
\end{proof}

Next, we move from equality $\intP_X=P_X$ to the continuity condition $\intP_X\ll P_X$.
The probability measure $\intP_X$ is a convex combination of the product measures $P_{X_S}\otimes P_{X_{-S}}$. The latter is $P_X$ when $S=\varnothing\text{ or }N$ which immediately indicates that  the other direction holds:
$P_X\ll \intP_X$. The condition $\intP_X\ll P_X$ amounts to $P_{X_S}\otimes P_{X_{-S}}\ll P_X$ for all $S\subseteq N$. 
As discussed extensively in the paper, this condition appears when it comes to setting up marginal explanations as well-defined operators. The goal here is to elaborate on it through providing some examples and non-examples.\footnote{Inspired by this problem, we had raised a question on MathOverflow \cite{MathOverflow}.} Especially, we elucidate this condition by relating it to the shape of the support of $P_X$. Recall that the  support ${\rm{supp}}(\mu)$ of a Borel measure $\mu$ on a metric space is the set of points whose every open neighborhood has a positive measure \cite{MeasureBook}. Its complement is thus the union of all measure zero open subsets. Hence ${\rm{supp}}(\mu)$ is automatically closed; and in the case of a separable space such as $\Bbb{R}^n$, the support can be characterized as the complement of the largest open subset of measure zero. 
\begin{lemma}
One always has ${\rm{supp}}(P_X)\subseteq{\rm{supp}}(\intP_X)$ and the supports coincide if $\intP_X\ll P_X$. Moreover, if ${\rm{supp}}(P_X)={\rm{supp}}(\intP_X)$, then for any $\varnothing\neq S\subset N$, they coincide with 
${\rm{supp}}\big(P_{X_S}\otimes P_{X_{-S}}\big)$ and $\pi_S\big({\rm{supp}}(P_X)\big)\times\pi_{-S}\big({\rm{supp}}(P_X)\big)$ where $\pi_S:\Bbb{R}^n\rightarrow\Bbb{R}^{|S|}$ and $\pi_{-S}:\Bbb{R}^n\rightarrow\Bbb{R}^{n-|S|}$ are projections onto coordinates belonging or not belonging to $S$ respectively.\footnote{Following our convention, ignoring the order of coordinates, a vector $x\in\Bbb{R}^n$ may be written as $(x_S,x_{-S})$, and this is how 
${\rm{supp}}\big(P_{X_S}\otimes P_{X_{-S}}\big)=\pi_S\big({\rm{supp}}(X)\big)\times\pi_{-S}\big({\rm{supp}}(X)\big)$
should be understood.}
\end{lemma}

\begin{proof}
For any two Borel measures $\mu$ and $\nu$ on $\Bbb{R}^n$, $\mu\ll\nu$ implies ${\rm{supp}}(\mu)\subseteq{\rm{supp}}(\nu).$ Thus  ${\rm{supp}}(P_X)\subseteq{\rm{supp}}(\intP_X)$ due to $P_X\ll\intP_X$; and also $\intP_X\ll P_X$ yields 
${\rm{supp}}(\intP_X)\subseteq{\rm{supp}}(P_X)$, and hence ${\rm{supp}}(P_X)={\rm{supp}}(\intP_X)$. 
Next, suppose ${\rm{supp}}(P_X)={\rm{supp}}(\intP_X)$. These sets should contain ${\rm{supp}}(P_{X_S}\otimes P_{X_{-S}})$ 
for any $S$ because $P_{X_S}\otimes P_{X_{-S}}\ll \intP_X$.
It follows easily from the definition of a measure's support that 
${\rm{supp}}(P_{X_S}\otimes P_{X_{-S}})={\rm{supp}}(P_{X_S})\times{\rm{supp}}(P_{X_{-S}})$
and 
${\rm{supp}}(P_{X_{\pm S}})\supseteq\pi_{\pm S}\big({\rm{supp}}(P_X)\big)$. Therefore:
\begin{equation*}\label{aux0}
\pi_S\big({\rm{supp}}(P_X)\big)\times\pi_{-S}\big({\rm{supp}}(P_X)\big)\subseteq 
{\rm{supp}}(P_{X_S}\otimes P_{X_{-S}})\subseteq{\rm{supp}}(\intP_X)={\rm{supp}}(P_X).    
\end{equation*}
But clearly ${\rm{supp}}(P_X)\subseteq\pi_S\big({\rm{supp}}(P_X)\big)\times\pi_{-S}\big({\rm{supp}}(P_X)\big)$. Consequently, all the subsets appeared above coincide.  
\end{proof}

The lemma clearly shows that $\intP_X\ll P_X$ requires the support of $P_X$ to have a ``product structure''.
\begin{corollary}
If ${\rm{supp}}(P_X)={\rm{supp}}(\intP_X)$, then ${\rm{supp}}(X)=\prod_{i\in N}\pi_i\big({\rm{supp}}(X)\big)$ where $\pi_i$ denotes the projection onto the $i^{{th}}$ coordinate. In particular, this holds when $\intP_X\ll P_X$.
\end{corollary}

\begin{proof}
Follows from fact that ${\rm{supp}}(P_X)=\pi_S\big({\rm{supp}}(P_X)\big)\times\pi_{-S}\big({\rm{supp}}(P_X)\big)$ for all  subsets
$\varnothing\neq S\subset N$ if ${\rm{supp}}(P_X)={\rm{supp}}(\intP_X)$. 
\end{proof}

The product structure ${\rm{supp}}(X)=\prod_{i\in N}\pi_i\big({\rm{supp}}(X)\big)$ puts a constraint on the support: Its projections to coordinate axes must be closed\footnote{Choosing arbitrary points $a_i\in\pi_i\big({\rm{supp}}(X)\big)$, due to this product decomposition, each $\pi_i\big({\rm{supp}}(X)\big)$ is the preimage of the closed subset ${\rm{supp}}(X)$ under the continuous map $\Bbb{R}\rightarrow\Bbb{R}^n:t\mapsto(a_1,\dots,a_{i-1},t,a_{i+1},\dots,a_n)$.}, something which does not hold generally for an arbitrary closed subset of $\Bbb{R}^n$.
In terms of the joint probability, the product structure means that the predictors take their values ``heterogenously'': Given numbers $a_1,\dots,a_n$, if for every $\epsilon>0$ there is a positive probability of $X_i$ lying in $(a_i-\epsilon,a_i+\epsilon)$, then the probability of $(X_1,\dots,X_n)$ belonging to any given open neighborhood of $(a_1,\dots,a_n)$ is non-zero. In contrast,
when the data lies on a ``complicated'' lower-dimensional submanifold of $\Bbb{R}^n$, we are in
a different regime where $\intP_X\ll P_X$ fails. This last assertion is made rigorous below:

\begin{corollary}
If ${\rm{supp}}(P_X)\subseteq\Bbb{R}^n$ is not a Cartesian product of $n$ subsets of $\Bbb{R}$, then $\intP_X$ cannot be absolutely continuous with respect to $P_X$. In particular, when ${\rm{supp}}(P_X)$ is connected, the continuity fails unless 
${\rm{supp}}(P_X)$ is a (possibly degenerate or unbounded or both) rectangular cube.  
\end{corollary}
\begin{proof}
As established above, $\intP_X\ll P_X$ yields the equality  ${\rm{supp}}(X)=\prod_{i\in N}\pi_i\big({\rm{supp}}(X)\big)$, which requires all subsets appearing on the right-hand side to be closed. If the support is connected, each projection $\pi_i\big({\rm{supp}}(X)\big)$ of it must be a connected subset of $\Bbb{R}$, i.e. an interval (closed and possibly degenerate). Therefore, 
${\rm{supp}}(X)$ is a product of intervals in that case.  
\end{proof}

Finally, we discuss the converse implication: Can the continuity of measures be deduced from assumptions about the supports? 
As a matter of fact, the equality of supports ${\rm{supp}}(P_X)={\rm{supp}}(\intP_X)$--which as we saw is a necessary condition for $\intP_X\ll P_X$, and implies that ${\rm{supp}}(P_X)$ has a product structure--can yield $\intP_X\ll P_X$ if the features are discrete, or admit a density function (with a small caveat, see below).

\begin{proposition}
The equality of supports ${\rm{supp}}(P_X)={\rm{supp}}(\intP_X)$ implies the continuity of measures $\intP_X\ll P_X$ under any of the following assumptions on the predictors:
\begin{enumerate}[label=(\roman*)]
\item The support of each $X_i$ is a discrete subset of $\Bbb{R}$. 
\item The joint probability distribution $P_{X}$ of $(X_1,\dots,X_n)$ admits a density function which is Lebesgue a.e. positive on 
${\rm{supp}}(P_X)$. 
\end{enumerate}
\end{proposition}

\begin{proof} 
When the closed subset ${\rm{supp}}(X_i)$ is discrete, the probability of $X_i$ belonging to a Borel subset of $\Bbb{R}$ is positive if and only if it intersects ${\rm{supp}}(X_i)$. The same is true for any random vector $X_S$ ($S\subseteq N$) in place of $X_i$ because 
${\rm{supp}}(X_S)$ (being contained in $\prod_{i\in S}{\rm{supp}}(X_i)$) is discrete too. Pick a subset $\varnothing\neq S\subset N$. It suffices to show $P_{X_S}\otimes P_{X_{-S}}\ll P_X$; that is, $P_{X_S}\otimes P_{X_{-S}}(B)=0$ for any Borel subset $B$ of $\Bbb{R}^n$ with $P_X(B)=0$. As discussed above, $B$ does not intersect ${\rm{supp}}(P_X)$. But this subset, according to the lemma, coincides with ${\rm{supp}}(P_{X_S}\otimes P_{X_{-S}})$ because the hypothesis is that ${\rm{supp}}(P_X)={\rm{supp}}(\intP_X)$.
So $B$ cannot intersect ${\rm{supp}}(P_{X_S}\otimes P_{X_{-S}})$ either. This support is discrete as well 
(being equal to ${\rm{supp}}(P_{X_S})\times{\rm{supp}}(P_{X_{-S}})$). We deduce that $P_{X_S}\otimes P_{X_{-S}}(B)=0$, as desired. 

For the second part, let $\rho$ be a density for $P_X$, a Borel measurable function $\rho:\Bbb{R}^n\rightarrow[0,\infty)$.  
Fix a subset $\varnothing\neq S\subset N$. The product measure $P_{X_S}\otimes P_{X_{-S}}$ admits a density function of form
$x\mapsto\rho_S(x_S)\rho_{-S}(x_{-S})$ where $\rho_S(x_S):=\int\rho(x_S,x_{-S})dx_{-S}$ and $\rho_{-S}(x_{-S}):=\int\rho(x_S,x_{-S})dx_S$. 
When a density exists, the measure of a Borel subset is zero if and only the density vanishes at Lebesgue-almost every point of it.  
Therefore, to establish $P_{X_S}\otimes P_{X_{-S}}\ll P_X$, we only need to show that 
$P_{X_S}\otimes P_{X_{-S}}\big(\big\{x\in\Bbb{R}^n\mid\rho(x)=0\big\}\big)=0$, or equivalently the Lebesgue measure of  
$\big\{x\in\Bbb{R}^n\mid\rho(x)=0,\rho_S(x_S)\rho_{-S}(x_{-S})\neq 0\big\}$ is zero. This subset is contained in the union
$$
\big\{x\in{\rm{supp}}(P_X)\mid\rho(x)=0\big\}\cup
\big\{x\in\Bbb{R}^n\setminus{\rm{supp}}(P_X)\mid\rho_S(x_S)\rho_{-S}(x_{-S})\neq 0\big\}.
$$
The first subset is of Lebesgue measure zero due to our assumption. Proving the same for the second one concludes the proof. As argued previously in this proof,  ${\rm{supp}}(P_X)$ coincides with 
${\rm{supp}}(P_{X_S}\otimes P_{X_{-S}})={\rm{supp}}(P_{X_S})\times{\rm{supp}}(P_{X_{-S}})$ because of ${\rm{supp}}(P_X)={\rm{supp}}(\intP_X)$. 
Hence $\big\{x\in\Bbb{R}^n\setminus{\rm{supp}}(P_X)\mid\rho_S(x_S)\rho_{-S}(x_{-S})\neq 0\big\}$ is contained in the union
$$
\big\{x\in\Bbb{R}^n\mid x_S\notin{\rm{supp}}(P_{X_S}), \rho_S(x_S)\neq 0\big\}\cup
\big\{x\in\Bbb{R}^n\mid x_{-S}\notin{\rm{supp}}(P_{X_{-S}}),\rho_{-S}(x_{-S})\neq 0\big\}.
$$
They are both of Lebesgue measure zero in $\Bbb{R}^n$ since subsets $\{\rho_S\neq 0\}\setminus{\rm{supp}}(P_{X_S})$ and 
$\{\rho_{-S}\neq 0\}\setminus{\rm{supp}}(P_{X_{-S}})$ are of Lebesgue measure zero in the corresponding Euclidean spaces $\Bbb{R}^{|S|}$ and $\Bbb{R}^{n-|S|}$ due to the fact that $\rho_S$ and $\rho_{-S}$ are respectively density functions for probability measures $P_{X_S}$ on $\Bbb{R}^{|S|}$ and  
$P_{X_{-S}}$ on $\Bbb{R}^{n-|S|}$. 
\end{proof}

\begin{example}\rm
We provide an example to show that the condition from the second part of theorem above on the values that the density function assumes  on the support is necessary.  Let $C\subset [0,1]$ be a ``fat'' Cantor set, i.e. a Cantor set of positive Lebesgue measure $\alpha\in (0,1)$. Let the density function of $X=(X_1,X_2)$ be $\rho:=\frac{1}{1-\alpha^2}\cdot\mathbbm{1}_{[0,1]^2\setminus C^2}$. So the probability distribution $P_X$ is continuous with respect to the Lebesgue measure, and its support is the whole square $[0,1]^2$ because $C^2$ is a closed and nowhere-dense subset of the square. But $\rho$ vanishes on the subset $C^2$ which is of positive Lebesgue measure. We argue that $P_{X_1}\otimes P_{X_2}(C^2)$, unlike $P_X(C^2)$, is non-zero. A density function for 
$P_{X_1}\otimes P_{X_2}$ is $(x_1,x_2)\mapsto\tilde{\rho}(x_1)\tilde{\rho}(x_2)$ where 
$$
\tilde{\rho}(t):=\frac{1}{1-\alpha^2}\cdot\begin{cases}
1 & t\in [0,1]\setminus C,\\
1-\alpha & t\in C.
\end{cases}
$$ 
This density of $P_{X_1}\otimes P_{X_2}$ is positive at every point of $[0,1]^2$ which yields 
${\rm{supp}}(P_{X_1}\otimes P_{X_2})=[0,1]^2$, and $P_{X_1}\otimes P_{X_2}(C^2)>0$ because the two-dimensional Lebesgue measure of $C^2$ is positive. Consequently, continuous probability distributions $P_X$ and $\intP_X=\frac{1}{2}\Big(P_X+P_{X_1}\otimes P_{X_2}\Big)$ have the same support $[0,1]^2$ while $\tP_X\not\ll P_X$ due to the fact that   
$$
P_X(C^2)=0<\intP_X(C^2).
$$
\end{example}

\subsection{Coalitional values with two-step formulation}

\subsubsection{Canonical representation of coalition values with two-step formulation}\label{app::twostepcanon}

\begin{lemma}\label{lmm::canonical}
Let $g$ be a coalitional value with a two-step formulation with $h^{(1)}$, $h^{(2)}$ and  
the intermediate game $\hat{v}_T(N,v,\cP)$  
as in Definition \ref{def::2stepprop} and suppose that $g_1[\{1\},\tilde{u},\{\{1\}\}] \neq 0$. Then $g$ satisfies \hyperref[axiom:SIP]{(SIP)} if and only if there exists a unique constant $\alpha_* \neq 0$ and unique game values $h_*^{(1)}$, $h_*^{(2)}$ independent of $\hat{v}_T$ such that
\begin{equation}\label{canonrep}
g_i[N,v,\cP] = \alpha_* h^{(2)}_{*,i}[S_j,v^{(j)}], \quad v^{(j)}(T)=h_{*,j}^{(1)}[M,\hat{v}_T], \quad i \in S_j, 
\end{equation}
where $h_{*,1}^{(1)}[\{1\},v]=v(1)$ and $h_*^{(2)}$ satisfies \hyperref[axiom:SEP]{(SEP)}. As a consequence, we have
\begin{equation}\label{renormconn}
g[N,v,\bar{N}]=\alpha_* h_*^{(1)}[N,v], \quad g[N,v,\{N\}]=\alpha_* h_*^{(2)}[N,v].
\end{equation}
\end{lemma}

\begin{proof}
By the two-step formulation definition and linearity of game values we obtain
\begin{equation}\label{gonsingletons}
g_i[\{i\},\tilde{u},\{\{i\}\}] =h^{(1)}_1[\{1\},\tilde{u}] h^{(2)}_i[\{i\},\tilde{u}], \quad i \in \mathbb{N}.
\end{equation}

Suppose $g$ satisfies \eqref{SIprop}. Then from \eqref{gonsingletons}  it follows that $h^{(k)}_1[\{1\},u] \neq 0$, $k \in \{1,2\}$. Then, this allows us to define 
\[
\alpha_*=h^{(1)}_1[\{1\},\tilde{u}] h^{(2)}_1[\{1\},\tilde{u}], \quad h_*^{(k)}=\frac{h^{(k)}}{h^{(k)}_1[\{1\},\tilde{u}]}, \,\, k \in \{1,2\}.
\]
Hence, by the linearity of $h^{(1)}$, $h^{(2)}$ we obtain \eqref{canonrep}. 

Note that by construction $h_{*,1}^{(1)}[\{1\},v]=v(1)$. Furthermore, by \eqref{SIprop} and $\eqref{gonsingletons}$ we have $h_{*,i}^{(2)}[\{i\},\tilde{u}]=h_{*,1}^{(2)}[\{1\},\tilde{u}]$ and hence $h_*^{(2)}$ satisfies \hyperref[axiom:SEP]{(SEP)}. Finally, the uniqueness of the representation and the independence of $h^{(k)}_*$ from $\hat{v}_T$ is a consequence of Lemma \ref{lmm::twostepconncoal}.

Conversely, suppose there exist $\alpha_* \neq 0$ and game values $h_*^{(1)}$, $h_*^{(2)}$ such that \eqref{canonrep} holds where $h_{*,1}^{(1)}[\{1\},v]=v(1)$ and $h_*^{(2)}$ satisfies \hyperref[axiom:SEP]{(SEP)}. Then, the two-step formulation implies $g_i[\{i\},\tilde{u},\{\{i\}\}] =\alpha_*$ which gives \hyperref[axiom:SIP]{(SIP)}.
\end{proof}

\begin{remark}\label{canonicalRemark}
\rm
The lemma above implies that if $g$ has a representation \eqref{canonrep},  one can choose two game values $h^{(1)}$, $h^{(2)}$  in the two-step formulation of $g$, by absorbing $\alpha_*$ either in $h_*^{(1)}$, or in $h_*^{2}$, or split between the two games.
If we absorb $\alpha_*$ in $h_*^{(1)}$, i.e. $h^{(1)}=\alpha_*h_*^{(1)}$ and $h^{(2)}=h_*^{(2)}$ then, according to Lemma \ref{lmm::twostepconncoal}, $g$ for singletons is equal to $h^{(1)}$. Alternatively, if we absorb $\alpha_*$ in $h_*^{(2)}$, i.e. $h^{(1)}=h_*^{(1)}$ and $h^{(2)}=\alpha_*h_*^{(2)}$, then $g$ for the grand coalition is equal to $h^{(2)}$.
\end{remark}

\begin{lemma}\label{lmm::qp2step}
Let $g$ be a coalitional value with a two-step formulation with $h^{(1)}$, $h^{(2)}$ as in Definition \ref{def::2stepprop}. Suppose that $h^{(2)}$ satisfies \hyperref[axiom:SEP]{(SEP)}. Then:
\begin{itemize}
\item [(i)] if $h^{(2)}$ satisfies \hyperref[axiom:EP]{(EP)} then $g$ satisfies \eqref{quotientgame}.

\item [(ii)] If $g$ satisfies \hyperref[axiom:EP]{(EP)} then $g$ satisfies \eqref{quotientgame}. 

\item [(iii)] If $g$ satisfies \eqref{quotientgame} then
 \begin{equation*}
  \sum_{i\in S_j}h^{(2)}_i[S_j,v^{(j)}] =g_j[M,v^{\cP},\bar{M}]  = h^{(1)}_j[M,v^{\cP}]=v^{(j)}(S_j), \quad j \in M.
 \end{equation*}

\end{itemize}
\end{lemma}

\begin{proof}
The results $(i)$-$(iii)$ follow from Lemma \ref{lmm::twostepconncoal} and Lemma \ref{lmm::efficg2step}. 
\end{proof}

\begin{lemma}\label{lmm::efficg2step}
Let $g$ be a coalitional value with a two-step formulation with $h^{(1)}$, $h^{(2)}$ as in Definition \ref{def::2stepprop}. 

\begin{itemize}
  \item [(i)] Suppose  $h^{(1)}$ and $h^{(2)}$ satisfy \hyperref[axiom:EP]{(EP)}. Then $g$ satisfies \hyperref[axiom:EP]{(EP)}.

  \item [(ii)] Suppose $g$ satisfies \hyperref[axiom:EP]{(EP)}. If either $h^{(1)}$ or  $h^{(2)}$ satisfies \hyperref[axiom:SEP]{(SEP)}, then $h^{(1)}$, $h^{(2)}$ satisfy \hyperref[axiom:EP]{(EP)}.
\end{itemize}
\end{lemma}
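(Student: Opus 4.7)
I would start by regrouping the sum $\sum_{i\in N} g_i[N,v,\cP]$ according to the blocks of $\cP$ and invoke \hyperref[axiom:EP]{(EP)} for $h^{(2)}$ on each block:
\begin{equation*}
\sum_{i\in N} g_i[N,v,\cP]
\;=\;\sum_{j\in M}\sum_{i\in S_j} h^{(2)}_i[S_j,v^{(j)}]
\;=\;\sum_{j\in M} v^{(j)}(S_j).
\end{equation*}
Property~(ii) of the intermediate games in Definition~\ref{def::2stepprop} says $\hat{v}_{S_j}(N,v,\cP)=v^{\cP}$, so $v^{(j)}(S_j)=h^{(1)}_j[M,v^{\cP}]$. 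Applying \hyperref[axiom:EP]{(EP)} a second time, now to $h^{(1)}$ on the game $(M,v^{\cP})$, collapses the expression to $v^{\cP}(M)=v(N)$, which is exactly \hyperref[axiom:EP]{(EP)} for $g$.

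\textbf{Plan for (ii).} My idea is to exploit Lemma~\ref{lmm::twostepconncoal}, which already isolates $h^{(1)}$ and $h^{(2)}$ through the two boundary partitions $\bar{N}$ (singletons) and $\{N\}$ (grand coalition). Suppose first that $h^{(2)}$ satisfies \hyperref[axiom:SEP]{(SEP)}. Then Lemma~\ref{lmm::twostepconncoal}(i) gives $g[N,v,\bar{N}]=h^{(1)}[N,v]$, and \hyperref[axiom:EP]{(EP)} for $g$ transfers verbatim to produce \hyperref[axiom:EP]{(EP)} for $h^{(1)}$. Specializing the efficiency identity to a single-player game forces $h^{(1)}_1[\{1\},w]=w(\{1\})$, i.e., $h^{(1)}$ meets the hypothesis of Lemma~\ref{lmm::twostepconncoal}(ii); that lemma then yields $g[N,v,\{N\}]=h^{(2)}[N,v]$, and a third appeal to \hyperref[axiom:EP]{(EP)} of $g$ delivers \hyperref[axiom:EP]{(EP)} for $h^{(2)}$. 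If instead the roles are reversed at the outset, namely $h^{(1)}$ has \hyperref[axiom:SEP]{(SEP)}, the analogous chain runs through Lemma~\ref{lmm::twostepconncoal}(ii) first to deduce \hyperref[axiom:EP]{(EP)} for $h^{(2)}$, which in turn gives \hyperref[axiom:SEP]{(SEP)} for $h^{(2)}$, and we are back in the previous case to conclude \hyperref[axiom:EP]{(EP)} for $h^{(1)}$.

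\textbf{Anticipated difficulty.} Part~(i) is essentially bookkeeping built on the calibration $\hat{v}_{S_j}=v^{\cP}$. The only delicate point in part~(ii) is the observation that \hyperref[axiom:EP]{(EP)} automatically entails the single-player identity needed in property~(iii) of Definition~\ref{def::2stepprop}, which is what collapses $v^{(1)}(T)=\hat{v}_T(N,v,\{N\})(\{1\})=v(T)$ when we evaluate on the grand coalition; once this bridge is in place, the two directions of the argument are symmetric and close immediately.
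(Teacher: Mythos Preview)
Your proposal is correct and follows essentially the same approach as the paper: part~(i) is the identical chain of equalities exploiting $\hat{v}_{S_j}=v^{\cP}$, and part~(ii) rests on Lemma~\ref{lmm::twostepconncoal} together with the efficiency of $g$, which is exactly what the paper cites. Your write-up of~(ii) simply unpacks the two cases in more detail than the paper's one-line reference, including the useful observation that \hyperref[axiom:EP]{(EP)} on a singleton game yields the condition $h^{(1)}_1[\{1\},v]=v(\{1\})$ needed to pass between the two parts of Lemma~\ref{lmm::twostepconncoal}.
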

\begin{proof}
Let $\{v^{(j)}\}_{j=1}^M$ be as in Definition \ref{def::2stepprop}. Suppose that $h^{(1)},h^{(2)}$ are efficient, then we have
\[
\sum_{i\in N}g[N,v,\cP] =\sum_{j \in M}\sum_{i \in S_j} h^{(2)}_i[S_j,v^{(j)}]=\sum_{j \in M}v^{(j)}(S_j)=\sum_{j \in M}h^{(1)}_j[M,v^{\cP}]=v^{\cP}(M)=v(N),
\]
where we used the property $\hat{v}_{S_j}=v^{\cP}$ hence $(i)$.
Part $(ii)$ follows from Lemma \ref{lmm::twostepconncoal} and the efficiency of $g$.
\end{proof}

\subsubsection{Proof of the results from \S\ref{sec::extra} }  \label{app::extra}
\begin{proof}[Proof of Proposition \ref{prop::extra1}]
By the two-step formulation and Lemma \ref{lmm::lingamevalrepr}, for any game $(N,v)$ and $i\in S_j$ we have 
\begin{equation}\label{auxiliary} 
|g_i[N,v,\cP]|\leq C\sum_{T\subseteq S_j}|v^{(j)}(T)|\leq C \sum_{T\subseteq S_j}\sum_{R\subseteq M}|\hat{v}_T(R)| 
\end{equation}
where $C$ denotes appropriate constants depending only on game values $h^{(1)}$ and $h^{(2)}$.
If $\hat{v}_T=v^{\cP|T}$, then $\hat{v}_T(R)=v(Q)$ if $j\notin M$ while $\hat{v}_T(R)=v(Q\cup T)$ otherwise ($Q=\cup_{r\in R}S_r$ as in the lemma). Then the inequality above may be rewritten as 
$$
|g_i[N,v,\cP]|\leq C \sum_{R\subseteq M\setminus\{j\}} \sum_{T\subseteq S_j}(|v(Q)|+|v(Q\cup T)|) 
\leq C \big(\sum_{R\subseteq M\setminus\{j\}}|v(Q)|+\sum_{R\subseteq M\setminus\{j\}} \sum_{T\subseteq S_j}|v(Q\cup T)|\big) 
$$
where $C$ denotes a generic constant (the last one can be expressed in terms of the one before as $2^{|S_j|}C$).
Now substituting the cooperative game $\vpdp(\cdot\,;X,f-f_0)$ for $v$ in the last inequality implies  part $(i)$.

In the case of $(ii)$, one has 
$$
|\hat{v}_T(R)|\leq |v^\cP(R)|+m\big(|v(T)|+|v^\cP(\{j\})|\big)=|v(Q)|+m\big(|v(T)|+|v(S_j)|\big).
$$
Combining with \eqref{auxiliary} yields 
$$
|g_i[N,v,\cP]| 
\leq C \big(\sum_{R\subseteq M}|v(Q)|+ \sum_{T\subseteq S_j}|v(T)|\big) 
$$
for a suitable $C$. Plugging the cooperative game $\vpdp(\cdot\,;X,f-f_0)$ for $v$ then implies part $(ii)$.
\end{proof}

\begin{proof}[Proof of Corollary \ref{corr:extra3}]
Notice that $Q$ is a union of $S_r$'s (i.e. $Q=\cup_{r\in R}S_r$). So the independence of $X_{S_1},\dots,X_{S_m}$ implies 
that in the inequalities appearing in Proposition \ref{prop::extra1} $P_{X_Q}\otimes P_{X_{-Q}}$ can be replaced with 
$\otimes_{r\in M}P_{X_{S_r}}=P_X$ while, in part $(i)$ of the lemma where $Q\cap T=\varnothing$, $P_{X_{Q\cup T}}\otimes P_{X_{-(Q\cup T)}}$ is the same as $P_{X_T}\otimes P_{X_{-T}}$. Substituting in those inequalities yields the first inequality in  Corollary \ref{corr:extra3}.
To obtain the second assertion, notice that if predictors $X_{S_j}$ are independent, then for any $T\subseteq S_j$:
$$
P_{X_T}\otimes P_{X_{-T}}=P_{X_T}\otimes P_{X_{S_j-T}}\otimes P_{X_{-S_j}}=P_{X_{S_j}}\otimes P_{X_{-S_j}}=P_X.
$$
The same is true when $S_j$ is a singleton as then $P_{X_T}\otimes P_{X_{-T}}$ from above coincides with 
$P_{X_{S_j}}\otimes P_{X_{-S_j}}=P_X$ because $T=\varnothing \text{ or }S_j$.
Finally, if $h^{(2)}$ is efficient and $S_j=\{i\}$, $g$ satisfies the quotient game property by Lemma \ref{lmm::qp2step}$(i)$, and then Proposition \ref{prop::unifcoalexpl} implies that 
$$
\bar{g}_i(X;\vpdp,\cP,f)=\bar{g}_{S_j}(X;\vpdp,\cP,f) = \bar{g}_{S_j}(X;\vce,\cP,f) =\bar{g}_i(X;\vce,\cP,f)
$$
where we have used $S_j=\{i\}$. But due to the two-step formulation:
$$
\bar{g}_i(X;\vce,\cP,f)=g_i(X;\vce,\cP,f-f_0)= h_{j}^{(1)}[M,\vceP(.;X,f-f_0)]
$$
which admits the Lipschitz constant $1$ due to Lemma \ref{lmm::boundquot}$(iii)$.
\end{proof}

\section{Maximal information coefficient}\label{app::HierClust}

\begin{figure}
    \centering
    \begin{subfigure}[t]{0.8\textwidth}
        \centering
        \includegraphics[width=\textwidth]{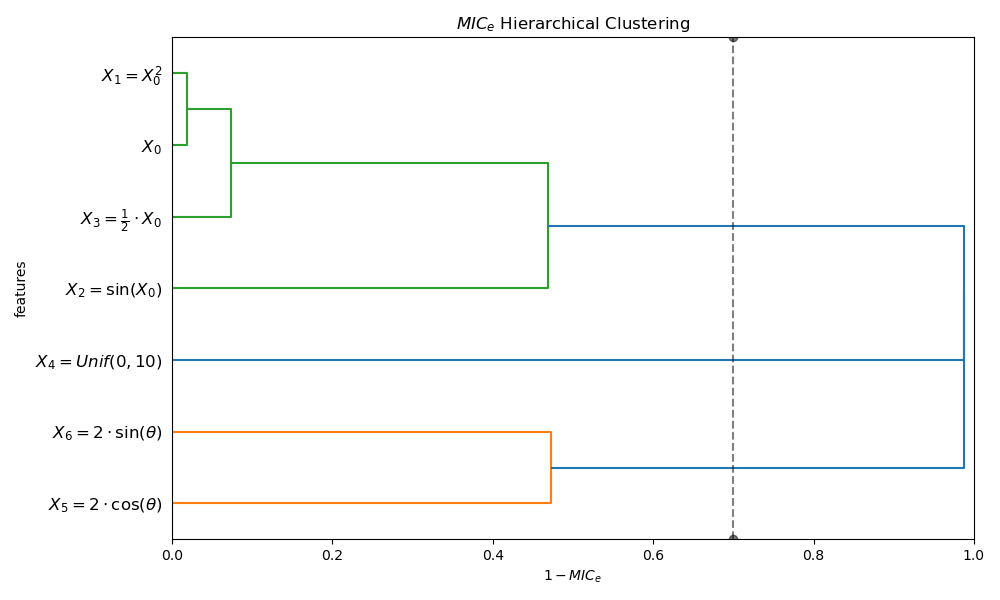}
        \caption{\footnotesize MIC-based hierarchical clustering with GA linkage.  \vspace{5pt}}\label{fig::MIC_clus}
    \end{subfigure}
    ~~
    \begin{subfigure}[t]{0.8\textwidth}
        \centering
        \includegraphics[width=\textwidth]{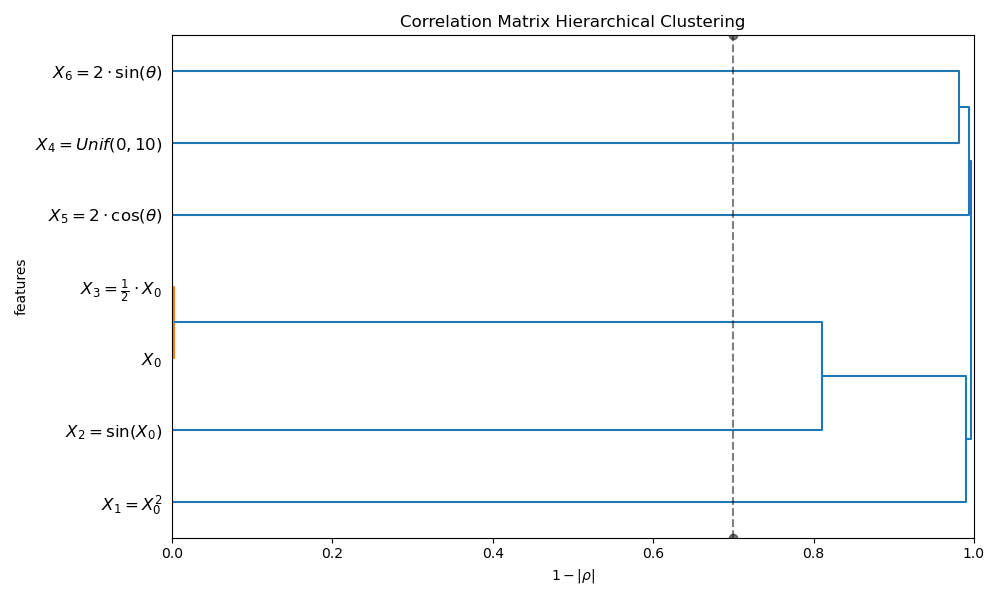}
        \caption{\footnotesize Correlation-based hierarchical clustering with GA linkage. }\label{fig::PCA_clus}                        
    \end{subfigure}
    \caption{\footnotesize Variable hierarchical clustering for the model \eqref{mictestmodel}. The dotted vertical line is based on a dissimilarity threshold; the predictors that have remained together on the left of it end up in same groups.}\label{fig::hierclus}
\end{figure}

\begin{figure}
    \centering
    \begin{subfigure}[t]{0.35\textwidth}
        \centering
        \includegraphics[width=\textwidth]{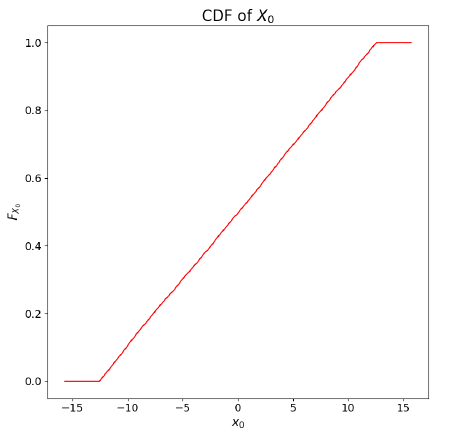}
        \caption{\footnotesize CDF of $X_0$ }\label{fig::cdf_x0}
    \end{subfigure}
    ~~
    \begin{subfigure}[t]{0.35\textwidth}
        \centering
        \includegraphics[width=\textwidth]{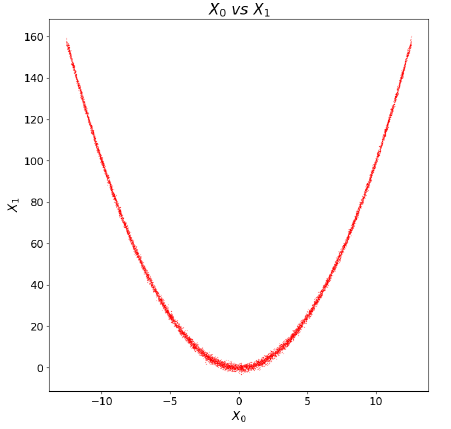}
        \caption{\footnotesize Quadratic relationship with noise \vspace{5pt}}\label{fig::quadratic_noise}
    \end{subfigure}
    ~~
    \begin{subfigure}[t]{0.35\textwidth}
        \centering
        \includegraphics[width=\textwidth]{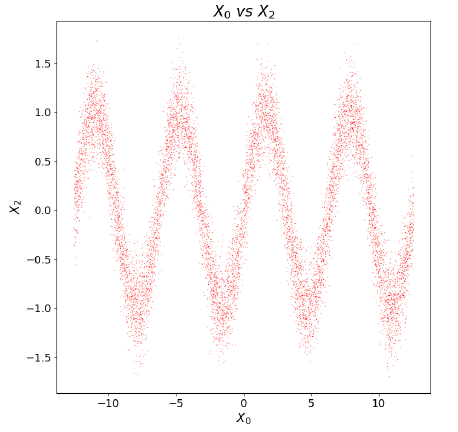}
        \caption{\footnotesize Sine relationship with noise }\label{fig::sin_noise}
    \end{subfigure}
    ~~
    \begin{subfigure}[t]{0.35\textwidth}
        \centering
        \includegraphics[width=\textwidth]{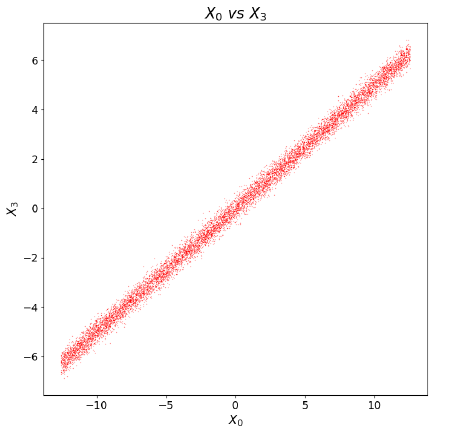}
        \caption{\footnotesize Linear relationship with noise \vspace{5pt}}\label{fig::line_noise}
    \end{subfigure}
    ~~
    \begin{subfigure}[t]{0.35\textwidth}
        \centering
        \includegraphics[width=\textwidth]{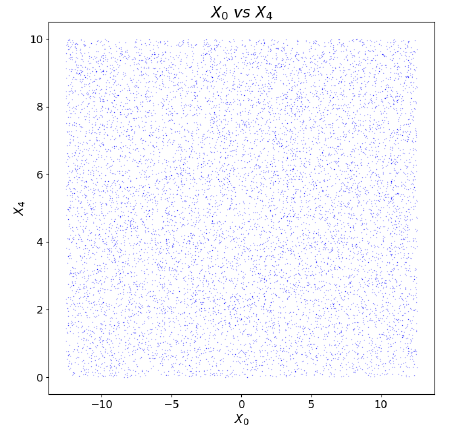}
        \caption{\footnotesize Independent relationship }\label{fig::unif}
    \end{subfigure}
    ~~
\begin{subfigure}[t]{0.35\textwidth}
    \centering
    \includegraphics[width=\textwidth]{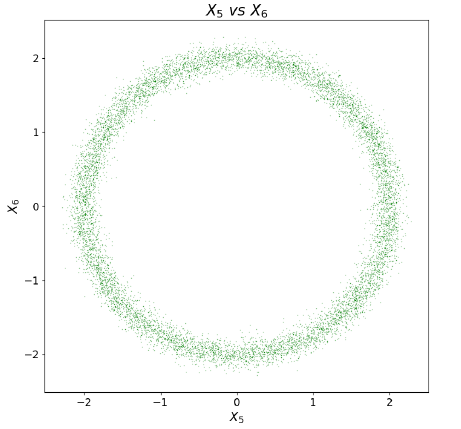}
    \caption{\footnotesize circle}\label{fig::circle}
\end{subfigure}
    \caption{\footnotesize Scatter plots showing the dependencies in the distribution of \eqref{mictestmodel}. }\label{fig::mictestmodel}
\end{figure}

\subsection{ $\MICe$ statistic}\label{app::MIC}

\begin{definition}[\bf \citet{Reshef16}]\label{MIC_def}
    Let $D_n$ be a dataset drawn from $(X,Y)$, with $|D_n|=n$. Let $B(n)$ be a tuning function that tends to $\infty$ as $n \to \infty$. Then
    \begin{equation*}\label{MIC_val}
        {\rm{MIC}}_e(D_n; B(n)) := \max_{k\ell < B(n)}\Big\{\frac{\1_{\{k<l\}}I^{[\ast]}(D_n,k,[\ell])+\1_{\{k \geq l\}}I^{[\ast]}(D_n,[k],\ell)}{\log(\min\{k,\ell\})}\Big\},
    \end{equation*}    
    where $I^{[\ast]}(D_n,k,[\ell])=\max_{G \in G(k,[l])} I(X,Y)|_G$ and $G(k,[l])$ is the set of $k$-by-$l$ grids whose $y$-axis partition is an equipartition of size $l$.    
\end{definition}

\citet[Corrollary 26]{Reshef16} establishes that $\MICe$ is a consistent estimator of $\MICstar$ provided that $\omega(1)<B(n)\le O(n^{1-\epsilon})$ for some $\epsilon \in (0,1)$. Furthermore, \citet[Theorem 28]{Reshef16} shows that $\MICe$ can be computed in time $O(n+n^{5(1-\epsilon)/2})$ when $B(n)=O(n^{1-\eps})$, which in turn implies the following.

\begin{corollary}[\bf \citet{Reshef16}]\label{corr::mic_complexity}
${\rm{MIC}}_*$ can be estimated consistently in linear time.
\end{corollary}

\subsection{Example of variable hierarchical clustering based on $\MICstar$}\label{app::example_var_clust}


In this section we perform variable clustering using MIC$_e$ and compare it with that based on the correlation for the model:
\begin{equation}\label{mictestmodel}
\begin{aligned}
    &X_0\sim Unif(-4\pi,4\pi),& &X_1 = X_0^2 + \epsilon_1,& &X_2 = \sin(X_0) + \epsilon_2,& &X_3 = 0.5 X_0 + \epsilon_3,&\\
    &X_4 \sim Unif(0,10),& &X_5 = 2\cos(\theta)+\epsilon_5,& &X_6 = 2\sin(\theta) + \epsilon_6,&
\end{aligned}
\end{equation}
where 
\[
\epsilon_1\sim \mathcal{N}(0,1), \, \epsilon_2\sim \mathcal{N}(0,\frac{1}{4}), \, \epsilon_3\sim \mathcal{N}(0,\frac{1}{4}), \, \epsilon_5\sim \mathcal{N}(0,\frac{1}{10}), \epsilon_6\sim \mathcal{N}(0,\frac{1}{10}), \,  \theta\sim Unif(0,2\pi).
\]

By construction, there are three independent groups of variables in the model \eqref{mictestmodel} 
\begin{equation}\label{mictestmodelgroupingtheor}
X_{S_1}=(X_0,X_1,X_2,X_3), \quad X_{S_2}=X_4, \quad X_{S_3}=(X_5,X_6),
\end{equation}
such that within each group the variables have strong dependencies. Figure \ref{fig::mictestmodel} displays scatter plots of $10^4$ samples of paired variables from the joint distribution \eqref{mictestmodel} that visually confirms the grouping \eqref{mictestmodelgroupingtheor}.

Figure \ref{fig::MIC_clus} displays a dendrogram generated by the MIC$_e$-based dissimilarity measure, whose geometry is in accordance with our intuition on how predictors should be grouped with each other based on their dependencies and the accompanying noise level. Using the dendrogram as a guide, setting  the dissimilarity threshold $\alpha=0.7 \geq 1-\MICe$, we conclude that the variables are partitioned into groups $\mathcal{P}^{\tiny{\rm {MIC_e}}}_{\alpha=0.7}=\{S_1,S_2,S_3\}$ with $S_i$ given by \eqref{mictestmodelgroupingtheor}, which coincides with the built-in grouping.

In contrast, according to the dendrogram on Figure \ref{fig::PCA_clus}, the correlation-based clustering fails to capture non-linear dependencies as it  ignores the sine functional dependence and captures weak dependencies between $X_5$ and $X_6$ that form a noisy circle, placing them in different clusters. Setting the dissimilarity threshold $\alpha = 0.7 \geq 1-|\rho|$ with $\rho$ the Pearson correlation, we obtain 
$\mathcal{P}^{\rho}_{\alpha=0.7}=\{ \{0,3\}, \{1\},\{2\},\{4\},\{5\},\{6\}\} \}
$, which is drastically different from the designed grouping \eqref{mictestmodelgroupingtheor}.


\section{Explainers with coalition structure under partition tree}\label{sec::ExTree}

\begin{figure}
  \centering
 \begin{subfigure}[t]{0.5\textwidth}
    \centering
    \includegraphics[width=\textwidth]{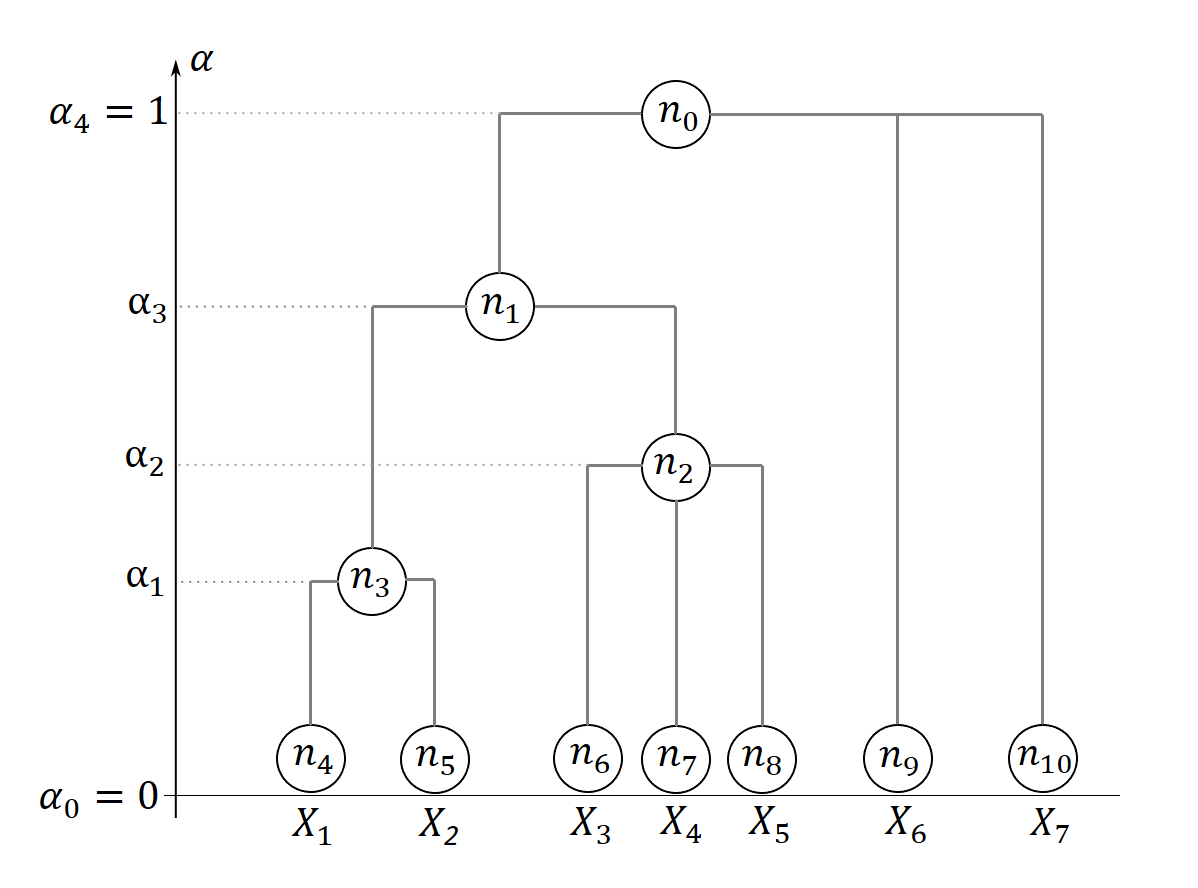} 
  \end{subfigure}
  \caption{\footnotesize Partition tree} \label{fig::coalescent} 
\end{figure}

\subsection{Values with levels structure and games with hierarchy}

Earlier in the text we explored predictor explanations based on values with coalition structure, which is constructed in our work by partitioning features based on dependencies. However, having just one set of coalitions may not adequately express the varying strength of predictor dependencies. In that case, a nested sequence of partitions could be used to further encode information about the level of dependencies in the joint distribution.

For example, suppose a model has four predictors $\{X_1,X_2,X_3,X_4\}$, where the first three are strongly dependent and jointly independent of $X_4$. The high-level partition can be defined as $\cP_1=\{\{1,2,3\},\{4\}\}$. Suppose that $X_1$  and $X_2$ are strong proxies of each other, but $X_3$ is not. Then $\cP_2=\{\{1,2\},\{3\},\{4\}\}$ may be set as a refinement of $\cP_1$, making the sequence $\{\cP_1,\cP_2\}$ capture further details of the dependencies.

More generally, a ``levels structure'' on $N=\{1,2,\dots,n\}$ is defined to be a finite sequence of partitions $\cL=\{P_1,P_2,\dots,P_l\}$ where $\cP_{k+1}$  is a refinement of $\cP_k$, that is, if $S\in \cP_{k+1}$, then $S\subseteq T$ for some $T \in \cP_k$. The article of Winter \cite{Winter1989} generalizes values with coalition structure to those with levels structures. Given the set $\fL$ of all levels structures on $N$ and any  collection $\cG$ of games on $N$, \cite{Winter1989} defines a value on $\cG$ with a levels structure from $\cL\in \fL$ as an operator $u: \cG \times \fL\mapsto \R^n$  which assigns a payoff vector $u[N,v,\cL] \in \RR^n$ to any pair $(v,\cL)$ of a game $v\in \cG$ and levels structure  $\cL \in \fL$.

In particular, \cite{Winter1989} generalizes the Owen value \cite{Owen} to a value with levels structure, called the Winter value. The derivation is axiomatic, that is, the Winter value is shown to be the only value with levels structure that is efficient, additive, coalitionally symmetric and symmetric within coalitions. In the case of a one-level partition, it reduces to the Owen value. By replacing additivity with marginality (meaning, the value explicitly depends on the terms $\{v(S\cup\{i\})-v(S)\}_{S \subseteq N \setminus \{i\}}$), the work of Khmelnitskaya and Yanovskaya\cite{Khmelnitskaya2007} shows that the Winter value is again the only value with those properties.

In some applications, besides the payoffs of each player in $N$, one may be interested in evaluating payoffs of each partition's element in the levels structure $\cL=\{\cP_0,\cP_1,\cP_2,\dots,\cP_l\}$, where $\cP_0=\{N\}$ is the grand coalition. $\cL$ naturally induces a partition tree of depth up to $n-1$, where each node corresponds to an element from one of the partitions in the levels structure, and where the subtree height determines the hierarchy level. Assigning a game to each non-terminal node in the tree (played on the set of the node's children) yields games with hierarchy induced by the tree; see \cite{Algaba2019}. The payoff in each node is then obtained by computing the game value based on the game associated with the node's parent.

Let us point out the differences between values with levels structure and games with hierarchy. The former defines a solution concept (an operator) that provides a payoff for each player given a levels structure. Since a partition tree defines a levels structure, the value with a levels structure can trivially assign payoffs to every node. This can be done, for instance, via summations across each partition or using quotient games. The latter requires an explicit a priori assignment of games to every node and typically the same game value is applied to these games. The two concepts are clearly related but not equivalent {\footnote{ A coalitional value with levels structure naturally induces games with a hierarchy, and vice versa, given games with hierarchy, applying a game value to the parent of every terminal node yields a coalitional value with levels structure. However, in the latter case, the games at every node can be unrelated to the games in the children or the parent, while this is clearly not the case for a value with a levels structure.}}.

Games with hierarchy have found numerous applications in the ML explainability literature. It is relatively easy to set them up by using subgames, games obtained by restricting a fixed game to a subset of players. For example, a setup like this is used in \cite{Teneggi2023} which proposes hierarchical explanations of images using an appropriate base game. In this context the pixels are the players and a levels structure is given by a 4-partition tree, which splits every node (the portion of the image) into four parts. A subgame is obtained at each node by restricting the base game \cite{Teneggi2023} to the portion of the image associated with that node. Other works on this include text classification explanations \cite{HChen2020} as well as explanations of neural networks \cite{Singh2018} and the article on explanations of structured data \cite{ChenLShapley2019}. Finally, the PartitionExplainer\footnote{ The online documentation states that these values are recursive Owen values for binary trees. Our own analysis indicates that the values are in fact recursive two-step Shapley, not Owen values.} is an algorithm that provides values with binary levels structure (using, for example, the marginal game as the utility when explaining a tree-based model), but the lack of documentation precludes us from concluding that the values are generalized Owen values.

Although setting up games with hierarchies via subgames is easy, the coalitions in that case are isolated, meaning they lack communication in the sense discussed by \cite{Owen1986}. In fact, this setup, in cohort with the Shapley value, leads to a generalization of the Aumann-Dr\'{e}ze (AD) value \cite{Aumann1974} where each player's payoff is obtained by computing the Shapley value of the subgame restricted to the coalition in the partition. To allow for communication between coalitions, which may be crucial in some applications (such as ours), it is necessary to avoid subgames. In this case, initialization of each node in the partition tree with a game might be algorithmically complex , so values with levels structures can be utilized because they naturally induce games with hierarchy.

The game-theoretic approach \cite{Winter1989,Peleg2003,Algaba2019} to the design of values with levels structures is usually axiomatic. For this reason, in what follows, we design such values via direct recursive construction. Specifically, we show how to naturally generalize a two-step representation model (cf. Definition \ref{def::2stepprop}) for coalitional values to values with levels structures. This process produces both values with levels structures and games with hierarchies that are produced recursively via the use of the intermediate game. To the best of our knowledge a theoretical construction of such recursively-defined concept solutions has not been done before. It can be shown that the recursive Owen value obtained in such a way is in fact the Winter value, which follows from \cite[Theorem 2]{Khmelnitskaya2007}.

The family of recursive coalitional values, which we introduce
shortly, may be utilized to construct hierarchical group explainers based on predictor dependencies. Given a dependency-based levels structure induced by a parameterized partition tree, we evaluate games and their values recursively at every node in the tree. Then, for a given cross-section of the tree at height $\alpha \geq 0$ we define a group explainer associated with the partition induced by the cross-section (for example, see Figures \ref{fig::hierclus} and \ref{fig::coalescent}). This procedure yields a generalization of trivial explainers based on coalitional values introduced earlier. By requiring the two game values $h^{(1)}$  and $h^{(2)}$ in the two-step formulation to be efficient, we show that a recursive value  satisfies an additive flow property across the partition tree.

\subsection{Notation and assumptions for the partition tree}\label{sec::parttree}

A partition tree is a parameterized coalescent tree (not necessarily binary); see Figure \ref{fig::coalescent} and dendrograms in Figure \ref{fig::hierclus}. The leaves at the bottom of the tree, at the zero level, correspond to singletons; every time branches coalesce, this gives rise to a larger group via merging with the others. This defines a sequence of partitions corresponding to each coalescence. The height of each node in the tree can be thought of as the strength of some underlying relationship (e.g. dependency, meaning, or a combination of both)  in the group corresponding to the leaves of the corresponding subtree. In  Figure \ref{fig::coalescent}, for example, the sequence of nested partitions dictated by the order of coalescence is given by $\{\{1\},\{2\},\{3\},\{4\},\{5\},\{6\},\{7\}\}$, $\{\{1,2\},\{3\},\{4\},\{5\},\{6\},\{7\}\}$, $\{\{1,2\},\{3,4,5\},\{6\},\{7\}\}$, $\{\{1,2,3,4,5\}$,$\{6\}$,$\{7\}$\}, \{$1$,$2$,$3$,$4$,$5$,$6$,$7$\}; for details, see Section \ref{sec::parttree}. For the sake of the exposition, we work with partition trees of height one. Finally, we note that binary coalescent trees are common for modeling of a genealogical relationship in a population of constant size; see \cite{Kingman1982}, \cite{Wakeley2008}.

In this subsection, we generalize the concept of two-step formulation from \S\ref{sec::ExCoal} to recursive game values which incorporate a given partition tree  and then use them to design corresponding group explainers. It is worth noting that the recursive values we design utilize only the combinatorics of the partition tree, while the group explainers (our main objective) rely on the parametrization of the tree.

We adapt the following conventions: 
\begin{itemize}
\item We denote the nodes (both terminal and non-terminal) of the tree by $\mathcal{V} = \{n_0,n_1,\dots,n_R\}$ with $n_0$ being the root.
\item For $\nu \in \mathcal{V}$, let $\TT(\nu)$ be a subtree with the root node $\nu$ and let $\ell(\nu)$  denote the collection of leaves 
(terminal nodes) of $\TT(\nu)$, i.e. the final descendants of $\nu$.
\item The parent of each node $\nu$, except the root, is denoted by $p(\nu)$ while the collection of its children is denoted by $c(\nu)$.
If $\nu$ is not a leaf we assume that $|c(\nu)|>1$. 
\end{itemize}

In our setup, the partition tree is equipped with a height $h$ that measures the height of each node $\nu$ (or, equivalently, the height of the subtree $\TT(\nu)$) and that satisfies
\begin{itemize}
\item The values of $h$ are in $[0,1]$; and  $h(n_0)=1$ at the root while  $h(\nu)=0$ if $\nu$ is a leaf (i.e. if $\nu \in \ell(n_0)$).
\item The heights of the children of  a non-terminal node $\nu$ are smaller than the height of $\nu$; that is $h(\xi)<h(\nu)$ for any $\xi \in c(\nu)$. Another technical requirement is that the heights of all non-terminal nodes are distinct (see Remark \ref{tecreqheight}).

\end{itemize}

\begin{remark}\label{tecreqheight}\rm
The requirement for the heights to be distinct  uniquely determines a sequence of nested partitions which corresponds to the order in which branches coalesce. Once this requirement is dropped, the sequence of nested partitions is no longer uniquely defined. In principle, however, this requirement can be dropped. In this case, if several nodes are at the same height one has to manually assign the order of coalescence associated with these nodes. 
\end{remark}

\begin{remark}\label{rem::dendr}\rm
Examples of parameterized trees have come up before in \S\ref{sec::Grouping}: 
 Given a a suitable normalized dissimilarity measure, dendrograms  obtained from hierarchical clustering can be thought of as parameterized binary trees. In practice, dendrograms constructed using the dissimilarity measure $1-{\rm{MIC}}_*$, the heights associated with non-terminal nodes are always distinct except some degenerate cases. Thus, a dendrogram can be viewed as a binary coalescent tree.
 \end{remark}

To relate trees to the machine learning setting, let us denote the predictors by $X_1,\dots,X_n$ as before.
We shall focus on  trees $\TT$ in which every node corresponds to a partition of a subset of predictors with its children (if any) corresponding to a partition of a set belonging to that former partition. 
The root corresponds to the partition $\{\{X_1,\dots,X_n\}\}$ determined by the grand coalition while at each leaf  we have the unique partition of some $\{X_i\}$ which may be identified with the predictor $X_i$ (or with $i\in N$).   
We assume that $\TT$ has $n$ leaves and there is a bijective enumeration map  
$\pi:N \to \ell(n_0)$ such that predictor $X_i$ (or  index $i\in N$) corresponds to the leaf $\pi(i)$.
\begin{itemize}
\item For a node $\nu$, $S(\nu):=\pi^{-1}(\ell(\nu))$ is the set of predictors appearing as the leaves of the rooted subtree $\TT(\nu)$ emanating from $\nu$. Assuming that $\nu$ is non-terminal, the partition of $S(\nu)$ determined at $\nu$ is 
$$
\cP(\nu):= \{ S(\xi): \xi \in c(\nu) \}.
$$
\end{itemize}

Finally, recalling that $\TT$ is equipped with a height $h$: 
\begin{itemize}
\item For any $\alpha \in [0,1]$, we define the collection of nodes $\mathcal{N}(\alpha)$ immediately below the cross-section of the tree at the height $\alpha$ as follows: $\mathcal{N}(0)= \ell(n_0)$ and for $\alpha \in (0,1]$ we set
\[
\mathcal{N}(\alpha)=\{\nu: h(\nu) < \alpha \leq h(p(\nu)) \}.
\]
Finally, for $\alpha>1$ we set $\mathcal{N}(\alpha)=\{n_0\}$.
Thus, each $\alpha \in [0,\infty)$ defines a partition $\cP_{\alpha}$ given by 
$$
\cP_{\alpha}= \{ S(\nu)\}_{ \nu \in \mathcal{N}(\mathcal{\alpha})},
$$
where  $\alpha \mapsto \cP_{\alpha}$ is a left-continuous partition map which represents the partition tree and gives rise to a nested sequence of partitions starting at singletons $\{\bar{N}\}$ and terminating at the grand coalition $\{N\}$ containing one element.

\end{itemize}

Notice that by identifying the set of predictors with $N=\{1,\dots,n\}$, one can work with all the concepts defined so far in the context of nested partitions of a finite set $N$. In \S\ref{sec::parttree1}, we define recursive game values that are defined based on a game $(N,v)$ and the underlying combinatorial graph of the partition tree $\TT$ whose leaves are in bijection with elements of $N$. Following the same procedure as before, these game values can then be utilized to construct group explainers associated with a parameterized tree partitioning the predictors. This is the content of \S\ref{sec::parttree2}.

\subsection{Game values under partition tree}\label{sec::parttree1}

Given a coalitional game value $g$ with two-step formulation along with a cooperative game $(N,v)$ and a tree $\TT$ partitioning $N$, here, we shall generalize the two-step formulation by defining numbers $\hat{g}^{(\nu)}[N,v,\TT]$, $\nu$ being a node of $\TT$, based on all this information.
Let us first state the hypotheses we assume for $g$ and $\TT$:
\begin{enumerate}
\renewcommand{\labelenumi}{\textbf{(\theenumi)}}
\renewcommand{\theenumi}{H\arabic{enumi}}
\renewcommand{\labelenumi}{(\theenumi)}
\renewcommand{\theenumi}{H\arabic{enumi}}

\item \label{hyp2::norm2step} $g$ has a two-step formulation with $h^{(1)},h^{(2)}$ and intermediate games $\hat{v}_T$ as in Definition \ref{def::2stepprop}, where  $h^{(1)}$ and $h^{(2)}$ are linear, $h^{(1)}$ is symmetric, and $h^{(2)}$ satisfies \hyperref[axiom:SEP]{(SEP)}.
\item \label{hyp3::parttree} As in \S\ref{sec::parttree}, $\TT$ corresponds to a family  of nested partitions of $N$ with its leaves in bijection with elements of $N$ via 
$\pi:N\rightarrow\ell(n_0)$. We further consider enumeration maps for each non-terminal node $\nu_*$: Suppose that $\nu_*$ has $m$ children, $c(\nu_*) = \{n_{r_j}\}_{j=1}^m$, with some arbitrary enumeration encoded by the bijective map  $\bar{\pi}_{\nu_*}: M_{\nu_*}=\{1,2,\dots,m\} \to c(\nu_*)$ such that $n_{r_j}=\bar{\pi}_{\nu_*}(j)$. 
Following the notation introduced in \S\ref{sec::parttree}, the partition $\cP(\nu_*)$ of $S(\nu_*)\subseteq N$ is given by  
$\{ S(n_{r_1}),S(n_{r_2}),\dots,S(n_{r_m}) \}$.
\end{enumerate}
\begin{remark}
\rm
Above, we assumed that $g$ comes with a two-step formulation in which $h^{(2)}$ satisfies \hyperref[axiom:SEP]{(SEP)}. The logic of this assumption shall be explained shortly. However, in view of Lemma \ref{lmm::canonical} and Remark \ref{canonicalRemark}, if $g$ satisfies \hyperref[axiom:SIP]{(SIP)} and $g_1[\{1\},\tilde{u},\{\{1\}\}]\ne 0$, then there is a unique two-step formulation with such a property.
\end{remark}

With these conventions in mind, the definition of values $\hat{g}^{(\nu)}[N,v,\TT]$ below utilizes certain games assigned to nodes of $\TT$. Starting from $v$ itself assigned to the root, such games are defined inductively, and, in the vein of Definition \ref{def::2stepprop}, with the help of intermediate games. To elaborate, as in \eqref{hyp3::parttree}, consider a node $\nu_*$. The game $v^{(\nu_*)}$ is played on  the subset $S(\nu_*)$ of $N$ for which the tree structure provides a partition 
$\cP(\nu_*)= \{ S(n_{r_1}),S(n_{r_2}),\dots,S(n_{r_m})\}$. 
The game $v^{(n_{r_j})}$ corresponding to a child $n_{r_j}$ of $\nu_*$ is defined as 
$$
v^{(n_{r_j})}(T):=h^{(1)}_j[M_{\nu_*},\hat{v}_T],\quad T \subseteq S(n_{r_j}),
$$
where 
$$
\hat{v}_T=\hat{v}_T\left(S(\nu_*),v^{(\nu_*)},\cP(\nu_*)\right), \quad T \subseteq S(n_{r_j}),\,\, j\in\{1,\dots,r\},
$$
are intermediate games (as appear in the two-step formulation of $g$) played on $M_{\nu_*}$.

\begin{definition}
With conventions from  \eqref{hyp2::norm2step} and \eqref{hyp3::parttree},  
let $g$ be a coalitional value, $(N,v)$ a cooperative game and $\TT$ a partition tree for $N$.  
For each node $\nu$ of $\TT$, let $(S(\nu),v^{(\nu)})$  be the game assigned to $\nu$ via the inductive construction described before this definition.  
In what follows, we define the recursive values $\hat{g}^{(\nu)}[N,v,\TT]$ where $\nu$ is a node of $\TT$. 
If $\nu$ is the root node $n_0$, we set $\hat{g}^{(n_0)}[N,v,\TT]=v(N)$. Next, given a non-terminal node $\nu_*$, for a child $\nu \in c(\nu_*)$, we set
\begin{equation}\label{recursval}
\hat{g}^{(\nu)}[N,v,\TT]:=
\begin{cases} 
v^{(\nu)}(S(\nu))=h^{(1)}_{j}[M_{\nu_*}, \big(v^{(\nu_*)}\big)^{\cP(\nu_*)}], \, j=\bar{\pi}^{-1}_{\nu_*}(\nu), \,\, \text{if $\nu_* =n_0$ or $c(\nu_*) \neq \ell(\nu_*)$;}\\ 
h^{(2)}_{i}[S(\nu_*),v^{(\nu_*)}], \, i=\pi^{-1}(\nu), \,\, \text{otherwise.}
\end{cases}
\end{equation}
In addition, coalitional values under the partition tree $\TT$ are defined to be 
\begin{equation*}
u_i[N,v,\TT,g]=\hat{g}^{(\pi(i))}[N,v,\TT], \quad i \in N;
\end{equation*}
which are the recursive values corresponding to leaves of $\TT$. 
\end{definition}

\begin{figure}[ht]
  \centering
  \begin{subfigure}[t]{0.31\textwidth}
    \centering
    \includegraphics[width=\textwidth]{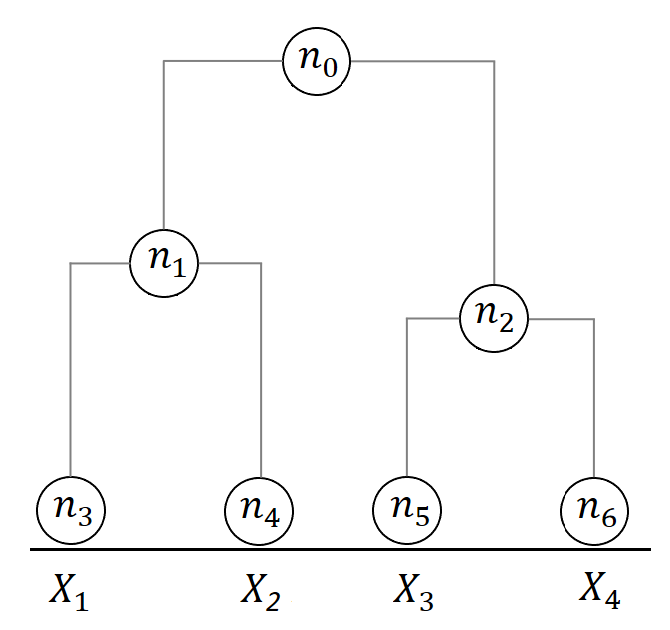} \caption{\footnotesize$\cP(n_0)=\{\{1,2\},\{3,4\}\}$}\label{fig::tree_2_2}
  \end{subfigure}    
~~
\begin{subfigure}[t]{0.31\textwidth}
    \centering
    \includegraphics[width=\textwidth]{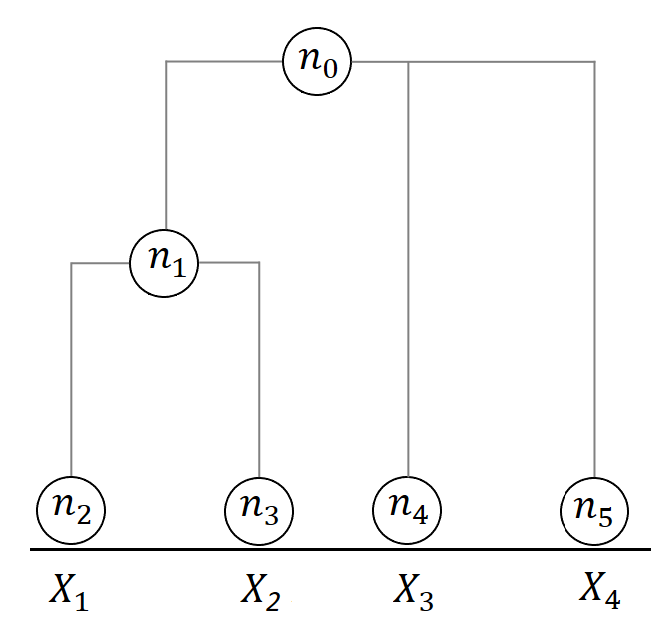} \caption{\footnotesize$\cP(n_0)=\{\{1,2\},\{3\},\{4\}\}$}\label{fig::tree_2_1_1}  
  \end{subfigure}
~~
 \begin{subfigure}[t]{0.31\textwidth}
    \centering
    \includegraphics[width=\textwidth]{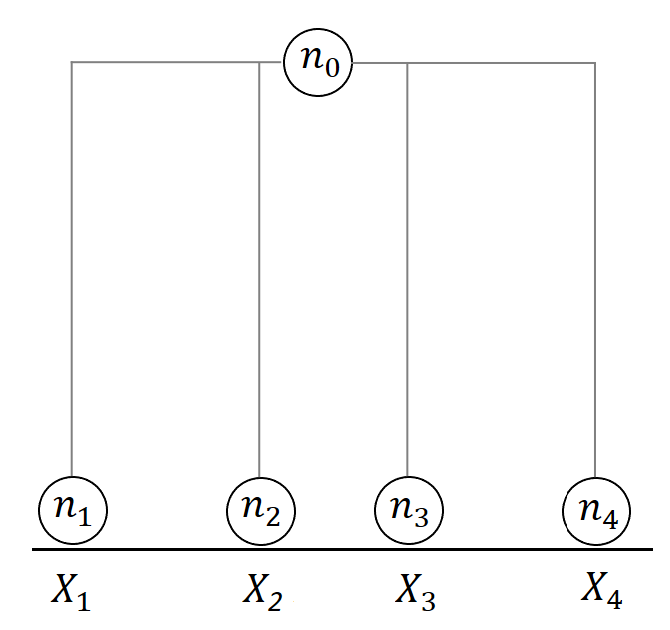} \caption{\footnotesize$\cP(n_0)=\{\{1\},\{2\},\{3\},\{4\}\}$}\label{fig::tree_1_1_1_1} 
  \end{subfigure}
  \caption{\footnotesize Different trees of depth at most two illustrating different partitions of $\{1,2,3,4\}$; see Remark \ref{depthatmost2}.} \label{fig::two_step_trees} 
\end{figure}

\begin{remark}
\rm Let us elaborate on the conditioning in \eqref{recursval}. The equality on the first line of \eqref{recursval} follows from property $(ii)$ in Definition \ref{def::2stepprop}. Furthermore, $h^{(1)}_{j}[M_{\nu_*}, \big(v^{(\nu_*)}\big)^{\cP(\nu_*)}]$ is equal to 
$g_{j}[M_{\nu_*},(v^{\nu_*})^{\cP(\nu_*)},\bar{M}_{\nu_*}]$ due to the fact that, per \eqref{hyp2::norm2step}, $h^{(2)}$ satisfies \hyperref[axiom:SEP]{(SEP)}. Consequently, if the parent $\nu_*$ of $\nu$ satisfies $\nu_* =n_0$ or $c(\nu_*) \neq \ell(\nu_*)$:
\begin{equation*}
  \hat{g}^{(\nu)}[N,v,\TT] = g_{j}[M_{\nu_*},(v^{\nu_*})^{\cP(\nu_*)},\bar{M}_{\nu_*}], \quad j=\bar{\pi}^{-1}_{\nu_*}(\nu).
\end{equation*}
All nodes $\nu$ satisfy one of the former conditions except those leaves whose all siblings are leaves too and are not adjacent to the root. 
The recursive value is defined via the second line of \eqref{recursval} in such a situation.
\end{remark}

\begin{remark}\label{depthatmost2}
\rm
To see how the definition above generalizes the two-step formulation, suppose the depth of $\TT$ does not exceed two.
The combinatorial graph of such a tree can be recovered uniquely from the  partition $\cP(n_0)$ of $N$; see  Figure \ref{fig::two_step_trees}. To elaborate, here the 
 root corresponds to $N$ with members of the partition as its children (which can be indexed by $M_{n_0}$). A child of the root is a leaf if it represents a singleton. Otherwise, its children are all leaves and correspond to the elements of that member of $\cP(n_0)$. 
Now let $\nu$ be a leaf corresponding to an element $i\in N$ and $\nu_*$ its parent. If $\nu_*\neq n_0$, then 
$v^{(\nu_*)}$ is the game $T\mapsto h^{(1)}_k[M_{n_0},\hat{v}_T(N,v^{(n_0)}=v,\cP(n_0))]$
where $k=\bar{\pi}_{n_0}^{-1}(\nu_*)$; thus 
$$u_i[N,v,\TT,g]=\hat{g}^{(\nu)}[N,v,\TT]=h^{(2)}_{i}[S(\nu_*),v^{(\nu_*)}]=g_i[N,v,\cP(n_0)].$$ 
Alternatively, when the leaf $\nu$ is adjacent to the root, \eqref{recursval} yields $u_i[N,v,\TT,g]=h^{(1)}_i[M_{n_0},v^{\cP(n_0)}]$. The reader can easily check that this coincides with $g_i[N,v,\cP(n_0)]$ since $h^{(2)}$  satisfies \hyperref[axiom:SEP]{(SEP)} and $\{i\}\in\cP(n_0)$. 
We conclude that if $\TT$ is of depth at most two, then
\begin{equation*}
u_i[N,v,\TT,g]=g_i[N,v,\cP(n_0)].    
\end{equation*}
\end{remark}

The next result establishes a generalization of the quotient game property \eqref{quotientgame}.
\begin{lemma}[\bf additive flow]\label{additiveflow}
Let $g$ be a coalitional value, $\TT$ a partition tree, and \eqref{hyp2::norm2step}, \eqref{hyp3::parttree} hold. Let $\hat{g}[N,v,\TT]$ be defined by \eqref{recursval}. Suppose $g$ satisfies \hyperref[axiom:EP]{(EP)}. Then for each internal node $\nu_*$ of $\TT$ one has
\[
\sum_{\nu \in c(\nu_*)} \hat{g}^{(\nu)}[N,v,\TT] = \hat{g}^{(\nu_*)}[N,v,\TT].
\]
\end{lemma}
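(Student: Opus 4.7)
The plan is to exploit the two-step formulation of $g$ in order to reduce the claimed flow identity to the efficiency of its building blocks $h^{(1)}$ and $h^{(2)}$. The first step is to observe that hypothesis \eqref{hyp2::norm2step} guarantees that $h^{(2)}$ satisfies \hyperref[axiom:SEP]{(SEP)}; combined with the assumption that $g$ satisfies \hyperref[axiom:EP]{(EP)}, Lemma \ref{lmm::efficg2step}$(ii)$ immediately yields that both $h^{(1)}$ and $h^{(2)}$ are efficient. This single observation powers the rest of the argument.

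Next I would split into two cases mirroring the two branches of the recursive definition \eqref{recursval}. In the ``generic'' case, where $\nu_*=n_0$ or at least one child of $\nu_*$ is not a leaf (equivalently $c(\nu_*)\neq\ell(\nu_*)$), every child $\nu$ of $\nu_*$ picks up its value through the first line of \eqref{recursval}, so using \hyperref[axiom:EP]{(EP)} for $h^{(1)}$ together with the identity $(v^{(\nu_*)})^{\cP(\nu_*)}(M_{\nu_*})=v^{(\nu_*)}(S(\nu_*))$ coming from Definition \ref{def::2stepprop}$(ii)$,
\[
\sum_{\nu\in c(\nu_*)}\hat{g}^{(\nu)}[N,v,\TT]
=\sum_{j\in M_{\nu_*}} h^{(1)}_j\bigl[M_{\nu_*},(v^{(\nu_*)})^{\cP(\nu_*)}\bigr]=v^{(\nu_*)}(S(\nu_*)).
\]
In the complementary case where $\nu_*\neq n_0$ and $c(\nu_*)=\ell(\nu_*)$, the second line of \eqref{recursval} applies and \hyperref[axiom:EP]{(EP)} for $h^{(2)}$ gives
\[
\sum_{\nu\in c(\nu_*)}\hat{g}^{(\nu)}[N,v,\TT]=\sum_{i\in S(\nu_*)} h^{(2)}_i\bigl[S(\nu_*),v^{(\nu_*)}\bigr]=v^{(\nu_*)}(S(\nu_*)).
\]
In both cases the left-hand side collapses to $v^{(\nu_*)}(S(\nu_*))$.

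To finish, I would identify $v^{(\nu_*)}(S(\nu_*))$ with $\hat{g}^{(\nu_*)}[N,v,\TT]$. If $\nu_*=n_0$, this is immediate since $\hat{g}^{(n_0)}[N,v,\TT]=v(N)=v^{(n_0)}(S(n_0))$ by definition. Otherwise, writing $\nu_{**}=p(\nu_*)$, the key observation is that since $\nu_*$ is internal (hence not a leaf), its parent $\nu_{**}$ cannot satisfy $c(\nu_{**})=\ell(\nu_{**})$; so when \eqref{recursval} is applied to $\nu_*$ as a child of $\nu_{**}$, the first branch must be invoked, giving $\hat{g}^{(\nu_*)}[N,v,\TT]=v^{(\nu_*)}(S(\nu_*))$, as needed.

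The main obstacle, in my view, is purely bookkeeping: keeping straight which of the two branches of \eqref{recursval} is in effect at $\nu_*$ and at its parent, and making sure the enumerations $\bar\pi_{\nu_*}$ of children line up correctly with the indexing in the quotient game $(v^{(\nu_*)})^{\cP(\nu_*)}$. Once those conventions are untangled, the flow identity is an immediate consequence of the efficiency that $g$ transmits to $h^{(1)}$ and $h^{(2)}$ through the two-step formulation.
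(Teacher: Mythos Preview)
Your proposal is correct and follows essentially the same route as the paper: both invoke Lemma~\ref{lmm::efficg2step}$(ii)$ to obtain efficiency of $h^{(1)}$ and $h^{(2)}$, then split on the two branches of \eqref{recursval} and use efficiency of the relevant $h^{(k)}$ to collapse the sum to $v^{(\nu_*)}(S(\nu_*))$. Your final paragraph, which justifies $v^{(\nu_*)}(S(\nu_*))=\hat{g}^{(\nu_*)}[N,v,\TT]$ by observing that an internal $\nu_*$ forces its parent into the first branch of \eqref{recursval}, is actually more explicit than the paper, which asserts that equality without comment.
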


\begin{proof} Since $g$ is efficient, \eqref{hyp2::norm2step} together with Lemma \ref{lmm::efficg2step}$(ii)$ imply $h^{(1)}$ and $h^{(2)}$ are efficient.
    Suppose $\nu_*\ne n_0$ and $c(\nu_*)=\ell(\nu_*)$. By efficiency of $h^{(2)}$ and \eqref{recursval} we then have
    \begin{equation*}
        \sum_{\nu \in c(\nu_*)} \hat{g}^{(\nu)}[N,v,\TT] = \sum_{\nu \in c(\nu_*)} h^{(2)}_{\pi^{-1}(\nu)}[S(\nu_*),v^{(\nu_*)}] = v^{(\nu_*)}(S(\nu_*)) = \hat{g}^{(\nu_*)}[N,v,\TT].
    \end{equation*}
    Now suppose otherwise. By the efficiency of $h^{(1)}$ and \eqref{recursval} we then have
    \begin{equation*}
    \begin{aligned}
         \sum_{\nu \in c(\nu_*)} \hat{g}^{(\nu)}[N,v,\TT] &= \sum_{\nu \in c(\nu_*)} h^{(1)}_{\bar{\pi}_{\nu_*}^{-1}(\nu)}[M_{\nu_*},(v^{(\nu_*)})^{\cP(\nu_*)}] \\
        & = (v^{(\nu_*)})^{\cP(\nu_*)}(M_{\nu_*}) = v^{(\nu_*)}(S(\nu_*)) = \hat{g}^{(\nu_*)}[N,v,\TT].    
    \end{aligned}
    \end{equation*}
\end{proof}

Just like before, the construction can be extended to non-cooperative games as well.

\begin{definition}\label{def::ghatext}
Let $g$ be a coalitional value with its centered extension $\bar{g}$ as described in Lemma \ref{lmm::unitgamecoalval}. Let 
$$
\hat{g}[N,v,\TT]:=\left(\hat{g}^{(\nu)}[N,v,\TT]\right)_{\nu\in\mathcal{V}}
$$
be recursive values based on some partition tree $\TT$, and suppose \eqref{hyp2::norm2step}, \eqref{hyp3::parttree} hold. Let $u$ denote the unit, non-cooperative game. For non-cooperative games, we define the centered extension of $\bar{\hat{g}}$ of $\hat{g}$ along with extensions $\bar{u}_i$ of $u_i$ as 
\begin{equation*}
\bar{\hat{g}}[N,v,\TT] = \hat{g}[N,v-v(\varnothing)u,\TT] \quad \text{and} \quad \bar{u}_i[N,v,\TT,g]=\bar{\hat{g}}^{(\pi(i))}[N,v,\TT].
\end{equation*}
\end{definition}

\subsubsection{Group explainers under partition tree}\label{sec::parttree2}
We shall apply the game-theoretic machinery developed in \S\ref{sec::parttree1} to construct new group explainers for a machine learning model $f$ with predictors $X=(X_1,\dots,X_n)$. Let $\TT$ be a parameterized tree whose leaves are in bijection with the set of predictors $\{X_1,\dots,X_n\}$. 
The goal is to introduce a collection of group explainers parameterized by the height $\alpha \in [0,1]$.
As in \S\ref{sec::parttree}, for each $\alpha \in [0,1]$, the collection of nodes $\mathcal{N}(\alpha)$ yields a partition of $N$, denoted by
\[
\cP_{\alpha}=\{S_1^{\alpha},S_2^{\alpha},\dots,S_m^{\alpha}\}=\{ S(\nu):  \nu \in \mathcal{N}(\alpha) \}, \quad S_j^{\alpha} = S( \pi_{\alpha}(j)),
\]
where $m:=|\mathcal{N}(\alpha)|$ and $\pi_\alpha$ is an enumeration map $M=\{1,\dots,m\}\rightarrow\mathcal{N}(\alpha)$.

This leads to the following definition of the trivial and quotient game  explainers under a partition tree
generalizing Definition \ref{def::coalexpl}.
\begin{definition}\label{def::coalexplaintree}
Let $g$ be a coalitional value, $\TT$ a partition tree, and \eqref{hyp2::norm2step}, \eqref{hyp3::parttree} hold. For $\alpha \in [0,1]$, set
\[
\bar{u}_{S^{\alpha}_j}(X;v,\TT,f)=\sum_{i \in S_j^{\alpha}} \bar{u}_i[N,v,\TT,g], \quad \bar{u}^{\TT}_{S_j^{\alpha}}(X;v,f)=\bar{\hat{g}}^{(\pi_{\alpha}(j))}[N,v,\TT], \quad v \in \{\vce,\vpdp\},
\] 
where $\bar{u}$ and $\bar{\hat{g}}$ are as in Definition \ref{def::ghatext}.
\end{definition}

We conclude the section by presenting a result unifying the explainers introduced above for conditional and marginal games. 
As before, an assumption on the independence of certain unions of predictors is required. 
\begin{lemma}
Suppose for $\alpha_* \in (0,1)$ the partition $\cP_{\alpha_*}$ yields independent unions $X_{S_1^{\alpha_*}},X_{S_2^{\alpha_*}}$, $\dots,X_{S_{m_*}}^{\alpha_*}$. Then for all $\alpha>\alpha_*$ the partition $\cP_{\alpha}$ also yields independent unions.
\end{lemma}
\begin{proof}
Trivial, left to the reader.
\end{proof}

\begin{definition}\label{def::intquotgame} Let $g$ satisfy the two-step property and $\hat{v}$ be the intermediate game from Definition \ref{def::2stepprop}. We say that $\hat{v}$ is a quotient-like game if, for any $v$ and $\cP$, there exists a function $\beta$, independent of $v$, such that for each $A \subseteq M=\{1,2,\dots,|\cP|\}$ and any $T \subseteq S_j$
\begin{equation}\label{intquotgame}
\hat{v}_T[N,v,\cP](A) = \beta \big( \{v^{\cP}(B)\}_{B\subseteq M}, \{v^{\cP|T}(B)\}_{B \subseteq M}; A, T, j \big), \quad T \subseteq S_j.
\end{equation}
(See \eqref{modqgame} for the definition of games $v^{\cP|T}$.)
\end{definition}

\begin{proposition}\label{prop::recrsexpl} 
Consider the explainers introduced in Definition \ref{def::coalexplaintree}, and suppose the conditions of Definition \ref{def::intquotgame} hold. 
Furthermore, assume that there exists $\alpha_* \in (0,1)$ such that the partition $\cP_{\alpha_*}$ yields independent unions of predictors. Then:

\begin{itemize}
    \item [(i)] For each $\alpha>\alpha_*$ 
    \[
    \bar{u}_{S_j^{\alpha}}^{\TT}(X;\vce,f) = \bar{u}_{S_j^{\alpha}}^{\TT}(X;\vpdp,f).
    \]
    Consequently, the linear map $f \mapsto \bar{u}_{S_j^{\alpha}}(X;\vpdp,f)$ is bounded and hence continuous in $L^2(P_X)$.

    \item [(ii)] If $g$ satisfies \hyperref[axiom:EP]{(EP)}, then
\begin{equation}\label{explsumtree}
    \bar{u}_{S_j^{\alpha}}(X;\vpdp,\TT,f) = \bar{u}_{S_j^{\alpha}}(X;\vce,\TT,f)=\bar{u}_{S_j^{\alpha}}^{\TT}(X;\vpdp,f) = \bar{u}_{S_j^{\alpha}}^{\TT}(X;\vce,f).
\end{equation}    
Consequently, all explanations in \eqref{explsumtree} yield bounded linear operators in $L^2(P_X)$, and hence they are continuous in $L^2(P_X)$.

\end{itemize}
\end{proposition}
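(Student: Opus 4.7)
My plan is to prove (i) by a top-down induction on the partition tree that exploits the quotient-like structure of the intermediate games, and then to deduce (ii) from the additive-flow Lemma \ref{additiveflow} (which uses \hyperref[axiom:EP]{(EP)}). The key preliminary observation comes from the preceding lemma: for every $\alpha>\alpha_*$, $\cP_\alpha$ also yields independent unions of predictors. Consequently, whenever $S\subseteq N$ is a union of complete $\cP_\alpha$-elements, $X_S$ and $X_{-S}$ are independent, so $\vce(S;X,f)=\vpdp(S;X,f)$ almost surely. Using the centered extension (Definition \ref{def::ghatext}), it suffices to compare $\hat{g}^{(\nu)}[N,v,\TT]$ for $v\in\{\vce,\vpdp\}$ at each $\nu\in\mathcal{N}(\alpha)$.

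The central step would be the following inductive claim: for any node $\nu$ with $h(\nu)\geq \alpha$ and any $T\subseteq S(\nu)$ that is a union of complete $\cP_\alpha$-elements, $v^{(\nu)}(T)$ takes the same value for $v\in\{\vce,\vpdp\}$. The base case $\nu=n_0$ is exactly the observation above. For the inductive step, I would take a non-terminal $\nu_*$ with $h(\nu_*)\geq\alpha$ for which the claim holds, and a child $\nu\in c(\nu_*)$, and compute
\[
v^{(\nu)}(T)=h^{(1)}_{j}\big[M_{\nu_*},\,\hat{v}_T(S(\nu_*),v^{(\nu_*)},\cP(\nu_*))\big],\quad j=\bar{\pi}_{\nu_*}^{-1}(\nu),
\]
for $T\subseteq S(\nu)$ a union of complete $\cP_\alpha$-elements. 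By the quotient-like hypothesis \eqref{intquotgame}, $\hat{v}_T$ depends on $v^{(\nu_*)}$ only through the families $\{(v^{(\nu_*)})^{\cP(\nu_*)}(B)\}_B$ and $\{(v^{(\nu_*)})^{\cP(\nu_*)|T}(B)\}_B$. The combinatorial fact that drives the proof is that $S(\xi)$ is a union of complete $\cP_\alpha$-elements for every child $\xi$ of $\nu_*$ (either $h(\xi)<\alpha$, placing $\xi\in\mathcal{N}(\alpha)$ so that $S(\xi)$ is itself a single $\cP_\alpha$-element, or $h(\xi)\geq\alpha$, in which case $S(\xi)$ decomposes through its descendants at level $\alpha$). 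It follows that every set appearing inside those two families -- namely $\bigcup_{k\in B}S(\bar{\pi}_{\nu_*}(k))$ and $\bigcup_{k\in A\setminus\{j\}}S(\bar{\pi}_{\nu_*}(k))\cup T$ -- is a union of complete $\cP_\alpha$-elements sitting inside $S(\nu_*)$, so the inductive hypothesis ensures $v^{(\nu_*)}$ agrees on each, rendering $\hat{v}_T$ identical for the two games and completing the induction.

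To finish (i), I apply the claim at each $\nu\in\mathcal{N}(\alpha)$. When the first line of \eqref{recursval} applies, $\hat{g}^{(\nu)}=v^{(\nu)}(S(\nu))$ and $S(\nu)$ is itself a $\cP_\alpha$-element. In the remaining case, $\nu_*=p(\nu)$ has all children being leaves (of height $0<\alpha$), so every singleton of $S(\nu_*)$ lies in $\cP_\alpha$ and every subset of $S(\nu_*)$ is a union of complete $\cP_\alpha$-elements; the claim then makes $v^{(\nu_*)}(T)$ identical for $v\in\{\vce,\vpdp\}$ for every $T\subseteq S(\nu_*)$, so by linearity of $h^{(2)}$ the value $\hat{g}^{(\nu)}=h^{(2)}_i[S(\nu_*),v^{(\nu_*)}]$ is the same too. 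For (ii), Lemma \ref{additiveflow} together with \hyperref[axiom:EP]{(EP)} telescopes to $\sum_{i\in S(\nu)}u_i[N,v,\TT,g]=\hat{g}^{(\nu)}[N,v,\TT]$ for every internal node $\nu$; evaluated at $\nu=\pi_\alpha(j)$ this yields $\bar{u}_{S_j^\alpha}(X;v,\TT,f)=\bar{u}_{S_j^\alpha}^{\TT}(X;v,f)$ for both $v\in\{\vce,\vpdp\}$, which combined with (i) produces all four equalities in \eqref{explsumtree}. Boundedness on $L^2(P_X)$ for the $\vce$-based versions comes from linearity of $h^{(1)},h^{(2)}$ and the intermediate games together with $\|\vce(S;X,f)\|_{L^2(\PP)}\leq\|f\|_{L^2(P_X)}$ iterated along the finitely many recursive steps; the equalities above then transfer this bound to the $\vpdp$-based explanations.

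The main obstacle I anticipate is the bookkeeping that maintains the invariant ``every set appearing inside the intermediate games is a union of complete $\cP_\alpha$-elements'' throughout the recursion, particularly when the leaf/all-siblings-are-leaves branch of \eqref{recursval} forces evaluations of $v^{(\nu_*)}$ at \emph{arbitrary} subsets of $S(\nu_*)$. This is exactly where the quotient-like hypothesis of Definition \ref{def::intquotgame} becomes indispensable: it confines the atoms entering $\hat{v}_T$ to unions of $\cP(\nu_*)$-elements together with the residue $T$, a structure that matches the inductive invariant and resolves the troublesome leaf case by making every singleton of $S(\nu_*)$ automatically a $\cP_\alpha$-atom.
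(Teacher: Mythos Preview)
Your proposal is correct and follows essentially the same approach as the paper: a top-down induction that uses the quotient-like representation \eqref{intquotgame} to show that every evaluation of $v^{(\nu_*)}$ entering the recursion occurs at unions of complete $\cP_\alpha$-elements (hence at independent blocks where $\vce$ and $\vpdp$ agree), followed by the additive-flow Lemma \ref{additiveflow} to collapse the trivial group sums onto the recursive values. Your write-up is in fact more explicit than the paper's, which only sketches the inductive step; in particular, your handling of the second branch of \eqref{recursval} (the all-siblings-are-leaves case) and your bookkeeping that each child $S(\xi)$ of a node above height $\alpha$ is itself a union of $\cP_\alpha$-atoms are spelled out more carefully than in the paper.
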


\begin{proof}
For any $\nu \in \mathcal{N}(\alpha)$ with $h(\nu)>\alpha_*$, whenever $T$ is a union of elements from $\cP(\nu)$, the intermediate game $\hat{v}_T[S(p(\nu)),v^{(p(\nu))},\cP(p(\nu))]$ associated with the parent $p(\nu)$ can be expressed as a function of values obtained by evaluating the original game $v$ (placed at the root node) at unions of sets that correspond to independent groups. This argument follows from induction, and makes use of the representation \eqref{intquotgame}. This yields the result in $(i)$.

The efficiency of $g$ implies that the game values $h^{(1)}$, $h^{(2)}$ used in the two-step formulation both satisfy \hyperref[axiom:EP]{(EP)}; 
cf. Lemma \ref{lmm::efficg2step}. 
This implies that the sum of all children values is equal to the recursive value of the corresponding parent; see Lemma \ref{additiveflow}. Hence, for any node $\nu \in \TT$, summing the recursive values over $S(\nu)$ gives the value $g^{(\nu)}$. 
Thus $\bar{u}_{S_j^{\alpha}}(X;\vpdp,\TT,f)=\bar{u}_{S_j^{\alpha}}^{\TT}(X;\vpdp,f)$
and $\bar{u}_{S_j^{\alpha}}(X;\vce,\TT,f)=\bar{u}_{S_j^{\alpha}}^{\TT}(X;\vce,f)$.
This together with $(i)$ implies $(ii)$.
\end{proof}

\end{appendices}

\end{document}